\newtheorem{theorem}{Theorem}
\DeclareMathOperator{\id}{id}
\newtheorem{corollary}[theorem]{Corollary}
\newtheorem{lemma}[theorem]{Lemma}
\theoremstyle{definition}
\newtheorem{definition}{Definition}[section]
\theoremstyle{remark}
\newtheorem*{remark}{Remark}
\title{Reflected entropy in random tensor networks III: triway cuts}
\author[1]{Chris Akers,}
\author[2]{Thomas Faulkner,}
\author[3]{Simon Lin,}
\author[4]{Pratik Rath}
\affiliation[1]{Department of Physics and Center for Theory of Quantum Matter,\\
University of Colorado Boulder, Boulder, CO 80309, USA}
\affiliation[2]{Department of Physics, University of Illinois, \\
1110 W. Green St., Urbana, IL 61801-3080, USA}
\affiliation[3]{New York University Abu Dhabi, P.O. Box 129188, Abu Dhabi, United Arab Emirates}
\affiliation[4]{Center for Theoretical Physics and Department of Physics,
University of California, Berkeley, CA 94720, USA}
\emailAdd{chris.akers@colorado.edu}
\emailAdd{tomf@illinois.edu}
\emailAdd{simonlin@nyu.edu}
\emailAdd{pratik\_rath@berkeley.edu}
\abstract{
For general random tensor network states at large bond dimension, we prove that the integer R\'enyi reflected entropies (away from phase transitions) are determined by minimal triway cuts through the network. This generalizes the minimal cut description of bipartite entanglement for these states. A natural extrapolation away from integer R\'enyi parameters, suggested by the triway cut problem, implies the holographic conjecture $S_R=2EW$, where $S_R$ is the reflected entropy and $EW$ is the entanglement wedge cross-section. Minimal triway cuts can be formulated as integer programs which cannot be relaxed to find a dual maximal flow/bit-thread description.  This sheds light on the gap between the existence of tripartite entanglement in holographic states and the bipartite entanglement structure motivated by bit-threads. In particular, we prove that the Markov gap that measures tripartite entanglement is lower bounded by the integrality gap of the integer program that computes the triway cut.}
\begin{document}

\maketitle
\flushbottom

\section{Introduction}

There is compelling evidence that quantum gravity should be thought of as an emergent phenomenon with the underlying geometry being described by the entanglement structure of a dual/holographic wave-function. The Ryu-Takayanagi (RT) formula \cite{Ryu:2006bv,Ryu:2006ef,Hubeny:2007xt} expresses this idea, connecting areas of minimal surfaces and von Neumann entropies: a measure of bipartite entanglement in pure states. Minimal surfaces can be equivalently described by bit-threads \cite{Freedman:2016zud,Headrick:2022nbe}, maximal divergence free locally bounded flows between two boundary regions. Furthermore, these bit-threads give a vivid picture of (pure state) bipartite entanglement in the boundary wavefunction, with a thread corresponding to a distillable EPR pair.

Random tensor network states \cite{Collins:2010fsu,Hayden:2016cfa} model this behavior with a graph $G = (E,V)$ playing the role of the underlying geometry and boundary vertices $\partial \subset V$ containing the dual Hilbert space. Edges $e \in E$ contain maximally entangled states with large bond dimensions $\chi(e)$ and vertices are described by randomly chosen tensors that are contracted with the edge states. The RT formula arises as a minimal cut through the graph with edges weighted by $w(e) \propto \ln \chi(e)$. The cut divides the vertices $V$ into the two disjoint sets each containing the corresponding sets of boundary vertices whose von Neumann entropy we wish to compute. Bit-threads correspond to dual maximal flows between the two sets of boundary vertices with flow capacities set by $w(e)$.  This correspondence is a version of the max-flow min-cut theorem that can be proven using strong duality theorems from the theory of linear/convex programs. 

Bit-threads however tend to give a misleading picture of multipartite entanglement. For example, a bipartite dominance conjecture was formulated for three party holographic states based on the existence of such bit-thread configurations \cite{Cui:2018dyq}. However, the conjecture of Ref.~\cite{Cui:2018dyq} contradicts other measures of tripartite entanglement beyond von Neumann entropies, most notably for this work, that of reflected entropy. Ref.~\cite{Akers:2019gcv} argued using the reflected entropy that holographic states have large amounts of tripartite entanglement and thus, disproved a version of the bipartite dominance conjecture. 
Generally, minimal areas are only a limited probe of the underlying geometry, and one might expect other geometric objects -- such as surfaces of various co-dimensions -- to play an important role in a putative correspondence between geometry and quantum information. For example, computational complexity is believed to be associated to co-dimension $0$ or $1$ regions in spacetime \cite{Susskind:2014rva,Brown:2015bva}. There are now also hints that a class of tripartite entanglement should be associated to spatial co-dimension $2$ objects \cite{Hayden:2021gno}. In this paper, we find further evidence for the latter by proving a correspondence between reflected entropy and a minimal triway cut. Triway cuts generalize the bipartite cuts described above, and are likely the closest graph analog of a co-dimension $2$ object that is defined for any graph.\footnote{The cuts themselves are co-dimension $1$, however the three cuts meet at some locus that might be considered co-dimension $2$.}

Triway cuts are integer optimization programs that cannot be dualized to a bit-thread description.\footnote{What we mean here is that the (Lagrange) dual flow programs do not have the same optimal value, i.e., there is a \emph{duality gap}. However, it is possible to find a ``dual'' if one considers more exotic optimization problems. See \secref{sec:disc} for more discussion on this topic.}
Relaxing the integer constraint gives a linear program that underestimates the cut. The ratio between these values, the output of the integer program over that of the linear program, is generally a difficult quantity to compute, and is called the \emph{integrality gap}.
In fact, computing the integrality gap is an NP-complete problem \cite{doi:10.1137/S0097539792225297}.
 
We now introduce our main result in more detail. 
The reflected entropy $S_R$ \cite{Dutta:2019gen} of a state $\rho_{AB}$ is defined as the entropy of $AA^\star$ in the canonical purification $\big| \rho_{AB}^{1/2} \big> \in \mathcal{H}_{AA^\star BB^\star}$.
The R\'enyi generalization of $S_R$ is a simple one parameter family of quantum information measures: 
\begin{equation}
S^{(n)}_{R}(A:B) = -\frac{1}{n-1} \ln {\rm Tr} \rho_{AA^\star}^n \qquad \rho_{AA^\star} = {\rm Tr}_{BB^\star} \big| \rho_{AB}^{1/2} \big> \big< \rho_{AB}^{1/2} \big|
\end{equation}
We will prove:

\begin{theorem}
\label{thm:1}
For integer $n >  1$, the reflected entropy of a random tensor network state at large bond dimension, with a unique entanglement wedge for $AB:C$ and a unique triway cut for $A:B:C$ (with tensions specified below), satisfies:
\begin{equation}
\lim_{\chi \rightarrow \infty} \frac{\overline{ S_{R}^{(n)}(A:B) } }{\ln \chi} = \frac{1}{n-1}\mathcal{A}_{\mathbf{t}}(A:B:C) - \frac{n}{n-1} \mathcal{A}(AB:C)
\end{equation}
where $ \mathcal{A}_{\mathbf{t}}(A:B:C)$ is the area of a multiway cut with tensions $\mathbf{t}\equiv(t_{A:B},t_{B:C},t_{C:A}) = (2(n-1),n,n)$
\footnote{In this paper we mostly consider triway cuts defined with this tension. For this reason, we will often abbreviate the triway cuts simply as $\mathcal{A}(A:B:C)$ when it is unambiguous.}
and $\mathcal{A}(AB:C)$ is the area of the minimal cut (with tension $1$) for $AB:C$. 
\end{theorem}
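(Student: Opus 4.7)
The plan is to compute $\overline{{\rm Tr}\, \rho_{AA^\star}^n}$ at large $\chi$ via the Dutta--Faulkner replica construction for reflected entropy. Introducing an auxiliary integer $m \geq 2$ so that $|\rho_{AB}^{m/2}\rangle$ is defined on $N = mn$ copies of $|\psi\rangle$, one performs the Haar average of each bulk random tensor via Weingarten calculus and recasts the partition function as a sum over permutation assignments $g_v \in S_N$ at each bulk vertex, with Boltzmann weight $\prod_{e=(u,v)} \chi(e)^{\#(g_u g_v^{-1})}$. The external boundary conditions are $g_A = \eta_A$ and $g_B = \eta_B$---specific elements encoding, respectively, the R\'enyi-$n$ contraction on $AA^\star$ and the trace over $BB^\star$ within each replicated block of $|\rho^{m/2}\rangle\langle\rho^{m/2}|$---together with $g_C = e$.

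At large $\chi$ the sum is dominated by the saddle minimising $\sum_e w(e)\, d(g_u, g_v)$, where $d(g, h) = N - \#(g h^{-1})$ is the Cayley distance on $S_N$. One first argues that the minimum is attained by assigning each bulk vertex an element of $\{\eta_A, \eta_B, e\}$, so the free energy reduces to a $3$-way cut on $G$ whose tensions are the three pairwise Cayley distances. An explicit computation from the cycle structure of $\eta_A$ and $\eta_B$---both of which lie in the conjugacy class of $n$ cycles of length $m$---yields
\begin{equation*}
d(\eta_A, e) = d(\eta_B, e) = n(m-1), \qquad d(\eta_A, \eta_B) = 2(n-1),
\end{equation*}
with the last distance \emph{independent of} $m$. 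Writing the minimum as $\mathcal{A}_{\mathbf{t}(m,n)}(A{:}B{:}C)$ with $\mathbf{t}(m,n) = (2(n-1), n(m-1), n(m-1))$, and dividing by the norm $({\rm Tr}\,\rho_{AB}^m)^n \approx \chi^{-n(m-1)\, \mathcal{A}(AB{:}C)}$---itself a bipartite min-cut for $AB{:}C$ with tension $m-1$---gives
\begin{equation*}
-\ln \overline{{\rm Tr}\, \rho_{AA^\star}^n} = \bigl[\mathcal{A}_{\mathbf{t}(m,n)}(A{:}B{:}C) - n(m-1)\, \mathcal{A}(AB{:}C)\bigr] \ln\chi.
\end{equation*}
Under the uniqueness hypotheses, the optimal triway cut decomposes: its $B{:}C$ and $C{:}A$ segments together cover the RT surface $\gamma_{AB{:}C}$ and contribute $n(m-1)\mathcal{A}(AB{:}C)$, while its $A{:}B$ segment is an interior cross-section, so the bracketed expression simplifies to the $m$-independent quantity $\mathcal{A}_{(2(n-1), n, n)}(A{:}B{:}C) - n\, \mathcal{A}(AB{:}C)$---the same minimum written using the effective tensions $(2(n-1), n, n)$ of the theorem. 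Dividing by $-(n-1)$ then produces the claimed formula.

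The main obstacle is the first step of the saddle-point analysis: showing that the bulk assignment is genuinely confined to $\{\eta_A, \eta_B, e\}$ and not lowered by inserting a fourth permutation on a Cayley geodesic between two of these three elements. Establishing this requires a careful analysis of the geodesic triangle $(\eta_A, \eta_B, e)$ in $S_N$ using the explicit combinatorics of reflected permutations, the analogue of the ``flat'' two-party structure that underlies the usual RT derivation. The uniqueness hypotheses on the triway cut and on the RT surface for $AB{:}C$ then rule out competing saddles and guarantee the cut decomposition used in the final step.
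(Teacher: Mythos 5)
Your central claim---that the Haar-averaged free energy is minimized over configurations taking values only in $\{\eta_A, \eta_B, e\}$---is not a technical obstacle awaiting a clever lemma: it is false. The true optimizer requires a fourth permutation $X \in S_{mn}$ consisting of $2n$ cycles of length $m/2$, namely the element farthest from $e$ that lies simultaneously on the Cayley geodesics from $e$ to $\eta_A$ and from $e$ to $\eta_B$; it satisfies $d(X,e) = n(m-2)$ and $d(X,\eta_A) = d(X,\eta_B) = n$. The paper's minimizer assigns $X$ to the portion of the $AB$ entanglement wedge not covered by the $\eta_A$- or $\eta_B$-domains, and the minimum of the free energy decomposes as $n(m-2)\,\mathcal{A}(AB{:}C) + \mathcal{A}_{(2(n-1),n,n)}(A{:}B{:}C)$, which is \emph{not} the same as your $\mathcal{A}_{(2(n-1),n(m-1),n(m-1))}(A{:}B{:}C)$. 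Concretely, take a star graph with a single bulk vertex $v$ joined to $A$, $B$, $C$ with weights $w_A = w_B = 2$, $w_C = 3$, and set $n = 2$, $m = 4$. Then $\mathcal{A}(AB{:}C) = 3$ and the true minimum of $\sum_e w(e)\,d(g(x),g(y))$ is $20$, attained at $g(v) = X$, while restricting $g(v)\in\{\eta_A,\eta_B,e\}$ gives $22$. Your bracketed quantity evaluates to $22 - n(m-1)\cdot 3 = 4$, whereas the correct value from the theorem is $\mathcal{A}_{(2,2,2)} - n\,\mathcal{A}(AB{:}C) = 8 - 6 = 2$. The ``decomposition'' you invoke---that the $B{:}C$ and $C{:}A$ segments of the $(2(n-1),n(m-1),n(m-1))$-triway cut exactly cover the minimal $AB{:}C$ surface---fails because the large $n(m-1)$ tension drives the optimal triway configuration into a different topology than the $(2(n-1),n,n)$ one: in the example your optimal triway cut places $v$ on the $A$ (or $B$) side, while the $(2,2,2)$ cut places $v$ on the $C$ side. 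The paper establishes the four-element saddle through a chain of coarse-grainings (permutations to set partitions to Boolean strings to an integer program to a multiway cut) together with a matching explicit construction; this is a qualitatively different and much longer argument than the geodesic-triangle analysis you sketch.

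Independently, even granting the correct even-integer-$m$ formula, Theorem~1 is a statement at $m=1$, and the paper treats the $m\to 1$ continuation as a genuinely delicate step (Section~5): it uses a method-of-moments argument together with a weak measure-concentration estimate for $\rho_{AB}$, and the hypothesis of a unique entanglement wedge enters there quantitatively as a spectral gap bounding the variance. Your proposal does not engage with this at all; observing that the even-$m$ answer is $m$-independent does not by itself prove that the spectral integral defining $S_R^{(n)}$ at $m=1$ converges to the same expression.
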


Averaging is taken with respect to the Haar measure over unitary matrices that are applied to vertex states in the graph. See Definition~\ref{def:rtn} for the precise construction. The triway cut is defined in the same way as a cut: we split the vertices into three disjoint subsets  containing respectively boundary vertices $A,B,C$. The area is then the sum over the edges $e$ which intersect two of the three regions, weighted by the respective tensions and $w(e)$.  See Definition~\ref{def:tri-cut} and \figref{fig:network-multi-cut}.

\begin{figure}[h!]
\centering
\includegraphics[scale=1]{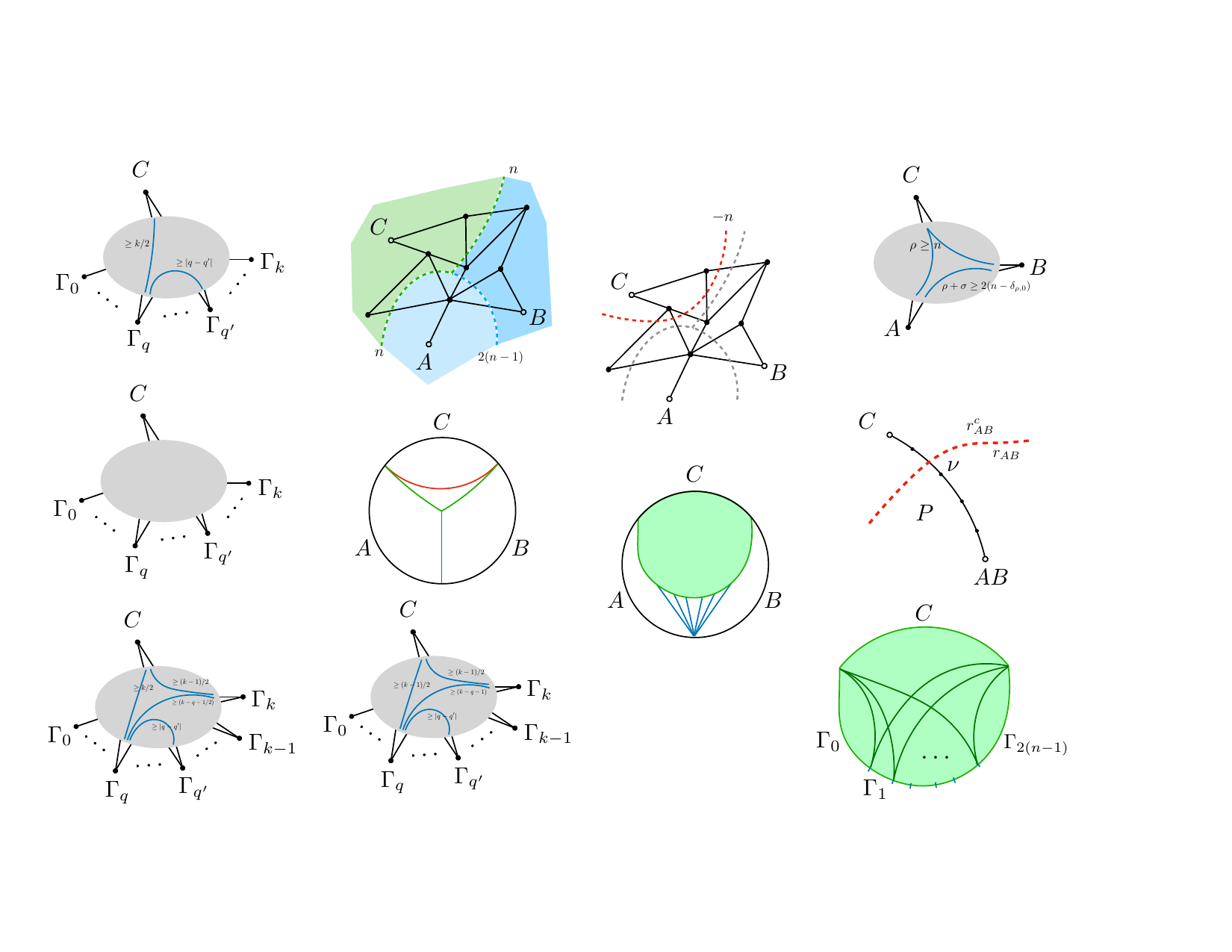} \hspace{2cm} \includegraphics[scale=1]{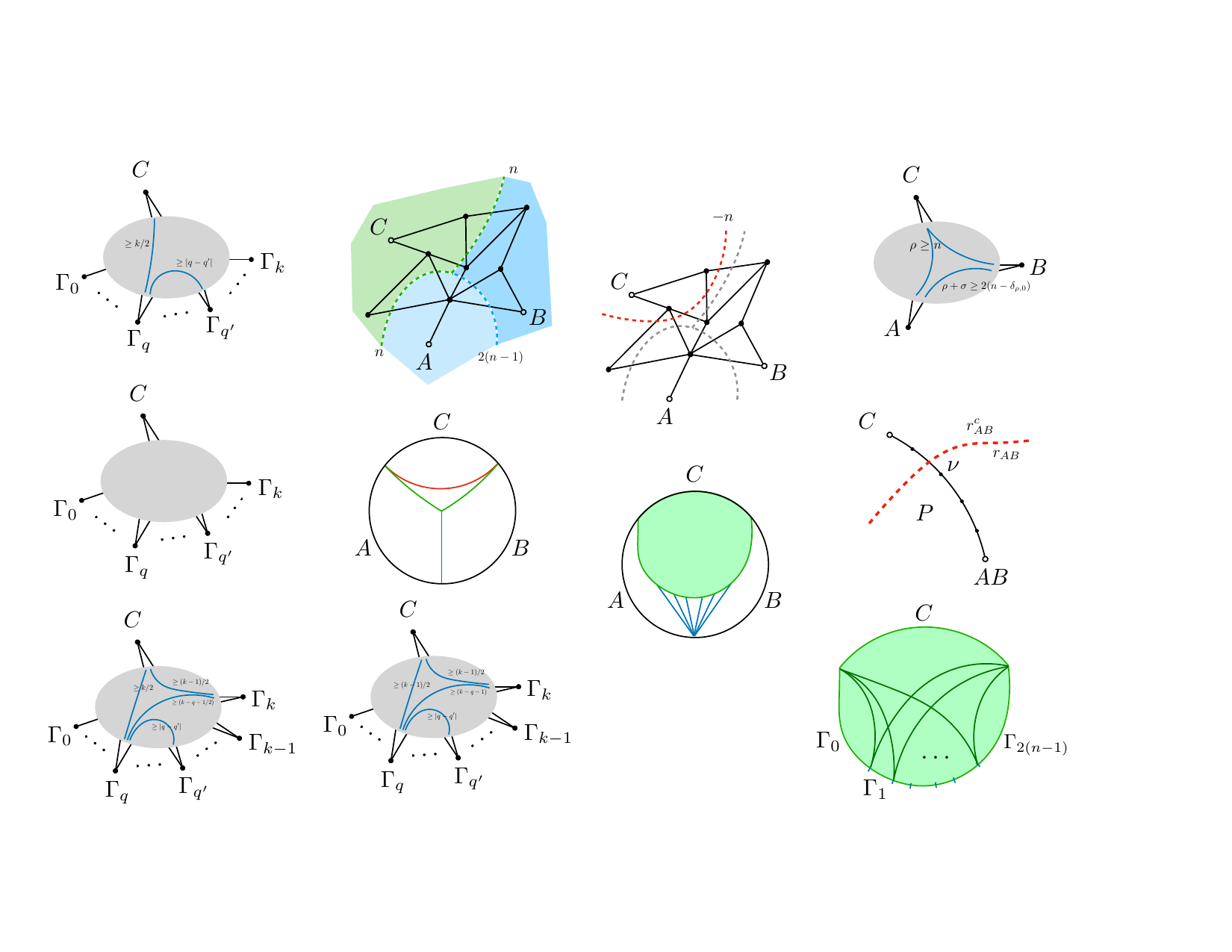}
\caption{ \label{fig:network-multi-cut} (\emph{left}): A network with three boundary vertices (open dots) $A,B,C$, showing the definition
of a triway cut, allowing for different domain wall tensions.  This figure is schematic: edges in the graphs are weighted by some unspecified bond dimensions. (\emph{right}): To compute the reflected entropy we must subtract the minimal cut for $AB:C$ which is always contained inside  the $C$ domain as shown. We should also divide the areas by $(n-1)$.
 }
\end{figure}

The Markov gap is defined as  \cite{Hayden:2021gno}:
\begin{equation}
h(A:B) = S_R(A:B) - I(A:B) \geq 0
\end{equation}
where $I(A:B)$ is the mutual information. The Markov gap vanishes iff the three party state has a particular structure: sum of triangle states with only bipartite entanglement between the three different parties \cite{Zou:2020bly}:
\begin{equation}
\ket{\psi}_{ABC} = \sum_{j} \sqrt{p_j}\ket{\psi_j}_{A_R^j B_L^j} \ket{\psi_j}_{B_R^j C_L^j}\ket{\psi_j}_{C_R^j A_L^j}.  
\end{equation}
The Markov gap thus detects a certain class of non-trivial tri-partite entanglement.\footnote{$h$ vanishes on GHZ states, so it does not detect all kinds of tripartite entanglement \cite{Akers:2019gcv}. A refined version based on the entanglement of purification does better \cite{Takayanagi:2017knl,Nguyen:2017yqw}. This is generally harder to compute but the results of this paper help compute it for a class of RTN states \cite{Akers:2023obn}.}

In particular, we have the following lower bound:
\begin{theorem} Under the uniqueness assumption for $n=2$ in Theorem~\ref{thm:1}, 
the (normalized) Markov gap ($MG$) of a random tensor network state at large bond dimension is lower bounded by:
\begin{equation}
\label{eq:MG-bound}
MG \equiv \lim_{\chi \rightarrow \infty} \frac{\overline{h(A,B)}}{\ln \chi} \geq 2 \mathcal{A}_{\mathbf{s}}(A:B:C) - \mathcal{A}(A:BC) -  \mathcal{A}(B:AC)- \mathcal{A}(C:AB)
\end{equation}
where $\mathcal{A}_{\mathbf{s}}$ is the standard minimal triway cut with equal tensions $\mathbf{s}=(1,1,1)$. 
\end{theorem}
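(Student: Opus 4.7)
The strategy is to combine three ingredients: (i) the Rényi monotonicity $S_R \geq S_R^{(2)}$, (ii) Theorem~\ref{thm:1} applied at the special value $n=2$, and (iii) the standard minimal-cut description of the von Neumann entropies in random tensor network states at large bond dimension. The key observation that makes everything fit together is that the tension vector $\mathbf{t}=(2(n-1),n,n)$ appearing in Theorem~\ref{thm:1} reduces at $n=2$ to $(2,2,2) = 2\mathbf{s}$, so by linearity of the area in the tensions we have $\mathcal{A}_{\mathbf{t}}(A:B:C)\big|_{n=2} = 2\,\mathcal{A}_{\mathbf{s}}(A:B:C)$.

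First, I would recall that for any density matrix $\sigma$, the Rényi entropy $S_n(\sigma)$ is non-increasing in $n$, so $S(\sigma)\geq S_2(\sigma)$. Applying this to $\sigma=\rho_{AA^\star}$ and then averaging gives $\overline{S_R(A:B)} \geq \overline{S_R^{(2)}(A:B)}$. Dividing by $\ln\chi$, taking $\chi\to\infty$ and invoking Theorem~\ref{thm:1} at $n=2$ yields
\begin{equation}
\lim_{\chi\to\infty}\frac{\overline{S_R(A:B)}}{\ln\chi} \;\geq\; 2\,\mathcal{A}_{\mathbf{s}}(A:B:C) \;-\; 2\,\mathcal{A}(AB:C).
\end{equation}

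Next, I would handle the mutual information $I(A:B)=S(A)+S(B)-S(AB)$. For random tensor network states at large $\chi$, each of the single-region entropies concentrates to its minimal-cut value: $\overline{S(A)}/\ln\chi \to \mathcal{A}(A:BC)$, $\overline{S(B)}/\ln\chi\to \mathcal{A}(B:AC)$, and $\overline{S(AB)}/\ln\chi\to \mathcal{A}(AB:C)=\mathcal{A}(C:AB)$ (this is the usual RT-for-RTN statement that also underlies Theorem~\ref{thm:1}). Concentration ensures the difference of averages equals the average difference in the limit, so
\begin{equation}
\lim_{\chi\to\infty}\frac{\overline{I(A:B)}}{\ln\chi} \;=\; \mathcal{A}(A:BC)+\mathcal{A}(B:AC)-\mathcal{A}(C:AB).
\end{equation}

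Subtracting the two limits then cancels one copy of $\mathcal{A}(C:AB)=\mathcal{A}(AB:C)$ against the $-2\mathcal{A}(AB:C)$ term and reproduces exactly the claimed inequality \eqref{eq:MG-bound}. There is essentially no hard step in this argument given Theorem~\ref{thm:1}; the only item requiring a little care is that the uniqueness hypothesis for $n=2$ stated in Theorem~\ref{thm:1} must hold so that the triway-cut formula is valid, and that concentration of the single-region entropies is strong enough that averaging commutes with the $\chi\to\infty$ limit. Both are standard for RTN states at large bond dimension, so the only real ``work'' is the bookkeeping of tensions $(2,2,2)=2(1,1,1)$ and the Rényi inequality $S_R\geq S_R^{(2)}$.
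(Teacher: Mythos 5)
Your proof is correct and follows essentially the same route as the paper, which simply invokes $\partial_n S_R^{(n)}\leq 0$ together with Theorem~\ref{thm:1} at $n=2$ (where $\mathbf{t}=(2,2,2)=2\mathbf{s}$), combined with the standard minimal-cut values of the entropies entering $I(A:B)$. The bookkeeping you spell out, including the cancellation of one copy of $\mathcal{A}(AB:C)=\mathcal{A}(C:AB)$, matches the intended argument.
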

\begin{proof}
    This follows from $\partial_n S_R^{(n)}\leq 0$ and Theorem~\ref{thm:1} applied at $n=2$.
\end{proof}

\begin{figure}[t]
\centering
\includegraphics[width=0.4 \textwidth]{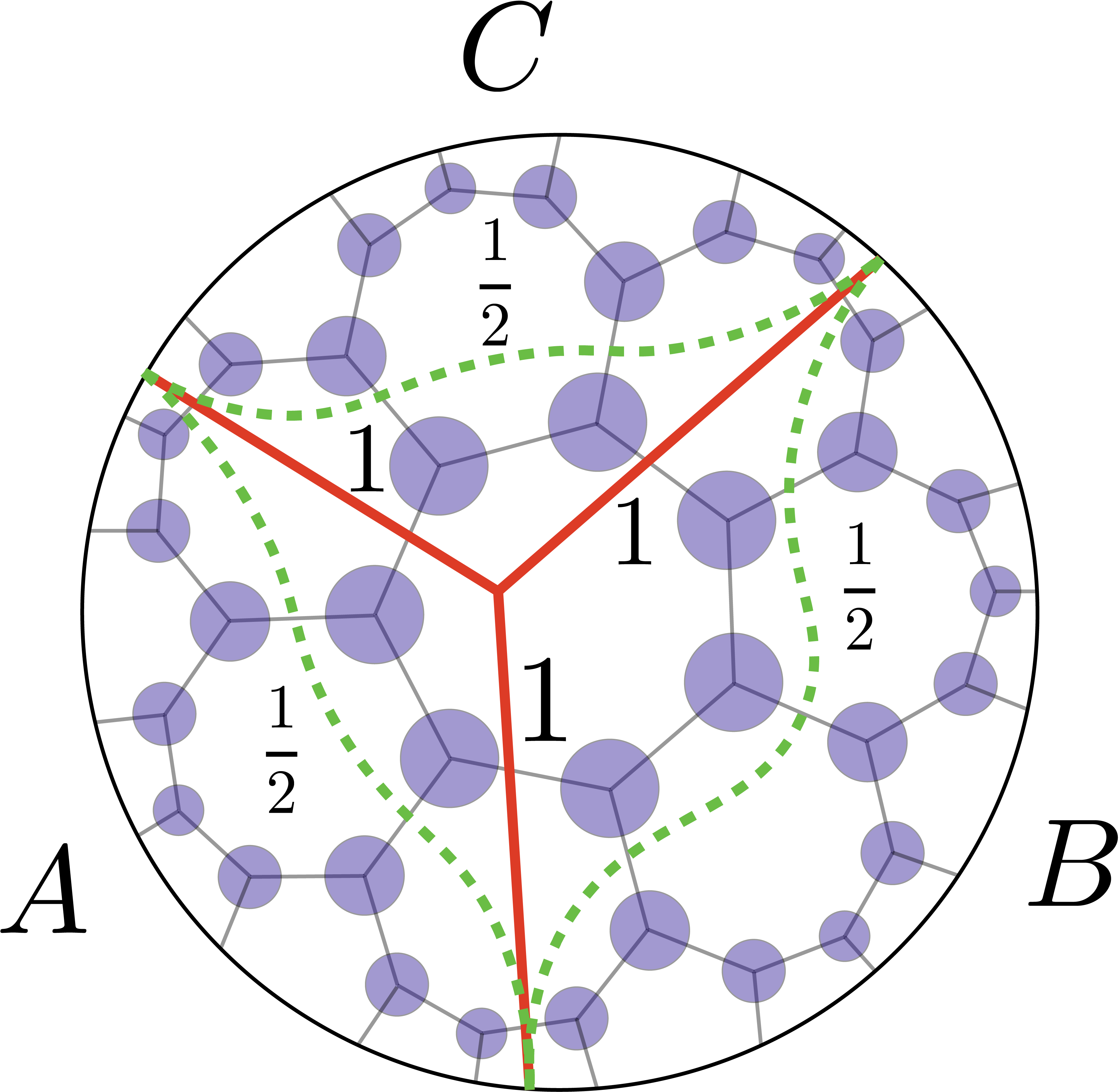}
\caption{The optimal configuration for the integer program computing the triway cut at $n=2$ involves domain walls (solid red) that indicate the location of $\rho(e)=1$. The optimal configuration for the same program after removing the integer constraint has domain walls (dashed green) with $\rho(e)=\frac{1}{2}$, that relax to the minimal surfaces in contrast with the triway cut.}
\label{fig:relax}
\end{figure}

We will show that the right-hand side of \Eqref{eq:MG-bound} is determined by the integrality gap of the integer program \cite{doi.org/10.1002/net.3230210106}:
\begin{align}
\begin{split}
\label{program-intro}
&\min_{\rho} \sum_{e} \rho(e) w(e) \\
&\forall L \in \mathcal{P}_{A,B} \cup \mathcal{P}_{A,C} \cup \mathcal{P}_{B,C} : \,\,  \sum_{e \in L} \rho(e) \geq 1
\end{split}
\end{align}
where $\rho(e) \in \mathbb{Z}_{\geq 0}$ and $\mathcal{P}_{x,y}$ refers to all paths through the edges of the network connecting vertices $x$ and $y$. The integrality gap $IG$ is the ratio between the two optimal values of the program before and after relaxing the integer constraint on $\rho$, a standard concept in the theory of integer programming \cite{Schrijver:LP}. The original program computes $\mathcal{A}(A:B:C)$ with equal tensions, while the relaxed program allows the domain walls to split into pairs and form three minimal cuts with tensions $1/2$ (see \figref{fig:relax}). The relaxed program is dual to the multicommodity flow problem \cite{Cui:2018dyq} on three parties for which efficient algorithms exist. In this sense, we can interpret the integrality gap as an obstruction to obtaining a bit threads picture. Explicitly our bound relates the two ``gaps'':
\begin{equation}
\label{gaps}
MG \geq (IG-1) \times ( \mathcal{A}(A:BC) +  \mathcal{A}(B:AC) + \mathcal{A}(C:AB))
\end{equation}
Roughly speaking $IG-1 \geq 0$ measures how computationally hard the integer program is, and so this gives an intriguing link between the Markov gap and complexity.\footnote{Note that this is logically different from the computational complexity of the state often discussed in AdS/CFT.}

\begin{figure}[t]
\centering
\includegraphics[width=0.3 \textwidth]{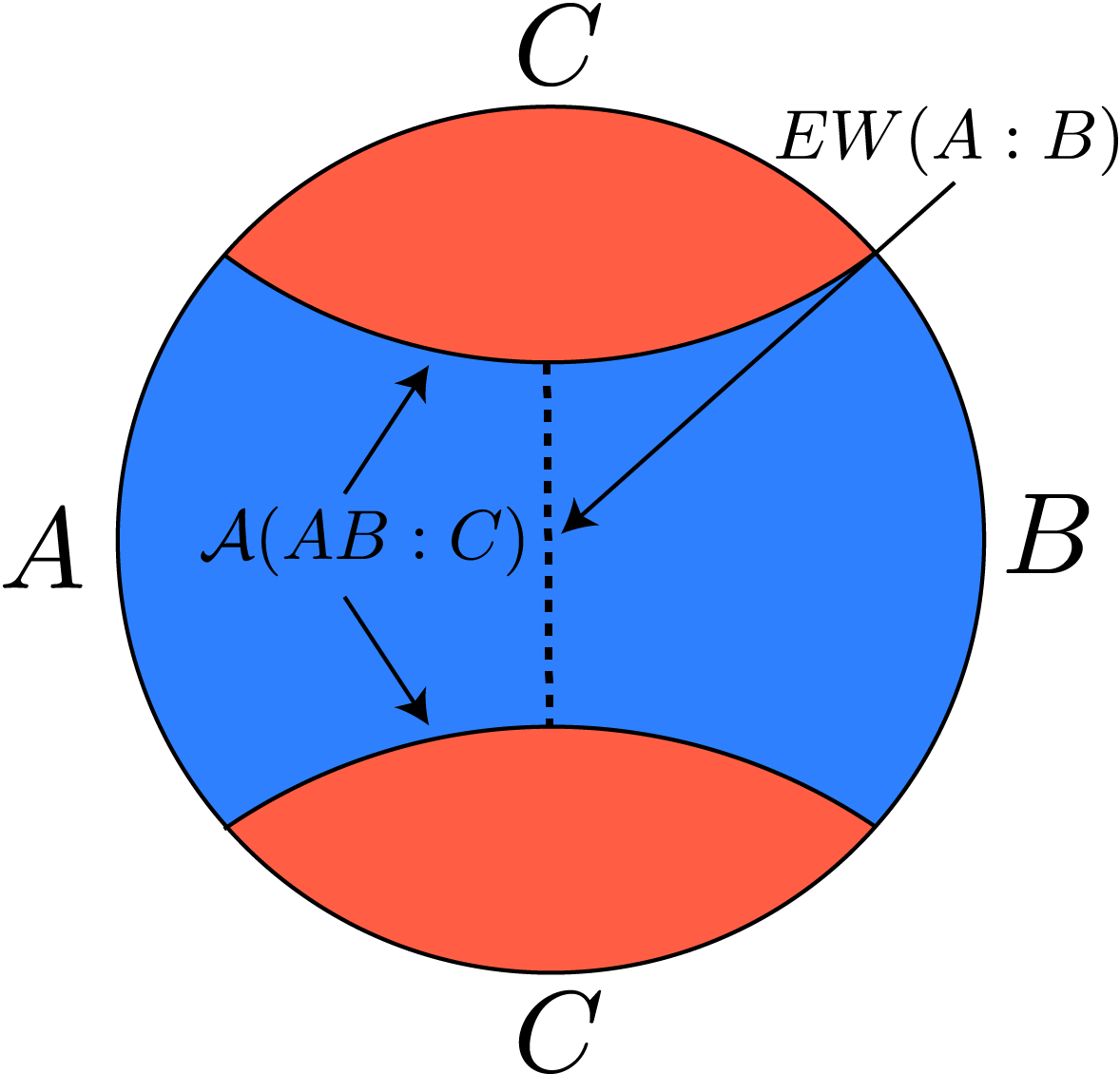}
\caption{The entanglement wedge of $AB$ (blue) is defined as the region between the boundary subregion $AB$ and the minimal cut $\mathcal{A}(AB:C)$. The entanglement wedge cross section $EW(A:B)$ is denoted with a dashed line.}
\label{fig:EW}
\end{figure}

While an obvious and natural continuation in $n$ away from the integers exists for the triway cut problem, we have not yet rigorously established that this still computes the $n$-R\'enyi reflected entropy. We have no reason to believe otherwise\footnote{Any lingering doubt might be used to challenge the conclusion \cite{Akers:2019gcv} that the bipartite dominance conjecture \cite{Cui:2018dyq} is false.  However, independent of the $EW$ duality, using \Eqref{gaps} we have now rigorously proven the bipartite dominance conjecture false for any random tensor network with $IG\neq 1$ for the program \Eqref{program-intro}.}, and so it is  worth noting that the limit $n \rightarrow 1$ reproduces the entanglement wedge cross-section:
\begin{equation}
\lim_{n \rightarrow 1 }  \left( \frac{1}{n-1}\mathcal{A}(A:B:C) - \frac{n}{n-1} \mathcal{A}(AB:C) \right) = 2 EW(A:B)
\end{equation}
where $EW(A:B)$ is simply the minimal cut dividing $A:B$ on the sub-graph defined by the ``entanglement wedge'' of the cut $AB:C$ (see \figref{fig:EW}). 

The calculation of reflected entropy in holography involves analytically continuing a two-parameter (usually called $m,n$) replica trick computation. It was shown in Ref.~\cite{Kusuki:2019evw} that the continuation of the naive saddles proposed in Ref.~\cite{Dutta:2019gen} suffers from an order of limits issue, an analog of which exists for RTNs as well \cite{Akers:2021pvd}. By incorporating new saddles, rigorously performing an analytic continuation in $m$ and proposing an analytic continuation in $n$ via the triway cut problem, we have resolved this issue in general RTNs. This motivates a similar prescription with the inclusion of new saddles even in AdS/CFT.

A summary of this paper is as follows. In \secref{sec:sum}, we state our main theorems
pertaining to the computation of
R\'enyi reflected entropies for the state $\ket{\rho_{AB}^{m/2}}$ with $m \geq 2$ an even integer. 
In \secref{sec:prelim} we give some required background, including results on standard network flows, minimal cuts, and the mathematics of permutations and set partitions that make appearances in various statistical mechanics models that we consider. \secref{sec:main} proves that the optimal solution to the reflected entropy statistical mechanics model can be found in a series of coarser models, finally ending in the multiway cut problem. In \secref{sec:cont} we continue the $m$ parameter away from even integers to $m \rightarrow 1$, where we make contact with the reflected entropy. In this step, we use the method of moments in conjunction with a weak form of measure concentration for random tensor network states. In \secref{sec:disc}, we discuss various aspects of our work such as bit threads, the relation to entanglement of purification as well as generalization to hypergraphs. Several lengthy proofs are relegated to Appendices. 

We end this introduction with a list of common notations used in this article: \\
\begin{center}
    \begin{longtable}{l|l|l}
    \caption{Glossary of symbols and notations.}
    \label{tab:glossary} \\
    Term & Description & First defined in \\
    \hline
    $G=\{E,V\}$ & graph & Def.~\ref{def:rtn}\\
    $w(e)$ & weight of an edge $e\in E$ & after Def.~\ref{def:rtn}\\ 
    $E_G[V']$ or $E[V']$ & set of edges that have some vertex in $V'$ & before Def.~\ref{def:R} \\
    $V_G[E']$ or $V[E']$ & set of vertices that lie in $E'$ & before Def.~\ref{def:R} \\
    $\mathcal{P}_{A,B}$ & set of paths from $A$ to $B$ & before Def.~\ref{def:R} \\
    $\widehat{\mathcal{P}}_{A,B}$ & set of edge-disjoint paths from $A$ to $B$ & before Def.~\ref{def:R} \\
    $r_{A}$ or $r_{A:B}$ & cut region containing $A$ or dividing $A:B$ & before \Eqref{eq:mu_def}\\
    $\mu(r)\subset E$ & cut surface of a region $r \subset V$ & \Eqref{eq:mu_def} \\
    $\mathcal{A}(A:B)$ & minimal cut for $A$ and $B$ & Def.~\ref{def:cut} \\
    $\mathcal{A}(A:B:C)$ & minimal triway cut among $A$, $B$ and $C$ & Def.~\ref{def:tri-cut} \\
    $\mathbf{1}_s(x)$ & indicator function of the set $s$ &  after \Eqref{cpfeas} \\
    $S_N$ & symmetric group of order $N$ & before \Eqref{eq:Cayley}\\
    $P_N$ & partitions of a set of order $N$ & before \Eqref{eq:q(g)}\\
    $B_N$ & set of string of $N$ Boolean algebras & before \Eqref{eq:bk(q)}\\
    $P(g)$ & coarse graining $P: S_{mn}\to P_{mn}$ & before \Eqref{eq:q(g)} \\
    $q_{g_0}(p)$ & coarse graining $q_{g_0}: P_{mn} \to P_{\#(g_0)}$ & \Eqref{eq:q_g0} \\
    $q_{X}(p) $ & coarse graining $q_X: P_{mn} \to P_{2n}$ & \Eqref{eq:q(g)} and \Eqref{eq:q_g0}  \\
    $s(q)$ & coarse graining $s: P_{2n} \to s(P_{2n}) \simeq B_{2n}$ & before \Eqref{eq:S_metric}\\
    $b^k(q)$ & coarse graining  $b^k: P_{2n} \to (B_{2n})^k \simeq \mathbb{Z}_2$ & before \Eqref{d1s} \\
    $u_k$ & largest partition with a singlet at location $k$ & after \Eqref{eq:bk(q)} \\
    $\#(x)$ & cycle ($x\in S_N$) or block ($x\in P_N$) counting function & \Eqref{eq:Cayley} and \Eqref{eq:P_metric}  \\
    $\#_1(q)$ & number of singlets in $q\in P_N$ & before \Eqref{eq:S_metric}\\
    $a \wedge b$ & meet of $a$ and $b$ & beginning of \secref{sec:ppb} \\
    $a \vee b$ & join of $a$ and $b$ & beginning of \secref{sec:ppb} \\
    $d(g_1,g_2)$ & ($g_1,g_2 \in S_N$) Cayley distance in $S_N$ & \Eqref{eq:Cayley} \\
    $d(q_1,q_2)$ & ($q_1,q_2 \in P_N$) distance on semimodular lattice $P_N$ & \Eqref{eq:P_metric} and \Eqref{eq:P_metric_2}\\
    $d_1(s_1,s_2)$ & ($s_i=s(p_i):p_i\in P_N$) singlet distance on $P_N$ & \Eqref{eq:S_metric} \\
    $d(b_1,b_2)$ & ($b_1,b_2 \in B_N$) Hamming distance on $B_N$ & \Eqref{eq:B_metric} and \Eqref{d1s}\\
    $d_\rho(x,y)$ & ($x,y\in E$) graph distance induced by $\rho:E\to \mathbb{R}$ & \Eqref{eq:d_graph}
    \end{longtable}
\end{center}
Throughout this article, we will define various distance functions on different sets. Most of these distances will be termed universally by $d(\cdot,\cdot)$, with an understanding that we use different definitions based on the set in context, as shown in Table~\ref{tab:glossary}.
For a function $\rho(e)$ on the set of edges $\rho:E\to\mathbb{R}$ we will sometimes write $\rho(E') \equiv \sum_{e\in E'}\rho(e)$ for a subset $E'\subset E$.

Note: An alternate proposal for the entanglement wedge cross section ($EW$) was made in Refs.~\cite{Takayanagi:2017knl,Nguyen:2017yqw} relating it to the entanglement of purification ($E_P$). In Ref.~\cite{Akers:2023obn}, we used the results obtained in this paper to prove the $E_P=EW$ conjecture for specific RTNs.
Also note that triway cuts in holography have also been discussed in Refs.~\cite{Gadde:2022cqi,Penington:2022dhr,Gadde:2023zzj,Gadde:2023zni} as a candidate holographic dual to a different quantity, the multi-entropy. We will comment more on the relation to this work later in the paper.

\section{Summary of main theorem}
\label{sec:sum}

We summarize our main mathematical results, setting up some of our notation at the same time. Our main results utilize the replica trick used to compute reflected entropy in random tensor networks \cite{Akers:2021pvd,Akers:2022zxr}, a computation we formulate using graph theory.
See also \cite{Cao:2024tog} for a related replica computation.

\begin{definition}[Random tensor network states \cite{Hayden:2016cfa}]
\label{def:rtn}
Consider an undirected graph $G = (V,E)$ where edges correspond to unordered pairs of vertices $e = \{ v_1,v_2\}$ for $e \in E$ and $v_{1,2} \in V$. We mark special vertices $\partial \subset V$ as boundary vertices
and the graph describes a pure state in the Hilbert space: 
\begin{equation}
\left|\psi\right> \in \bigotimes_{v \in \partial} \mathcal{H}^v , \qquad \mathcal{H}^v = \bigotimes_{e \in E(v)} \mathcal{H}_{\chi(e)}^{(v)},
\end{equation}
where $E(v):=\{\{x,y\}\in E: x=v \text{ or } y=v\}$ are the subset of edges that contain $v$ and $\chi(e)$ is the bond dimension of the edge.
For a given vertex $v$ not in the boundary, i.e. $v \in V \backslash \partial$, we pick random tensors $T(v) \in \mathcal{H}^v$ 
according to the Haar measure. Then the states in question are defined via:
\begin{equation}
\left| \psi \right> \propto \left(\mathop{\otimes}_{v \in V \backslash \partial} \left< T(v) \right| \right)  \left(\mathop{\otimes}_{e \in E} \left| \Psi_e \right> \right)
\end{equation}
where $\ket{\Psi_e} \in \mathcal{H}_{\chi(e)}^{(v_1)} \otimes \mathcal{H}_{\chi(e)}^{(v_2)}$
is a maximally entangled state between the vertex Hilbert spaces of $\{v_1,v_2 \} = e$.
\end{definition}

For convenience, on the graph we will use the rescaled weighting
\begin{equation}
w(e) = \frac{\ln \chi(e)}{\ln\chi},
\end{equation}
which we hold fixed as we send $\chi \rightarrow \infty$. 

We will care about graphs with boundary $\partial = A \sqcup B \sqcup C$, and we are interested in using the replica trick to compute the averaged $(m,n)$-R\'enyi reflected entropy
\begin{equation}
\label{BBn}
    S_R^{(m,n)}(A:B) = - \frac{1}{n-1}\ln \overline{{\rm Tr} (\rho^{(m)}_{AA^\star})^n}~,~~~~ \rho_{AA^*}^{(m)}={\rm Tr}_{BB^\star} \left| \rho_{AB}^{m/2} \right> \left< \rho_{AB}^{m/2} \right|,
\end{equation}
where $\rho_{AB} = {\rm Tr}_C \left| \psi \right> \left< \psi \right|$ and where $m \in 2 \mathbb{Z}_{\geq 1}$.
The $n$-th moment ${\rm Tr}(\rho^{(m)}_{AA^\star})^n$ contains $nm$ copies of $\ket{\psi}\bra{\psi}$, and the overline denotes that we compute the quantity in \Eqref{BBn} averaged over the choice of $T(v)$ from the Haar ensemble.
This calculation reduces to a statistical mechanics model with vertex-valued group elements $g(v) \in S_{mn}$ \cite{Hayden:2016cfa, Akers:2021pvd}:
\begin{align}
\label{statmech}
    \overline{{\rm Tr} (\rho^{(m)}_{AA^\star})^n} = \frac{Z_{m,n}}{(Z_{m,1})^n}, \quad
    Z_{m,n} =\sum_{\{g(v)\}} \exp\left(- \sum_{e = \{x,y\} \in E} d(g(x),g(y)) \ln \chi(e) \right)~,
\end{align}
where $\sum_{\{g(v)\}}$ denotes a sum over all configurations of $g(v)$.
We have denoted by $d(g_1,g_2)$ the Cayley distance in $S_{mn}$:
\begin{equation}
\label{eq:Cayley}
d(g_1,g_2) = mn - \#(g_1 g_2^{-1}),
\end{equation}
where $\#(\cdot)$ is the cycle counting function.
Note that we do not sum over permutations at boundary vertices; instead their permutations are fixed by the patterns of contractions that compute the moments in \Eqref{BBn}.
We fix $g(v) = g_A$ for $v \in A$, $g(v) = g_B$ for $v \in B$, and $g(v) = \id$ for $v \in C$ where $\id$ is the identity group element.\footnote{The identity element in a group $G$ is colloquially termed $e\in G$. This unfortunately clashes with our notation for edges. Hence in this article we will use $id$ to denote the identity element in $S_N$ as well as the finest element of set partitions $P_N$. This will hopefully not cause confusion to our readers.}
The group elements $g_A$ and $g_B$ can be read off from \Eqref{BBn}. They are related by a conjugation and contain $n$ cycles of length $m$. The Cayley distances between these elements are $d(g_A,g_B) = 2(n-1)$, $d(\id,g_A)=d(\id,g_B)=n(m-1)$.

Exactly evaluating \Eqref{statmech} for general graphs is difficult, but in the large $\chi$ limit we can often obtain a very good approximation by evaluating \Eqref{statmech} at a saddle point, i.e., finding the configuration $\tilde{g}(v)$ that maximizes the exponential and dropping all other terms. Namely, we have
\begin{equation}
    Z_{m,n} \approx \exp\left(- \sum_{e = \{x,y\} \in E} d(\tilde{g}(x),\tilde{g}(y)) \ln \chi(e) \right)~,
\end{equation}
for $\tilde{g}(v)$ now the configuration minimizing the ``free energy''
\begin{equation}
\label{eq:free_energy}
    \sum_{e = \{x,y\} \in E} d(\tilde{g}(x),\tilde{g}(y)) \ln \chi(e)~.
\end{equation}
The saddle point approximation is valid away from phase transitions and thus, in such generic situations, the problem of computing $S_R^{(m,n)}(A:B)$ reduces to the problem of finding the optimal configuration $\tilde{g}(v)$.
It is this problem that we focus on in this paper.

A particularly important group element that will arise in the optimal configuration $g(v)$ for \Eqref{statmech} is uniquely defined as the element $X \in S_{mn}$ with the largest Cayley distance to the identity, $d(X,\id)$, that lies on the joint Cayley geodesics defined by: $d(g_A,X) + d(X,\id) = d( g_A,\id)$ and $d(g_B,X) + d(X,\id) = d(g_B,\id)$. $X$ contains $2n$ cycles of length $m/2$, roughly speaking, corresponding to the intersection of the cycles in $g_A$ and $g_B$.  
The specific form of $g_A,g_B$ and $X$ can be found in  Appendix~\ref{app:specific}.

\begin{figure}[t]
    \centering
    \includegraphics[scale=.35]{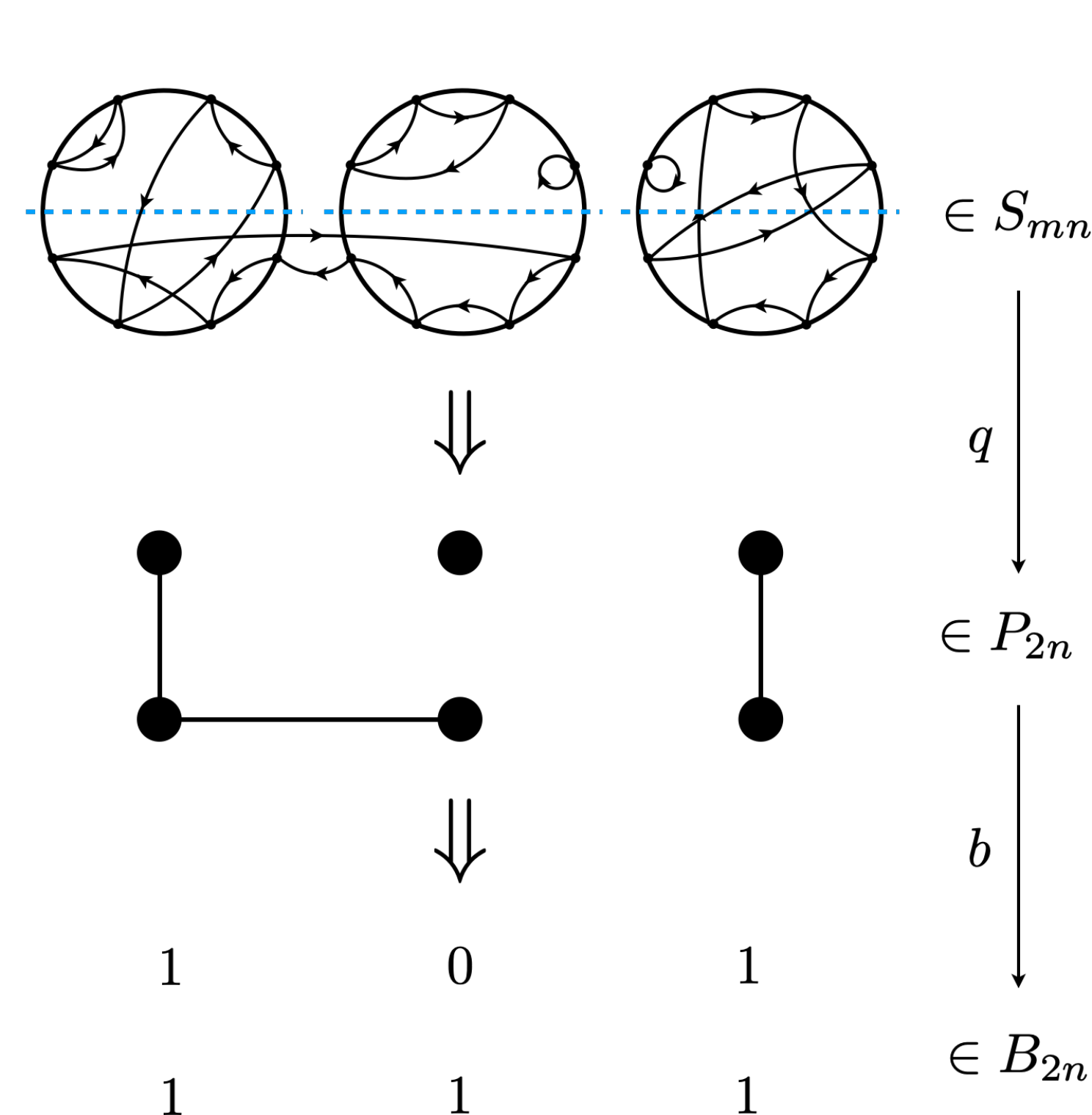}
    \caption{An illustration of the coarse-graining process used to establish our main result. We start from $g\in S_{mn}$ as shown at the top of the figure. For the convention used here to represent $g$: Each circle represents the $m$ replica, and each circle is further replicated $n$ times (here $m=8$ and $n=3$). The actions of $g$ are represented as directed arrow connecting different replicas. See also Appendix~\ref{app:specific}.
    To pass from $S_{mn}$ to $P_{2n}$, we divide the $mn$ elements into $2n$ blocks, each containing the $m/2$ elements in the corresponding half circle. We join two blocks if and only if any element in them can be reached by repeated actions of $g$. To pass from $P_{2n}$ to $B_{2n}$, we simply label each block by a number in $\{0,1\}$. That is, we label a block by $0$ if it is not connected to any other block (such a block is called a \emph{singlet}), and by $1$ if otherwise. The result is a bit string of length $2n$.}
    \label{fig:coarse-grain}
\end{figure}
In order to find the optimal configuration $g(v)$, with minimal free energy, we will relate this statistical mechanics model to a sequence of increasingly coarse grained  models where we throw out some information contained in $g(v)$.
See \figref{fig:coarse-grain}.
As a result the free energy can only decrease (as we prove rigorously), but then -- crucially -- we will prove that the minimal free energy in the most coarse-grained model \emph{upper bounds} the free energy of the original model! 
This loop of inequalities allows us to use the tractable, most coarse-grained model to find the minimal free energy.
This approach was inspired by the solution to the single tensor case studied in Appendix~B of Ref.~\cite{Akers:2021pvd} where a similar sequence of steps was followed. Below we first describe the various coarse-grained variables, then we define the various stat mech/optimization problems we are interested in. An important ingredient is the theory of set partitions which form a lattice, in the sense of a partially ordered set, with binary operations $\vee$ and $\wedge$. Boolean algebras will also make an appearance.  We review the necessary background material in \secref{sec:ppb}. 

Given a permutation $g \in S_N$ we can associate a set partition $P(g) \in P_N$ by mapping the cycles to subsets of $\mathbb{Z}_N$, simply forgetting how elements within each cycle are permuted.
In our case $N= nm$. We then further coarse-grain this set partition further by \emph{blocking} $P(g)$ into partitions of the $2n$ blocks in $P(X)$. This reduction effectively removes dependence on $m$. More specifically, for each $g \in S_{mn}$ we associate an element in $P_{2n}$ via $q_X(g) : S_{mn} \rightarrow P_{2n}$ defined as:
\begin{equation}
\label{eq:q(g)}
q_X(g) \equiv (P(g) \vee P(X)) / \sim \qquad x \sim y \,\,\,\, {\rm iff} \,\,\,\,  \{x, y \} \vee P(X) = P(X)
\end{equation}
where $\vee$ is the least upper bound operation on the lattice (defined in \secref{sec:ppb}) and the quotient is applied element wise to each element within the partition. We introduce a distance on set partitions\footnote{See \secref{sec:ppb} for some properties of this metric. See Ref.~\cite{MONJARDET1981173} for some discussions of this metric. }:
\begin{equation}
\label{eq:P_metric}
d(q_1,q_2) = \#(q_1) + \#(q_2) - 2 \#(q_1 \vee q_2)
\end{equation}
where $\#(\cdot)$ now counts the number of subsets in the partition. Note that $\#(g)=\#(P(g))$. This distance will replace the Cayley distance function in \Eqref{statmech}. For the problem in $P_{2n}$, the boundary elements will be fixed to partitions: $q_A = q_X(g_A)$ and $q_B = q_X(g_B)$ and $q_C = \id \equiv q_X(\id)$. Again we list these specific partitions in Appendix~\ref{app:specific}. The partition distances are $d(q_A,q_B) = 2(n-1)$ and $d(q_{A,B},\id) = n$.

At the next level, we introduce Boolean variables $b = \{b^k\}_{k=1}^{2n} \in (\mathbb{Z}_2)^{\otimes 2n}\equiv B_{2n}$ that detect the presence of a singlet (subsets with size $1$) in $q \in P_{2n}$.
Define:
\begin{equation}
\label{eq:bk(q)}
b^k(q) = 2-\#(u_k \vee q),
\end{equation}
where $u_k$ for $k \in \{ 1,2, \ldots 2n \}$, is the partition with a singlet at $k$ plus a block of size $2n-1$ containing with the rest of the elements.
Thus, $b^k = 0$ implies there is a singlet at location $k$; otherwise $b^k =1$. 
The distance between two Boolean strings $b_{1,2}$ is:
\begin{equation}
\label{eq:B_metric}
d(b_1,b_2) = \sum_{k =1}^{2n} | b_1^k - b_2^k |.
\end{equation}
In this case, the boundary conditions are $b_{AB} =11 \ldots 1$ and $b_C = 00 \ldots 0$. 

The set of walks/paths in a graph between $v_1$ and $v_2$ will be denoted $\mathcal{P}_{v_1,v_2}$, and consist of a sequence of edges $L \subset E$ that join at common vertices
and start (end) at $v_1$ ($v_2$). 
The set of edge disjoint paths will be denoted $\widehat{\mathcal{P}}_{v_1,v_2}$ -- these are paths with no repeated edges (vertices may be repeated).
The vertices $v_1$ and $v_2$ may be replaced by sets of vertices with paths starting/ending on any vertex in the respective set. 
For a graph $G=\{V,E\}$ and a subset of edges $E'\subset E$, denote $V_G[E']=\{v\in V: \{x,v\}\in E' \text{ or } \{v,y\}\in E'\}$ as the set of vertices that lie in $E'$. 
Similarly, $E_G[V']=\{ \{x,y\}\in E: x\in V' \text{ or } y\in V'\}$ are the set of edges that have \emph{some}
vertex in $V' \subset V$.
The subscript $G$ in $E_G[\cdot]$ and $V_G[\cdot]$ shall often be omitted when the referencing graph is unambiguous.
Lastly, for a function on edges $\rho : E \rightarrow \mathbb{R}$ we define $\rho(E') = \sum_{e \in E'} \rho(e)$ for a subset $E' \subset E$. 

Define the various optimization problems based on the refinements $g \rightarrow q \rightarrow b$:
\begin{definition}[\emph{Reflected entropy optimization}]
\label{def:R}
\begin{equation}
R = \min_{ g}  R(g) \,\,, \qquad R(g) \equiv \sum_{e = \{x,y\} \in E } w(e) d(g(x),g(y))
\end{equation}
where $g : V \rightarrow S_{mn}$ and such that $g(A) = g_A$ , $g(B) = g_B$ and $g(C) = \id$. 
\end{definition}

\begin{definition}[\emph{Set partition optimization}]
\label{def:Q}
\begin{equation}
Q = \min_{q} Q(q) \,\, , \qquad Q(q) = \sum_{e =  \{x,y\} \in E} w(e) d(q(x),q(y))
\end{equation}
where $q : V \rightarrow P_{2n}$ and such that $q(A) = q_A$ and $q(B) = q_B$ and $q(C) = \id$.
\end{definition}

\begin{definition}[\emph{Boolean optimization}]
\label{def:B}
\begin{align}
B &= \min_{b} B(b) \\  B(b) &= \min_{r}   \sum_{e \in E}  w(e) r(e)  \\
&{\rm subject \,\, to}\, \quad \forall L \in \mathcal{P}_{A,B}\,\, : \sum_{e \in L} r(e) \in 2(n-\delta_{b_L,b_{AB}})  + 2 \mathbb{Z}_{\geq 0}\\
&{\rm and \,\, }\, \qquad \forall e \in E \,\, : r(e) \geq \left\lceil \frac{1}{2} d(b(x),b(y)) \right\rceil 
\end{align}
where
$r : E  \rightarrow \mathbb{Z}_{\geq 0}$ and $b : V \rightarrow B_{2n}$ such that
$b(A) = b(B) = b_{AB} \equiv 11 \ldots 1$ and $b(C) = 00 \ldots 0$, where $b_L\equiv \wedge_{v\in V[L]}b(v)$ denotes the piecewise \emph{and} operation on Boolean algebra along the path $L$
\footnote{That is for $L = \{\{v_0,v_1\},\{v_1,v_2\},\cdots,\{v_{i-1},v_i\}\}$: $b_L = b(v_1)\wedge b(v_2)\wedge\cdots\wedge b(v_i)$ and $(b_1\wedge b_2)^k \equiv b_1^k \wedge b_2^k$. Recall that $b^k$ denotes the $k$-th element of the bit string $b$.}.
\end{definition}

\begin{definition}[\emph{Integer program}]
\label{def:I}
\begin{align}
I &=  \min_{\rho, \sigma}   \sum_{e \in E}  w(e) (\sigma(e) + \rho(e)) \\ 
\label{eveness}
&{\rm subject \,\, to}\, \quad \forall L \in \mathcal{P}_{A,B}\,\, : \sum_{e \in L} (\sigma(e) + \rho(e)) \in   2(n - \delta_{\rho(L), 0})  + 2 \mathbb{Z}_{\geq 0} 
\\ 
&{\rm and \,\, }\, \qquad \forall L \in \mathcal{P}_{AB,C} \, \, : \sum_{e \in L} \rho(e) \geq n  
\end{align}
where $\rho(L)\equiv \sum_{e\in L}\rho(e)$ and $\rho,\sigma: E \rightarrow  \mathbb{Z}_{\geq 0}$. 
\end{definition}

Our main result is to prove that the optimal solution to these programs is given by a particular minimal and multiway cut program that we now define.
A \emph{cut} (or \emph{cut set}) $r$ for two vertices (or sets of vertices), $A:B$ is defined as a subset $r\subset V$ such that $A\subset r$ and $B\subset r^c$.\footnote{We will sometimes denote a cut that divide boundary vertices $A:B$ as $r_{A:B}$, or simply as $r_A$ when $B=A^c$.}
A \emph{cut surface} $\mu(r)$ for a given subset $r \subset V$ comprises the subset of edges that lie on the ``boundary'' of $r\leftrightarrow r^c$, in the sense that for all $e \in \mu(r)$ the pair $\{ x, y \} = e$ satisfies $x \in r$ and $y \in r^c$ or vice versa. In other words:
\begin{equation}
\label{eq:mu_def}
    \mu(r) = E[r]\cap E[r^c].
\end{equation} 

\begin{definition}[\emph{Minimal cut}]
\label{def:cut}
The minimal cut for $A:B$ partitions two sets of boundary vertices $\partial = A \sqcup B$ and minimizes:
\begin{align}
\mathcal{A}(A:B) & = \min_{r} \mathcal{A}(r) \\
\nonumber
\mathcal{A}(r) &= \sum_{e\in \mu(r)}w(e) \equiv w(\mu(r))
\end{align}
over all subsets $r \subset V$, such that the boundary vertices $A \subset r$ and $B \subset r^c$. 
\end{definition}

\begin{definition}[\emph{Triway cut}]
\label{def:tri-cut}
Fix some tensions $\mathbf{t}=\{t_{A:B},t_{B:C},t_{C:A}\}$ with $t_{A:B}, t_{B:C},t_{C:A} > 0$ and boundary vertices $\partial = A \sqcup B \sqcup C$. The triway cut problem minimizes:
\begin{align}
\mathcal{A}_{\mathbf{t}}(A:B:C) &= \min_{(\alpha,\beta,\gamma)} \mathcal{A}_{\mathbf{t}}(\alpha:\beta:\gamma) \\
 \nonumber
\mathcal{A}_{\mathbf{t}}(\alpha:\beta:\gamma) &= t_{A:B}\sum_{e\in\mu(\alpha:\beta)}w(e)+t_{B:C}\sum_{e\in\mu(\beta:\gamma)}w(e)+t_{C:A}\sum_{e\in\mu(\gamma:\alpha)}w(e) \\
\nonumber &= t_{A:B}\,w(\mu(\alpha:\beta)) + t_{B:C}\,w(\mu(\beta:\gamma)) + t_{A:B}\,w(\mu(\gamma:\alpha))
\end{align}
over all subsets $(\alpha,\beta, \gamma)$ such that
 $\alpha\sqcup\beta\sqcup\gamma=V$ with $A\subset\alpha, B\subset\beta, C\subset\gamma$
and where $\mu(r_1:r_2) = E[r_1]\cap E[r_2]$.
\end{definition}
\begin{remark}
The triway cut is a special case of a \emph{multiway cut} and we will often refer to it as such. The equal tension case is the standard multiway cut problem, which arises here when $n=2$ (see e.g., Ref.~\cite{doi:10.1137/S0097539792225297,doi.org/10.1002/net.3230210106}). In the context of AdS/CFT, multiway cuts have been studied previously in \cite{Gadde:2022cqi,Penington:2022dhr,Gadde:2023zzj,Gadde:2023zni}. 
\end{remark}

\begin{figure}
\centering
\includegraphics[scale=.35]{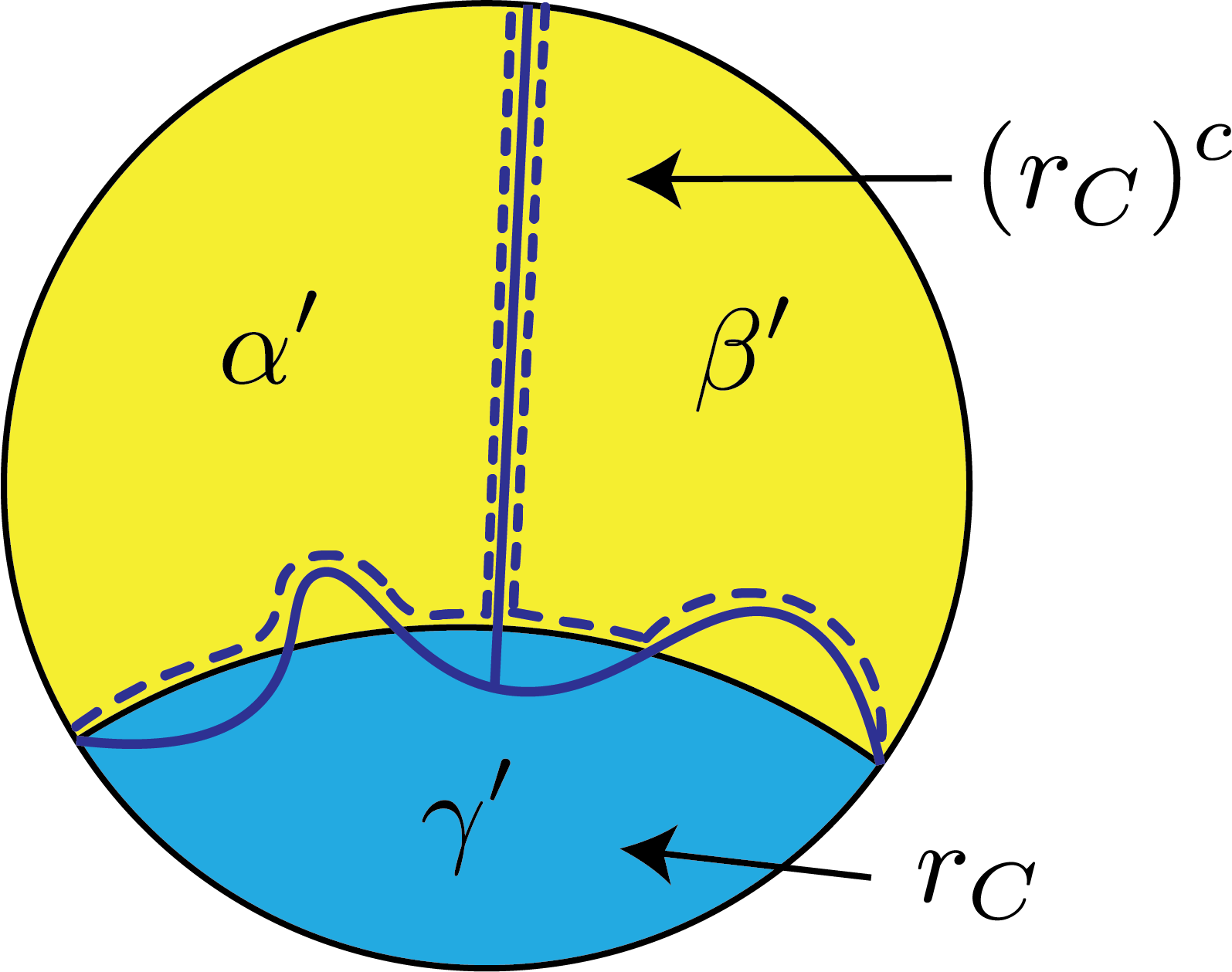}
\caption{The entanglement wedge of $C$ is denoted $r_C$ (blue) and the complement is $(r_C)^c$ (yellow). For any triway cut configuration $(\alpha',\beta',\gamma')$ (demarcated by solid lines), we can construct a better triway cut $(\alpha,\beta,\gamma)$ (separated by dashed lines) such that $r_C \subset \gamma$. }
\label{fig:lemma3}
\end{figure}

We note that minimal cuts must always lie inside some optimal multiway cut region: 
\begin{lemma}
\label{cutinmcut}
Given an optimal solution to a minimal cut problem for $C: AB$, $r_C \subset V$ then there exists
an optimal solution $(\alpha,\beta,\gamma)$  to the triway cut problem for $(A,B,C)$ with $t_{A:C} = t_{B:C}$  and such that $r_C \subset \gamma$.
\end{lemma}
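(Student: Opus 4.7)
The plan is an exchange argument. I would start with any optimal triway cut $(\alpha',\beta',\gamma')$ and the given minimal $C{:}AB$ cut $r_C$, and form the new partition by absorbing $r_C$ into the $C$-region,
\begin{equation*}
\gamma = \gamma' \cup r_C, \qquad \alpha = \alpha'\setminus r_C, \qquad \beta = \beta'\setminus r_C.
\end{equation*}
Because $C \subset r_C$, $A, B \subset r_C^c$, and the triway boundary conditions give $A\subset\alpha'$, $B\subset\beta'$, $C\subset\gamma'$, the new sets satisfy $A\subset\alpha$, $B\subset\beta$, $C\subset\gamma$, and $r_C \subset \gamma$ by construction (cf.\ \figref{fig:lemma3}). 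If I can show $\mathcal{A}_\mathbf{t}(\alpha:\beta:\gamma) \leq \mathcal{A}_\mathbf{t}(\alpha':\beta':\gamma')$, then optimality of $(\alpha',\beta',\gamma')$ forces $(\alpha,\beta,\gamma)$ to be optimal as well, which is exactly what the lemma asserts.

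Next I would bound the area change by an edge-by-edge accounting. Partition $V$ into the six cells
\begin{equation*}
A_1 = \alpha'\cap r_C,\ A_2 = \alpha'\cap r_C^c,\ B_1 = \beta'\cap r_C,\ B_2 = \beta'\cap r_C^c,\ C_1 = \gamma'\cap r_C,\ C_2 = \gamma'\cap r_C^c,
\end{equation*}
and write $w(X,Y) = \sum_{\{x,y\}\in E,\ x\in X,\, y\in Y} w(\{x,y\})$. Only vertices in $A_1 \cup B_1$ change region under the exchange, so $\Delta := \mathcal{A}_\mathbf{t}(\alpha:\beta:\gamma) - \mathcal{A}_\mathbf{t}(\alpha':\beta':\gamma')$ receives contributions only from edges touching $A_1 \cup B_1$. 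A short case analysis of the nine relevant edge types, using $s \equiv t_{A:C} = t_{B:C}$ and $t \equiv t_{A:B}$ (the lemma's hypothesis enters at exactly this point), yields
\begin{align*}
\Delta =\;& s\bigl[w(A_1,A_2) + w(B_1,B_2)\bigr] + (s-t)\bigl[w(A_1,B_2) + w(A_2,B_1)\bigr] \\
& -\, s\bigl[w(A_1,C_1) + w(A_1,C_2) + w(B_1,C_1) + w(B_1,C_2)\bigr] - t\, w(A_1,B_1).
\end{align*}

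The key input will be the minimality of $r_C$. I would observe that $C_1 = \gamma' \cap r_C$ is itself a valid $C{:}AB$ cut, since $C \subset C_1$ and $A, B$ are disjoint from both $\gamma'$ and $r_C$. Hence $w(\mu(r_C)) \leq w(\mu(C_1))$, and expanding both cuts in the six-cell basis then cancelling common terms gives
\begin{equation*}
w(A_1,A_2) + w(B_1,B_2) \leq w(A_1,C_1) + w(B_1,C_1) - w(A_1,B_2) - w(A_1,C_2) - w(B_1,A_2) - w(B_1,C_2).
\end{equation*}
Multiplying by $s$ and substituting into $\Delta$, the $s\,w(A_1,C_1)$ and $s\,w(B_1,C_1)$ contributions cancel the corresponding negative $s$-terms, and the indefinite coefficient $(s-t)$ on the mixed edges is converted into $-t$ after combining with the $-s$ produced by the cut inequality. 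What remains is
\begin{equation*}
\Delta \leq -t\bigl[w(A_1,B_1) + w(A_1,B_2) + w(A_2,B_1)\bigr] - 2s\bigl[w(A_1,C_2) + w(B_1,C_2)\bigr] \leq 0,
\end{equation*}
which completes the exchange.

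The delicate part will be the sign bookkeeping. The coefficient $(s-t)$ on $w(A_1,B_2)$ and $w(A_2,B_1)$ has no definite sign, so the argument cannot be made edge-local. It succeeds precisely because the hypothesis $t_{A:C}=t_{B:C}$ makes the minimal-cut inequality contribute equal $-s$ pieces on both $w(A_1,B_2)$ and $w(A_2,B_1)$, exactly absorbing the indefinite $(s-t)$ into the manifestly negative $-t$. Without this tension symmetry the exchange would not preserve the triway cut area in general, consistent with the hypothesis stated in the lemma.
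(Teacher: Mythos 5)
Your proposal is correct and follows essentially the same route as the paper: the same merge $\gamma = \gamma'\cup r_C$, $\alpha=\alpha'\setminus r_C$, $\beta=\beta'\setminus r_C$, and the same key input, namely comparing $r_C$ against the competitor $C{:}AB$ cut $\gamma'\cap r_C$ (your $C_1$) to bound the area change. The paper packages the bookkeeping more compactly via submodularity of the cut functional ($w(\mu(\gamma'\cup r_C)) + w(\mu(\gamma'\cap r_C)) \le w(\mu(\gamma'))+w(\mu(r_C))$), while you unfold the same inequality through the explicit six-cell decomposition; the content and the use of the hypothesis $t_{A:C}=t_{B:C}$ coincide.
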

\begin{proof}
 Consider any optimal triway cut $(\alpha',\beta',\gamma')$. Construct $\alpha = \alpha' \cap (r_C)^c$,  $\beta = \beta' \cap (r_C)^c$  and $\gamma =\gamma' \cup r_C$ (see \figref{fig:lemma3}). We note:
\begin{align}
 \sum_{e \in \mu(\alpha:\beta)} w(e) \le \sum_{e \in \mu(\alpha':\beta')} w(e)
\end{align}
since $\mu(\alpha:\beta)\supseteq\mu(\alpha':\beta')$ by construction and $w(e)\ge0$. Also,
\begin{align}
\begin{split}    
    &\sum_{e\in \mu(\alpha:\gamma)}w(e) + \sum_{e\in \mu(\beta:\gamma)}w(e) = w(\mu(\gamma)) \\
   = \, &w(\mu( \gamma' )) + w(\mu(r_C)) - w( \mu(\gamma \backslash r_C^c)) \leq  w(\mu( \gamma' )) 
  \end{split}
\end{align}
where in the last inequality we used the fact that $\gamma \backslash r_C^c$ is a cut for $C:AB$ and that $r_C$ is a minimal such cut.
Putting these inequalities together we have:
\begin{equation}
\mathcal{A}(\alpha,\beta,\gamma) \leq \mathcal{A}(\alpha',\beta',\gamma')
\end{equation}
implying equality and that $(\alpha,\beta,\gamma)$ is a minimal triway cut satisfying the properties stated in the Lemma. 
\end{proof}

The main theorem of this paper establishes:
\begin{theorem}
\label{thm:RPBI}
The minimum of each of the programs defined above are determined by an optimal solution to the multiway cut problem
with $\mathbf{t}=(t_{A:B},t_{B:C},t_{A:C}) = (2(n-1),n,n)$:
\begin{equation}
R -  \mathcal{A}(AB:C) n(m-2) = Q = B = I = \mathcal{A}_{\mathbf{t}}(A:B:C)
\end{equation}
\end{theorem}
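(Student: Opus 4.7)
The plan is to prove the theorem by establishing a loop of inequalities,
\begin{equation*}
R - n(m-2)\mathcal{A}(AB:C) \;\geq\; Q \;\geq\; B \;\geq\; I \;\geq\; \mathcal{A}_{\mathbf{t}}(A:B:C) \;\geq\; R - n(m-2)\mathcal{A}(AB:C),
\end{equation*}
where the four inequalities in the chain arise from successive coarse-grainings $g \to q \to b \to (\rho,\sigma) \to $ triway cut, and the closing inequality is supplied by an explicit construction. Once this loop is established, every intermediate inequality must be an equality.

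For the first step $R - n(m-2)\mathcal{A}(AB:C) \geq Q$, given any feasible $g : V \to S_{mn}$, I would set $q(v) := q_X(g(v))$ and verify that the boundary data are compatible, since $q_X(g_A) = q_A$, $q_X(g_B) = q_B$, and $q_X(\id) = \id$. The Cayley distance edge-wise dominates the partition distance, so $R(g) \geq Q(q) + \text{(excess)}$, where the excess quantifies how much extra Cayley length is spent whenever the configuration strays from a geodesic through $X$. Because $X$ lies on the joint Cayley geodesic from both $g_A$ and $g_B$ to $\id$, with $d(\id, X) = n(m-2)$, and using Lemma~\ref{cutinmcut} to align the optimal $AB:C$ minimal cut $r_C$ with the triway region $\gamma$, the summed excess over all edges is bounded below by $n(m-2)\mathcal{A}(AB:C)$.

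The remaining reductions $Q \geq B \geq I$ proceed by further coarse-graining: $b(v) := b(q(v))$ keeps only the singlet pattern of the partition, and the path-sum evenness conditions in $B$ encode the parity of Boolean transitions along paths; these translate directly into the $(\rho, \sigma)$-variables of $I$, with $\sigma + \rho$ playing the role of $r$ and $\rho$ enforcing the $AB:C$ separation. The inequality $I \geq \mathcal{A}_{\mathbf{t}}(A:B:C)$ then follows because any feasible $(\rho, \sigma)$ delivers a triway cut with tensions $(2(n-1),n,n)$: the constraint $\rho(L) \geq n$ on $\mathcal{P}_{AB,C}$-paths determines the $C$-region at tension $n$, while the evenness bound $(\sigma + \rho)(L) \geq 2(n-\delta_{\rho(L),0})$ on $\mathcal{P}_{A,B}$-paths separates $A$ from $B$ at tension $2(n-1)$.

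The closing step proceeds by explicit construction. Pick an optimal triway cut $(\alpha, \beta, \gamma)$ together with a minimal $AB:C$ cut $r_C$ arranged so that $r_C \subset \gamma$, as permitted by Lemma~\ref{cutinmcut}. Define $g = g_A$ on $\alpha$, $g = g_B$ on $\beta$, $g = X$ on $\gamma \setminus r_C$, and $g = \id$ on $r_C$. A direct edge-by-edge accounting yields Cayley-distance contributions $2(n-1)$, $n$, $n$, $n(m-2)$, and $n(m-1)$ for edges in $\mu(\alpha{:}\beta)$, $\mu(\alpha{:}\gamma\setminus r_C)$, $\mu(\beta{:}\gamma\setminus r_C)$, $\mu(\gamma\setminus r_C : r_C)$, and $\mu(\alpha\cup\beta : r_C)$ respectively. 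The geodesic identity $d(g_A,\id) = d(g_A,X) + d(X,\id)$, i.e.\ $n(m-1) = n + n(m-2)$, lets one regroup these contributions as $\mathcal{A}_{\mathbf{t}}(A:B:C) + n(m-2)\mathcal{A}(AB:C)$ exactly.

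The main obstacle will be proving the sharp edge-wise inequality underpinning the first reduction, since the Cayley metric on $S_{mn}$ is strictly richer than the partition metric on $P_{2n}$. The inequality is tight only when $g$ sits on the correct geodesic through $X$, so one must match each unit of excess Cayley length to a crossing of $\mu(AB:C)$. Uniqueness of the $AB:C$ entanglement wedge together with the alignment from Lemma~\ref{cutinmcut} are what guarantee the surplus is at least $n(m-2)\mathcal{A}(AB:C)$; formulating and proving that edge-wise inequality cleanly will be the technical heart of the argument.
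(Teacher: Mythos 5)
Your overall architecture is the same as the paper's: a chain $R - n(m-2)\mathcal{A}(AB:C) \geq Q \geq B \geq I \geq \mathcal{A}_{\mathbf{t}}(A:B:C)$ closed by the explicit configuration $g=g_A$ on $\alpha$, $g_B$ on $\beta$, $X$ on $\gamma\setminus r_C$, $\id$ on $r_C$ (using Lemma~\ref{cutinmcut} to arrange $r_C\subset\gamma$), and your edge-by-edge accounting of that configuration, including the regrouping via $n(m-1)=n+n(m-2)$, correctly reproduces \Eqref{rless}. However, each link of the chain is asserted rather than proved, and in two places the sketch is actually wrong in substance. For the step $R\geq n(m-2)\mathcal{A}(AB:C)+Q$ there is no ``edge-wise'' inequality of the kind you describe: a single edge only gives $d(g(x),g(y))\geq d(q_X(g(x)),q_X(g(y)))$, with no extra $n(m-2)$ per unit of cut weight. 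The paper extracts the $\mathcal{A}(AB:C)\,d(X,\id)$ term by decomposing the minimal cut into an optimal edge-disjoint flow $c(L)$ (\Eqref{program:cP}), splitting each flow path at $\mu(r_{AB})$, and using the identity $d(\id,g(\nu))+\#(g(\nu))-\#(q(\nu))=d(\id,X)+d(\id,q(\nu))$ together with \Eqref{gqbound}; the deficit is accumulated along whole paths to $C$, not matched edge by edge, and neither Lemma~\ref{cutinmcut} nor uniqueness of the entanglement wedge enters here (Theorem~\ref{thm:RPBI} assumes no uniqueness at all).

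The more serious gap is your treatment of $I\geq\mathcal{A}_{\mathbf{t}}(A:B:C)$, which you dispose of in one sentence by claiming any feasible $(\rho,\sigma)$ ``delivers a triway cut.'' That is precisely what is not obvious: the coupling $\delta_{\rho(L),0}$ between the two families of constraints means the level sets of $\rho$ and $\sigma$ need not organize into three regions, and establishing that an optimal $(\rho,\sigma)$ can be brought to the form of a potential function $k(x)$ with $\sigma(e)=|k(x)-k(y)|$ inside the $\rho=0$ wedge, seeding an $\ell$-intersecting cut problem whose solution collapses to $2(n-1)$ pairs of nested cuts, is the bulk of the paper's technical work (Lemma~\ref{lem:subG}, Lemma~\ref{lem:optk}, Theorem~\ref{thmttt}, proved in Appendices~\ref{app:subG} and \ref{app:optk}). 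Similarly, $Q\geq B$ is not mere bookkeeping of ``parity of Boolean transitions'': the constraint $\sum_{e\in L}r(e)\geq 2(n-\delta_{b_L,b_{AB}})$ rests on the enhanced triangle inequality $d(q_A,q)+d(q,q_B)\geq d(q_A,q_B)+2$ whenever $q$ has a singlet (Lemma~\ref{lem:tAtB}, which uses the cycle structure of the pair $(q_A,q_B)$), together with the bound $d(q_1,q_2)\geq\lceil d(b_1,b_2)/2\rceil$ (Lemma~\ref{lem:qtos}); without these the tensions $(2(n-1),n,n)$ do not emerge. So while your proposal identifies the correct proof skeleton and the correct closing construction, the hard content of three of the four inequalities is missing, and you have misidentified where the main difficulty lies.
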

\begin{proof}
Consider some minimal cut $r_C$ for $C:AB$. 
There is an optimal solution to the triway cut problem with $r_C \subset \gamma$ by Lemma~\ref{cutinmcut}.
Set 
\begin{equation}
\label{eq:unique_g}
g(x) = \begin{cases}
    g_A, \quad &x\in \alpha \\
    g_B, \quad &x\in \beta \\
    \id, \quad &x\in r_C \\
    X, \quad &x\in \gamma\setminus r_C
\end{cases}
\end{equation}
We then estimate:
\begin{equation}
\label{rless}
R \leq \mathcal{A}(AB:C) d(\id, X) +  \mathcal{A}_{\mathbf{t}}(A:B:C)
\end{equation}
by direct computation and since $d(X,g_A) = d(X,g_B) = n$ and $d(g_A,g_B) = 2(n-1)$ give the correct tensions. 

After this we prove a chain of inequalities in the other way, $R \geq \mathcal{A}(AB:C) d(\id, X)  + Q$ (Lemma~\ref{lem:RtoP'} and Lemma~\ref{lem:RtoP}) , and $Q \geq B$ (Lemma~\ref{lem:PtoB2} and Corollary~\ref{lem:PtoB}), and $B \geq I$ (Lemma~\ref{lem:BtoI})
and finally $I \geq  \mathcal{A}_{\mathbf{t}}(A:B:C)$ (Theorem~\ref{thmttt}).
Together with \Eqref{rless} this implies equality through the chain. 
\end{proof}

Theorem~\ref{thm:RPBI} only asserts that the optimal value of the program $R$ are equivalent to that of the triway cut. It need not be the same as the one that was constructed above and there may be other degenerate solutions that achieve the minimum -- Indeed one generally expects a huge degeneracy when a phase transition happens. Such phase transitions in tensor networks are usually signaled by a degenerate minimal surface. The following theorem states that the optimal solution constructed above is unique when the system is far away from such transitions.

\begin{theorem}
\label{thm:uniqueness}
    Let $g:V\to S_{mn}$ be an optimal solution to the reflected entropy permutation group optimization problem $R$ on graph $G$ with a unique solution $(\alpha,\beta,\gamma)$ to the triway cut problem for $(A,B,C)$ and a unique solution $r_C$ to the minimal cut problem $C:AB$. Then $g$ is unique and of the form as given in \Eqref{eq:unique_g}.
\end{theorem}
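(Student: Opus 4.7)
The plan is to trace the uniqueness of the triway cut and of the minimal cut backward through the chain of reductions used to prove Theorem~\ref{thm:RPBI}. Given any optimal $g: V \to S_{mn}$ for program $R$, the coarse-grained data $q(v) = q_X(g(v))$, the Boolean field $b(v) = b(q(v))$, and the induced integer pair $(\rho,\sigma)$ must each be optimal in their respective programs $Q$, $B$, and $I$, since the chain $R - \mathcal{A}(AB:C)n(m-2) \geq Q \geq B \geq I \geq \mathcal{A}_{\mathbf{t}}(A:B:C)$ is saturated from both ends. So uniqueness of $g$ reduces to showing that (i) the coarse-grained objects are themselves uniquely determined, and (ii) the coarse-graining $g \mapsto (q, \mathbf{1}_{g=\id})$ is injective on the set of optimal configurations with the given boundary conditions.

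For step (i), from the integer-program data one reads off a triway cut of $V$: the $\rho$ constraints force a cut separating $AB$ from $C$, while the combined $(\sigma,\rho)$ constraints force a further cut separating $A$ from $B$ in the complement. Inspecting the equality case of $I \geq \mathcal{A}_{\mathbf{t}}(A:B:C)$ (Theorem~\ref{thmttt}) shows that these cuts assemble into a bona fide minimal triway cut $(\alpha',\beta',\gamma')$, which by the uniqueness hypothesis must coincide with $(\alpha,\beta,\gamma)$. A parallel application of Lemma~\ref{cutinmcut}, combined with uniqueness of $r_C$, forces the $\rho$-cut to equal $\mu(r_C)$, so the subregion of $\gamma$ on which $g = \id$ is exactly $r_C$. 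Propagating these identifications up through the coarse-grainings $b \mapsto q \mapsto g$ then pins the values of $q$ on each vertex to the corresponding partitions $q_A, q_B, \id, q_X(X)$.

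For step (ii), I would use that the Cayley distance satisfies a strict triangle inequality off geodesics, together with the characterization of $X$ as the unique element of $S_{mn}$ of maximal distance from $\id$ lying on the joint Cayley geodesics between $\id$ and both $g_A,g_B$. Saturation of the edge-by-edge bound used in Lemmas~\ref{lem:RtoP'} and \ref{lem:RtoP} forces $g$ to be constant on each of the four regions $\alpha, \beta, r_C, \gamma \setminus r_C$; boundary conditions fix the values on $\alpha, \beta, r_C$, while the geodesic characterization fixes the value $X$ on $\gamma\setminus r_C$.

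The main obstacle lies in making this last step fully rigorous: one must verify that saturation of the coarse-graining bound $R(g) \geq \mathcal{A}(AB:C)\,d(\id,X) + Q(q)$ pins down $g$ vertex-by-vertex rather than only up to the coarse-grained invariants. Concretely, this requires a careful audit of the equality conditions in Lemmas~\ref{lem:RtoP'} and \ref{lem:RtoP}, showing that the edge-by-edge decomposition of the Cayley distance into a partition-distance contribution and an identity-indicator contribution is tight at every edge, and that the residual geodesic constraints at edges straddling different regions admit only the assignment specified in \Eqref{eq:unique_g}. Any slack in a single edge would either violate one of the chain inequalities or would correspond to an alternate triway cut, contradicting the uniqueness hypotheses.
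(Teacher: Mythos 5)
Your overall strategy---propagate optimality down the chain $R\ge Q\ge B\ge I\ge\mathcal{A}_{\mathbf{t}}$ and then use the saturation conditions to pin $g$ region by region---is the same as the paper's, but two essential steps are missing, and one of your intermediate claims is incorrect as stated. First, the coarse-grained programs $Q$, $B$, $I$ are blind to the distinction between $X$ and $\id$, since $q_X(X)=q_X(\id)=\id\in P_{2n}$; moreover the ``$\rho$-cut'' in an optimal solution of $I$ sits on the tension-$n$ walls $\mu(r_A{:}r_C)\cup\mu(r_B{:}r_C)$ of the \emph{triway} cut, not on the minimal cut $\mu(r_{AB})$ for $AB{:}C$. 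So your claim that Lemma~\ref{cutinmcut} plus uniqueness of $r_C$ ``forces the $\rho$-cut to equal $\mu(r_C)$'' cannot fix where $g=\id$ versus $g=X$. That information lives only at the $R\to Q'$ stage (the flow decomposition of Lemma~\ref{lem:RtoP'}), and the paper needs a separate argument (Lemma~\ref{lem:uniqueEW-id}): the saturation conditions give a geodesic condition along flow lines with $c(L)>0$ and an either/or condition $w(e)=\sum_L c(L)\mathbf{1}_L(e)$ or $d(g(x),g(y))=0$ off them; one then shows that if some $v\in r_{AB}^c$ had $g(v)\neq\id$, the zero-Cayley-distance region around $C$ would itself be a minimal cut different from $r_{AB}^c$, contradicting uniqueness of the entanglement wedge. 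Nothing in your step (i) reproduces this.

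Second, the step you yourself flag as the ``main obstacle'' is precisely where the paper's key idea enters, and appealing to the characterization of $X$ as the farthest element on the joint geodesics does not close it. Saturation only tells you $q_X(g(v))=\id$, i.e.\ $P(g(v))\le P(X)$, on $r_X=r_{AB}\setminus r_A\setminus r_B$; a priori $g$ could be constant there at some other element $g_i$ with cycles inside the blocks of $X$, and the free-energy bookkeeping involves Cayley distances $d(g_i,\id)$, $d(g_i,g_{A/B})$ which are not controlled by partition distances without more input. The paper's resolution is a genus argument: saturation of $d(g_i,g_{A/B})=d(q_X(g_i),q_{A/B})$ forces the genus $G_{g_{A/B}}(g_i)=0$, and since $P(g_{A/B})\ge P(X)$ this implies $G_X(g_i)=0$, so that $d(g_i,X)=d(P(g_i),P(X))$ and $d(g_i,\id)=d(X,\id)-d(X,g_i)$ as Cayley distances. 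Feeding this into the decomposition of $R(g)$ and using $R(g)=Q(q_X(g))+\mathcal{A}(AB{:}C)\,d(X,\id)$ yields $\sum_{v\in\partial r_{AB}}w(e)\,d(X,g(v))=0$, which is what actually forces $g=X$ on the cut surface and hence on all of $r_X$. Without this ingredient (and without the $k$-indexed family of cuts from Lemmas~\ref{lem:subG} and~\ref{lem:optk} all being forced to coincide with the unique triway cut, which is what makes Lemma~\ref{lem:dq-sat} work edge by edge), your argument establishes uniqueness only up to the coarse-grained invariants, not the assignment \Eqref{eq:unique_g}.
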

We prove this theorem in \secref{sec:uniqueness} after we establish the necessary Lemmas that lead to Theorem~\ref{thm:RPBI}.

\section{Preliminaries}
\label{sec:prelim}

\subsection{Min cuts and Max flows}
\label{sub:minmax}

We discuss a version of max-flow min-cut that is convenient here. 
Consider the undirected max flow problem written as a linear program and in terms of edge disjoint paths:
\begin{align}
\label{program:cP}
 \max_{c} & \sum_{L \in \widehat{\mathcal{P}}_{A,B} } c(L)\, \\
{\rm subject \,\,\, to:} & \quad \forall e \in E \quad \sum_{P \in \widehat{\mathcal{P}}_{A,B}} c(L) \mathbf{1}_{L}(e) \leq w(e)
\label{cpfeas}
\end{align}
where $\mathbf{1}_{L}(e) = 1$ if $e \in L$ and $0$ otherwise;  and where $c : \widehat{\mathcal{P}}_{A,B} \rightarrow \mathbb{R}_{\geq 0}$ and we maximize over all such maps. Any $c(L)$ that satisfies the constraint above is called \emph{feasible} while any feasible $c$ that achieves the maximum is called \emph{optimal}. 

Let us derive the dual program. For each constraint we introduce
a function $\rho: E \to \mathbb{R}_{\ge0}$ and minimize over $\rho(e)$: 
\begin{equation}
= \max_{c} \min_{\rho} \left( \sum_{L \in \widehat{\mathcal{P}}_{A,B} } c(L) + \sum_{e \in E} \rho(e) \left( w(e) - \sum_{P \in \widehat{\mathcal{P}}_{A,B}} c(L) \mathbf{1}_{L}(e) \right)  \right)
\end{equation}
The minimum is $-\infty$ with $\rho \rightarrow \infty$ for some $e$ if $c$ is not feasible. If $c$ is feasible the minimum is $\rho(e) =0$. Thus, this gives back the original problem after maximizing over $c$. We can now exchange the maximum and the minimum since the function is linear in $c$ at fixed $\rho$ (and hence concave)
and linear in $\rho$ at fixed $c$ (and hence convex), so we can use the von Neumann's minimax theorem:
\begin{equation}
=  \min_{\rho}   \max_{c}\left(\sum_{e \in E} \rho(e)  w(e) +  \sum_{L \in \widehat{\mathcal{P}} }c(L) (1 -\sum_{e \in L} \rho(e) )  \right)
\end{equation}
Maximizing over $c$ gives $+ \infty$ unless $\sum_{e \in P} \rho(e)  \geq 1$ in which case we find $c(L) = 0$. Thus, we can restrict to the set of \emph{dual feasible} $\rho$
satisfying this later constraint. 

In summary, the dual program is defined in terms of a map $\rho : E  \rightarrow \mathbb{R}_{\geq 0}$:
\begin{align}
\label{rhomin}
 \min_{\rho} & \sum_{e \in E } \rho(e) w(e) \, \\
{\rm subject \,\,\, to:} & \quad \forall L \in \widehat{\mathcal{P}}_{A,B} \quad \sum_{e \in L} \rho(e) \geq 1
\label{tosat}
\end{align}
As we just derived, this problem satisfies strong duality.\footnote{Strong duality means that the optimal values of the original and dual program are equal.} 
We can also impose the further constraints that all $L \in \mathcal{P}_{A,B}$ (not necessarily edge disjoint) also satisfy $\sum_{e \in L} \rho(e) \geq 1$ without changing
the minimum, since it does not change the set of feasible $\rho$. 

It is possible to show that this last problem is a min-cut problem. 
A proof of this fact can be found in standard references in linear programming, e.g., Ref.~\cite{Schrijver:LP}.
We will take a slightly different approach here. 
We consider the integer version of \Eqref{tosat} where we further impose $ \rho \in \mathbb{Z}_{\geq 0}$. This does indeed give the equivalent optimal value as \Eqref{rhomin}, although this is not obvious and there could well have been an \emph{integrality gap.}
We will now demonstrate it is equivalent to the min-cut problem, or that the integrality gap vanishes for this problem. 
To construct the cut we consider an optimal $\rho$ and define:
\begin{equation}
r_A = \{ x : d_\rho(x,A) = 0 \}
\end{equation}
where $d_\rho(x,y)$ is the graph distance  induced by $\rho(e)$, which is the minimal distance between $x$ and $y$ as measured by the edge function $\rho(e)$. See Appendix~\ref{app:subG}. 

It is clear that
\begin{equation}
\rho(e) \geq \mathbf{1}_{\mu(r_A)}(e)
\end{equation}
by integrality.  Then for any other feasible $\rho'$:
\begin{equation}
 \sum_{e \in E } \rho'(e) w(e)  \geq \sum_{e \in E } \rho(e) w(e)  
 \geq w(\mu(r_A))
\end{equation}
Also any path from $A$ to $B$ must pass $\mu(r_A)$ so that $\rho'(e) = \mathbf{1}_{\mu(r_A)}(e)$ is feasible and we
have the opposite inequality. 

We now give a quick derivation of the RT formula for RTNs. We pick the cut $C:AB$, although this discussion is general. We compute the $m$-th R\'enyi entropy for $\rho_{AB}$ at large $\chi$ which involves finding the minimum of the free energy:
\begin{equation}
\label{mingg}
\min_{ g} \sum_{e = \{x,y\} \in E} w(e) d(g(x),g(y))
\end{equation}
where $g : V \rightarrow S_m$ and $g(AB) = \tau_m = (12 \ldots m)$ and $g(C) = \id$. Use an optimal solution to the flow problem $c(L)$, inserting \Eqref{cpfeas} into the objective function of \Eqref{mingg}:
\begin{align}
\sum_{e = \{x,y\} \in E} w(e) d(g(x),g(y)) & \geq 
 \sum_{L \in \widehat{\mathcal{P}}_{AB:C}} c(L) \sum_{e = \{x,y\} \in L}  d(g(x),g(y)) \\& \geq d(\tau_m, \id)  \sum_{L \in \widehat{\mathcal{P}}_{AB:C}} c(L) \\
&= (m-1) \mathcal{A}(AB:C)
\end{align}
where in the first inequality we used the feasibility condition, and in the second we used the triangle inequality for the Cayley metric repeatedly along the path.
The opposite inequality follows by considering $g(x) = \id$ for $x \in r_{AB:C}^c$ and $g(x) = \tau_m$ for $x \in r_{AB:C}$. Thus, R\'enyi entropies are all computed by
the same minimal area cut. This is the well known result that the entanglement spectrum of random tensor networks is flat, and determined by minimal cuts. 

We conclude this subsection by giving a proof that the integrality gap of \Eqref{program-intro} is determined by the ratio between $\mathcal{A}(A:B:C)$ and $\frac{1}{2}(\mathcal{A}(AB:C)+\mathcal{A}(B:AC)+\mathcal{A}(C:AB))$. First we show that the value of the following integer program 
\begin{align}
\begin{split}
    \min_\rho& \sum_{e\in E} \rho(e)w(e) \\
    \text{subject to}:&\quad \forall L \in \mathcal{P}_{A,B}\cup\mathcal{P}_{A,C}\cup\mathcal{P}_{B,C}: \sum_{e\in L}\rho(e)\ge 1
\end{split}
\end{align}
is given by the minimal triway cut. 
We restrict our discussion here to the case where every connected component of $G$ is connected to at least one of $A$, $B$, or $C$, since adding any disconnected components to $G$ does not change the area of an optimal triway cut. 

Consider an optimal $\rho$ and define $r_A = \{x\in V: d_\rho(x,A)=0\}$ and similarly define $r_B$ and $r_C$. $(r_A,r_B,r_C)$ must be disjoint otherwise we violate the path constraint. Furthermore $r_A\cup r_B \cup r_C = V$: If otherwise define $r_D\equiv V\backslash r_A \backslash r_B \backslash r_C$. $r_D$ must be connected to at least one of $r_A, r_B$ or $r_C$, since otherwise $r_D$ will be totally disconnected from the boundary. Without loss of generality suppose it is $r_A$. We have 
\begin{align}
\label{eq:trimin-ineq}
\begin{split}
   \rho(e) &\ge \left( \mathbf{1}_{\mu(r_A:r_B)} + \mathbf{1}_{\mu(r_A:r_C)}  + \mathbf{1}_{\mu(r_B:r_C)} + \mathbf{1}_{\mu(r_A:r_D)} + \mathbf{1}_{\mu(r_B:r_D)} + \mathbf{1}_{\mu(r_C:r_D)}  \right) (e) \\
   & \ge \left( \mathbf{1}_{\mu(r'_A:r_B)} + \mathbf{1}_{\mu(r_A':r_C)}  + \mathbf{1}_{\mu(r_B:r_C)} \right) (e) \equiv \rho'(e)
\end{split}
\end{align}
where the first inequality follows from integrality. In the second line we defined $r_A'=r_A\cup r_D$ and this inequality is strict for $e\in\mu(r_A:r_D)$. $\rho'(e)$ is clearly feasible since any path from $A$ to $B$ must pass through $\mu(r_A':r_B)$ and similarly for $B:C$ and $C:A$. Since $\mu(r_A:r_D)\neq \emptyset$, we have $\sum_e \rho (e) w(e) > \sum_e \rho' (e) w(e)$, which is a contradiction since we have assumed $\rho$ to be optimal.
Therefore $r_D=\emptyset$ and $r'_A=r_A$. Since $\rho$ is optimal we must have
\begin{align}
    \min_\rho \sum_{e\in E} \rho(e)w(e) = \sum_{e\in E} \rho'(e)w(e) = \mathcal{A}(r_A:r_B:r_C) \ge \mathcal{A}(A:B:C)
\end{align}
On the other hand from any minimal triway cut $(\alpha,\beta,\gamma)$ we can construct $\rho^{\prime\prime}(e)=\left( \mathbf{1}_{\mu(\alpha:\beta)} + \mathbf{1}_{\mu(\alpha:\gamma)}  + \mathbf{1}_{\mu(\beta:\gamma)} \right) (e)$ which is clearly feasible. Hence $\mathcal{A}(A:B:C) \ge \sum_{e} \rho'(e)w(e)$ and the value of program is equivalent to minimal triway cut.

If one relaxes the integer constraint of $\rho$, then it allows us to construct a new solution from any optimal $\rho$:
\begin{align}
\tilde{\rho}(e) = \frac{1}{2} \left( \mathbf{1}_{\mu(r_A:r_A^c)} + \mathbf{1}_{\mu(r_B:r_B^c)} + \mathbf{1}_{\mu(r_C:r_C^c)} \right)(e)
\end{align}
$\tilde{\rho}(e)$ violates the integer constraint but is still feasible.
Using a similar argument as above one can show that in this scenario we get
\begin{align}
\begin{split}
    \min_\rho \sum_{e\in E} \rho(e)w(e) &= \sum_{e\in E} \tilde{\rho}(e)w(e) = \frac{1}{2}\left(\mathcal{A}(r_A:r_A^c)+\mathcal{A}(r_B:r_B^c)+\mathcal{A}(r_C:r_C^c)\right) \\
    &\ge \frac{1}{2}\left( \mathcal{A}(A:BC) + \mathcal{A}(B:AC) + \mathcal{A}(C:AB) \right)
\end{split}
\end{align}
and the optimal value of the program is given by a sum of minimal cut areas. We have determined the optimal value of the program \Eqref{program-intro} before and after the linear relaxation,  which agrees with the right-hand side of \Eqref{eq:MG-bound}.

\subsection{Permutations, Partitions and Boolean variables}

\label{sec:ppb}

In \secref{sec:sum} we introduced a coarse graining procedure which takes permutations to partitions. We further coarse grain our partitions by blocking them into partitions of the blocks in $P(X)$. We can apply this procedure given any fixed permutation $g_0$, instead blocking with $P(g_0)$. We discuss this more general procedure here and prove some important bounds.

The collection of set partitions $P_N$ admits a natural lattice structure.
There is a partial order within $P_N$: Given two elements $p,q\in P_N$ we say $p\ge q$ if every subset of $p$ can be expressed as the union of some subsets in $q$, or in other words, $q$ is a ``finer'' version of $p$ formed by further dividing blocks in $p$.
The finest (smallest) element in $P_N$ is the identity permutation $\id \equiv \{\{1\},\ldots,\{N\}\}$, and the coarsest (largest) is $\{\mathbb{Z}_N\}$.
The \emph{join} of $p$ and $q$, denoted by $p\vee q$, is defined to be the least upper bound of $p$ and $q$, i.e. $p \vee q$ is the smallest element $x$ such that $x\ge p$ and $x\ge q$. Conversely, the \emph{meet} of $p$ and $q$, denoted by $p\wedge q$, is defined to be the greatest lower bound of $p$ and $q$, i.e. $p \wedge q$ is the largest element $y$ such that $y\le p$ and $y\le q$. 
In the lattice of $P_N$ the meet is simply the set of non-empty pairwise intersections of $p$ and $q$; whereas the join can be thought of as the partition that arises from the connected orbits generated by $p$ and $q$.
For further information on the lattice of set partitions we refer the reader to Appendix A of Ref.~\cite{Akers:2021pvd}.

Given some elements $g,g_0 \in S_N$ we define:
\begin{equation}
\label{eq:q_g0}
q_{g_0}(g) \equiv (P(g) \vee P(g_0)) / \sim \,\, \in\, P_{\#(g_0)}
\end{equation}
where $/\sim$ is the set quotient operation defined element wise on each block in the partition and using the equivalence $x \sim y$ if $x,y$ are in the same block of $P(g_0)$.

Recall the distance measure on the set of partitions $P_{N}$:
\begin{equation}
\label{eq:P_metric_2}
d(p,p') \equiv \#(p) + \#(p') - 2 \#(p \vee p')
\end{equation}
We verify some properties. 
\begin{lemma}
The distance on set partitions $d(p,p')$ is a metric: it is positive, symmetric, vanishes iff $p=p'$ and satisfies the triangle inequality. Additionally:
\begin{itemize}
\item[(a)] There is the estimate:
\begin{equation}
d(p_1,p_2) \geq | \#(p_1) - \#(p_2) |
\end{equation}
\item[(b)] For all partitions $r$ then:
\begin{equation}
\label{addr}
d(p_1, p_2) \geq d( p_1 \vee r, p_2 \vee r)
\end{equation}
\item[(c)] It bounds the Cayley distance by the associated set partitions $P(g)$, or coarse grained versions $q_{g_0}(g)$: 
\begin{equation}
\label{gqbound}
\#(g)  - \#(g \vee g')  \geq   \#(q_{g_0}(g))  - \#(q_{g_0}(g) \vee q_{g_0}(g'))  
\end{equation} 
and
\begin{equation}
\label{gPq}
d(g,g') \geq d( P(g), P(g')) \geq d ( q_{g_0}(g), q_{g_0}(g'))
\end{equation}
\end{itemize}
\end{lemma}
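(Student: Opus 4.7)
The plan is to reduce every assertion to three standard inputs about the partition lattice: (i) semimodularity, $\#(a) + \#(b) \le \#(a \vee b) + \#(a \wedge b)$; (ii) its equivalent ``diminishing returns'' form, that for $a \le b$ and any $r$ one has $\#(a) - \#(a \vee r) \ge \#(b) - \#(b \vee r)$; and (iii) the classical Hurwitz/Riemann--Hurwitz bound stating that whenever $\alpha\beta\gamma = \id$ in $S_N$, the sum $\#(\alpha) + \#(\beta) + \#(\gamma)$ is at most $N + 2\#(P(\alpha) \vee P(\beta))$. I will take these three facts as the only non-formal inputs; everything else is repackaging.

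For the metric axioms, I start from the trivial observation $\#(a \vee b) \le \min(\#(a), \#(b))$ (coarsening only merges blocks), which immediately yields both positivity of $d$ and item (a), and also forces $d(p,p') = 0 \Rightarrow p = p \vee p' = p'$. Symmetry is manifest. For the triangle inequality I expand $d(p_1,p_3) \le d(p_1,p_2) + d(p_2,p_3)$ to the equivalent form $\#(p_1 \vee p_2) + \#(p_2 \vee p_3) \le \#(p_2) + \#(p_1 \vee p_3)$ and invoke semimodularity with $a = p_1 \vee p_2$, $b = p_2 \vee p_3$, using that $a \vee b = p_1 \vee p_2 \vee p_3 \ge p_1 \vee p_3$ (so $\#(a \vee b) \le \#(p_1 \vee p_3)$) and $a \wedge b \ge p_2$ (so $\#(a \wedge b) \le \#(p_2)$).

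For (b), expanding the claim gives $[\#(p_1) - \#(p_1 \vee r)] + [\#(p_2) - \#(p_2 \vee r)] \ge 2[\#(p_1 \vee p_2) - \#(p_1 \vee p_2 \vee r)]$, which follows by applying diminishing returns to each of $p_1 \le p_1 \vee p_2$ and $p_2 \le p_1 \vee p_2$ and summing. For (c), the essential observation is that $p \mapsto p/\sim$ restricts to a lattice isomorphism from $\{p \in P_N : p \ge P(g_0)\}$ onto $P_{\#(g_0)}$, preserving block counts and joins; in particular $\#(q_{g_0}(g)) = \#(P(g) \vee P(g_0))$ and $\#(q_{g_0}(g) \vee q_{g_0}(g')) = \#(P(g) \vee P(g') \vee P(g_0))$. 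With this identification, the cycle-count inequality (\ref{gqbound}) is precisely diminishing returns applied to $a = P(g) \le b = P(g) \vee P(g_0)$ with $r = P(g')$. In the chain (\ref{gPq}), the right-hand inequality is item (b) with $r = P(g_0)$ combined with the same identification, while the left-hand inequality is where Hurwitz enters: applying (iii) to $(g, g'^{-1}, g' g^{-1})$ (whose product is $\id$) yields $\#(g) + \#(g') + \#(g g'^{-1}) \le N + 2\#(P(g) \vee P(g'))$, which rearranges to $d(g,g') = N - \#(g g'^{-1}) \ge \#(g) + \#(g') - 2\#(P(g) \vee P(g')) = d(P(g),P(g'))$.

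The only genuinely non-combinatorial step is the Hurwitz-type bound used in the left half of (\ref{gPq}); the main obstacle is therefore justifying or citing (iii). I would either invoke it as the genus-zero case of Riemann--Hurwitz for a three-branch-point cover $S^2 \to S^2$ with monodromies $g, g'^{-1}, g' g^{-1}$, or give a short induction on $N$ by factoring out a transposition from one of the permutations and tracking its effect on all three cycle counts. Once (iii) is accepted, the remainder of the lemma is systematic bookkeeping with semimodularity.
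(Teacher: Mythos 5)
Your proof is correct and follows essentially the same approach as the paper: semimodularity of the partition lattice is the engine behind the metric axioms, parts (a) and (b), and the cycle-count inequality in (c), while the permutation-to-partition comparison $d(g,g')\geq d(P(g),P(g'))$ rests on a genus/Riemann--Hurwitz bound. The only presentational differences are that you repackage the repeated semimodularity applications as a single ``diminishing returns'' inequality (and make the quotient-lattice isomorphism $\#(q_{g_0}(g))=\#(P(g)\vee P(g_0))$ explicit, which the paper uses but does not name), and that you invoke classical Riemann--Hurwitz for the triple $(g,g'^{-1},g'g^{-1})$ where the paper instead cites the ``non-negativity of the genus of admissible surfaces'' from its earlier companion work --- these are the same inequality.
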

\begin{proof}
Firstly the distance is positive since $\#(p) - \#(p \vee p') \geq 0$ and similarly for $p \leftrightarrow p'$. 
Equality implies that $\#(p') = \#(p) = \#(p \vee p')$ which is only possible if $p' \leq p$ and also $p' \geq p$ which implies equality of the partitions. 
We have the triangle inequality:
\begin{equation}
d(p,p')  + d(p',p'') \geq d(p,p'') 
\end{equation}
which follows from semimodularity 
\footnote{Semimodularity here means $\#(p_1\wedge p_2)+\#(p_1\vee p_2)\ge \#(p_1) + \#(p_2)$.}:
\begin{align}
\label{eq:triangle_semimodular}
\begin{split}
&\#(p')-\#(p \vee p') - \#(p' \vee p'') + \#(p \vee p'') \\& \qquad \geq
\#( (p' \vee p) \wedge (p' \vee p'') ) -\#(p \vee p') - \#(p' \vee p'') 
+ \#(p \vee p'' \vee p') \geq 0
\end{split}
\end{align}
where the first inequality follows simply from $p' \leq (p' \vee p) \wedge (p' \vee p'')$
and $p' \vee p'' \leq p \vee p'' \vee p' $.

(a) Using  $\#(p_1 \vee p_2) \leq \#(p_1)$ or $\#(p_1 \vee p_2) \leq \#(p_2)$ we derive $d(p_1,p_2) \geq | \#(p_1) - \#(p_2) |$.

(b) For any $r \in P_N$:
\begin{align}
\begin{split}
&d(p_1, p_2) - d( p_1 \vee r, p_2 \vee r) \\
=&\#(p_1)+\#(p_2)-\#(p_1\vee r)-\#(p_2\vee r) + 2\#(p_1\vee p_2\vee r) -2\#(p_1 \vee p_2) \\
\ge &  \#(p_1) + \#(p_2) - \#( (p_1 \vee p_2) \wedge (p_1 \vee r) ) - \#( (p_1 \vee p_2) \wedge (p_2 \vee r) ) \geq 0
\end{split}
\end{align}
where we used semi-modularity in the first inequality\footnote{Specifically: $-\#( p_1 \vee p_2) - \#(p_1 \vee r) + \#(p_1 \vee p_2 \vee r) \geq - \#( (p_1 \vee p_2) \wedge (p_1 \vee r))$
and $1 \leftrightarrow 2$. } and we used $p_i \leq  (p_1 \vee p_2) \wedge (p_i \vee r)$ in the second inequality.

(c) In general we start with some $g \in S_N$. Then we will remove fine-grained topological information (roughly speaking, we lose information on the genus expansion) by
moving to partitions. We can do this using the general bound:
\begin{align}
d(g, g')  &= - d(e,g) + d(e, g') + 2(\#(g') - \#(g' \vee g) ) + 2 G_{g'} (g) \nonumber  \\
\label{gGbound}
&\geq \#(g) + \#(g') - 2 \#( g \vee g')  
= d(P(g), P(g'))
\end{align}
where $\#(g'\vee g)\equiv \#(P(g')\vee P(g))$ is the number of orbits generated by the joint action of $g$ and $g'$, and $G_{g'}(g)$ is the genus of the admissible surface associated for $g$ based over $g'$. The inequality then follows from the non-negativity of the genus. See theorem 7 of Ref.~\cite{Akers:2021pvd}.

Consider semimodularity applied to $p = P(g) \vee P(g_0)$ and $p' = P(g) \vee P(g')$ giving:
\begin{equation}
 \#(P(g))+ \#(p \vee p') \geq \#(p \wedge p')+ \#(p \vee p')  \geq \#(p) + \#(p') = \#(q_{g_0}(g)) + \#(g \vee g')
\end{equation}
where the first inequality follows since $P(g) \leq  p$ 
and $P(g) \leq  p' $ 
implying the same of the meet $p \wedge p'$.
Note that 
$\#(p \vee p') = \#(q_{g_0}(g) \vee q_{g_0}(g'))$,$\,\, \#(p) = \#(q_{g_0}(g))$ and $\#(p') = \#(g \vee g')$.
Thus, we derived the bound:
\begin{equation}
\#(g)  - \#(g \vee g')  \geq   \#(q_{g_0}(g))  - \#(q_{g_0}(g) \vee q_{g_0}(g'))  
\end{equation} 
which implies:
\begin{equation}
d( P(g), P(g')) \geq d ( q_{g_0}(g), q_{g_0}(g'))
\end{equation}
\end{proof}

Consider $q=q_X(g)\in P_{2n}$ for some $g\in S_{mn}$. 
We can further classify $q$ by the singlets in $q$.
A singlet is a block of size $1$ so there are $2n$ possible singlets. 
We define $\#_1(q)$ as the number of singlets in a given partition. 
We define $s(q)$ as the (unique) largest partition with singlets in the same location as the singlets of $q$.  That is $s(q) \geq q$ and $\#(s(q)) = \#_1(q) + 1 - \delta_{\#_1(q) , 2n}$. 
Also $\#_1(q) = \#_1(s(q))$.

We define the \emph{singlet distance} on the set of $s$ as:
\begin{equation}
\label{eq:S_metric}
d_1(s_1, s_2) = \#_1( s_1) + \#_1( s_2) - 2 \#_1(s_1 \vee s_2) \geq d(s_1, s_2)
\end{equation}
where the later inequality is only not saturated if one of $s_1$ or $s_2$ is $\id$ (but not both.)
We also have $d_1(s_1,s_2) \leq d(s_1,s_2) + 1$ with equality iff one and only one of $s_1,s_2$ is $\id$.

We can bound the difference in $q$ by $s$:
\begin{lemma}
\label{lem:qtos}
\begin{equation}
\label{ddd}
d(q_1,q_2) \geq \lceil d_1(s_1, s_2)  /2 \rceil  \geq d_1(s_1,s_2)/2
\end{equation}
where $s_i := s(q_i)$.
\end{lemma}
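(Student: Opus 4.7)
The second inequality $\lceil d_1(s_1,s_2)/2\rceil \geq d_1(s_1,s_2)/2$ is just the defining property of the ceiling, so all the content lies in the first. My plan is to identify $d_1(s_1,s_2)$ with the symmetric difference of singleton sets, realize $d(q_1,q_2)$ as a path length in the cover graph of $P_{2n}$, and bound how much the singleton set can change along each cover step.

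Concretely, let $S_i := \{k : \{k\}\text{ is a block of }q_i\}$, so $|S_i| = \#_1(q_i)$. Because $s(q)$ by definition has the same singletons as $q$ (and otherwise one big block), one has $\#_1(s_i)=|S_i|$. A position is a singleton of $s_1\vee s_2$ iff it is a singleton of both $s_1$ and $s_2$, so $\#_1(s_1\vee s_2) = |S_1\cap S_2|$ and hence $d_1(s_1,s_2) = |S_1 \triangle S_2|$.

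Next, using $d(q_1,q_2) = [\#(q_1)-\#(q_1\vee q_2)] + [\#(q_2)-\#(q_1\vee q_2)]$, I construct an explicit path of exactly length $d(q_1,q_2)$ in the Hasse diagram of $P_{2n}$ from $q_1$ to $q_2$: first go up to $q_1\vee q_2$ via $\#(q_1)-\#(q_1\vee q_2)$ single merges (each decreasing the block count by one), then down to $q_2$ via $\#(q_2)-\#(q_1\vee q_2)$ single splits. Each intermediate partition $q_1 = p_0,p_1,\dots,p_{d(q_1,q_2)} = q_2$ differs from the next by a single covering relation.

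The key observation is that at each cover step the singleton set $S^{(t)}$ changes by at most two positions. For a merge of $B_1,B_2$ into $B_1\cup B_2$, the three cases---both singletons, exactly one singleton, neither singleton---give $|S^{(t)}\triangle S^{(t+1)}|$ equal to $2$, $1$, $0$; the analogous trichotomy for single splits is symmetric. Applying the triangle inequality for the (set-)symmetric-difference metric along the path gives
\[
|S_1\triangle S_2| \;\leq\; \sum_{t=0}^{d(q_1,q_2)-1}\bigl|S^{(t)}\triangle S^{(t+1)}\bigr| \;\leq\; 2\,d(q_1,q_2).
\]
Since the left side equals $d_1(s_1,s_2)$ and $d(q_1,q_2)\in\mathbb{Z}_{\geq 0}$, dividing by $2$ and rounding up yields $d(q_1,q_2)\geq \lceil d_1(s_1,s_2)/2\rceil$, proving the lemma. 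The only place that requires a moment of care is ensuring the path has length \emph{exactly} $d(q_1,q_2)$ rather than merely bounding $d$ from above; this is automatic because $P_{2n}$ is graded by $\#(\cdot)$ and the canonical route through $q_1\vee q_2$ has the correct rank-theoretic length, so no factor-of-two is lost.
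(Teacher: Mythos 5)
Your proof is correct, and it takes a genuinely different and more elementary route than the paper's. The paper first invokes the coarse-graining inequality $d(q_1,q_2) \geq d(q_1\vee r, q_2\vee r)$ with $r = s_1\vee s_2$, and then performs a detailed block-count on $q_1\vee r$ and $q_2\vee r$ restricted to the non-overlapping singleton positions $C_1$ and $C_2$, exploiting the fact that the restricted partitions contain no singlets to conclude $d(q_1,q_2) \geq \lceil |C_1|/2 \rceil + \lceil |C_2|/2 \rceil$. You instead realize $d(q_1,q_2)$ as the length of a saturated chain in the cover graph of $P_{2n}$ passing through $q_1\vee q_2$ (which has exactly $\#(q_1)+\#(q_2)-2\#(q_1\vee q_2)$ edges because the partition lattice is graded by $\#$), observe that each merge or split alters the singleton set by at most two positions, and apply the triangle inequality for symmetric difference; the ceiling then comes for free from the integrality of $d(q_1,q_2)$. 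Your argument is cleaner and self-contained -- it does not rely on the earlier bound \eqref{addr} -- but the paper's computation actually establishes the strictly stronger intermediate bound $\lceil |C_1|/2\rceil + \lceil |C_2|/2\rceil$, which exceeds $\lceil (|C_1|+|C_2|)/2\rceil$ precisely when $|C_1|$ and $|C_2|$ are both odd. The lemma as stated only records the weaker form, which is exactly what your chain argument proves.
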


\begin{proof}
If $q_1$ and $q_2$ do not contain any singlet, then $s_1=s_2=\mathbb{Z}_{2n}$ and $d(s_1,s_2)=0$ so the estimation is true.
Now let $C_{12}$ be the common singlets of $q_1$ and $q_2$, $C_1$ be the singlets in $q_1$ that do not overlap with the singlets in $q_2$ and similarly for $C_2$.
Note that $d(q_1,q_2)\ge d(q_1\vee r, q_2\vee r)$ from \Eqref{addr}. We take $r=s_1\vee s_2$. 
We can express $r$ as the unique element that is fully connected in $K=\mathbb{Z}_{2n}\backslash C_1\backslash C_2\backslash C_{12}$ and singlets elsewhere.
See:
\begin{equation}
\begin{matrix}
    \includegraphics[scale=.3]{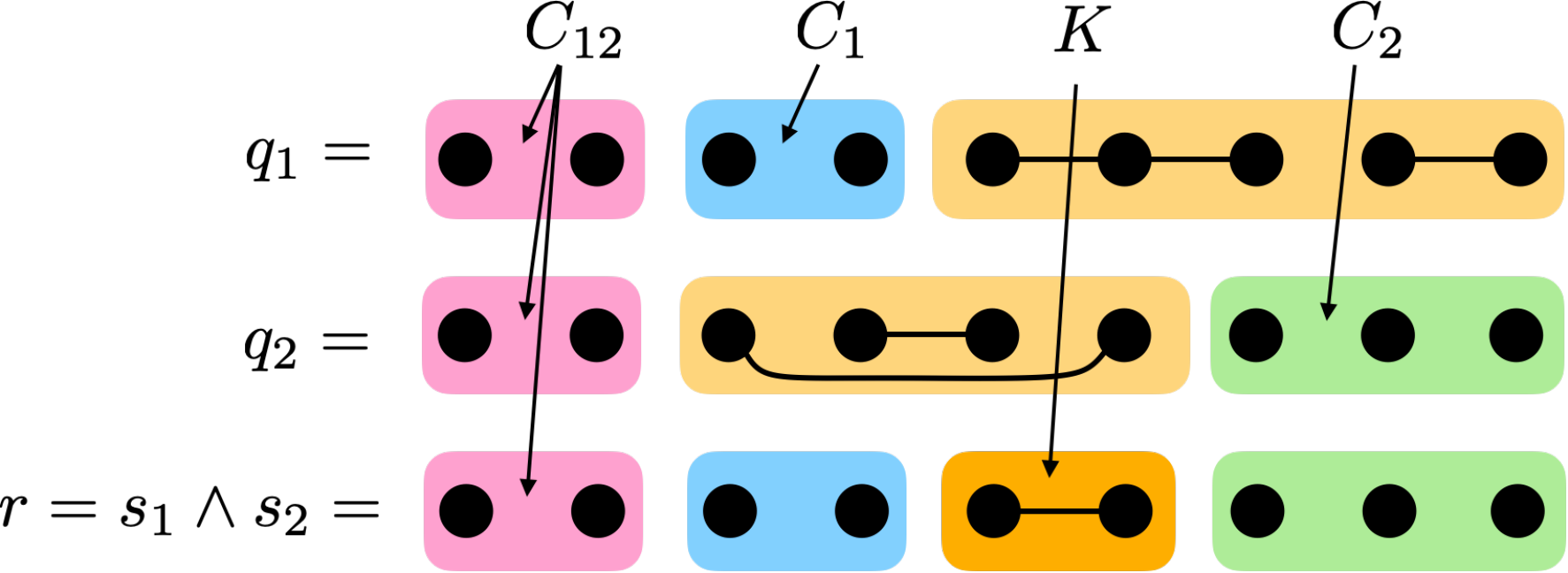}
\end{matrix}
\end{equation}
We can compute
\begin{align}
        d(q_1\vee r,q_2 \vee r) &= \#(q_1 \vee r) + \#(q_2 \vee r) - 2\#(q_1\vee q_2\vee r)\\
        &=|C_1|+|C_2|+\#(q_1\vee r)|_{K\cup C_2}+\#(q_2\vee r)|_{K\cup C_1}-2\#(q_1\vee q_2\vee r)|_{K\cup C_1\cup C_2}
    \nonumber
\end{align}
where we have used $\#(q)|_C$ to indicate the number of cycles of $q$ when restricted to a subset of elements $C\subset \mathbb{Z}_{2n}$.
Note that the common singlets $|C_{12}|$ cancels out in the calculation.
To proceed further, we define $C'_1\subset C_1$ to be the elements in $C_1$ that are not connected to $K$ via $q_2\vee r$ and similarly for $C_2'$.
Then we have
\begin{align}
 \#(q_1\vee r)|_{K\cup C_2} &= 1 + \#(q_1\vee r)|_{C_2'}   \\
 \#(q_2\vee r)|_{K\cup C_1} &= 1 + \#(q_2\vee r)|_{C_1'}  \\
 \#(q_1\vee q_2 \vee r)|_{K\cup C_1\cup C_2} &= 1 + \#(q_1\vee r)|_{C_2'} + \#(q_2\vee r)|_{C_1'}
\end{align}
which readily follows from the structure of $q_i\vee r$. See:
\begin{equation}
\begin{matrix}
    \includegraphics[scale=.3]{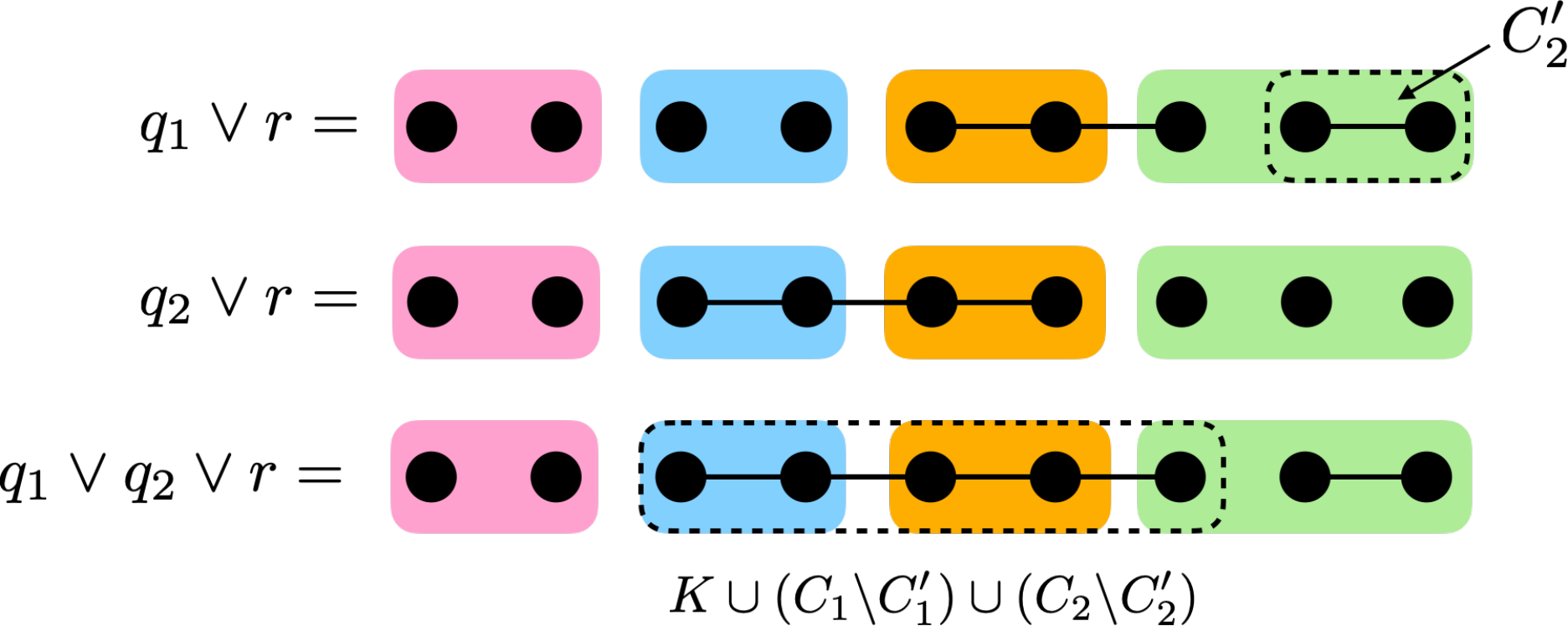}
\end{matrix}
\end{equation}
(In the example given here $C_1'=\emptyset$)
Hence,
\begin{equation}
    d(q_1\vee r, q_2 \vee r) = |C_1| + |C_2| -\#(q_1\vee r)|_{C_2'} - \#(q_2\vee r)|_{C_1'}
\end{equation}
Also note that $(q_2 \vee r)|_{C_1'}$ has no singlets
so that $ \#( q_2 \vee r)|_{C_1'} \leq \lfloor |C_1'|/2 \rfloor \leq \lfloor |C_1|/2 \rfloor$ since this number is maximized by forming the largest number of doublet blocks or doublet and a single triplet block in $C_1'$ (and similarly for $1 \leftrightarrow 2$.)
This gives the estimate:
\begin{equation}
d( q_1 , q_2 ) \geq \lceil |C_1|/2 \rceil +  \lceil |C_2|/2 \rceil \geq\lceil  (|C_1| + |C_2|)/2 \rceil
\end{equation}
Now $|C_1| + |C_2| = d_1(s_1, s_2)$ (see the proof of the Lemma following immediately after).
\end{proof}

Note that the set of $s$, $s(P_{2n})\subset P_{2n}$, forms a lattice (albeit not a sub-lattice of $P_{2n}$ \footnote{Since $s(P_{2n})$ is not closed under $\vee$.}), under the new meet and join defined by $s_1 \wedge_B s_2 \equiv s_1\wedge s_2$ and $s_1 \vee_{B} s_2 \equiv s( s_1 \vee s_2)$. 
Define the units $u_k\in s(P_{2n})$ to be the partition with one singlet at location $k$. That is, $u_k = \{\{k\},\mathbb{Z}_{2n} \backslash \{k\}\}$. 
It then follows that every element in $s(P_{2n})$ (other than $\mathbb{Z}_{2n}$) can be expressed as the meet of a string of $u^k$'s, i.e. $s=u_{k_1}\wedge\cdots\wedge u_{k_i}$.

The lattice $(s(P_{2n}),\vee_B,\wedge_B)$ is isomorphic to Boolean algebra $B_{2n}$ of bit-strings where singlets in $s$ are $0$'s, and non-singlets are $1$'s.
More specifically, identify $s$ (and thus $q$) using a binary variable $b^{k}$ for $k = 1, \ldots 2n$. 
That is:
\begin{equation}
\label{eq:def:b^k}
b^{k}(q) =\#_1(  s(q) \vee u_k) = \#_1(q \vee u_k)
\end{equation}
so that $b^{k}(q)=0$ if $q$ has a singlet at the $k$-th element and $b^{k}(q)=1$ otherwise. 
The bit-string $b=\{b^k\}_{k=1}^{2n}$ then forms a lattice where $\vee$ is the pair-wise or operation and $\wedge$ is the pair-wise and operation.
Then the singlet distance on $P_{2n}$ is simply the Hamming distance on bit-strings:
\begin{lemma}
\begin{equation}
    \label{d1s}
d(b_1, b_2) \equiv \sum_{k=1}^{2n} | b^k_1 - b^k_2 | = d_1(s_1,s_2)
\end{equation}
    where $b^k_{1,2} = b^k(s_{1,2})$.
\end{lemma}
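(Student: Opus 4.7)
I would encode each $s_i \in s(P_{2n})$ by its singlet support $C_i := \{k : s_i \text{ has a singlet at } k\} \subset \mathbb{Z}_{2n}$. By definition of $s(P_{2n})$, each $s_i$ consists of singletons at the positions in $C_i$ together with at most one remaining ``big'' block on $K_i := \mathbb{Z}_{2n} \setminus C_i$, so in particular $\#_1(s_i) = |C_i|$. Splitting into the cases $k \in C_i$ and $k \notin C_i$, a direct computation of $u_k \vee s_i$ via the earlier definition $b^k = 2 - \#(u_k \vee s_i)$ shows that $b^k_i = 0$ iff $k \in C_i$. Hence the Hamming distance collapses to a symmetric difference,
\begin{equation}
d(b_1, b_2) = |\{k : b_1^k \neq b_2^k\}| = |C_1 \triangle C_2|.
\end{equation}

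The remaining content is to establish $\#_1(s_1 \vee s_2) = |C_1 \cap C_2|$. Using the standard description of the join in the partition lattice -- its blocks are the connected components of the graph on $\mathbb{Z}_{2n}$ obtained by declaring $x \sim y$ whenever $x$ and $y$ lie in a common block of either $s_1$ or $s_2$ -- any element of $K_1 \cup K_2$ is absorbed into a block of size at least two, while any $k \in C_1 \cap C_2$ is isolated in both $s_i$ and therefore remains a singleton in the join. The main (mild) obstacle is the case $K_1 \cap K_2 = \emptyset$, in which $K_1$ and $K_2$ survive as two distinct non-singleton blocks, so that $s_1 \vee s_2$ may actually leave the sublattice $s(P_{2n})$ (consistent with the paper's footnote); one checks directly that the singleton count is still $|C_1 \cap C_2|$ in this situation, and the degenerate limits $s_i = \id$ (so $K_i = \emptyset$) are handled by inspection, e.g.\ $s_1 = \id$ forces $s_1 \vee s_2 = s_2$ and $C_1 \cap C_2 = C_2$.

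Putting the two computations together,
\begin{equation}
d_1(s_1, s_2) = \#_1(s_1) + \#_1(s_2) - 2\#_1(s_1 \vee s_2) = |C_1| + |C_2| - 2|C_1 \cap C_2| = |C_1 \triangle C_2| = d(b_1, b_2),
\end{equation}
which is the claimed identity. The whole argument is essentially bookkeeping in set theory once one recognizes that both sides of the equality are just measuring the symmetric difference of the singlet supports; the only genuinely nontrivial verification is the join-of-singlet-supports computation in the middle paragraph.
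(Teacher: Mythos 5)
Your proof is correct and follows essentially the same route as the paper's: both reduce to the observation that the singlets of $s_1 \vee s_2$ are exactly the common singlets of $s_1$ and $s_2$, after which the identity is a symmetric-difference count. Your version just spells out the join computation and the degenerate cases ($K_1 \cap K_2 = \emptyset$, $s_i = \id$) in more detail than the paper, which states the key singlet-of-the-join fact in one line.
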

\begin{proof}
Consider the $k$-th element in a partition of $\mathbb{Z}_{2n}$.
Then $s_1\vee s_2$ has a singlet at position $k$ iff $s_1$ and $s_2$ both have a singlet at position $k$.
Let $S_i$ be the set of singlets in $s_i$, then we have
    \begin{align}
    \begin{split}
        d_1(s_1,s_2) &= \#_1(s_1) + \#_1(s_2) - 2\#_1(s_1\vee s_2) \\
        &= |S_1| + |S_2| - 2|S_1\cap S_2| = |C_1| + |C_2| 
    \end{split}
    \end{align}
    where $C_1$ are the elements in $S_1$ that are not contained in $S_2$ and likewise for $C_2$.
    By definition $C_1\cap C_2=\emptyset$ and 
    \begin{align}
        |C_1|+|C_2| = |C_1\cup C_2| = \sum^{2n}_{k=1}|b^k_1-b^k_2|
    \end{align}
    since $|b^k_1-b^k_2|=1$ signals $k\in C_1\cup C_2$ and $|b^k_1-b^k_2|=0$ otherwise.
\end{proof}
\begin{corollary}
\label{lem:qtob}
\begin{equation}
    d(q_1,q_2) \ge \lceil d(b_1,b_2)/2\rceil \ge d(b_1,b_2)/2
\end{equation}
where $b^k_{1,2}=b^k(s_{1,2})$.
\end{corollary}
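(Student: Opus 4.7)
The plan is to obtain this corollary by directly chaining together Lemma~\ref{lem:qtos} with the Hamming-distance identification established in the lemma immediately preceding the corollary. Lemma~\ref{lem:qtos} already does the hard work, showing $d(q_1,q_2)\ge \lceil d_1(s_1,s_2)/2\rceil$ via the $r$-shift inequality \eqref{addr} and careful bookkeeping of the non-overlapping singlets $C_1,C_2$. The preceding lemma then turns the singlet distance into a combinatorial quantity on bit-strings: $d_1(s_1,s_2)=d(b_1,b_2)=\sum_{k=1}^{2n}|b_1^k-b_2^k|$. Substituting this identity into the Lemma~\ref{lem:qtos} bound immediately gives $d(q_1,q_2)\ge \lceil d(b_1,b_2)/2\rceil$, and the trailing inequality $\lceil d(b_1,b_2)/2\rceil \ge d(b_1,b_2)/2$ is just the defining property of the ceiling function.

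The one small point worth spelling out is that the notation in the corollary, $b^k_{1,2}=b^k(s_{1,2})$, is consistent with the setup of Lemma~\ref{lem:qtos}: by the definition $b^k(q)=\#_1(s(q)\vee u_k)=\#_1(q\vee u_k)$ in \eqref{eq:def:b^k}, the bit-string $b(q)$ depends on $q$ only through $s(q)$. Thus writing $b^k(s_i)$ is unambiguous and identical to $b^k(q_i)$, so the hypotheses of the two prior results match without any additional adjustment. The full chain is then
\begin{equation}
d(q_1,q_2)\;\ge\;\lceil d_1(s_1,s_2)/2\rceil\;=\;\lceil d(b_1,b_2)/2\rceil\;\ge\;d(b_1,b_2)/2.
\end{equation}

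There is no real obstacle to overcome here — all the combinatorial content has been absorbed into the two inputs. The nontrivial step was the semimodular lattice argument in Lemma~\ref{lem:qtos}, where the factor of $1/2$ appears because $(q_2\vee r)|_{C_1'}$ has no singlets and hence can contain at most $\lfloor |C_1'|/2\rfloor$ blocks; the ceiling in the final bound is a residue of that floor estimate. This corollary is the clean, portable statement that will actually be invoked downstream, in particular when passing between the set-partition optimization $Q$ and the Boolean optimization $B$ in the chain of inequalities leading to Theorem~\ref{thm:RPBI}, where the factor $\lceil \tfrac12 d(b(x),b(y))\rceil$ appearing in Definition~\ref{def:B} comes precisely from this bound.
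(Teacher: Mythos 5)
Your argument is correct and is exactly the proof in the paper: substitute the identity $d_1(s_1,s_2)=d(b_1,b_2)$ from the preceding lemma into the bound of Lemma~\ref{lem:qtos}, and the trailing inequality is just $\lceil a\rceil\ge a$. The extra remark that $b^k(s_i)=b^k(q_i)$ by \eqref{eq:def:b^k} is a small but accurate consistency check that the paper leaves implicit.
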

\begin{proof}
    Substitute \Eqref{d1s} into Lemma~\ref{lem:qtos}.
\end{proof}
\begin{remark}
$d(b_1,b_2)$ clearly satisfies all the properties of a metric. 
We will also sometimes work with $\lceil d(b_1, b_2)  /2 \rceil$ which also satisfies the properties of a metric, since $\lceil a \rceil + \lceil b \rceil \geq \lceil a + b \rceil$. 
\end{remark}

We need one final result, which generalizes the triangle inequality and was already proven in Ref.~\cite{Akers:2021pvd}.

\begin{lemma}
\label{lem:tAtB}
Consider two partitions $t_A, t_B \in P_N$ and consider the bi-partite graph $G$ formed from $\#(t_A)$ black vertices and $\#(t_B)$ white vertices joined
with $t_A \wedge t_B = \id$ edges for each block in $t_A$ and $t_B$ that intersect. If $G$ is a cycle graph then:
\begin{equation} 
\label{boundsat}
d(t_A, q) + d(q, t_B) \geq d(t_A,t_B) + 2 (1-\delta_{ \#_1(q), 0} )
\end{equation}
for any $q \in P_N$. 
\end{lemma}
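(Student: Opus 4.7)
The plan is to expand all three distances using the partition metric \eqref{eq:P_metric}, and to use the fact that $G$ being a single cycle forces the bipartite intersection graph to be connected, so $\#(t_A \vee t_B) = 1$. With this, \eqref{boundsat} rearranges to
\begin{equation*}
a + b \leq c + \delta_{\#_1(q),0},
\end{equation*}
where $a \equiv \#(t_A \vee q)$, $b \equiv \#(q \vee t_B)$, and $c \equiv \#(q)$. So there are two statements to prove: the standard bound $a+b \le c+1$ (which is the usual triangle inequality in disguise) and the strict improvement $a+b \le c$ whenever $q$ has at least one singleton.

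The argument is cleanest in a graph-theoretic reformulation. Since $G$ is a $2k$-cycle with $k = \#(t_A) = \#(t_B)$ and all blocks of size two, label the $N = 2k$ elements cyclically so that $t_A = \{\{2i-1,2i\}\}_{i=1}^{k}$ and $t_B = \{\{2i,2i+1\}\}_{i=1}^{k}$ (indices mod $2k$). Next, quotient the ground set $[2k]$ by $q$ to a set of size $c$, and let $\bar G_A$ (resp.\ $\bar G_B$) be the multigraph on this quotient with $k$ edges coming from the $t_A$-pairs (resp.\ $t_B$-pairs), with loops allowed whenever both endpoints of a pair lie in the same $q$-block. By construction $\#(\bar G_A) = a$ and $\#(\bar G_B) = b$, while $\bar G_A \cup \bar G_B$ is the quotient of the connected cycle $C_{2k}$ and so has a single component. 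Euler characteristic then gives
\begin{equation*}
\beta_1(\bar G_A) = k - c + a, \quad \beta_1(\bar G_B) = k - c + b, \quad \beta_1(\bar G_A \cup \bar G_B) = 2k - c + 1.
\end{equation*}

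For the standard bound the key observation is that the edge sets of $\bar G_A$ and $\bar G_B$ are disjoint inside $\bar G_A \cup \bar G_B$ (as multi-edges with different sources in $t_A$, $t_B$), so any 1-cycle supported on $\bar G_A$-edges and any 1-cycle supported on $\bar G_B$-edges inject into $H_1(\bar G_A \cup \bar G_B)$ as linearly independent subspaces. This gives $\beta_1(\bar G_A) + \beta_1(\bar G_B) \leq \beta_1(\bar G_A \cup \bar G_B)$, which unwinds to $a + b \le c + 1$.

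The main step, and the one that does the real work, is the strengthening when $\#_1(q) > 0$. The idea is to exhibit an explicit extra class $\sigma \in H_1(\bar G_A \cup \bar G_B)$ not lying in $H_1(\bar G_A) \oplus H_1(\bar G_B)$, namely the image of the fundamental cycle of $C_{2k}$ under the quotient (one checks $\partial\sigma = 0$ by a telescoping sum around the cycle). Decomposing $\sigma = \sigma_A + \sigma_B$ along the $A/B$ edge bipartition, the boundary $\partial \sigma_A$ has coefficient at a block $Q$ equal to $|\{j \in Q : j \text{ even}\}| - |\{j \in Q : j \text{ odd}\}|$. For any singleton $Q = \{v_j\}$ this coefficient is $\pm 1 \ne 0$, so $\sigma_A$ fails to be a cycle in $\bar G_A$ and hence $\sigma \notin H_1(\bar G_A) \oplus H_1(\bar G_B)$ (for if $\sigma_A$ were a cycle then $\sigma_B = \sigma - \sigma_A$ would be too, and $\sigma$ would lie in the direct sum). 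This upgrades the inequality to $\beta_1(\bar G_A) + \beta_1(\bar G_B) < \beta_1(\bar G_A \cup \bar G_B)$, i.e.\ $a + b \leq c$, as needed. The only subtle point is the parity calculation for $\partial \sigma_A$ at a singleton; the direct-sum embedding itself is immediate from disjointness of multi-edge supports.
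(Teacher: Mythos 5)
Your proof is correct and takes a genuinely different route from the paper's. The paper argues lattice-theoretically: it observes that the cyclic bipartite graph means $(t_A, t_B)$ fails to be a modular pair by exactly $1$, that a singleton $u_k$ "breaks an edge" producing a tree which is a modular pair $(t_A \wedge u_k, t_B)$ (a fact cited from Birkhoff), and then uses $t_A \wedge u_k \leq t_A$ plus the ordinary triangle inequality to upgrade the bound. You instead reformulate everything in terms of first Betti numbers of the quotient multigraphs: the identity $\#(t_A \vee q) = c(\bar G_A)$ turns the partition-lattice counting into an Euler-characteristic computation, the ordinary triangle inequality becomes the subadditivity $\beta_1(\bar G_A) + \beta_1(\bar G_B) \leq \beta_1(\bar G_A \cup \bar G_B)$ for edge-disjoint multigraphs, and the improvement is an explicit witness — the projected fundamental cycle of $C_{2k}$, whose $A$-part has nonzero boundary at any singleton $q$-block by a parity count. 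This is the same combinatorics (a singleton kills the one independent relation that the cycle creates) phrased homologically rather than through semimodular lattices; the lattice route fits the surrounding machinery of the paper and avoids choosing an explicit cyclic labeling, while yours makes the "extra cycle class" completely concrete and is self-contained (no appeal to the Birkhoff tree/modular-pair characterization). Both derivations are sound, and both give equality at $q = \id$, which the paper flags as important for the later collapsing chain.
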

\begin{proof}
\begin{figure}[ht]
    \centering
    \includegraphics[scale=.35]{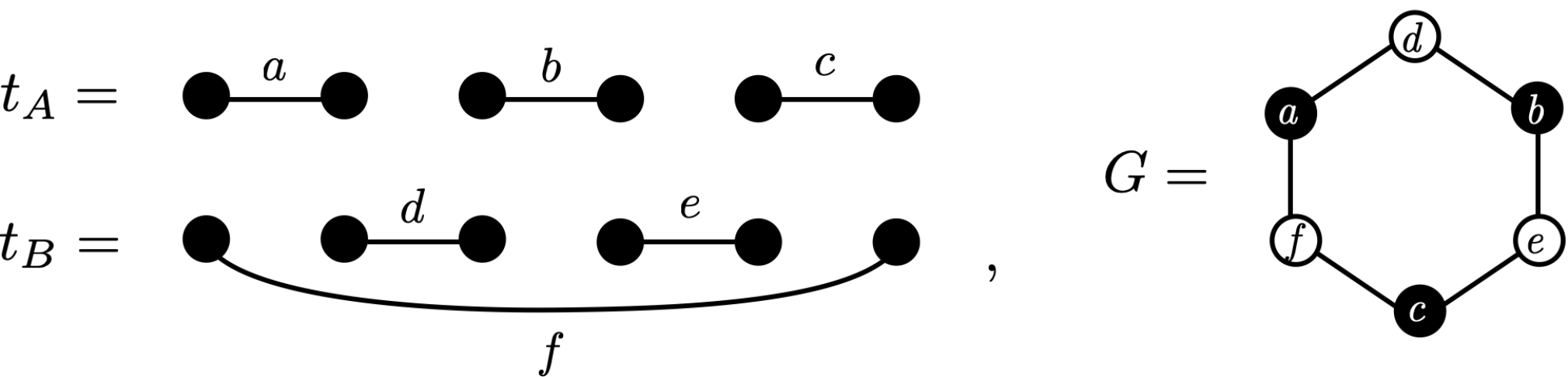}
    \caption{Two partitions $t_A,t_B\in P_N$ with $t_A\wedge t_B=\id$ and the bi-partite graph $G$ constructed from the subsets (labeled $\{a,b,\ldots,f\}$) of $t_A$ and $t_B$.}
    \label{fig:modular_pair}
\end{figure}
An example graph is shown in \figref{fig:modular_pair}. Blocks in $t_A$ can be found by combining the edges that intersect a fixed black vertex. 
From this figure it is clear that $t_A$ and $t_B$ are made of doublets.  Because it has a single cycle $t_A$ and $t_B$ fail to be a modular pair\footnote{A modular pair saturates the semimodularity condition.} by $1$:
\begin{equation}
d(t_A,\id) + d(t_B,\id) - d(t_A,t_B) = 2(\#(t_A \vee t_B) + \#(t_A \wedge t_B)- \#(t_A) - \#(t_B) ) = 2
\end{equation}
since $\#(t_A \vee t_B) = 1$. 
A modular pair $(t_A',t_B)$ is associated to a tree graph \cite{birkhoff1967lattice}, and so any unit $u_k$ will break a single edge $t_A' = t_A \wedge u_k$ of the cycle
and produce a tree with:
\begin{equation}
d(t_A',\id) + d(t_B,\id) - d(t_A',t_B) =  2(\#(t_A' \vee t_B) + \#(t_A' \wedge t_B)- \#(t_A') - \#(t_B) ) = 0
\end{equation}
where we used the fact that $t_A' \wedge t_B = u_k \wedge \id = \id$. 
Thus if $q = q \wedge u_k$ then:
\begin{align}
\begin{split}
d(t_A, q) + d(t_B,q) &= d(t_A', q) + d(t_B,q) \geq d(t_A',t_B) = d(t_A',\id) + d(t_B,\id) \\
& = d(t_A,\id) + d(t_B,\id) = d(t_A,t_B) + 2
\end{split}
\end{align}
Since this is true for any $u_k$ it is true if $\#_1(q) \neq 0$ as required. 
\end{proof}
\begin{remark}
We will apply this Lemma to $t_A = q_A$ and $t_B = q_B$ where these partitions satisfy the required properties, as can be seen by their explicit form in Appendix~\ref{app:specific}.
While the above example only slightly generalizes the case of interest, it points towards the basic structure that makes our results work. In particular, note
that $\#_1(\id) \neq 0$ and indeed the bound for $q=\id$, \Eqref{boundsat}, is saturated in this case. So the bounds we derive (here for partitions) are tight, which is important for the possibility
of forming a collapsing chain in Theorem~\ref{thm:RPBI}.
It is also important that the pair $(t_A,t_B)$ does not form a tree (they are not a modular pair) since otherwise we would not get minimal triway cuts -- we would simply get minimal cuts.
\end{remark}

\section{Proof of the main theorem}
\label{sec:main}
In this section, we establish our proof for the main results (Theorem~\ref{thm:RPBI} and Theorem~\ref{thm:uniqueness}) of the paper. First, in \secref{sec:ptop}, we show that the permutation group optimization problem $R$ (Definition~\ref{def:R}) can be coarse grained into a set partition optimization problem $Q$ (Definition~\ref{def:Q}). Next, in \secref{sec:ptoint} we bound $Q$ by a mixed Boolean-integer program $B$ (Definition~\ref{def:B}), and then an integer program $I$ (Definition~\ref{def:I}). We these results at hand, we relate the value of $I$ to multiway cuts in \secref{sec:inttomulti}, thus proving the collapsing chain $R-\mathcal{A}(AB:C)\ge Q \ge B \ge I \ge \mathcal{A}(A:B:C)$ as required by Theorem~\ref{thm:RPBI}. We prove the uniqueness of the solution in \secref{sec:uniqueness}.
Note that this section only establishes our results on even integer $m>2$.  We will deal with the problem of analytically continuing $m\to 1$ in \secref{sec:cont}.

\subsection{From permutations to partitions}
\label{sec:ptop}

We seek:
\begin{equation}
R = \min_{g} R(g)\,, \qquad R(g) =  \sum_{ e=\{ x, y\} \in E}w(e) d(g(x),g(y))
\end{equation}

Consider a minimal cut $r_{AB} \subset V$ associated to the division $AB:C$. Define $C' = \{x\in r^c_{AB}:\{x,y\}\in \mu(r_{AB})\}$ to be the vertices in $r^c_{AB}$ that border $r_{AB}$. 
Define a new amputated graph $G' = (V',E') = (r_{AB} \cup C' ,E_G [r_{AB}])$. 
We can estimate $R$ using a new model, written in terms of the coarse grained $q_X(g)$ with respect to element $X$, on this new graph:

\begin{lemma}
\label{lem:RtoP'}
We have the following estimate: 
\begin{equation}
\label{lemff}
R(g) \geq \mathcal{A}(AB:C) d(X,\id) +  Q'(q)  
\end{equation}
where $q : V' \rightarrow P_{2n}$ defined via $q(v) \equiv q_X(g(v))$ and where:
\begin{equation}
Q'(q) = \sum_{e = \{x,y\} \in E'} w(e) d( q(x), q(y)) 
\end{equation}
\end{lemma}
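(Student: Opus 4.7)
The plan is to exhibit a pointwise non-negative ``excess'' function and combine it with a max-flow argument. Let $f(a,b) \equiv d(a,b) - d(q_X(a),q_X(b))$ for $a,b \in S_{mn}$, which is non-negative by~\eqref{gPq}. Extend $q$ to all of $V$ by $q(v) \equiv q_X(g(v))$. Then a rearrangement gives
\begin{equation*}
R(g) - Q'(q) \;=\; \sum_{e\in E} w(e)\, f(g(x),g(y)) \;+\; \sum_{e\in E\setminus E'} w(e)\, d(q(x),q(y)),
\end{equation*}
with both terms manifestly non-negative. It therefore suffices to lower-bound the first sum by $d(X,\id)\,\mathcal{A}(AB:C)$.

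Next I would establish the ``triangle-like'' inequality $f(a,b)+f(b,\id) \geq f(a,\id)$ for all $a,b\in S_{mn}$. Using $d(q,\id)=2n-\#(q)$ in $P_{2n}$, a direct calculation reduces this to
\begin{equation*}
d(a,b)+d(b,\id)-d(a,\id) \;\geq\; 2\bigl[\#(q_X(a))-\#(q_X(a)\vee q_X(b))\bigr].
\end{equation*}
The inequality $d(a,b)\geq d(P(a),P(b))$ from~\eqref{gPq} lower-bounds the left side by $2[\#(a)-\#(a\vee b)]$, and~\eqref{gqbound} applied at $g_0=X$ yields $\#(a)-\#(a\vee b)\geq \#(q_X(a))-\#(q_X(a)\vee q_X(b))$, closing the chain. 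Iterating this inequality along any sequence $g_0,g_1,\ldots,g_k=\id$, by collapsing $f(g_{k-2},g_{k-1})+f(g_{k-1},\id)$ into $f(g_{k-2},\id)$, then the next pair into $f(g_{k-3},\id)$, and so on leftward, yields the per-path bound
\begin{equation*}
\sum_{i=1}^k f(g_{i-1},g_i) \;\geq\; f(g_0,\id),
\end{equation*}
which equals $d(X,\id)=n(m-2)$ whenever $g_0\in\{g_A,g_B\}$.

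Finally, max-flow/min-cut yields an optimal flow $c(L)$ on $\widehat{\mathcal{P}}_{AB,C}$ of total value $\mathcal{A}(AB:C)$ with $\sum_L c(L)\mathbf{1}_L(e)\leq w(e)$. Since $f\geq 0$, applying the per-path bound to each flow path gives
\begin{equation*}
\sum_{e\in E} w(e)\, f(g(x),g(y)) \;\geq\; \sum_L c(L)\sum_{e\in L} f(g(x),g(y)) \;\geq\; \mathcal{A}(AB:C)\,d(X,\id),
\end{equation*}
which combined with the decomposition above proves the lemma. The main subtlety is that $f$ does \emph{not} satisfy a general triangle inequality $f(a,c)\leq f(a,b)+f(b,c)$; the right endpoint must be pinned to $\id$ for the collapse to work, which is exactly what the boundary condition $g(C)=\id$ supplies at the end of every flow path from $AB$ to $C$.
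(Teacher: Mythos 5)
Your proof is correct, but it takes a genuinely different route from the paper's. The paper splits each max-flow path at its first crossing of the minimal cut $\mu(r_{AB})$, applies the plain Cayley triangle inequality on the $C$-side to produce $d(g(\nu),\id)$, converts to partition distances only inside $r_{AB}$ via the term-by-term re-summation \eqref{ssum}--\eqref{finqq}, and then uses the identity $d(\id,g(\nu))+\#(g(\nu))-\#(q(\nu))=d(\id,X)+d(\id,q(\nu))$ before reassembling with the flow decomposition. You use the same two ingredients --- the genus/partition bound \eqref{gGbound}, \eqref{gPq} and the coarse-graining bound \eqref{gqbound} --- but package them once and for all into the one-sided triangle inequality $f(a,b)+f(b,\id)\geq f(a,\id)$ for the excess $f(a,b)=d(a,b)-d(q_X(a),q_X(b))$; your verification is sound, since the left side equals $d(a,b)+\#(a)-\#(b)\geq 2\,[\#(a)-\#(a\vee b)]$ while the right side equals $2\,[\#(q_X(a))-\#(q_X(a)\vee q_X(b))]$, and \eqref{gqbound} closes the gap. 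Telescoping this along entire flow paths (valid because the right endpoint is pinned at $\id$ by the boundary condition on $C$, and $f(g_{A/B},\id)=n(m-1)-n=d(X,\id)$) removes any need to locate where a path crosses the cut, and your rearrangement identity for $R(g)-Q'(q)$ is correct. What your route buys: since you only use non-negativity of the $E\setminus E'$ remainder, you actually prove the stronger whole-graph bound $R(g)\geq \mathcal{A}(AB:C)\,d(X,\id)+\sum_{e\in E}w(e)\,d(q(x),q(y))$, which (because $q_X(\id)=\id$ fixes $q=\id$ on $C$) gives Lemma~\ref{lem:RtoP} directly and sidesteps the amputated-graph bookkeeping; it also avoids the mild mismatch in the paper between the lemma as stated ($q=q_X(g)$ on $C'$) and the Remark's boundary condition $q=\id$ on $C'$, since on the full graph the $\id$ condition sits at $C$ itself. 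What the paper's organization buys: by localizing the estimate to $r_{AB}$ it records the edge-by-edge saturation conditions inside and outside the entanglement wedge (the Remark following the lemma), which are precisely what the uniqueness argument in Section~\ref{sec:uniqueness} later consumes; your telescoped excess does not hand you those localized conditions as directly.
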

\begin{remark}
\label{remark:RtoP'}
Minimizing over $g$ gives $R \geq \mathcal{A}(r_{AB}) d(X,\id) + Q'$ with $Q'\equiv \min_q Q'(q)$, where the boundary conditions on $q$ for this later model is such that vertices in $A(B)$ have a fixed permutation $q = q_A(q_B)$ and the permutations on $C'$ are fixed as $q=\id$.
To pass from the program $Q'$ (on the amputated graph $G'$) to the program $Q$ (on the original graph $G$), as required in Definition~\ref{def:Q} and Theorem~\ref{thm:RPBI}, we simply need
to show that $Q'\ge Q$. This is proven in Lemma~\ref{lem:RtoP} after the proof of Lemma~\ref{lem:RtoP'}. 
\end{remark}

\begin{proof}
Given the minimal cut $r_{AB}$ we can consider an optimal solution to the linear program $c(L)$ in \Eqref{program:cP} for edge-disjoint paths:
\begin{equation}
\mathcal{A}(r_{AB})  = \sum_{L \in \widehat{\mathcal{P}}_{AB,C}} c(L) 
\end{equation}
where for all $e \in E$ then $w(e) \geq \sum_{L} c(L) \mathbf{1}_{L}(e)$.
Consider a single such path $L\in\mathcal{P}_{AB,C}$.
Such a path must pass through the minimal cut $\mu(r_{AB})$ at least one time.
Denote the vertex $\nu$ to be the first vertex along $L$ that connects $r_{AB}$ to $C'$.
That is for the first edge $e$ in $L \cap \mu(r_{AB})$ we have $e = \{ \nu', \nu \}$ where
$\nu' \in C'$. Since $C'$ lies on the minimal cut, this is guaranteed to exist, see \figref{fig:proof-partition}.
We split up $L=L_{AB}\sqcup L_{C}$ into two connecting paths at the common vertex $\nu$, where $L_{AB} \in \widehat{P}_{AB,\nu}$ and $L_{C} \in \widehat{P}_{\nu,C}$.
We estimate the contribution of $L$ to be:
 \begin{align} 
 \label{great}
\sum_{e = \{x,y\} \in L} d (g(x),g(y))
 &=\sum_{e =\{x,y\} \in L_{C} } d(g(x),g(y))+  \sum_{ e =\{x,y\}  \in L_{AB}}  d(g(x),g(y))
 \\
& \geq  d(g(\nu), \id) + \sum_{e =\{x,y\}  \in L_{AB}} d(P(g(x)),P(g(y)))
 \label{great2} 
\end{align}
\begin{figure}
\centering
\includegraphics[scale=.35]{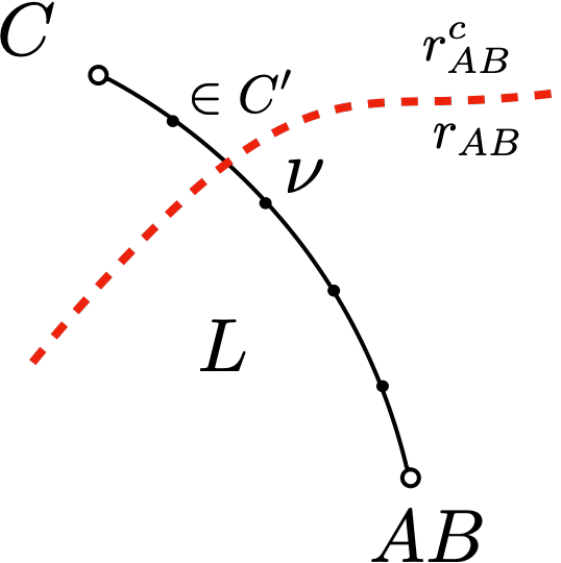}
\caption{We split up a path $L\in \widehat{P}_{AB,C}$ at the cut surface $\mu(r_{AB})$, through a vertex $\nu\in r_{AB}$ that lies immediately inside $r_{AB}$.}
\label{fig:proof-partition}
\end{figure}
To arrive at \Eqref{great2} we have used \Eqref{gPq} for the part of the path that intersects $r_{AB}$ and then repeated uses of the (Cayley distance)  triangle inequality in the complement.

We can re-arrange the sum: 
\begin{align}
\sum_{e =\{x,y\} \in L_{AB}} d(P(g(x)),P(g(y))) 
= \,&\#(g_{A,B}) - 2\#(g_{A,B} \vee g(x_2)) + \#(g(\nu))  \\ & + \sum_{\alpha=2}^{|L_{AB}|}  2\left( \#(g(x_\alpha)) - \#(g(x_\alpha) \vee g(x_{\alpha+1})) \right)
\label{ssum}
\end{align}
where $x_\alpha$ is the sequence of vertices that connects the path $L$. With $x_1 \in AB$ and $x_{|L_{AB}|+1} = \nu$. 
Note that if there are no edges in $L_{AB}$ then the sum \Eqref{ssum} does not contribute.
Now use the bound \Eqref{gqbound} along with the equality $\#(g_{A,B}) = \#(q_X(g_{A,B}))$ and $\#(g \vee g_{A,B}) = \#( q_X(g) \vee q_X(g_{A,B}))$, we turn all $g$'s in the sum of \Eqref{ssum}
into $q_X(g)$'s plus the remainder $\#(g(\nu)) - \#(q_X(g(\nu)))$. That is:
\begin{align}
\label{finqq}
\sum_{e=\{x,y\} \in L_{AB}} d(P(g(x)),P(g(y)))
\geq \#(g(\nu)) - \#(q(\nu)) + 
\sum_{e=\{x,y\} \in L_{AB}} d(q(x),q(y))
\end{align}
(recall that $q_X(g(x)) \equiv q(x)$).
Hence we have the following estimate:
\begin{align}
\label{great3}
    \sum_{e=\{x,y\}\in L} d(g(x),g(y)) \ge d(\id,X) + d(\id,q(v))+\sum_{e=\{x,y\}\in L_{AB}} d(q(x),q(y))
\end{align}
where we have used the identity
\begin{equation}
d(\id,g(\nu)) + \#(g(\nu)) - \#(q(\nu)) = nm -  \#(q(\nu)) = d(\id,X) + d(\id,q(\nu))
\end{equation}
Now write:
\begin{align}
 \label{firstlast}
 R(g) &=  \sum_{e =\{x,y\}} \left(w(e) -  \sum_{ L} c(L) \mathbf{1}_{L}(e) \right) d(g(x),g(y))  
 + \sum_{L} c(L) \sum_{e =\{x,y\} \in L} d(g(x),g(y))  \\
 &\ge \sum_{e =\{x,y\} \subset r_{AB}}\left(w(e) -  \sum_{ L} c(L) \mathbf{1}_{L}(e) \right) d(q(x),q(y)) \nonumber \\
 &\qquad + \sum_L c(L) \left( d(\id,X) + d(\id,q(v))+\sum_{e=\{x,y\}\in L_{AB}} d(q(x),q(y))\right)
 \end{align}
where we applied \Eqref{great3} to all paths weighted by $c(L)$ in the last term of \Eqref{firstlast}, along with \Eqref{gPq} in the first term of \Eqref{firstlast} on the edges that are entirely inside $r_{AB}$ and finally dropping all other edges in $r_{AB}^c$ (note that the bracketed term in the first part of \Eqref{firstlast} is positive
by feasibility of $c(L)$). Using:
\begin{equation}
\sum_L c(L) d(\id,X) = \mathcal{A}(AB:C) d(\id,X)
\end{equation}
we arrive at \Eqref{lemff}.
\end{proof}

\begin{remark}
We will make use of the saturation conditions for the inequality \eqref{lemff} in our proofs for uniqueness of the solution. It is useful to record them here for reference later. For edges $e$ in $r_{AB}^c$ the saturation of \Eqref{firstlast} requires:
\begin{align}
    \sum_{e=\{x,y\}\in r_{AB^c}} \left(w(e)-\sum_L c(L)\mathbf{1}_L(e)\right)d(g(x),g(y)) = 0
\end{align}
We also have, from \Eqref{great2},
\begin{align}
    \sum_{e=\{x,y\}\in L_C}d(g(x),g(y)) = d(g(v), \id)
\end{align}
where $L_C\in \mathcal{P}_{v,C}$ is a subpath of $L\in \mathcal{P}_{AB,C}$ with $c(L)>0$. $L_C$ connects a vertex $v$ immediately inside $r_{AB}$ to $C$. 
For the edges $e$ in $r_{AB}$, saturation of \Eqref{firstlast} requires $\sum_{e\in r_{AB}} \left(w(e)-\sum_L c(L)\mathbf{1}_L(e)\right)d(g(x),g(y)) 
    = \sum_{e\in r_{AB}} \left(w(e)-\sum_L c(L)\mathbf{1}_L(e)\right)d(q(x),q(y))$.
Since $d(g(x),g(y))\ge d(q(x),q(y))\ge 0$ by \Eqref{gPq}, this condition holds locally, i.e.,
\begin{equation}
    d(g(x),g(y))=d(q(x),q(y))
\end{equation}
for all edges $e=\{x,y\}$ where $w(e)-\sum_L c(L)\mathbf{1}_L(e)>0$. If the minimal cut $r_{AB}$ is unique, then it holds for all $e$ strictly inside $r_{AB}$.
\end{remark}

\begin{lemma}
\label{lem:RtoP}
    Consider the program
    \begin{equation}
        Q=\min_q Q(q),
    \end{equation}
    on the original graph $G$,
    where we minimize over all $q:V\to P_{2n}$ with $q(A)=q_A, q(B)=q_B$ and $q_C = \id$. Then we have
    \begin{equation}
        R \ge \mathcal{A}(AB:C)d(X,\id) + Q
    \end{equation}
\end{lemma}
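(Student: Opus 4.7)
The plan is to invoke Lemma~\ref{lem:RtoP'} directly, which already supplies the inequality $R \geq \mathcal{A}(AB:C)\,d(X,\id) + Q'$, where $Q'$ is the minimization on the amputated graph $G' = (r_{AB} \cup C',\, E_G[r_{AB}])$ with boundary data $q_A$ on $A$, $q_B$ on $B$, and $\id$ on $C'$ (as summarized in Remark~\ref{remark:RtoP'}). The only thing left to do is establish $Q' \geq Q$, i.e.\ that relaxing to an optimization on the full graph $G$ with the original boundary conditions $q(A)=q_A$, $q(B)=q_B$, $q(C)=\id$ can only lower the minimum.

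For the reduction $Q' \geq Q$, I would take any optimizer $q^*: V' \to P_{2n}$ for the amputated program and extend it to $\tilde{q}: V \to P_{2n}$ by setting $\tilde{q}|_{V'} = q^*$ and $\tilde{q}(v) = \id$ for all $v \in r_{AB}^c \setminus C'$. This extension is unambiguous because $q^*$ already equals $\id$ on $C'$ by the boundary data of $Q'$. The boundary conditions required by Definition~\ref{def:Q} are then automatic: since $r_{AB}$ is a cut for $AB:C$, we have $A \cup B \subset r_{AB} \subset V'$, so $\tilde{q}$ inherits the correct values $q_A, q_B$ from $q^*$ on $A$ and $B$; and $C \subset r_{AB}^c$, where $\tilde{q} = \id$ either by construction or (on $C \cap C'$) from $q^*|_{C'} = \id$.

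Finally, I would verify $Q(\tilde{q}) = Q'(q^*)$ edge-by-edge. Every $e = \{x,y\} \in E$ falls into one of three types: entirely inside $r_{AB}$, crossing $\mu(r_{AB})$ between $r_{AB}$ and $C'$, or entirely inside $r_{AB}^c$. The first two types together constitute $E' = E_G[r_{AB}]$ and contribute identically to both objectives. For an edge of the third type, both endpoints lie in $r_{AB}^c$ and hence receive the assignment $\id$ under $\tilde{q}$ (whether they sit in $C'$ or in $r_{AB}^c \setminus C'$), so $d(\tilde{q}(x),\tilde{q}(y)) = d(\id,\id) = 0$. Summing, $Q(\tilde{q}) = Q'(q^*) = Q'$, hence $Q \leq Q'$, which combined with Lemma~\ref{lem:RtoP'} gives the claim.

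The argument is essentially bookkeeping, and I do not anticipate a real obstacle; the single point that needs a careful check is the consistency of the identity extension across $C' \cup (r_{AB}^c \setminus C')$ and its compatibility with the original boundary $C$, both of which are ensured by the observation that the vertices of $V \setminus V'$ together with $C'$ all lie in $r_{AB}^c$ and can uniformly be assigned $\id$ at no cost.
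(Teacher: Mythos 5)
Your proposal is correct and follows essentially the same route as the paper: extend an optimizer of the amputated-graph program $Q'$ to all of $G$ by assigning $\id$ on $V\setminus V'$, note the boundary conditions are inherited and the extra edges cost nothing, so $Q\le Q'$, and then combine with Lemma~\ref{lem:RtoP'}. The extra edge-by-edge bookkeeping you include is just a more explicit version of the paper's one-line observation that setting $q=\id$ on the new vertices does not increase the free energy.
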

\begin{proof}
Consider the program $Q'=\min_{q} Q'(q)$, as defined in Lemma~\ref{lem:RtoP'} and Remark~\ref{remark:RtoP'}. Let $q'$ be an optimal solution to this problem. 
We construct from $q'$ a feasible solution $q:V\to P_{2n}$ to the $Q$ problem by 
\begin{align}
    q(v) = 
    \begin{cases}
    q'(v), \quad &v\in V'    \\
    \id, \quad &v\in V \backslash V'
    \end{cases}
\end{align}
It is clear that $q$ satisfies all the boundary conditions and $Q(q)=Q'(q')$, since setting $q(v)=\id$ on the new vertices does not increase the free energy.
Minimizing over all $q:V\to P_{2n}$ we have $Q'>Q$.
Then from Remark~\ref{remark:RtoP'} we have $R\ge \mathcal{A}(AB:C)d(X,\id) + Q' \ge \mathcal{A}(AB:C)d(X,\id) + Q$.
\end{proof}

\subsection{From partitions to integer program}
\label{sec:ptoint}

In this section we move from partitions $q$, to Boolean variables $b$, to an integer program. 

\begin{lemma}
\label{lem:checkfeas}
For all paths $L \in  \mathcal{P}_{A,B}$  (not necessarily edge disjoint) we have the estimate:
\begin{equation}
\label{difflr}
\sum_{e =\{x,y\} \in L} d(q(x),q(y)) \geq 2(n - \delta_{b_L,b_{AB}} )\,, \qquad \sum_{e = \{x,y\}\in L}  d(q(x),q(y)) \in 2 \mathbb{Z}
\end{equation}
where $b_{AB} = 11\cdots 1\in B_{2n}$ and
\begin{equation}
b_L = \bigwedge_{x \in V[L]} b(x),
\end{equation}
and where $b:V \to B_{2n}$ is defined via $b(v)^k = b^k(q(v))$.
\end{lemma}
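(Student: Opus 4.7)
The plan is to prove both the parity claim and the inequality by walking along $L$ and combining the ordinary triangle inequality on $(P_{2n},d)$ with the bipartite-cycle-graph enhancement of Lemma~\ref{lem:tAtB}. The structure is a two-case dichotomy based on whether some vertex on $L$ carries a singlet in $q$.

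First I would dispatch the parity statement. A direct expansion of the definition (\ref{eq:P_metric_2}) gives
\begin{equation}
d(p_1,p_2)+d(p_2,p_3)-d(p_1,p_3) \;=\; 2\bigl(\#(p_2)-\#(p_1\vee p_2)-\#(p_2\vee p_3)+\#(p_1\vee p_3)\bigr),
\end{equation}
which is manifestly even (and nonneg by the triangle inequality proved earlier). Telescoping this identity along the vertex sequence of $L$ and using the boundary conditions $q=q_A$ at the initial vertex and $q=q_B$ at the terminal vertex, the total edge sum has the same parity as $d(q_A,q_B)=2(n-1)$, hence lies in $2\mathbb{Z}$.

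For the inequality, the case $b_L=b_{AB}$ is immediate: iterated triangle inequality along $L$ yields
\begin{equation}
\sum_{e=\{x,y\}\in L} d(q(x),q(y)) \;\geq\; d(q_A,q_B) \;=\; 2(n-1),
\end{equation}
which matches $2(n-\delta_{b_L,b_{AB}})$ in this case. For $b_L\neq b_{AB}$, some coordinate $k$ satisfies $b_L^k=0$; since $b_L=\bigwedge_{v\in V[L]}b(v)$, this forces some $v\in V[L]$ with $b(v)^k=0$, i.e.\ $q(v)$ has a singlet and therefore $\#_1(q(v))\neq 0$. Splitting $L$ at this $v$ and applying the triangle inequality,
\begin{equation}
\sum_{e\in L} d(q(x),q(y)) \;\geq\; d(q_A,q(v))+d(q(v),q_B),
\end{equation}
and I would then invoke Lemma~\ref{lem:tAtB} with $(t_A,t_B)=(q_A,q_B)$. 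By the Remark following that Lemma, these partitions do satisfy the bipartite-cycle-graph hypothesis (as can be read off from their explicit form in Appendix~\ref{app:specific}), so the Lemma upgrades the bound by the advertised $+2$ to
\begin{equation}
d(q_A,q(v))+d(q(v),q_B) \;\geq\; d(q_A,q_B)+2 \;=\; 2n,
\end{equation}
completing the case.

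The only substantive input beyond iterated triangle inequality is verifying that $(q_A,q_B)$ realizes the cycle-graph configuration demanded by Lemma~\ref{lem:tAtB}, and I expect this to be the main obstacle worth flagging: it is a combinatorial/structural property of the replica pair $(g_A,g_B)$, not something that follows from general lattice theory. Given the Remark already commits to this fact, I would cite it rather than re-derive it. Everything else in the argument is local and two-line once the right vertex $v$ on the path is identified.
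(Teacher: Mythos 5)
Your proposal is correct and follows essentially the same route as the paper's proof: repeated triangle inequality for the case $b_L=b_{AB}$, splitting the path at a singlet-carrying vertex and invoking the enhanced bound of Lemma~\ref{lem:tAtB} (with the cycle-graph hypothesis for $(q_A,q_B)$ checked via their explicit form) for the case $b_L\neq b_{AB}$, and evenness of the triangle-inequality deficit for the parity claim.
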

\begin{proof}
If $b_L = b_{AB}$ then we must have $b(x)=b_{AB}=11\cdots 1$ for the entire path, so that $\#_1(q(x)) = 0$ for all $x\in L$ (since singlets are preserved through the meet operation). We use
the triangle inequality repeatedly to show:
\begin{equation}
\sum_{e =\{x,y\} \in L} d(q(x),q(y)) \geq d(q_A,q_B) = 2(n-1)
\end{equation}
Conversely, if $b_L \neq b_{AB}$ then there must be some $x\in V$ along the path such that $\#_1(q(x)) \neq 0$.
We apply the triangle inequality about this point:
\begin{equation}
\sum_{e =\{x,y\} \in L} d(q(x),q(y)) \geq d(q_A, q(x)) + d(q(x), q_B) \geq 2n
\end{equation}
where we used the improved triangle inequality derived in Lemma~\ref{lem:tAtB}, and which applies when $q(x)$ has a singlet somewhere. 

More generally the deficit in the triangle inequality is always an even integer:
\begin{equation}
d(q_1,q_2) + d(q_2,q_3) - d(q_2,q_3) = 2( \#(q_2 \vee q_3) + \#(q_2) - \#(q_1 \vee q_2) -\#(q_2 \vee q_3) )
\end{equation}
 Thus the difference in the left and right-hand side of the inequality in \Eqref{difflr} must
be an even integer. 
\end{proof}

We have the estimate in terms of an integer program:
\begin{lemma}
\label{lem:PtoB2}
Given some fixed $b: V\to B_{2n}$, define the integer program:
\begin{align}
\label{linstate}
\begin{split}
B(b) \equiv & \min_{r}   \sum_{e \in E}  w(e) r(e)  \\
&{\rm subject \,\, to}\, \quad \forall L \in \mathcal{P}_{A,B}\,\, : \sum_{e \in L} r(e) =  2(n-\delta_{b_L,b_{AB}})  + 2 \mathbb{Z}_{\geq 0}\\
&{\rm and \,\, }\, \qquad \forall e =\{x,y\} \in E \,\, : r(e) \geq \left\lceil \frac{1}{2} d(b(x),b(y))  \right\rceil 
\end{split}
\end{align}
where $r  \in \mathbb{Z}_{\geq 0}$.  
Then for any $q: V\to P_{2n}$ satisfying the boundary condition $q(A)=q_A,~ q(B)=q_B$ and $q(C)=\id$, we have the bound $ Q(q) \geq B(b^k(q))$.
\end{lemma}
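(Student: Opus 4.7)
The strategy is to exhibit a very simple feasible point for the integer program defining $B(b)$, namely $r(e) \equiv d(q(x),q(y))$ for each edge $e=\{x,y\}$, and observe that its objective value equals $Q(q)$. Since the partition distance is a non-negative integer (it has the form $\#(q(x))+\#(q(y))-2\#(q(x)\vee q(y))$, with the non-negativity established in the lattice lemma of \secref{sec:ppb}), this choice is a legitimate integer-valued map $r:E\to\mathbb{Z}_{\geq 0}$.

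Next I would verify the two feasibility constraints directly from results already proven. The per-edge constraint $r(e)\geq \lceil \tfrac{1}{2}d(b(x),b(y))\rceil$ is precisely Corollary~\ref{lem:qtob}, which bounds the partition distance below by half the Hamming distance on the associated Boolean strings. The path constraint, which demands that $\sum_{e\in L} r(e)$ lie in $2(n-\delta_{b_L,b_{AB}})+2\mathbb{Z}_{\geq 0}$ for every $L\in\mathcal{P}_{A,B}$, is exactly the content of Lemma~\ref{lem:checkfeas}: the lower bound is obtained there either from the standard triangle inequality when $b_L=b_{AB}$ (no singlets along $L$), or from the strengthened triangle inequality of Lemma~\ref{lem:tAtB} at any vertex where a singlet appears; and the same Lemma~\ref{lem:checkfeas} shows that the sum always lies in $2\mathbb{Z}$, giving the required evenness.

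With feasibility in hand, $r$ competes in the minimization defining $B(b)$, so
\[
B(b) \;\leq\; \sum_{e\in E} w(e)\, r(e) \;=\; \sum_{e=\{x,y\}\in E} w(e)\, d(q(x),q(y)) \;=\; Q(q),
\]
which is the desired inequality. There is no genuine obstacle at this step: all the analytic content — the singlet-based triangle inequality, the partition-to-Boolean distance bound, and the parity of path sums — has been packaged into the preceding Lemmas, and the present lemma is essentially a reformulation of those bounds as an integer program with the choice $r(e)=d(q(x),q(y))$ realizing the minimum from above.
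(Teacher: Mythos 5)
Your proposal is correct and follows exactly the paper's own argument: set $r(e)=d(q(x),q(y))$, check the per-edge constraint via Corollary~\ref{lem:qtob} and the path constraints (lower bound and parity) via Lemma~\ref{lem:checkfeas}, and conclude $B(b^k(q))\leq Q(q)$ by feasibility. Nothing is missing.
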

\begin{proof}
We simply consider $r(e) = d(q(x),q(y))$ for $e =\{x,y\}$.
We use the bounds in Lemma~\ref{lem:checkfeas} and Corollary~\ref{lem:qtob} to check feasibility for the $B(b)$ problem. 
\end{proof}

\begin{corollary}
    \label{lem:PtoB}
    Consider the program:
    \begin{align}
        B\equiv \min_b B(b),
    \end{align}
    where we now minimize over all $b:V\to B_{2n}$ with $b(A)=b(B)=b_{AB}=11\ldots1$ and $b(C)=00\ldots0$.
    Then we have $Q\ge B$.
\end{corollary}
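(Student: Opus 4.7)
The plan is to invoke Lemma~\ref{lem:PtoB2} at an optimal set-partition assignment and then optimize over $b$. Concretely, let $q^\star : V \to P_{2n}$ be any minimizer of $Q = \min_q Q(q)$, so that $Q(q^\star) = Q$, and define the induced Boolean assignment $b^\star : V \to B_{2n}$ pointwise by $b^\star(v)^k := b^k(q^\star(v))$ for every $k \in \{1,\ldots,2n\}$. Lemma~\ref{lem:PtoB2} already supplies the crucial inequality $Q(q^\star) \geq B(b^\star)$, so the entire task reduces to checking that $b^\star$ is admissible for the outer minimization $B = \min_b B(b)$, i.e.\ that it reproduces the required boundary data $b(A) = b(B) = 11\cdots1$ and $b(C) = 00\cdots0$.

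For the $A$ and $B$ boundaries, we have $q^\star(A) = q_A$ and $q^\star(B) = q_B$ by the constraints of the $Q$ program. Using the explicit form of $q_A, q_B$ recorded in Appendix~\ref{app:specific}, neither partition contains a singlet, so for every $k$ the join $u_k \vee q_{A,B}$ collapses to the single-block partition of $\mathbb{Z}_{2n}$: the singleton $\{k\}$ of $u_k$ is swallowed by whichever non-singleton block of $q_{A,B}$ contains $k$, after which the large block of $u_k$ connects the remaining blocks. By \Eqref{eq:bk(q)} this yields $b^\star(A)^k = b^\star(B)^k = 2 - 1 = 1$ for all $k$. For the $C$ boundary, $q^\star(C) = \id$ is the all-singletons partition, so $u_k \vee \id = u_k$ has $\#(u_k) = 2$ blocks and $b^\star(C)^k = 2 - 2 = 0$ for every $k$. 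Hence $b^\star$ meets the boundary data of Definition~\ref{def:B}.

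Feasibility implies $B(b^\star) \geq B$, and combining this with Lemma~\ref{lem:PtoB2} produces the chain $Q = Q(q^\star) \geq B(b^\star) \geq B$, which is the conclusion. There is no real obstacle beyond bookkeeping of the boundary values: the substantive content, namely converting the partition-distance bound $d(q(x),q(y)) \geq \lceil d(b(x),b(y))/2 \rceil$ of Corollary~\ref{lem:qtob} and the path lower bound of Lemma~\ref{lem:checkfeas} into feasibility for the integer program defining $B(b)$, has already been carried out inside Lemma~\ref{lem:PtoB2}. The corollary is therefore obtained by nothing more than passing to the minimum on both sides.
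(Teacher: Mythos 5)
Your proposal is correct and follows the same route as the paper: take an optimal $q$, feed the induced Boolean assignment $b^k(q)$ into Lemma~\ref{lem:PtoB2}, verify it satisfies the boundary data of Definition~\ref{def:B}, and minimize over $b$. The only difference is that you spell out the boundary-value bookkeeping (no singlets in $q_A,q_B$, all singlets in $\id$) which the paper leaves implicit.
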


\begin{proof}
Consider an optimal solution $q$ to program $Q$. Then from Lemma~\ref{lem:PtoB2} it is clear that $B(b^k(q))\le Q(q)$ is feasible. Minimizing over all $b$ then gives $Q\ge B$.
\end{proof}

We will now introduce an edge variable $\rho(e) =   \left\lceil \frac{1}{2} d(b(x),b(y))  \right\rceil $ for $e = \{x,y\}$.  We note that we can write:
\begin{equation}
\label{sPrho}
\delta_{b_L,b_{AB}} = \delta_{\rho(L), 0},
\end{equation}
where we recall the shorthand notation $\rho(L)\equiv \sum_{e\in L}\rho(e)$.
Thus we have the new integer program:
\begin{lemma}
\label{lem:BtoI}
For integer $n \geq 2$, consider the integer non-linear program:
\begin{align}
\label{nlip}
\begin{split}
I =  & \min_{\rho, \sigma}    \sum_{e \in E}  w(e) (\sigma(e) + \rho(e)) \\ 
&{\rm subject \,\, to}\, \quad \forall L \in \mathcal{P}_{A,B}\,\, : \sum_{e \in L}( \sigma(e) + \rho(e)) \in   2(n - \delta_{\rho(L), 0})  + 2 \mathbb{Z}_{\geq 0} \\ 
&{\rm and \,\, }\, \qquad \forall L \in \mathcal{P}_{AB,C} \, \, :  \sum_{e \in L} \rho(e) \geq n  
\end{split}
\end{align}
for $ \rho, \sigma \in \mathbb{Z}_{\geq 0}$.
Then $\min_q Q(q) \geq \min_b B(b) \geq I$.
\end{lemma}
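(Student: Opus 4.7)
The first inequality $\min_q Q(q) \ge \min_b B(b)$ has already been established in Corollary~\ref{lem:PtoB}, so the task reduces to proving $B(b) \ge I$ for every fixed $b$, which then gives $\min_b B(b) \ge I$ after minimizing. The natural plan is to take an optimal $r$ for $B(b)$ and split it into two pieces: set
\begin{equation}
\rho(e) \equiv \left\lceil \tfrac{1}{2}\, d(b(x),b(y)) \right\rceil, \qquad \sigma(e) \equiv r(e) - \rho(e),
\end{equation}
for every edge $e=\{x,y\}$, and verify that $(\rho,\sigma)$ is feasible for the program $I$ while having identical objective $\sum_e w(e)(\rho(e)+\sigma(e)) = \sum_e w(e) r(e)$.

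Nonnegativity and integrality of $\rho$ and $\sigma$ are immediate: $\rho(e)\in\mathbb{Z}_{\ge 0}$ by definition, and $\sigma(e) \ge 0$ is exactly the second (per-edge) constraint in the definition of $B(b)$. The $AB$-to-$C$ constraint on $\rho$ follows from the triangle inequality for the Hamming distance together with superadditivity of the ceiling: for any $L\in\mathcal{P}_{AB,C}$,
\begin{equation}
\sum_{e=\{x,y\}\in L} \left\lceil \tfrac{1}{2} d(b(x),b(y)) \right\rceil \;\ge\; \left\lceil \tfrac{1}{2}\sum_{e=\{x,y\}\in L} d(b(x),b(y)) \right\rceil \;\ge\; \left\lceil \tfrac{1}{2} d(b_{AB}, 00\ldots 0) \right\rceil = n,
\end{equation}
since $d(b_{AB},00\ldots 0) = 2n$ on $B_{2n}$.

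The only slightly delicate step is the $\mathcal{P}_{A,B}$ constraint, which requires the identification $\delta_{b_L, b_{AB}} = \delta_{\rho(L),0}$ so that the (already satisfied) constraint $\sum_{e\in L} r(e) \in 2(n-\delta_{b_L,b_{AB}}) + 2\mathbb{Z}_{\ge 0}$ for $B(b)$ translates directly into the corresponding constraint for $I$. To see the equivalence, note that $\rho(L)=0$ iff $d(b(x),b(y)) = 0$ on every edge of $L$, iff $b$ is constant on $V[L]$, and because $b(A)=b(B)=b_{AB}$ this constant must equal $b_{AB}$. Conversely, $b_L = \bigwedge_{v\in V[L]} b(v) = b_{AB}=11\ldots 1$ forces $b(v)^k=1$ for every $k$ and every $v\in V[L]$, i.e.\ again $b(v)=b_{AB}$ on the whole path. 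The two conditions coincide, so $(\rho,\sigma)$ is feasible for $I$ with objective value equal to $B(b)$, yielding $B(b)\ge I$. Taking the minimum over $b$ completes the proof. The main conceptual point, rather than an obstacle, is this last equivalence between the combinatorial ``constant $b$ along $L$'' condition and the arithmetic condition $\rho(L)=0$; everything else is routine feasibility bookkeeping.
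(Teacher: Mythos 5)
Your proof is correct and follows essentially the same route as the paper's: you split an optimal $r$ for $B(b)$ as $\rho(e)=\lceil\tfrac12 d(b(x),b(y))\rceil$, $\sigma(e)=r(e)-\rho(e)\ge 0$, check the $\mathcal{P}_{AB,C}$ constraint via the triangle inequality for the metric $\lceil\tfrac12 d(\cdot,\cdot)\rceil$, and use the identification $\delta_{b_L,b_{AB}}=\delta_{\rho(L),0}$ (the paper's \Eqref{sPrho}) for the $\mathcal{P}_{A,B}$ constraint. The only difference is cosmetic: you spell out the proof of \Eqref{sPrho} and argue directly from an arbitrary feasible $b$ rather than via a partition-valued $q$, which is fine.
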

\begin{proof}
 Again we just have to check feasibility with $\rho(e) =  \left\lceil \frac{1}{2} d(b(x),b(y)) \right\rceil $
 and $\sigma(e) = r(e) -  \rho(e) \geq 0$, which again follows from Lemma~\ref{lem:checkfeas}, Lemma~\ref{lem:qtos} and \Eqref{sPrho}. We also need
 repeated use of the triangle inequality for the metric $ \left\lceil \frac{1}{2} d(b_1,b_2) \right\rceil$:
 \begin{equation}
 \sum_{e \in L} \rho(e) \geq \left\lceil \frac{1}{2} d(b_{AB},b_C ) \right\rceil = n
 \end{equation}
 for all $L \in \mathcal{P}_{AB,C}$. 
\end{proof}

\begin{remark}
In passing from the Boolean program $B$ to the integer program $I$, we have split the integer variable $r(e)$ into two variables $\rho(e)$ and $\sigma(e)$. 
In particular, the region associated to $\rho=0$ plays a special role in the program, as the constraint on a path is weaker if it stays entirely within the said region. It turns out that this region corresponds to the ``squeezed entanglement wedge'' (backreacted EW accommodating the non-zero tension $t_{A:B}$) in the solution to the reflected entropy optimization problem.
The $\sigma$ variable can be set to be non-zero only inside the $\rho=0$ wedge, which sources a total of $2(n-1)$ cuts separating $A$ and $B$. We will see that these cut surfaces collapse to a single domain wall corresponding to the (squeezed) EW cross-section for an optimal solution.
\end{remark}

\subsection{From integer program to multiway cuts}
\label{sec:inttomulti}
The main theorem we would like to prove here is that the program \Eqref{nlip} is equivalent to a multiway cut problem:
\begin{theorem}
\label{thmttt}
The minimum of the non-linear integer program \Eqref{nlip} is achieved by an optimal solution to the multiway cut problem for $A:B:C$
with $t_{A:B} = 2(n-1)$ and $t_{A:C} = t_{B:C} = n$. 
\end{theorem}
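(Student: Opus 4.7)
I would prove the equality $I = \mathcal{A}_{\mathbf{t}}(A:B:C)$ with $\mathbf{t} = (2(n-1),n,n)$ by establishing the two inequalities $I \le \mathcal{A}_{\mathbf{t}}$ and $I \ge \mathcal{A}_{\mathbf{t}}$ separately.

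\emph{Upper bound.} Start from an optimal triway cut $(\alpha,\beta,\gamma)$ with the stated tensions and set
\[
\rho(e) = n\bigl(\mathbf{1}_{\mu(\alpha:\gamma)}(e) + \mathbf{1}_{\mu(\beta:\gamma)}(e)\bigr), \qquad \sigma(e) = 2(n-1)\,\mathbf{1}_{\mu(\alpha:\beta)}(e).
\]
Feasibility for \Eqref{nlip} is verified by tracking how an $A$-$B$ path $L$ interacts with the three interfaces: if $L$ stays in $\alpha\cup\beta$ it crosses $\mu(\alpha:\beta)$ an odd number $2k+1$ of times, giving $\rho(L)=0$ and $\sigma(L) = 2(n-1)(2k+1) \in 2(n-1)+2\mathbb{Z}_{\ge 0}$; if $L$ visits $\gamma$, the numbers of crossings of $\mu(\alpha:\gamma)$ and $\mu(\beta:\gamma)$ have equal parity, so $\rho(L)$ is an even positive multiple of $n$ and a short parity calculation places $(\sigma+\rho)(L)$ in $2n + 2\mathbb{Z}_{\ge 0}$. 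The $AB$-$C$ constraint $\rho(L)\ge n$ is automatic because every such path crosses $\mu(\gamma)$. The total cost is exactly $\mathcal{A}_{\mathbf{t}}(A:B:C)$.

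\emph{Lower bound.} Given a feasible $(\rho,\sigma)$, I would extract a triway cut of area at most $\sum_e w(e)(\sigma+\rho)(e)$. Let $\Omega\subset V$ be the set of vertices reachable from $A\cup B$ via edges with $\rho=0$, and set $\gamma_0 := V\setminus\Omega$. The constraint $\rho(L)\ge n$ on $\mathcal{P}_{AB,C}$ forces $A\cup B\subset\Omega$ and $C\subset\gamma_0$. Contracting $\Omega$ to a single vertex $v_{AB}$ produces a graph $G^c$ whose edge set is $E[\gamma_0]$. The crucial observation is that any $v_{AB}$-$C$ path $L^c$ in $G^c$ lifts to an $AB$-$C$ path in $G$ by prepending a zero-$\rho$ prefix inside $\Omega$, so $\rho(L^c) = \rho(L) \ge n$ in $G^c$. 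Applying the LP duality of \secref{sub:minmax} in $G^c$ (with $\rho$ as path length and $w$ as capacity) identifies an optimal $v_{AB}$-$C$ cut $\gamma$ (with $C\subset\gamma\subset\gamma_0$) such that
\[
\sum_{e \in E[\gamma_0]} w(e)\rho(e) \;\ge\; n\,w(\mu(\gamma)).
\]
For the $A$-$B$ partition, take $(\alpha,\beta)$ to be an $A$-$B$ min-$w$-cut of the subgraph induced on $V\setminus\gamma$, extending the partition trivially over $\gamma_0\setminus\gamma$ so that no extra boundary edges appear. Because every $A$-$B$ path confined to $V\setminus\gamma$ satisfies $(\sigma+\rho)(L)\ge 2(n-1)$, a second application of max-flow/min-cut gives
\[
\sum_{e \subset V\setminus\gamma} w(e)(\sigma+\rho)(e) \;\ge\; 2(n-1)\,w(\mu(\alpha:\beta)).
\]
Adding the two estimates over their supports yields $\sum_e w(e)(\sigma+\rho)(e) \ge 2(n-1)w(\mu(\alpha:\beta))+n\,w(\mu(\gamma)) \ge \mathcal{A}_{\mathbf{t}}(A:B:C)$, since $(\alpha,\beta,\gamma)$ is a feasible triway cut.

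\emph{Main obstacle.} The delicate step is the simultaneous compatibility of the two LP-duality estimates on disjoint edge sets. The zero-$\rho$ prefix device is what transports the $AB$-$C$ constraint faithfully into the contracted graph $G^c$; without it, $\rho$-mass inside $\Omega$ would be double-counted. Choosing $\gamma\subsetneq\gamma_0$ via the $G^c$-LP (rather than taking $\gamma=\gamma_0$ directly) is what matches the structural statement of Lemma~\ref{cutinmcut}, namely that the $C$-region of an optimal triway cut contains a minimal $AB$-$C$ cut while the $A$-$B$ partition lives strictly outside it. Finally, the parity constraint in \Eqref{eveness} is precisely what enforces that an $A$-$B$ path visiting $\gamma$ contributes the ``doubled'' weight $2n = t_{A:C}+t_{B:C}$ corresponding to two $\gamma$-boundary crossings, making the combinatorial accounting come out equal to $\mathcal{A}_{\mathbf{t}}$ rather than a strictly smaller value.
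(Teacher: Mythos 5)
Your upper bound is fine: the assignment $\rho = n(\mathbf{1}_{\mu(\alpha:\gamma)}+\mathbf{1}_{\mu(\beta:\gamma)})$, $\sigma = 2(n-1)\mathbf{1}_{\mu(\alpha:\beta)}$ built from an optimal triway cut is feasible by the parity bookkeeping you describe, and it reproduces $\mathcal{A}_{\mathbf{t}}$ (the paper gets this direction instead from the collapsing chain of Theorem~\ref{thm:RPBI}). The genuine gap is in your lower bound, at the step ``adding the two estimates over their supports.'' The two max-flow/min-cut estimates are not over disjoint edge sets: the first bounds $\sum_{e\in E[\gamma_0]} w(e)\rho(e)$, the second bounds $\sum_{e\subset V\setminus\gamma} w(e)(\sigma+\rho)(e)$, and every edge of $V\setminus\gamma$ incident to $\gamma_0\setminus\gamma$ (or lying inside it) appears in both. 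On the overlap you are charging $w(\sigma+2\rho)$ against an objective that only pays $w(\sigma+\rho)$, so the sum of your two bounds does not lower-bound $I$. The ``zero-$\rho$ prefix device'' does not help here: it only shows that the $AB$--$C$ constraint descends to the contracted graph $G^c$; it says nothing about keeping the $\rho$-mass used for that estimate away from the edges on which you then run the $A$--$B$ min-cut estimate. But this double use of mass near the junction is exactly the dangerous scenario the theorem has to exclude — a feasible $(\rho,\sigma)$ might try to let the $A$--$B$ wall partially ride on top of the $AB$--$C$ wall and beat the triway cut — so it cannot be waved away.

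This is precisely where the paper's work lies. Lemma~\ref{lem:subG} (via Lemma~\ref{lem:prop0}) first restructures an optimal solution so that on $(AB)_0$ one has $\rho\equiv 0$ and $\sigma(e)=|k(x)-k(y)|$ for a level function $k$ with values in $\{0,\dots,2(n-1)\}$; the residual mass $\varrho=(\rho+\sigma)|_{E^c}-\mathbf{1}_{\mu(V')}$ then becomes a single budget that must simultaneously satisfy the coupled constraints of the $\ell$-intersecting cut problem (distances $|k-k'|$ between the level sets $\Gamma_k$, and $\ell/2$ to $C$), and Lemma~\ref{lem:optk} — proved by the recursive surgery argument of Lemma~\ref{kkm1} — shows that this single budget is forced to pay for $2(n-1)$ nested pairs of cuts $(\alpha_k,\beta_k)$, which are then glued to the level sets $\gamma_k$ to produce genuine cuts $A_k,B_k$ and hence the triway cut bound. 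Note also that your lower bound only invokes $(\sigma+\rho)(L)\ge 2(n-1)$ and never the evenness/penalty structure of \Eqref{eveness}, whereas the paper's feasibility arguments (e.g.\ \Eqref{widetilde} and the half-integer gap conditions in Definition~\ref{def:kint}) rely on it; this is another sign that the proposed accounting is too coarse to rule out the reuse of mass that the full proof is designed to control.
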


We do this in two steps. In the first step we prove the existence of a subgraph $G' = (V',E')$ with $A,B \subset V'$ such that either (i) $A$ and $B$ are disconnected 
on $V'$ and there is an optimal solution to \Eqref{nlip} where $\rho(e) = \sigma(e) = 0$ for all $e \subset V'$ or (ii) $A$ and $B$ are connected and there is in optimal solution $(\rho,\sigma)$
to \Eqref{nlip} where $\rho(e) = 0 $ and $\sigma$ can be described as a set of $2(n-1)$ cuts separating $A,B$ for all $e \subset V'$.
These solutions then seed a multiboundary/intersecting cut problem in the remaining graph made from the remaining edges: $E^c = E\backslash E'$ and vertices  $V^c \equiv (V \backslash V') \cup (AB)'$
where $(AB)' = \{x\in V':x\in V_G[\mu(V')]\}$ are the vertices in $G'$ that lie on the cut surface $\mu(V')$.
Both cases (i) and (ii) will be treated together, using a vertex valued variable $k(x)$ instead of an edge variable.  In the second step we solve this multiboundary cut problem. 

More explicitly, the first step is the following Lemma:

\begin{figure}[ht]
    \centering
    \includegraphics[scale=.4]{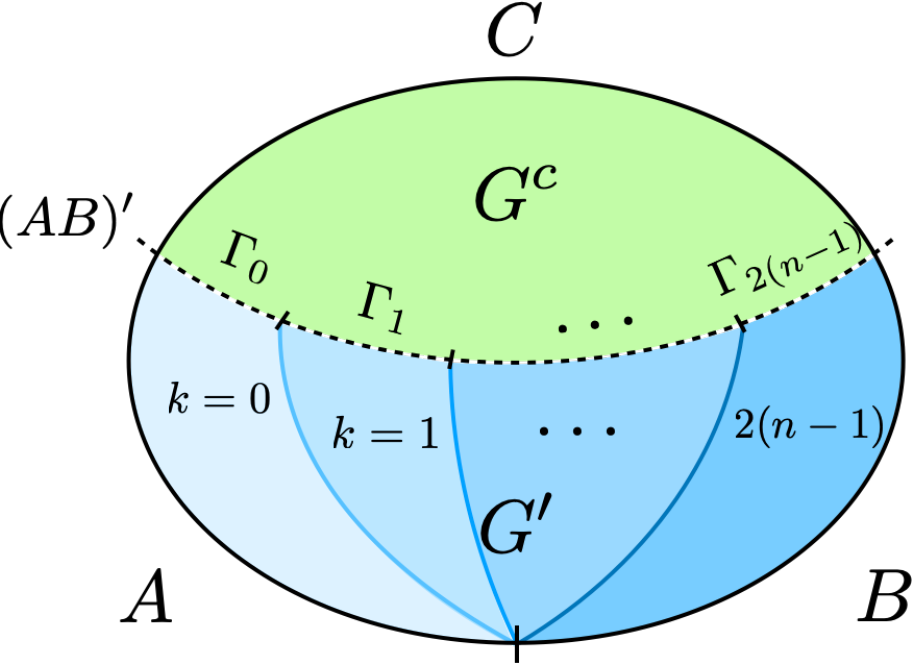}
    \caption{The original graph $G$ is split into two subgraphs $G'$ and $G^c$ that share common vertices $(AB)'$ (dashed line). On $G'$ we define a vertex variable $k$ associated to $\sigma$, whose value on $(AB)'$ seeds an $\ell$-intersecting cut problem on the complementary graph $G'$ with boundary vertices $\Gamma_k$.}
    \label{fig:G'-Gc}
\end{figure}

\begin{lemma}
\label{lem:subG}
There exists a subgraph $G' = (V',E')$ of $G$ 
where $A,B \subset V'$ such that there is an optimal solution $(\rho,\sigma)$ to \Eqref{nlip} where:
\begin{enumerate}
\item There exists a function $k: V' \rightarrow \mathbb{Z}_{\geq 0}$ with $k(x) \leq 2(n-1)$ such that:
for all edges $e  \in E'$ we have:
\begin{equation}
\sigma(e) = |k(x) - k(y) | \,, \qquad \rho(e) =0
\end{equation}
where $k(x) = 0$ for $x\in A$ and $k(x) = 2(n-1)$ for $x\in B$.
\item For the remaining edges we consider the $\ell$-\emph{intersecting cut problem} with $\ell=2(n-1)$  (defined immediately below)
on the complementary reduced subgraph $G^c = (V^c, E^c)$
, with boundary vertices $\Gamma_k = \{ x \in  (AB)' : k(x) = k \}$  and $C$. Where the optimal minimum $M$ of this later problem bounds:
\begin{equation}
\label{IgeqM}
I \geq M  + w( \mu( V')) + \sum_{e  =\{x,y\} \in E'} w(e) |k(x) - k(y) |
\end{equation}
\end{enumerate}
\end{lemma}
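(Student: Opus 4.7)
The plan is to start from an arbitrary optimal solution $(\rho^\star,\sigma^\star)$ of the integer program $I$ in \Eqref{nlip} and normalize it into the stated form, defining $V'$ along the way. The construction proceeds in three stages: identify a candidate wedge $V'$ containing $A$ and $B$ from the shape of $\rho^\star$; redistribute $\rho^\star$ so that it vanishes on the interior edges of $V'$ and is concentrated on $\mu(V')$; and put $\sigma^\star$ on those interior edges into level-set form to produce the function $k$.

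For the first stage I would take $V' = \{\, x \in V : d_{\rho^\star}(x,AB) < n\,\}$, where $d_{\rho^\star}$ is the graph distance induced by the integer weighting $\rho^\star$ as used in \secref{sub:minmax}. The $AB{:}C$ path constraint $\rho^\star(L) \geq n$ guarantees $A \cup B \subset V'$ and $C \cap V' = \emptyset$, and $\mu(V')$ is automatically an $AB{:}C$ cut on which $\rho^\star \geq 1$ edgewise. For the second stage I push any mass of $\rho^\star$ on edges interior to $V'$ out onto $\mu(V')$ via an integer flow uncrossing argument in the spirit of the min-cut integrality proof of \secref{sub:minmax}: such interior mass only over-cuts $AB{:}C$, so it can be rerouted to $\mu(V')$ without increasing $\sum_e w(e)\rho^\star(e)$ and without breaking the $AB{:}C$ constraint. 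After this normalization, $\rho(e) = 0$ on the interior edges $E'$ and the cut $\mu(V')$ contributes the $w(\mu(V'))$ term of \Eqref{IgeqM}.

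With $\rho$ vanishing on $E'$, every path $L \in \mathcal{P}_{A,B}$ that stays in $V'$ satisfies $\rho(L)=0$, so the constraint \Eqref{eveness} collapses to the single $A{:}B$ cut constraint $\sigma(L) \geq 2(n-1)$. Minimizing $\sum_{e \in E'} w(e)\sigma(e)$ subject to this on $G'$ is again of the integer LP type analysed in \secref{sub:minmax}, whose optimum decomposes as $\sigma = \sum_{j=1}^{2(n-1)} \mathbf{1}_{\mu(r_j)}$ for a nested family of $A{:}B$ cuts $r_1 \subset \cdots \subset r_{2(n-1)}$ in $V'$. Setting $k(x) = \#\{\, j : x \notin r_j\,\}$ gives the desired labeling $k : V' \to \{0,1,\ldots,2(n-1)\}$ with $k|_A = 0$, $k|_B = 2(n-1)$ and $\sigma(e) = |k(x)-k(y)|$ on $e \in E'$. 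The residual data on $G^c$ — the $n-1$ additional nested $AB{:}C$ cuts beyond $\mu(V')$ needed to secure $\rho(L)\geq n$, together with any continuations of the $\sigma$-cuts that exit through $\Gamma_k = \{\, x \in (AB)' : k(x) = k\,\}$ — is precisely the content of the $\ell$-intersecting cut problem with $\ell = 2(n-1)$, sources $\Gamma_0,\ldots,\Gamma_{2(n-1)}$ and sink $C$; the bound \Eqref{IgeqM} then follows by partitioning the objective over $E'$ and $E^c$.

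The hard step will be the second one. The delicate point is that lowering $\rho^\star$ on an interior edge can flip an $A{:}B$ path $L$ from the $\rho(L) > 0$ regime of \Eqref{eveness}, which demands $\sum_{e \in L}(\sigma + \rho) \geq 2n$, into the $\rho(L)=0$ regime, which demands $\sigma(L) \geq 2(n-1)$ and still carries an evenness condition. Making the rerouting respect these coupled constraints on all $A{:}B$ paths simultaneously requires careful combinatorial bookkeeping, presumably via a joint update of $\rho$ and $\sigma$ along the cycles generated by the rerouting, using the evenness condition to absorb the $2$-unit shift on affected paths. Once this normal form for optimal solutions is established, the remaining verifications of feasibility and cost accounting that produce \Eqref{IgeqM} are routine.
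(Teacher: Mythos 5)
Your overall shape is right (a wedge $V'$ on which $\rho$ vanishes and $\sigma$ is the gradient of a level function $k$, with the exterior seeding the $\ell$-intersecting cut problem), but the core of the lemma is exactly the step you leave open, and your plan for it would not work as stated. First, your choice $V' = \{x : d_{\rho^\star}(x,AB) < n\}$ differs from what is needed: the paper works with the zero-distance region $(AB)_0 = \{x : d_\rho(x,AB)=0\}$, because the level function $k$ must be built from $\sigma$-distances along $\rho=0$ paths, and the boundary data $\Gamma_k$ fed to the intersecting-cut problem must come from the \emph{same} optimal solution. Second, the claim that interior $\rho$-mass ``only over-cuts $AB{:}C$'' and can be rerouted onto $\mu(V')$ by an uncrossing argument is false in general: interior $\rho$ may be doing work for the coupled $A{:}B$ constraint \Eqref{eveness} (a path $L\subset V'$ with $\rho(L)>0$ only needs $(\sigma+\rho)(L)\ge 2n$, possibly with $\sigma(L)$ far below $2(n-1)$), and mass moved to $\mu(V')$ is invisible to $A{:}B$ paths that never leave $V'$, so feasibility can break. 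You acknowledge this but defer it to ``careful combinatorial bookkeeping''; that bookkeeping \emph{is} the lemma. The paper resolves it not by rerouting but by a structural normalization of an optimal solution (Lemma~\ref{lem:prop0}): (a) make every $\rho=0$ interior edge $\sigma$-binding, (b) show $\sigma(e)=|k(x)-k(y)|$ with $k$ defined via $\hat d^0_\sigma(\cdot,A)$ and, crucially, that any interior edge with $\rho>0$ must have $\rho(e)\in 2+2\mathbb{Z}_{\ge 0}$, and only then (c) show by contradiction with optimality (setting $\rho(e_\star)=0$ and checking all $A{:}B$ and $AB{:}C$ constraints via bypass paths through the $\rho=0$ edge set, using the structure of $k$ and evenness) that no such edge exists. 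Without (a) and (b) in place, step (c) cannot be run, and your joint $(\rho,\sigma)$ update has no handle on the $\delta_{\rho(L),0}$ discontinuity.

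There is a second gap at the end: you also cannot independently re-minimize $\sigma$ on $G'$ to get a nested-cut decomposition and then ``partition the objective,'' because \Eqref{eveness} constrains \emph{all} $A{:}B$ paths in $G$, including those exiting and re-entering through $\mu(V')$, so replacing the interior $\sigma$ changes the boundary sets $\Gamma_k$ and decouples the interior from the exterior accounting. Moreover, the feasibility of the residual $\varrho(e)=(\rho+\sigma)(e)|_{E^c}-\mathbf{1}_{\mu(V')}(e)$ for the $\ell$-intersecting cut problem is not routine: one must handle paths crossing $\mu(V')$ multiple times by splicing in minimal $d^0_\sigma$-paths inside $E'$ and verify the constraints $\varrho(L)\ge |k-k'|$ and $\varrho(L)\ge \ell/2$ case by case (this is the content of Lemma~\ref{lem:plugin}). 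As it stands, the proposal identifies the right objects but omits the two arguments that actually prove the lemma.
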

See \figref{fig:G'-Gc} for an exemplary configuration.
We prove this in \secref{sec:struct}.
Recall the notation, used in \Eqref{IgeqM}, $f(E') = \sum_{e \in E'} f(e)$ for some function $f$ on the edges and some subset $E' \subset E$.

\begin{definition}[$\ell$-intersecting cut problem]
\label{def:kint}
Given an integer $\ell \geq 1$ and weighted graph $G = (V,E)$ with $\ell+2$ sets of boundary vertices $\{\Gamma_k; k = 0 \ldots \ell\}$ and $C$, define the following $k$-\emph{intersecting cut problem}:
\begin{align}
\begin{split}
M \equiv  &\min_{\varrho} M(\varrho) \qquad M(\varrho) =   \sum_{e \in  E}  w(e) \varrho(e) \\ 
&{\rm subject \,\, to}\, \quad \forall k,k' = 0, \ldots \ell \,\, \forall L \in \mathcal{P}_{\Gamma_k ,\Gamma_{k'}}\,\, : \varrho(L) \in  |k-k'| + \mathbb{Z}_{\geq 0} \\ 
&{\rm and \,\, }\, \qquad \forall k = 0, \ldots \ell \,\, \,\, \forall L \in \mathcal{P}_{\Gamma_k,C} \, \, : \varrho(L) \in \ell/2  + \mathbb{Z}_{\geq 0}
\end{split}
\end{align}
where $\varrho(e) \in \mathbb{Z}_{\geq 0}/2$. 
\end{definition}

\begin{figure}[ht]
    \centering
    \includegraphics[scale=.4]{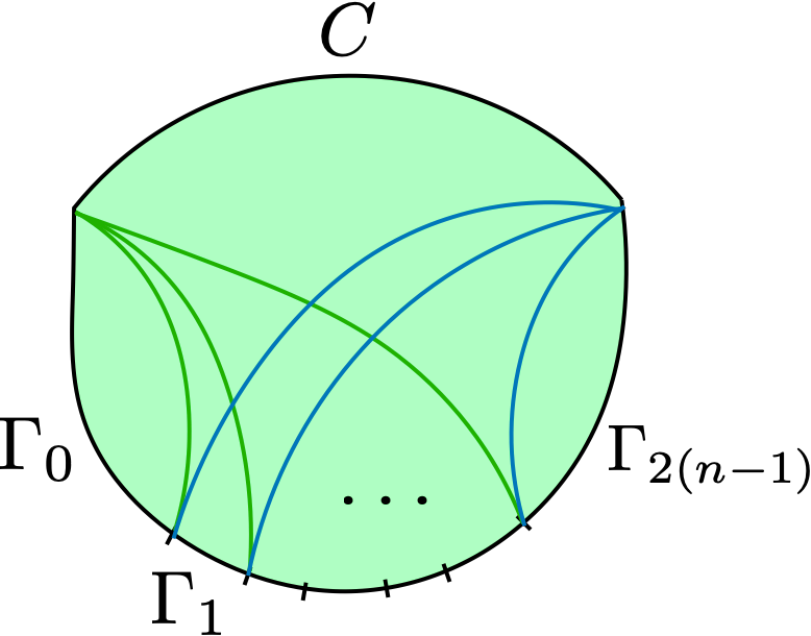}
    \caption{The solution to the $\ell$-intersecting cut problem as described in Lemma~\ref{lem:optk}. The dark green curves represent the cut surfaces of $\alpha_k$ and the dark blue curves represent the cut surfaces of $\beta_k$.}
    \label{fig:Gc-sol}
\end{figure}
We solve this problem recursively in \secref{sec:intcut}. The result is:
\begin{lemma}
\label{lem:optk}
The optimal value of the $\ell$-\emph{intersecting cut problem} is:
\begin{equation}
M = \frac{1}{2} \sum_{k=0}^{\ell-1} \left( w(\mu(\alpha_k))  + w(\mu(\beta_k))  \right)
\end{equation}
where $\alpha_k$ is a minimal cut for $(\Gamma_0 \cup \ldots \Gamma_k) : (\Gamma_{k+1} \cup \ldots \Gamma_\ell \cup C)$ and 
$\beta_k$ is a minimal cut for $(\Gamma_{k+1} \cup \ldots \Gamma_\ell):(\Gamma_0 \cup \ldots \Gamma_k \cup C )$ and $\alpha_k\cap\beta_k=\emptyset$. 
\end{lemma}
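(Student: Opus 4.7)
The plan is to establish the lemma in two directions by sandwiching, via an explicit feasible upper bound followed by an inductive lower bound.

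For the upper bound $M \le \tfrac12\sum_{k=0}^{\ell-1}(w(\mu(\alpha_k))+w(\mu(\beta_k)))$, I would exhibit the explicit candidate
\[
\varrho(e) := \tfrac12\sum_{k=0}^{\ell-1}\bigl(\mathbf{1}_{\mu(\alpha_k)}(e) + \mathbf{1}_{\mu(\beta_k)}(e)\bigr).
\]
For any $L \in \mathcal{P}_{\Gamma_j,\Gamma_{j'}}$ with $j<j'$, each of $\mu(\alpha_k)$ and $\mu(\beta_k)$ with $k \in \{j,\ldots,j'-1\}$ separates $\Gamma_j$ from $\Gamma_{j'}$, forcing an odd number of crossings, while the remaining surfaces are crossed an even number of times; summing and halving gives $\varrho(L) \ge j'-j$ with $\varrho(L)-(j'-j)\in\mathbb{Z}_{\geq 0}$. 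A parallel count for $L \in \mathcal{P}_{\Gamma_j,C}$, using that $\mu(\alpha_k)$ separates $\Gamma_j$ from $C$ exactly when $k \ge j$ and $\mu(\beta_k)$ when $k < j$, produces $\ell$ odd crossings and so $\varrho(L) \ge \ell/2$ with the required parity.

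For the lower bound I would induct on $\ell$. In the base case $\ell=1$, given any optimal $\varrho$, set $r^\alpha := \{v : d_\varrho(v,\Gamma_0)=0\}$ and $r^\beta := \{v : d_\varrho(v,\Gamma_1)=0\}$. Since $d_\varrho(\Gamma_0,\Gamma_1)\ge 1$ and $d_\varrho(\Gamma_i,C)\ge 1/2>0$, these sets are disjoint and neither contains $C$, so $r^\alpha$ and $r^\beta$ are feasible cuts for the $\alpha_0$- and $\beta_0$-minimisation problems respectively. Using integrality of $2\varrho$ I would establish the pointwise bound $\varrho(e) \ge \tfrac12(\mathbf{1}_{\mu(r^\alpha)}(e)+\mathbf{1}_{\mu(r^\beta)}(e))$ (sharpened to $\varrho(e)\ge 1$ on $\mu(r^\alpha)\cap\mu(r^\beta)$, forced by one end having $d_\varrho$-distance $0$ to $\Gamma_0$ and the other to $\Gamma_1$), which sums to the base case. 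For the inductive step I would peel off the extremal pair $(\alpha_0,\beta_{\ell-1})$ by forming $U_0 := \{v:d_\varrho(v,\Gamma_0)=0\}$, $U_\ell := \{v:d_\varrho(v,\Gamma_\ell)=0\}$ and $\varrho' := \varrho - \tfrac12(\mathbf{1}_{\mu(U_0)}+\mathbf{1}_{\mu(U_\ell)}) \ge 0$, with $U_0$ and $U_\ell$ feasible for $\alpha_0$ and $\beta_{\ell-1}$. One then argues that $\varrho'$, viewed on the graph obtained by contracting $U_0$ and $U_\ell$ (and reassigning terminals appropriately), is feasible for a residual $(\ell{-}2)$-intersecting cut problem whose optimum, by the inductive hypothesis, is $\tfrac12\sum_{k=1}^{\ell-2}(w(\mu(\alpha_k))+w(\mu(\beta_k))) + \tfrac12(w(\mu(\alpha_{\ell-1}))+w(\mu(\beta_0)))$, closing the induction after adding the cost of the peeled cuts.

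The hard part will be the reduction step. Paths between interior terminals $\Gamma_k,\Gamma_{k'}$ with $1 \le k,k' \le \ell-1$ may cross $\mu(U_0)$ or $\mu(U_\ell)$, and although parity forces an even number of such crossings (preserving the pair constraint for $\varrho'$), the path-to-$C$ constraint must be reconciled with the new demand and the parity of $\varrho'(L)$ flips as $\ell$ decreases. Careful bookkeeping -- for instance, treating $\partial U_0$ as new vertices adjacent to the retained terminal $\Gamma_1$ (and symmetrically at $\Gamma_\ell$) -- is required so that $\varrho'$ genuinely satisfies the reduced constraints with the correct parity shift. An alternative route is a direct coarea bound using the Lipschitz potentials $\phi^\alpha(v) := \min\bigl(\min_j(d_\varrho(v,\Gamma_j)+j),\, d_\varrho(v,C)+\ell/2\bigr)$ and the mirror $\phi^\beta(v) := \min\bigl(\min_j(d_\varrho(v,\Gamma_j)+\ell-j),\, d_\varrho(v,C)+\ell/2\bigr)$, whose level sets are seen to be $\alpha_k$- or $\beta_k$-feasible depending on the height; this captures part of the required sum but must be supplemented with combinatorial input to recover the full decomposition.
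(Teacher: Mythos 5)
Your upper bound is correct and coincides with the paper's: the half-sum of indicators of the minimal cut surfaces is feasible by the crossing-parity count, and your base case $\ell=1$ (distance-zero sets around $\Gamma_0$ and $\Gamma_1$, with the sharpened bound on shared boundary edges) is essentially the paper's $\ell=1$ step. The genuine gap is in your inductive reduction, and it is not just bookkeeping. Peeling the two distance-zero regions $U_0$ and $U_\ell$ removes one cut of type $\alpha_0$ and one of type $\beta_{\ell-1}$, so the residual must still account for the $\ell-1$ cuts $\alpha_1,\ldots,\alpha_{\ell-1}$ and the $\ell-1$ cuts $\beta_0,\ldots,\beta_{\ell-2}$. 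An $(\ell-2)$-intersecting cut problem, by the very lemma you are proving (and hence by your inductive hypothesis), has optimum equal to a sum over only $\ell-2$ \emph{matched} pairs $(\alpha'_k,\beta'_k)$; your claim that its optimum also contains the unmatched pair $\tfrac12\bigl(w(\mu(\alpha_{\ell-1}))+w(\mu(\beta_0))\bigr)$ contradicts the inductive hypothesis itself. The source of the mismatch is the $C$-constraint: after deleting $\Gamma_0$ and $\Gamma_\ell$, paths from the surviving terminals to $C$ still effectively carry demand $\ell/2$, whereas the $(\ell-2)$-intersecting problem only demands $(\ell-2)/2$; that extra unit of demand near $C$ is exactly what carries the weight of $\alpha_{\ell-1}$ and $\beta_0$, and it is lost the moment you classify the residual as an $(\ell-2)$-intersecting problem. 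So the induction as stated does not close, and the residual problem you actually need is outside the family covered by your hypothesis.

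The paper avoids this by reducing $\ell\to\ell-1$, not $\ell\to\ell-2$: in Lemma~\ref{kkm1} it peels a single matched pair of type $(\alpha_{\ell-1},\beta_{\ell-1})$ at the $\Gamma_\ell$ end — first $\beta=\{x:d_\varrho(x,\Gamma_\ell)=0\}$, then, after subtracting $\tfrac12\mathbf{1}_{\mu(\beta)}$ and passing through an intermediate half-integer program, a cut $\alpha$ whose complement is defined by the condition $d_{\tilde\varrho}(x,\Gamma_\ell)+d_{\tilde\varrho}(x,C)=(\ell-1)/2$ or vanishing $\tilde\varrho$-distance to $\Gamma_\ell$ or to $C$ (not by a distance-zero set from the remaining terminals) — and then merges $\Gamma_{\ell-1}$ with $\Gamma_\ell$, so the $C$-constraint correctly drops from $\ell/2$ to $(\ell-1)/2$. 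Establishing feasibility of the subtracted $\varrho'$ for the reduced problem is the bulk of the proof (the case-by-case path surgery in Appendix~\ref{app:optk}), and nothing in your sketch substitutes for it: the "careful bookkeeping" you defer and the admittedly incomplete potential-function alternative are precisely where the content lies. If you want to keep a two-ended peeling, you would have to formulate and prove a separate statement about the residual problem with the retained $\ell/2$ demand toward $C$, which your induction does not supply.
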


See \figref{fig:Gc-sol} for an example. Given the above two Lemmas we now prove Theorem~\ref{thmttt}. 
\begin{proof}[Proof of Theorem~\ref{thmttt} and Theorem~\ref{thm:RPBI}]

Use the function $k(x)$ in Lemma~\ref{lem:subG} to define bulk regions:
\begin{equation}
\gamma_k =\{ x \in V': k(x) \leq k \}
\end{equation}
for $k = 0, \ldots ,2(n-1)-1$ and
such that:
\begin{equation}
\label{wqq}
 \sum_{e  =\{x,y\} \in E'} w(e) |k(x) - k(y) | =  \sum_{k=0}^{2(n-1)-1} \sum_{ e \in \mu_{G'}( \gamma_k)} w(e)
 \end{equation}
 where $\mu_{G}(\cdot)$ denotes the edge cut function on some graph $G=\{V,E\}$, that is $\mu_{G}(r) \equiv E_{G} [r] \cap E_{G} [V\backslash r] $
 for $r \subset V$. 
 
The region $\gamma_k$ contains $A$ and shares vertices with $\alpha_k$ constructed from Lemma~\ref{lem:optk} on the subgraph $G^c$.
Similarly $V' \backslash \gamma_k$ contains $B$ and shares vertices with $\beta_k$. 
Thus we can combine them so that $A_k \equiv \gamma_k \cup \alpha_k$ forms a cut for $A: BC$ on the original graph $G$ and
 $B_k \equiv (V' \backslash \gamma_k) \cup \beta_k$ forms a cut for $B: AC$. These cuts are non-intersecting but share edges in $E'$ that are $\mu_G(\gamma_k)$. In particular:
 \begin{equation}
\mu_G(A_k) = \mu_{G'}(\gamma_k) \cup \mu_{G^c}(\alpha_k) 
\label{Aqproof}
 \end{equation}
 and similarly for $B_k$. 
 
We give an explicit proof of \Eqref{Aqproof} for completeness. Recall that $(AB)'  = V' \cap V^c$ and $E = E' \sqcup E^c$.  We note that $(AB)' \cap \alpha_k = (AB)' \cap \gamma_k$ by the boundary conditions and this implies that $A_k \cap V' = \gamma_k$ and  $A_k \cap V^c = \alpha_k$ further implying
that $E_{G'}[A_k] = E_{G'} [\gamma_k]$ since all $\{x,y\} \in E'$ satisfy $\{x,y \} \subset V'$.  Similarly $E_{G^c}[A_k] = E_{G^c} [\alpha_k]$. 
Also note that $V \backslash A_k = (V' \backslash \gamma_k) \cup (V^c \backslash \alpha_k)$. Putting this together:
\begin{align}
\begin{split}
\mu_G(A_k)  &=   E_G [A_k]  \cap   E_G [V\backslash A_k]   \\
& =  \left(  E_{G'} [A_k] \cap   E_{G'} [V \backslash A_k] \right) \cup \left(  E_{G^c} [A_k]  \cap   E_{G^c} [V \backslash A_k] \right)  \\
& =  \left(  E_{G'} [\gamma_k] \cap   E_{G'} [V' \backslash \gamma_k] \right) \cup \left(  E_{G^c} [\alpha_k]  \cap   E_{G^c} [ V' \backslash \alpha_k] \right) 
\end{split}
\end{align}
as required. A similar argument applies to $\mu_G(B_q)$.

 Thus, we can write the estimate from Lemma~\ref{lem:subG} and using \Eqref{wqq} as:
 \begin{align}
 \begin{split}
 I &\geq \left(\frac{1}{2} \sum_{k=0}^{2(n-1)-1} w(\mu_G( A_k)) +  w_G(\mu( B_k)) \right)
 + w( \mu_G( V')) \\ & \geq \min_{k} \big(  (n-1) \left(   w(\mu_G( A_k )) +  w(\mu_G( B_k))\right) + w( \mu_G( V'))  \big)
 \end{split}
 \end{align}
 The right-hand side can be written as a sum of two triway cuts, with tensions as given in the statement of this theorem. That is:
 \begin{align}
 \begin{split}
      I &\geq \min_k\left(  \frac{(n-1)}{n} \mathcal{A}( A_k : B_k : (V \backslash (A_k \cup B_k))) +\frac{1}{n} \mathcal{A}(\gamma_k : (V' \backslash \gamma_k) : (V \backslash V') ) \right) \\
 &\geq \mathcal{A}(A:B:C)
  \end{split}
 \end{align}
Equality follows from the collapsing chain mentioned in Theorem~\ref{thm:RPBI} that we have also now established. 
\end{proof}

The proofs we give for Lemma~\ref{lem:subG} and Lemma~\ref{lem:optk}  are  quite lengthy,
so we present these in Appendix~\ref{app:subG} and Appendix~\ref{app:optk} respectively.

\subsection{Uniqueness theorem}
\label{sec:uniqueness}

In this subsection we give a proof for Theorem~\ref{thm:uniqueness}, i.e., the solution to the permutation group optimization problem $R$ (Definition~\ref{def:R}) is the unique solution given by the triway cut if both the triway cut and the entanglement wedge of the graph are unique.
Though this result may seem  trivial, it lays the foundation for performing the analytic continuation $m\to 1$, the main result in \secref{sec:cont}. 

We only consider the case where every connected component of the graph $G$ is path connected to at least one boundary region. The reason is obvious -- If such a disconnected region exists then one can always set the permutations on such region to be \emph{any} fixed element on $S_{mn}$ and we have a degeneracy of $|S_{mn}|$. In any case, such disconnected parts of the graph can be factored out of the computation of reflected entropy. We can thus deal with these trivially. 

\begin{lemma}
  \label{lem:uniqueEW-id}
Let $g:V\to S_{mn}$ be an optimal solution to the permutation group program $R$. If the  minimal cut $r_{AB}$ for $AB$ is unique, then $g(v)=\id$ for all vertices $v\in r_{AB}^c$.
\end{lemma}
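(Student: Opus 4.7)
My plan is to extract the saturation conditions from the proof of Lemma~\ref{lem:RtoP'}, specialise them to edges with both endpoints in $r_{AB}^c$, and then combine them with the uniqueness of the minimal cut and a short connectivity argument inside $r_{AB}^c$.

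First, for any optimal $g$, Theorem~\ref{thm:RPBI} together with Lemma~\ref{lem:RtoP'} and Lemma~\ref{lem:RtoP} forces the chain
\begin{equation*}
R(g)\;\ge\;\mathcal{A}(AB:C)\,d(X,\id) + Q'(q_X(g))\;\ge\;\mathcal{A}(AB:C)\,d(X,\id) + Q \;=\; R
\end{equation*}
to collapse to equalities, and this must happen for \emph{every} choice of optimal max flow $c$ on $AB:C$ used in Lemma~\ref{lem:RtoP'}. As stressed in the Remark following that lemma, on an edge $e=\{x,y\}$ with both endpoints in $r_{AB}^c$ the only bound invoked on the excess term is the trivial one $\bigl(w(e)-\sum_L c(L)\mathbf{1}_L(e)\bigr)\,d(g(x),g(y))\ge 0$. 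Saturation therefore forces $d(g(x),g(y))=0$ as soon as one can exhibit an optimal $c$ with strict slack $w(e)-\sum_L c(L)\mathbf{1}_L(e)>0$ on that edge.

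Next I would use uniqueness of $r_{AB}$ to produce such a slack flow for every edge $e\notin\mu(r_{AB})$. The standard LP argument is: if every optimal max flow saturated $e$, then shrinking $w(e)$ by a small $\varepsilon>0$ would strictly decrease the max flow, and by max-flow min-cut this would supply a min cut in the perturbed problem containing $e$. Lifting back to the original capacities would either give a different min cut $r\neq r_{AB}$ (contradicting uniqueness) or make $\mathcal{A}(AB:C)$ smaller than it is (impossible). Applying this to every edge of the induced subgraph on $r_{AB}^c$ (none of which lies in $\mu(r_{AB})$) and combining with the previous paragraph yields $g(x)=g(y)$ on every such edge.

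Finally, on each connected component of the induced subgraph on $r_{AB}^c$ the function $g$ is constant, and I claim that every such component $K$ must intersect $C$. Indeed, if $K\cap C=\emptyset$ then either $E[r_{AB},K]\neq\emptyset$, in which case $r_{AB}\cup K$ would be a strictly lighter cut for $AB:C$ (since $\mu(r_{AB}\cup K)=\mu(r_{AB})\setminus E[r_{AB},K]$), contradicting the minimality of $r_{AB}$; or $E[r_{AB},K]=\emptyset$, leaving $K$ disconnected from every boundary region, contrary to the standing connectivity assumption on $G$. Since $g\equiv\id$ on $C$, constancy on each component then forces $g\equiv\id$ throughout $r_{AB}^c$. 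The main delicate step is the LP slack claim in the second paragraph: it is standard via complementary slackness, but one must keep in mind that the optimal flow $c$ allowed to certify slack may depend on the chosen edge.
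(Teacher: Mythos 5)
Your proof is correct, and it takes a genuinely different route from the paper's. Both arguments begin from the same saturation conditions extracted from the proof of Lemma~\ref{lem:RtoP'}, but then diverge. The paper fixes a \emph{single} optimal flow $c$, derives the per-edge dichotomy ``slack zero or $d(g(x),g(y))=0$'', and then runs a contradiction: it defines $r_C$ as the set of vertices reachable from $C$ through Cayley-distance-zero edges, uses the \emph{geodesic} saturation condition (that optimal flow paths with $c(L)>0$ realize the Cayley geodesic from $\id$ to $g(\nu)$) to show each flow path meets $\mu(r_C)$ exactly once, concludes $r_C$ is a minimal cut, and contradicts uniqueness if $r_C\neq r_{AB}^c$. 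You instead keep only the slack dichotomy, but sweep over \emph{all} optimal flows: for each edge $e\subset r_{AB}^c$, the capacity-perturbation argument (strictly decreasing max flow implies $e$ lies in a min cut arbitrarily close to $\mathcal{A}(AB:C)$ in value, hence in a genuine min cut, hence $r\neq r_{AB}$) produces an optimal $c_e$ with slack on $e$, and so $d(g(x),g(y))=0$ on every edge of the induced subgraph on $r_{AB}^c$. The connectivity argument that every component of that subgraph must touch $C$ (else you could absorb it into $r_{AB}$ and lower its area, or it would be fully disconnected) then finishes. Your route avoids the geodesic condition entirely and the explicit construction of a second minimal surface; the trade-off is that you must justify that the saturation conditions hold for \emph{every} choice of optimal flow — which you correctly note follows because the endpoints of the collapsing chain are $c$-independent — and you need the LP perturbation lemma. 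Two small points of style: the ``either $r\neq r_{AB}$ or $\mathcal{A}(AB:C)$ smaller'' phrasing is slightly loose (the clean statement is: for $\varepsilon$ smaller than the spectral gap between the minimal and second-minimal cut values, the perturbed min cut $r$ satisfies $\mathcal{A}(r)=\mathcal{A}(AB:C)$ and contains $e\notin\mu(r_{AB})$, so $r\neq r_{AB}$); and you implicitly use the paper's standing assumption from \secref{sec:uniqueness} that every connected component of $G$ meets the boundary, which is fine but worth citing.
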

\begin{proof}
    We know from Theorem~\ref{thm:RPBI} that $R = \mathcal{A}(AB:C) d(\id,X) + Q'$.
    We also have, from Lemma~\ref{lem:RtoP'} that 
    \begin{equation}
    \label{eq:Rg-ineq}
        R(g) \ge \mathcal{A}(AB:C) d(\id,X) + Q'(q_X(g))
    \end{equation}
    If $g$ is optimal, this inequality must be saturated since otherwise $Q'(q_X(g))$ has a smaller objective.

    Consider an optimal solution $c:\mathcal{P}_{AB:C}\to \mathbb{R}_{\ge 0}$ to the dual max-flow problem \Eqref{program:cP}.
    The saturation condition of \Eqref{eq:Rg-ineq} demands that (see the remark before Lemma~\ref{lem:RtoP})
    \begin{equation}
    \label{eq:geodesic}
        \sum_{e=\{x,y\}\in L_C} d(g(x),g(y)) = d(g(v),\id)
    \end{equation}
    where $L_C$ is the subpath of $L\in \mathcal{P}_{AB:C}$ such that $c(L)>0$ and $L_C$ connects $C$ to the vertex $v$ immediately inside $r_{AB}$ (see \figref{fig:proof-partition}).
    In other words, it states that the elements along the path $L_C$ must be on the group geodesic of $\id$ to $g(v)$.
    Another saturation condition demands that
    \begin{equation}
        \sum_{e=\{x,y\}\subset r_{AB}^c}\left( w(e)-\sum_{L\in \mathcal{P}_{AB,C}} c(L)\mathbf{1}_L(e) \right)d(g(x),g(y))=0
    \end{equation}
    where we sum over all the edges outside the entanglement wedge $r_{AB}$.
    Since both terms in the sum are semi-positive, for any $e=\{x,y\} \subset r_{AB}^c$ it must be that either 
    \begin{equation}
    \label{eq:either_or}
        w(e)-\sum_{L\in\mathcal{P}_{AB,C}} c(L)\mathbf{1}_L(e)=0, \quad \text{or} \quad d(g(x),g(y))=0
    \end{equation}
    
    Now pick any vertex $v\in r_{AB}^c$. We claim that there must exist a path $L'\in \mathcal{P}_{C,v}$ such that the $d(g(x),g(y))=0$ for all adjacent vertices $x,y$ along the path. If this is true, the triangle inequalities along $L'$ then require $d(g(v),\id)=0$ and thus $g(v)=\id$. This holds true for any $v\in r^c_{AB}$ so we are done.
    
    Suppose by contradiction that such a path does not exist. Define the region $r_C$ to be the set of vertices that can be reached by some path with vanishing Cayley distance from $C$. It is clear that $g(x)=\id$ for all $x\in r_C$ by a similar argument as above. $r_C$ is a cut for $C$ since any vertex in $r_C$ is path connected to $C$.
    Consider now the cut surface $\mu(r_C)$. By definition, for all edges $\{x,y\}\in \mu(r_C)$ we must have $d(g(x),g(y))>0$, and thus $w(e)-\sum_{L} c(L)\mathbf{1}_L(e)=0$ by \Eqref{eq:either_or}. We compute
    \begin{equation}
        \sum_{e\in \mu(r_C)} w(e) = \sum_{L\in \mathcal{P}_{AB:C}} c(L) = \mathcal{A}(AB:C)
    \end{equation}
    since all paths starting from $C$ to $AB$ must pass through $\mu(r_C)$. Moreover, any such path with $c(L)>0$ can only intersect $\mu(r_C)$ once, since it cannot re-enter $r_C$ once it has left the cut, as this will violate the geodesic condition \Eqref{eq:geodesic}. Hence, $\mu(r_C)$ is a minimal surface and $r_C$ is a minimal cut. It is different from $r_{AB}^c$ since $v\in r_{AB}^c$ and $v\notin r_C$. This is a contradiction because we have assumed that the minimal cut is unique.
\end{proof}

\begin{lemma}
  \label{lem:dq-sat}
  Let $q:V\to P_{2n}$ be an optimal solution to the set partition optimization problem $Q$ on graph $G$ with a unique triway cut. Then for any edge $e=\{x,y\}\in E$:
  \begin{equation}
    \label{eq:dq-sat}
    d(q(x),q(y)) = \left(2(n-1)\mathbf{1}_{\mu(r_A:r_B)}+n\mathbf{1}_{\mu(r_B:r_C)}+n\mathbf{1}_{\mu(r_A:r_C)}\right)(e),
  \end{equation}
  where $(r_A,r_B,r_C)$ is the optimal triway cut for $(A:B:C)$.
\end{lemma}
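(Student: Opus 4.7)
The strategy is to propagate the uniqueness of the triway cut backward through the collapsing chain $Q(q)\geq B(b(q))\geq I(\rho(q),\sigma(q))\geq \mathcal{A}_{\mathbf{t}}(A:B:C)$ established in Theorem~\ref{thm:RPBI}. Since $q$ is optimal for $Q$, every inequality in this chain must saturate. Tracing through the constructions in Lemma~\ref{lem:PtoB2} and Lemma~\ref{lem:BtoI}, saturation forces $r(e)=d(q(x),q(y))$ to achieve the minimum of $B(b(q))$, and in turn the derived pair $(\rho(e),\sigma(e)) = (\lceil d(b(x),b(y))/2\rceil,\, r(e)-\rho(e))$ to achieve the minimum of the integer program $I$. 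It therefore suffices to show that this optimal $(\rho,\sigma)$ is forced to satisfy $\sigma(e) = 2(n-1)\mathbf{1}_{\mu(r_A:r_B)}(e)$ and $\rho(e) = n\,(\mathbf{1}_{\mu(r_A:r_C)}+\mathbf{1}_{\mu(r_B:r_C)})(e)$, since then $d(q(x),q(y)) = \sigma(e)+\rho(e)$ agrees pointwise with the right-hand side claimed in the lemma.

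To pin down $(\rho,\sigma)$ I would invoke the structural reduction from Lemma~\ref{lem:subG}: the optimum of $I$ corresponds to a subgraph $V'\supset A\cup B$ together with a vertex function $k:V'\to\{0,\ldots,2(n-1)\}$ with $k|_A=0$, $k|_B=2(n-1)$, $\sigma(e)=|k(x)-k(y)|$ and $\rho(e)=0$ on $E'$, while $\rho$ on the complementary graph $G^c$ solves the $2(n-1)$-intersecting cut problem of Definition~\ref{def:kint}. Following the argument in Theorem~\ref{thmttt}, each level-set triple
\[
(A_k,\,B_k,\,V\setminus A_k\setminus B_k) \;=\; \bigl(\gamma_k\cup\alpha_k,\; (V'\setminus\gamma_k)\cup\beta_k,\; V\setminus A_k\setminus B_k\bigr),
\]
with $\gamma_k=\{v\in V':k(v)\leq k\}$ and $(\alpha_k,\beta_k)$ the nested cut pairs from Lemma~\ref{lem:optk}, is an admissible triway cut for $(A,B,C)$, and simultaneous saturation of the chain requires every such triway cut to attain $\mathcal{A}_{\mathbf{t}}(A:B:C)$.

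By the uniqueness of $(r_A,r_B,r_C)$, each of these triples must coincide with $(r_A,r_B,r_C)$. This rigidifies the decomposition: $\gamma_k=r_A\cap V'$, $\alpha_k=r_A\cap V^c$, and $\beta_k=r_B\cap V^c$ are independent of $k$. Consequently $k$ takes only the extreme values $\{0,2(n-1)\}$ on $V'$, so $\sigma(e)=2(n-1)$ on the cut surface $\mu_{G'}(r_A\cap V')\subset \mu_G(r_A:r_B)$ and vanishes on all other edges of $E'$; meanwhile the $2(n-1)$-intersecting cut problem on $G^c$ collapses to a single pair of minimal cuts, forcing $\rho(e)=n$ on $\mu_G(r_A:r_C)\cup\mu_G(r_B:r_C)$ and $\rho(e)=0$ elsewhere. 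Adding these edge-wise values of $\sigma$ and $\rho$ yields exactly \eqref{eq:dq-sat}.

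The main obstacle is the rigidity step: showing that simultaneous saturation of $I\geq\mathcal{A}_{\mathbf{t}}$ at every $k$ is incompatible with any $k$-dependence in $(\gamma_k,\alpha_k,\beta_k)$, since any such non-trivial dependence would furnish multiple distinct minimizing triway cuts and contradict uniqueness. A secondary subtlety is that Lemma~\ref{lem:subG} does not by itself produce a unique $V'$; one must verify that the final edge-wise values of $\rho+\sigma$ are insensitive to this choice, which ultimately follows from the uniqueness of $(r_A,r_B,r_C)$ and the fact that the cut surfaces $\mu(r_A:r_B)$, $\mu(r_A:r_C)$, $\mu(r_B:r_C)$ on which the values are concentrated are intrinsic to the triway cut itself.
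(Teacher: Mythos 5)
Your proposal is correct and follows essentially the same route as the paper: collapse the chain $Q=B=I$, pass to the structured optimal pair of Lemma~\ref{lem:subG}, and use uniqueness of the triway cut to force every level-set triple $(A_k,B_k,V\setminus A_k\setminus B_k)$ (and the corresponding $\gamma_k$ triple) to coincide with $(r_A,r_B,r_C)$, which pins down $\rho+\sigma=d(q(x),q(y))$ as the stated indicator combination. The only point to make fully explicit is that the structured solution must be chosen with the same edge-wise sum $\rho+\sigma$ as the pair derived from $q$ --- this is exactly what Lemma~\ref{lem:prop0} provides and is the precise form of the ``insensitivity'' subtlety you flag at the end.
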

\begin{proof}
    Since $q$ is optimal we know that the chain in Theorem~\ref{thm:RPBI} collapses, i.e.,
  \begin{equation}
    Q(q)=B\left(\{b^k(q)\},r\right)=I\left(\rho,\sigma\right),
  \end{equation}
  are all optimal, where $r(e)=d(q(x),q(y))$, $\rho(e) = \lceil \frac{1}{2}d(b(x),b(y)) \rceil$ and $\sigma(e)=(r-\rho)(e)$ for $e=\{x,y\}\in E$. Since we have $r(e)=(\rho+\sigma)(e)=d(q(x),q(y))$, it suffices to prove \Eqref{eq:dq-sat} on $\rho+\sigma$ for an optimal solution $(\rho,\sigma)$ of the integer program $I$.
  Now, from Lemma~\ref{lem:prop0} we know that for any optimal $(\rho,\sigma)$ one can always construct a new optimal $(\rho',\sigma')$ with $\rho+\sigma=\rho'+\sigma'$ that satisfies the conditions of Lemma~\ref{lem:subG}. Now we write
  \begin{align}
    I(\rho,\sigma) &= \frac{1}{2}\sum_{k=0}^{2(n-1)-1}\left(w(\mu(A_k))+w(\mu(B_k))\right)+w(\mu(V')) \\
        \label{eq:triway-sum}
           &=\sum_{k=0}^{2(n-1)-1}\left(\frac{n-1}{2n}\mathcal{A}(A_k:B_k:(V\backslash A_k\backslash B_k)+\frac{1}{2n}\mathcal{A}(\gamma_k:(V'\backslash\gamma):(V\backslash V')\right) \\
           &=\mathcal{A}(A:B:C) \equiv \mathcal{A}(r_A:r_B:r_C)
  \end{align}
  where we have used Lemma~\ref{lem:optk} and replaced the bound by equality since $(\rho,\sigma)$ is optimal. The cut surfaces $A_k,B_k$ and $\gamma_k$ are defined as in the proof of Theorem~\ref{thmttt}.
  Now since $\mathcal{A}(A:B:C)$ is the unique optimal triway cut, both terms in the sum of \Eqref{eq:triway-sum} must be equal to $\mathcal{A}(r_A:r_B:r_C)$ for all $k$. Hence we have
  \begin{equation}
    A_k = \gamma_k = r_A, \quad B_k = (V'\backslash \gamma_k) = r_B, \quad (V\backslash A_k\backslash B_k) = (V\backslash V') = r_C 
  \end{equation}
  which implies that within $V'$,
  \begin{equation}
    \label{eq:rhosigma1}
    \rho'(e)=0, \quad\sigma'(e) = 2(n-1)\mathbf{1}_{\mu(r_A:r_B)}(e)
  \end{equation}
  and outside $V'$,
  \begin{equation}
    \label{eq:rhosigma2}
    (\rho'+\sigma')(e) = \mathbf{1}_{\mu(V')}(e) + (n-1)(\mathbf{1}_{\mu(\alpha_k)} + \mathbf{1}_{\mu(\beta_k)})(e) = n(\mathbf{1}_{\mu(r_A:r_C)} +\mathbf{1}_{\mu(r_B:r_C)})(e),
  \end{equation}
  since any other configuration cannot reproduce an optimal value of $M$.
  Combining \Eqref{eq:rhosigma1} and \Eqref{eq:rhosigma2} together, we have proven the Lemma.
\end{proof}

\begin{corollary}
\label{lem:qsat}
    Let $q:V\to P_{2n}$ be an optimal solution to the set partition optimization problem $Q$ on graph $G$ with a unique triway cut. Then for any vertex $v\in V$:
  \begin{equation}
    \label{eq:q(v)-sat}
    q(v) =
    \begin{cases}
      q_A, \quad &v \in r_A \\
      q_B, \quad &v \in r_B \\
      \id, \quad &v \in r_C
    \end{cases}
  \end{equation}
  where $(r_A,r_B,r_C)$ is the optimal triway cut for $(A:B:C)$.
\end{corollary}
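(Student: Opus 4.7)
The plan is to combine Lemma~\ref{lem:dq-sat} with the hypothesis that the optimal triway cut is unique in order to pin down $q(v)$ region by region. First I would apply Lemma~\ref{lem:dq-sat} to any edge $e=\{x,y\}$ whose endpoints lie in a common region of $(r_A,r_B,r_C)$: such an edge lies outside all three cut surfaces $\mu(r_A{:}r_B)$, $\mu(r_B{:}r_C)$, $\mu(r_A{:}r_C)$, so the right-hand side of \Eqref{eq:dq-sat} is zero, giving $d(q(x),q(y))=0$ and therefore $q(x)=q(y)$ since the partition distance is a metric. Consequently $q$ is constant on every connected component of the induced subgraph $G[r_A]$ (and similarly for $G[r_B]$ and $G[r_C]$).

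Second, I would show that every connected component of $G[r_A]$ intersects $A$ (and symmetrically for $r_B$ and $r_C$). Suppose for contradiction that $C_1$ is a connected component of $G[r_A]$ with $C_1\cap A=\emptyset$. Consider the reassigned triple $(r_A\setminus C_1,\, r_B\cup C_1,\, r_C)$, which is a valid triway partition because $A\subset r_A\setminus C_1$. Each edge of $G$ leaving $C_1$ either previously joined $C_1$ to $r_B$ at tension $2(n-1)$, and now becomes internal (saving $2(n-1)w(e)$), or previously joined $C_1$ to $r_C$ at tension $n$, and now becomes an $r_B{:}r_C$ edge at the same tension $n$ (no change). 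Hence the reassigned triway area is no larger than the optimal: if strictly smaller it contradicts optimality, while if equal it contradicts uniqueness of the triway cut (since $C_1\neq\emptyset$ produces a genuinely different triple). The standing assumption that every connected component of $G$ is path-connected to some boundary region ensures $C_1$ is not a floating component with no outgoing edges, so the dichotomy above is exhaustive.

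Finally, each connected component of $G[r_A]$ must contain some vertex of $A$, on which the boundary condition fixes $q=q_A$; combined with the constancy of $q$ on components, this forces $q(v)=q_A$ throughout $r_A$. Running the symmetric arguments on $r_B$ and $r_C$ (reassigning to whichever neighboring region leads to the uniform non-increase of area) yields $q(v)=q_B$ on $r_B$ and $q(v)=\id$ on $r_C$, establishing \Eqref{eq:q(v)-sat}.

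The main obstacle is the reassignment step: one has to verify in every configuration of boundary edges of $C_1$ (all $r_A{:}r_B$, all $r_A{:}r_C$, or mixed) that the reassignment is non-increasing and nontrivial, so that either optimality or uniqueness is violated. The key reason this is clean is that the tensions $(2(n-1),n,n)$ satisfy $2(n-1)\geq n$ for $n\geq 2$, which lets a single direction of reassignment (to $r_B$) simultaneously handle all edge types; the analogous choice for components of $r_B$ or $r_C$ follows by the same inequality.
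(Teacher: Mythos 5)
Your route is essentially the paper's: Lemma~\ref{lem:dq-sat} forces $d(q(x),q(y))=0$ on every edge interior to a region of the triway cut, a cut-modification argument shows that each component of a region must reach its own boundary set (else the cut could be improved or duplicated), and then the boundary conditions plus the triangle inequality pin down $q$ on each region. The paper phrases the middle step as ``every $v\in r_A$ is joined to $A$ by a path inside $r_A$'' and transports $q_A$ along that path; your ``$q$ is constant on components, and every component meets the boundary set'' is the same argument in a slightly different order. Your use of uniqueness to dispose of the equal-area case is also fine, since uniqueness is a hypothesis of the corollary (the paper instead gets a strict decrease by moving the stray component into a region it actually borders).

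The one place your justification does not go through as written is the $r_C$ case. For a component $C_1\subset r_A$ (or $r_B$) missing $A$ (resp.\ $B$), a single reassignment direction indeed works, because tension-$2(n-1)$ edges become internal and tension-$n$ edges stay at tension $n$. But for a component $C_1\subset r_C$ with $C_1\cap C=\emptyset$, moving it into $r_A$ converts its $C_1{:}r_B$ edges from tension $n$ to tension $2(n-1)\ge n$, which can \emph{increase} the total area when $w(\mu(C_1{:}r_B))$ is large (and symmetrically for the move into $r_B$); the inequality $2(n-1)\ge n$ works against you here, not for you, so ``the same inequality'' does not supply the claimed uniformly non-increasing move. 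The step is easily repaired: with $a=w(\mu(C_1{:}r_A))$ and $b=w(\mu(C_1{:}r_B))$, the two candidate moves change the area by $-na+(n-2)b$ and $-nb+(n-2)a$, whose sum is $-2(a+b)<0$ (and $a+b>0$ by the standing assumption that no component of $G$ floats free of the boundary), so at least one choice strictly decreases the area and optimality alone yields the contradiction. With that one-line case analysis added, your proof is complete and matches the paper's.
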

\begin{proof}
  Since $(r_A,r_B,r_C)$ is an optimal triway cut, if $v\in r_A$ then it is connected to $A$ by some path $L$ in $r_A$. The reason is as follows: Suppose that there exists a region $r\subset r_A$ not connected to $A$. Then $r$ must be connected to at least one of $r_B$ or $r_C$ otherwise $r$ will be a totally disconnected region. Without loss of generality, suppose it is $B$. We can then construct a new triway cut $(r_A\backslash r,r_B\cup r, r_C)$ whose weight is smaller than the original. This is a contradiction since we assumed $(r_A,r_B,r_C)$ is optimal.
  
  From Lemma~\ref{lem:dq-sat} we know that $\sum_{\{x,y\}\in L}d(q(x),q(y))=0$ along the path $L$. Then by repeated use of triangle inequality we can show that $d(q(v),q_A)=0$ and thus $q(v)=q_A$. Similar arguments also apply to subregion $r_B$ and $r_C$.
\end{proof}

We are now ready to complete the proof for our main result of this subsection. To begin with, we restate Theorem~\ref{thm:uniqueness} in terms of the notations used here:
\begin{theorem}
\label{thm:uniqueness2}
  Let $g:V\to S_{mn}$ be an optimal solution to the reflected entropy permutation group optimization problem $R$ on graph $G$ with a unique triway cut $(r_A,r_B,r_C)$ and a unique entanglement wedge $r_{AB}$. Then for any vertex $v\in V$:
  \begin{equation}
    \label{eq:g(v)-sat}
    g(v) =
    \begin{cases}
      g_A, \quad &v \in r_A \\
      g_B, \quad &v \in r_B \\
      X, \quad &v \in  r_{AB} \setminus r_A \setminus r_B  \equiv r_X \\
      \id, \quad & v\in r^c_{AB}
    \end{cases}
  \end{equation}
\end{theorem}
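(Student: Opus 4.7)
The plan is to combine the two uniqueness assumptions with the saturation conditions forced by the collapsing chain in Theorem~\ref{thm:RPBI}. Uniqueness of the minimal $AB\!:\!C$ cut pins down $g$ outside the entanglement wedge, uniqueness of the triway cut pins down the coarse-grained partition $q_X(g)$ on the triway-cut regions, and saturation of the Cayley-versus-partition bound then promotes that partition data to an actual group element.

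First, Lemma~\ref{lem:uniqueEW-id} immediately yields $g(v)=\id$ for every $v\in r_{AB}^c$. Next, the natural extension $q(v)\equiv q_X(g(v))$ (consistent with $q_X(\id)=\id$ on $r_{AB}^c$) must be optimal for $Q$: Lemma~\ref{lem:RtoP'} gives $R(g)\ge \mathcal{A}(AB\!:\!C)\,d(\id,X)+Q'(q_X(g))$, which, combined with Theorem~\ref{thm:RPBI} and Lemma~\ref{lem:RtoP}, forces $Q(q)=Q$. Uniqueness of the triway cut and Corollary~\ref{lem:qsat} then give $q(v)=q_A,\,q_B,\,\id$ on $r_A,\,r_B,\,r_X$ respectively. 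The remark after Lemma~\ref{lem:RtoP'} says that uniqueness of $r_{AB}$ enforces $d(g(x),g(y))=d(q(x),q(y))$ on every edge $\{x,y\}$ strictly inside $r_{AB}$; applied to edges interior to $r_A$ this gives $d(g(x),g(y))=0$, so $g$ is constant on connected components of $r_A$, and the connectivity argument used in the proof of Corollary~\ref{lem:qsat} propagates the boundary condition $g=g_A$ throughout $r_A$. The analogous statement on $r_B$ gives $g\equiv g_B$ there.

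The main obstacle is identifying $g$ on $r_X$, because $q_X(g(v))=\id$ in $P_{2n}$ only constrains $P(g(v))\le P(X)$, i.e.\ $d(g(v),\id)\le n(m-2)$. I would pick an edge $\{x,y\}\in \mu(r_X\!:\!r_A)$, which is still strictly inside $r_{AB}$; the same saturation then gives $d(g(x),g_A)=d(\id,q_A)=n$, and the triangle inequality $d(g_A,\id)\le d(g_A,g(x))+d(g(x),\id)$ forces $d(g(x),\id)=n(m-2)$ while also placing $g(x)$ on the Cayley geodesic from $\id$ to $g_A$. Using an edge in $\mu(r_X\!:\!r_B)$ places $g(x)$ on the geodesic to $g_B$ as well, so the stipulated uniqueness of $X$ as the element on both joint geodesics with maximal distance $n(m-2)$ to $\id$ yields $g(x)=X$. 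Interior-edge saturation then propagates $g\equiv X$ to the rest of each connected component of $r_X$. Finally, a short cut-modification argument, exactly analogous to the one in the proof of Corollary~\ref{lem:qsat}, excludes components of $r_X$ that fail to touch both $r_A$ and $r_B$: such a component could be absorbed into $r_A$, $r_B$, or $r_C$ without raising the triway-cut cost (using the tension assignment $t_{A:B}=2(n-1)$, $t_{A:C}=t_{B:C}=n$), contradicting uniqueness of the triway cut. This closes the loop and delivers the stated form of $g$.
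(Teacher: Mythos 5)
Your proposal is correct, and on the only genuinely difficult region, $r_X$, it takes a different route from the paper. The treatment of $r_{AB}^c$ (via Lemma~\ref{lem:uniqueEW-id}), the optimality of $q_X(g)$ for $Q$, and the propagation of $g_A$, $g_B$ through $r_A$, $r_B$ coincide with the paper's argument. For $r_X$ the paper argues globally: it invokes the genus/admissible-surface machinery of Ref.~\cite{Akers:2021pvd} (the bound \Eqref{gGbound}) to show each component value $g_i$ has vanishing genus over $g_{A/B}$, hence over $X$ and over $\id$, so $g_i$ lies on the Cayley geodesic $\id\to X$; it then rewrites $R(g)$ with a cut-surface term $\sum_{v\in\partial r_{AB}} w(e)\,d(X,g(v))$ whose vanishing, forced by the collapse of the chain, gives $g=X$ on $\partial r_{AB}$ and hence throughout $r_X$. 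Your argument is local and more elementary: edge saturation against $r_A$ and $r_B$ gives $d(g_i,g_A)=d(g_i,g_B)=n$, the coarse-graining constraint $P(g_i)\le P(X)$ gives $d(g_i,\id)\le n(m-2)$, and the triangle inequality pinches these to put $g_i$ on both joint geodesics at the maximal distance $n(m-2)$, whence $g_i=X$ by the stated unique characterization of $X$. This sidesteps the genus machinery and the global free-energy bookkeeping (and is insensitive to whether $r_A$ or $r_B$ reach $\mu(r_{AB})$), at the price of an extra topological input: every connected component of $r_X$ must border both $r_A$ and $r_B$, whereas the paper only needs ``borders $r_A$ or $r_B$''.

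That component claim is the one place where your sketch is imprecise, though the gap is fillable. A component $r_i\subset r_X$ already lies in the triway $C$-region, so for a component bordering neither $r_A$ nor $r_B$ no triway-cut reassignment changes the cost at all (its whole edge boundary lies in $\mu(r_{AB})$, internal to $r_C$); there you must instead delete $r_i$ from $r_{AB}$ and contradict minimality/uniqueness of the minimal cut. For a component bordering only $r_A$ you need a two-branch comparison: if $w(\partial r_i\cap\mu(r_{AB}))\le w(\partial r_i\cap E[r_A])$, absorbing $r_i$ into $r_A$ yields a triway cut of no larger cost, contradicting optimality or uniqueness of the triway cut; otherwise $r_{AB}\setminus r_i$ is an $AB\!:\!C$ cut of no larger area, contradicting uniqueness of $r_{AB}$. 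With these branches spelled out your argument closes; as written (``absorbed into $r_A$, $r_B$, or $r_C$ without raising the triway-cut cost, contradicting uniqueness of the triway cut'') it does not cover the neither-region case and only half of the single-region case. One further cosmetic point: the geodesic conditions to $g_A$ and to $g_B$ are obtained at possibly different vertices of the component, so you should first use interior-edge saturation to make $g$ constant on the component and then impose both conditions on the common value, rather than propagating $g\equiv X$ only at the end.
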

\begin{proof}[Proof of Theorem~\ref{thm:uniqueness2} and Theorem~\ref{thm:uniqueness}]
  There are four regions that we need to take care of. It is immediate that the region $r^c_{AB}$ follows trivially from Lemma~\ref{lem:uniqueEW-id}.
  For the other regions: Note that since $g$ is optimal for $R$, $q_X(g)$ must be optimal for $Q$, therefore $q_X(g)$ must coincide with \Eqref{eq:q(v)-sat}. Moreover, the saturation conditions require $d(g(x),g(y))=d(P(g(x)),P(g(y)))=d(q(x),q(y))$ for all $e=\{x,y\}$ completely within $r_{AB}$ (see the remark before Lemma~\ref{lem:RtoP}).
  
  Let's first consider $r_A$. The must be a path $L$ completely within $r_A$ such that $d(g(x),g(y))=d(q(x),q(y))=0$ along adjacent vertices $\{x,y\}\in L$ (cf. the proof of Corollary~\ref{lem:qsat}). Applying triangle inequality along the path we see $d(g(v),g_A)=0$ which implies $g(v)=g_A$. A similar argument holds for $v\in r_B$.

  For the remaining region $r_X$, We split the contribution of $R(g)$ as
  \begin{align}
    R(g) =  \sum_{e\in E, e\subset r_{AB}} w(e) d(g(x),g(y)) + \sum_{v \in \partial r_{AB}} w(e) d(g(v),\id)
  \end{align}
  where $\partial r_{AB}\equiv V[\mu(r_{AB})]\cap r_{AB}$ are the vertices on the cut surface $\mu(r_{AB})$ and within $r_{AB}$. We have used the fact that $g(v)=\id$ for $v\in r_{AB}^c$.
  Since $q_X(g(v))=\id$ for $v\in r_{AB}$ we have $\#(g\vee X)=\#(g)$, and $P(g(v))$ lies on the geodesic between set partitions $P(X)$ and $P(\id)$ on $P_{mn}$:
  \begin{equation}
    d(P(X),P(g(v)))+d(P(g(v)),P(\id)) = \#(X)-2\#(X\vee g(v))+\#(\id) =  d(P(X),P(\id))
  \end{equation}
  We claim that this geodesic condition naturally extends to Cayley distances on $S_{mn}$.
  The reasoning is as follows: The region $r_X$ must be path connected to either $A$ or $B$, since if it is only connected to $C$ then we know such region is subset of $r_{AB}^c$ and thus must be empty. Now consider a connected component $r_i$ of this region. Without loss of generality we may set $g=g_i$ for all $v\in r_i$. Consider an edge $\{x,y\}$ on the common cut surface of $r_A$ (or $r_B$ respectively) and $r_i$ such that $g(x)=g_i$ and $g(y)=g_{A/B}$. We know that since $g$ is optimal and $x,y\in r_{AB}$,
  \begin{equation}
    d(g_i,g_{A/B}) = d(P(g_i),P(g_{A/B}))+ 2G_{g_{A/B}}(g_i) = d(q_X(g_i),q_{A/B}),
  \end{equation}
  which implies that the genus $G_{g_{A/B}}(g_i)=0$. But since $P(g_{A/B})\ge P(X)$ we have $G_{g_{A/B}}(g_i)\ge G_{X}(g_i)$ \footnote{This fact follows from the construction of admissible surfaces: Given an admissible surface of $g_i$ based on $g_A$ (or $g_B$) one can ``pinch'' the cycles of $g_A$ (resp. $g_B$) to form cycles of $X$ without destroying any connections. See Appendix A. of Ref.~\cite{Akers:2021pvd} for details.} and thus $G_{X}(g_i)=0$. Also $G_{g_i}(\id)=0$ trivially. Thus $d(P(g_i),P(\id))=d(g_i,\id)$ and $d(P(g_i),P(X))=d(g_i,x)$ and we have
  \begin{align}
  \begin{split}
    R(g) &=  \sum_{e\in E, e\subset r_{AB}} w(e) d(g(x),g(y)) + \sum_{v \in  \mu(r_{AB}) \cap r_{AB}} w(e) (d(X,\id)-d(X,g(v))) \\
    &=  \sum_{e\in E, e\subset r_{AB}} w(e) d(g(x),g(y)) + \mathcal{A}(AB:C)d(X,\id) - \sum_{v \in \partial r_{AB}} w(e) d(X,g(v))
    \end{split}
  \end{align}
  Now since $g$ is optimal we know $R(g)=Q(q_X(g))+\mathcal{A}(AB:C)d(X,\id)$. Using $d(g(x),g(y))=d(q(x),q(y))$ within $r_{AB}$ we obtain
  \begin{equation}
    \sum_{v \in \partial r_{AB}} w(e) d(X,g(v)) = 0 
  \end{equation}
  and thus $g(v)=X$ for $v\in r_{AB}$ on the cut surface $\mu(r_{AB})$. Since Cayley distance vanishes within $r_X$ we must have $g(v)=X$ within the whole region. This completes our proof.
\end{proof}

\begin{remark}
  In general, when using the permutation optimization to determine the moments of the reflected density matrix, one must sum over all the configurations that minimizes the free energy functional, i.e. \Eqref{statmech}.
    Physically speaking, what we have proven here is that such minimum is unique as long as one is sufficiently far away from the phase transitions, which is signalled by a degenerate minimal surface. In fact, there are two different kinds of phase transitions in play here -- one being the entanglement wedge phase transition, in which the reflected entropy suffers a discontinuous jump; and the other being the EW cross-section phase transition, in which the there are two cross-sectional candidates with the same area. In the latter case, the change of reflected entropy across the transition is still continuous. Our result here only applies when the system is far away from both kind of transitions. 
    
    However, based on various evidences, we conjecture that $S^{(n)}_R$ still converges as stated in Theorem~\ref{thm:1} for the second kind of the transitions. In this case the multiplicity factor is no longer unity but some integer $d^{(n)}_m>1$. We conjecture that $d^{(n)}_m$ is \emph{independent} of $m$ in this scenario. Therefore, the techniques we will be using for analytic continuation still carries over. The multiplicity factor will introduce a $\ln d^{(n)} \sim O(\chi^0)$ correction to the entropy, which goes away as one takes $\chi\to\infty$.
    Our uniqueness assumption on the triway cut problem would then be unnecessary in Theorem~\ref{thm:1}. 
\end{remark}

\section{Continuation in $m$}

\label{sec:cont}

To finish the proof of Theorem~\ref{thm:1}, we need to ``analytically continue'' away from integer $m/2$.
Since the answer we have for $R(g)$ only depends trivially on $m$ through normalization (it is independent upon correctly normalizing the reflected density matrix),
it seems like this task would be simple. For example, one might expect a simple application of Carlson's theorem. Unfortunately applying Carlson's theorem
after taking the limit $\chi \rightarrow \infty$ turns out to be rather difficult -- a fact that is often not discussed in the AdS/CFT literature.  The basic issue is that
one does not know if the natural analytic continuation of ${\rm Tr} ( \sigma_{AA^\star}^{(m)})^n$ in $m$ remains an analytic function upon 
taking the limit $\chi \rightarrow \infty$.  Compounding this difficulty is the need to divide the function one wishes to continue by the expected answer - this division is necessary for convergence, but will often introduce non-analytic dependence on $m$. Indeed it is well known that partition functions do not remain analytic in the thermodynamic limit due to phase transitions and the condensation of zeros. 
That said, such phase transitions in $m$ are not present here since the expected answer has a simple analytic dependence on $m$.
Even so, it is not obvious how to establish pointwise convergence of the sequence as $\chi \rightarrow \infty$ and this is necessary in order to establish analyticity of the limit. 

Thus we follow a different approach here -- that of the method of moments. One immediate difficulty is that the quantity of interest ${\rm Tr} ( \sigma_{AA^\star}^{(m)})^n$  is not obviously a moment for $m$ (it is a moment for $n$, but here we are fixing $n$ to be an integer.) However we can write it as a moment for some operator spectral measure \cite{Dutta:2019gen,Akers:2023obn}:
\begin{equation}
{\rm Tr}_{AA^\star} ( \sigma_{AA^\star}^{(m)})^n = \left< 1_{AB}^{\otimes n} \right| \Sigma_A^\dagger \varrho^{m/2} \otimes  \varrho^{m/2} \Sigma_A \left|  1_{AB}^{\otimes n} \right>
\end{equation}
where $\varrho = \rho_{AB}^{\otimes n}$ and $\Sigma_A$ is the usual $n$-twist operator. 
Consider the operator:
\begin{equation}
 \mathcal{O} =  \varrho \otimes \varrho 
\end{equation}
and the associated spectral measure $E_\lambda$ from which we define a new measure $\mu_\Psi(\lambda) = \left< \Psi \right| E_\lambda \left| \Psi \right> $ with $ \left| \Psi \right> = \Sigma_A \left|  1_{AB}^{\otimes n} \right> $. More specifically we can write:
\begin{equation}
\mathcal{O} = \sum_i \lambda_i \left| v_i \right> \left< v_i \right|
\end{equation}
and the measure for finite $\chi$ will have a discrete decomposition:
\begin{equation}
 d\mu_\Psi(\lambda)  = \sum_{i} |\left< \Psi \right| \left. v_i \right>|^2 \delta( \lambda - \lambda_i) d\lambda
\end{equation}
We recover the quantity of interest:
\begin{equation}
{\rm Tr}_{AA^\star} ( \sigma_{AA^\star}^{(m)})^n   = \int_0^\infty d \mu_\Psi(\lambda) \lambda^{m/2}
\end{equation}
We have computed the moments for $m/2$ integer away from the phase transition, schematically:
\begin{equation}
\label{eq:schematic-convergence}
\overline{ \int_0^\infty d \mu_\Psi(\lambda) \lambda^{m/2} } ~\underset{\chi\to\infty}{\longrightarrow}~ \chi^{ - 2(n-1) \mathcal{A}(A:B:C) + n(m-2) \mathcal{A}(AB:C)}
\end{equation}
We have not computed the $m=0$ moment. This makes the moment problem somewhat more involved since we have less
control over the limit of the measure $\mu_\Psi$ near $\lambda =0$. 
Instead, we will work with a weak form of measure concentration for general random tensor networks that works as long as the entanglement wedge is appropriately unique:

\begin{definition}[Unique entanglement wedge]
\label{def:unique-EW}
Given a random tensor network with boundary $\partial = AB \sqcup C$ then the entanglement wedge $r_{AB}$ is called \emph{unique} if for any other $AB:C$ cut $r \neq r_{AB}$:
\begin{equation}
\mathcal{A}(AB:C) \leq \mathcal{A}(r : r^c) - g
\end{equation}
for some non-zero gap $g > 0$. 
\end{definition}

\begin{lemma}[Weak measure concentration]
Consider a random tensor network state with boundary $\partial = AB\sqcup C$ and with a unique entanglement wedge
for $AB$, then:
\begin{equation}
\label{eq:wmc2}
\overline{ \| \rho_{AB} - \pi_{\rho_{AB}} /\chi^{\mathcal{A}(AB:C)} \|_1} = \mathcal{O}(\chi^{-g/2})
\end{equation}
where $\pi_{\rho_{AB}}$ is the projection operator onto the support of $\rho_{AB}$.
\end{lemma}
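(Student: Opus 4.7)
The plan is to reduce the trace-norm estimate to a Hilbert--Schmidt estimate, then control the latter by a two-replica calculation. Write $D \equiv \chi^{\mathcal{A}(AB:C)}$ and $\tau \equiv \pi_{\rho_{AB}}/D$. Contracting the random tensors on one side of the minimal cut $r_{AB}$ provides a Schmidt decomposition of $|\psi\rangle$ across $AB:C$ with at most $D$ terms, so $\mathrm{rank}(\rho_{AB})\leq D$ deterministically. Because $\rho_{AB}-\tau$ is supported on the range of $\pi_{\rho_{AB}}$, its rank is also at most $D$, and combining the standard matrix-norm inequality $\|X\|_1 \leq \sqrt{\mathrm{rank}(X)}\,\|X\|_2$ with Jensen applied to the concave $\sqrt{\cdot}$ gives
\begin{equation}
    \overline{\|\rho_{AB}-\tau\|_1} \;\leq\; \sqrt{D}\,\sqrt{\overline{\|\rho_{AB}-\tau\|_2^2}}.
\end{equation}

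Next I would expand the Hilbert--Schmidt norm directly. Using $\pi_{\rho_{AB}}\rho_{AB}=\rho_{AB}$ and $\pi_{\rho_{AB}}^2=\pi_{\rho_{AB}}$,
\begin{equation}
    \|\rho_{AB}-\tau\|_2^2 \;=\; \mathrm{Tr}(\rho_{AB}^2) \;-\; \frac{2}{D} \;+\; \frac{\mathrm{rank}(\rho_{AB})}{D^2} \;\leq\; \mathrm{Tr}(\rho_{AB}^2) - \frac{1}{D},
\end{equation}
where the final inequality again uses $\mathrm{rank}(\rho_{AB})\leq D$. After averaging it therefore suffices to show
\begin{equation}
    \overline{\mathrm{Tr}(\rho_{AB}^2)} \;=\; D^{-1}\bigl(1+\mathcal{O}(\chi^{-g})\bigr),
\end{equation}
since plugging this into the previous display produces $\overline{\|\rho_{AB}-\tau\|_1} = \sqrt{\mathcal{O}(\chi^{-g})} = \mathcal{O}(\chi^{-g/2})$.

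The last step is a standard two-replica calculation. The unnormalized moment $\overline{\mathrm{Tr}(\mathrm{Tr}_C|\psi\rangle\langle\psi|)^2}$ is a statistical-mechanics sum over $S_2$-valued configurations with boundary conditions swap on $AB$ and identity on $C$; by the min-cut arguments of \secref{sub:minmax} the dominant saddle is the unique minimal cut $r_{AB}$ and contributes weight $\chi^{-\mathcal{A}(AB:C)}$. Any other configuration is supported on some cut $r\neq r_{AB}$ with $\mathcal{A}(r:r^c) \geq \mathcal{A}(AB:C)+g$ by Definition~\ref{def:unique-EW}, and is therefore suppressed by at least $\chi^{-g}$ relative to the leading term; the finite combinatorial sum over configurations (bounded uniformly in $\chi$ since $|V|$ is held fixed) does not affect this estimate. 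The main obstacle is handling the denominator $\langle\psi|\psi\rangle^2$ rigorously: one must invoke a concentration argument showing that $\langle\psi|\psi\rangle$ is close to its mean with fluctuations controlled by the same gap $g$ (e.g.\ by computing joint moments of numerator and denominator and inverting), so that the ratio of averages approximates the average of the ratio with the stated $\mathcal{O}(\chi^{-g})$ error. This is where the uniqueness assumption on the entanglement wedge is used most delicately, and it is the only non-elementary ingredient in the argument.
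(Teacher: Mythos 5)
Your proof follows essentially the same route as the paper: bound the trace norm by the Hilbert--Schmidt norm using the rank/support bound ${\rm Tr}\,\pi_{\rho_{AB}}\le\chi^{\mathcal{A}(AB:C)}$ (H\"older/Cauchy--Schwarz plus Jensen), reduce the problem to showing $\chi^{\mathcal{A}(AB:C)}\,\overline{{\rm Tr}\rho_{AB}^2}-1=\mathcal{O}(\chi^{-g})$, and obtain that from the two-replica statistical-mechanics model together with the gap condition of Definition~\ref{def:unique-EW}. The normalization subtlety you flag at the end is left implicit in the paper as well (it simply invokes the ``explicit computation of the replica statistical mechanics model''), so your argument matches the paper's level of rigor.
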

\begin{remark}
The result presented here is a much weaker bound than the standard measure concentration where the probability of finding a minimum non-zero eigenvalue away from the peak is exponentially small in $\chi^{\#}$. 
Such measure concentration results were proven for a single random tensor \footnote{See e.g. \cite{10.1093/acprof:oso/9780199535255.001.0001}, Theorem 5.6.} -- multiple random tensors seem much harder to work with, hence we have not succeeded in deriving the much stronger result.
This weaker result is however sufficient for our purposes. 
\end{remark}
\begin{proof}
We start with a computation of the projected trace distance on the subspace $\pi_{\rho_{AB}} \mathcal{H}_{AB}$
where ${\rm Tr} \, \pi_{\rho_{AB}} \leq \chi^{\mathcal{A}(AB)} $ for any tensor network state.
Then from H\"older's inequality we have:
\begin{align}
\begin{split}
\left\| \rho_{AB} - \pi_{\rho_{AB}}/\chi^{\mathcal{A}(AB)} \right\|_{1}^2 & \leq \chi^{\mathcal{A}(AB)} \| \rho_{AB} - \pi_{\rho_{AB}}/\chi^{\mathcal{A}(AB)}  \|_2^2
= \chi^{\mathcal{A}(AB)} {\rm Tr} (\rho_{AB} - \pi_{\rho_{AB}}/\chi^{\mathcal{A}(AB)})^2 \\
&=  \chi^{\mathcal{A}(AB)} \left( {\rm Tr} \rho_{AB}^2 - 2 /\chi^{\mathcal{A}(AB)} + {\rm Tr} \pi_{\rho_{AB}}/\chi^{2\mathcal{A}(AB)} \right) \\
&\leq \chi^{\mathcal{A}(AB)}  {\rm Tr} \rho_{AB}^2  - 1
\end{split}
\end{align}
Averaging gives the required estimate:
\begin{equation}
\chi^{\mathcal{A}(AB)}  \overline{{\rm Tr} \rho_{AB}^2}  - 1
= \mathcal{O}(\chi^{-g})
\end{equation}
by explicit computation of the replica statistical mechanics model, and application of the gap condition in Definition~\ref{def:unique-EW}.

\end{proof}

To proceed, we define a specific form of convergence in probabilities as follows. We say that:
\begin{equation}
 \alpha_{\chi} \mathop{\rightarrow}^{\rm Pr} c
 \end{equation}
if for all $\epsilon>0$:
\begin{equation}
\lim_{\chi \rightarrow \infty} {\rm Pr}( |\alpha_{\chi}-c | \geq \epsilon)  = 0
\end{equation}
 Where $\alpha_{\chi}$ is a sequence of real valued random variables (on different probability spaces equipped with the Haar measure with dimension determined by $\chi$)
 and $c$ is simply a constant.  
We now state our main result of this section:
\begin{lemma}
\label{conv:prob}
For integer $n$:
\begin{equation}
\label{cont:prob}
S_{R}^{(n)}(A:B) - \ln \chi \left(\frac{1}{n-1} \mathcal{A}_{\mathbf{t}}(A:B:C) - \frac{n}{n-1} \mathcal{A}(AB:C) \right) \,\, \mathop{\rightarrow}^{\rm Pr} \,\, 0
\end{equation}
as $\chi \rightarrow \infty$. 
\end{lemma}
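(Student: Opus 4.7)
The plan is to use the method of moments together with the weak measure concentration lemma. The crucial observation is that for the flat approximant $\tilde{\rho}_{AB} = \pi_{\rho_{AB}}/\chi^{\mathcal{A}(AB:C)}$ a direct calculation gives $|\tilde{\rho}_{AB}^{m/2}\rangle = D^{(1-m)/2}|\tilde{\rho}_{AB}^{1/2}\rangle$ (with $D$ the effective rank, which WMC forces to be $\chi^{\mathcal{A}(AB:C)}(1+o(1))$), hence $\mathrm{Tr}(\tilde{\rho}^{(m)}_{AA^\star})^n = D^{n(1-m)}\,\mathrm{Tr}(\tilde{\rho}^{(1)}_{AA^\star})^n$. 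Thus the $m=1$ value for the flat state can be extracted from the value at any even integer $m\ge 2$, which is already controlled by Theorem~\ref{thm:RPBI}.

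First I would apply Markov's inequality to the weak measure concentration bound to get a probabilistic version: with probability $1-o(1)$, $\|\rho_{AB}-\tilde{\rho}_{AB}\|_1 \leq \chi^{-g/4}$. Second, I would convert this $L^1$ closeness into closeness of canonical purifications via Powers--St{\o}rmer, $\|\rho^{1/2}-\tilde{\rho}^{1/2}\|_2^2 \leq \|\rho-\tilde{\rho}\|_1$, and from there into trace-norm closeness of $\rho^{(1)}_{AA^\star}$ and $\tilde{\rho}^{(1)}_{AA^\star}$ via the standard continuity of partial trace applied to a pure state. The corresponding estimate at $m=2$ is immediate from $\||\rho\rangle-|\tilde{\rho}\rangle\|_2 \leq \|\rho-\tilde{\rho}\|_1$.

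Third, Theorem~\ref{thm:RPBI} applied at $m=2$, combined with a second-moment concentration bound for $\mathrm{Tr}(\rho^{(2)}_{AA^\star})^n$ (obtained by a further replica computation showing that $\overline{[\mathrm{Tr}(\rho^{(2)}_{AA^\star})^n]^2}$ has leading $\chi$-scaling equal to the square of the first moment away from phase transitions), yields concentration of $\mathrm{Tr}(\rho^{(2)}_{AA^\star})^n$ in probability. Pushing this through the trace-norm closeness to $\tilde{\rho}$, the affine relation of step one determines $\mathrm{Tr}(\tilde{\rho}^{(1)}_{AA^\star})^n$, and a final transfer via the $m=1$ closeness gives convergence of $\mathrm{Tr}(\rho^{(1)}_{AA^\star})^n$. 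Taking $-\frac{1}{n-1}\ln$ produces the claimed convergence of $S_R^{(n)}$.

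The main obstacle is quantitative error control. To convert the additive trace-norm errors produced by Powers--St{\o}rmer into the multiplicative $1+o(1)$ accuracy on $\mathrm{Tr}(\rho^{(m)}_{AA^\star})^n$ needed for an additive $o(1)$ error in $S_R^{(n)}$, one must track the operator norm $\|\rho^{(m)}_{AA^\star}\|_\infty$ (itself polynomially small in $\chi$) through H\"older's inequality and, if necessary, bootstrap weak measure concentration to an arbitrarily high polynomial rate via higher-moment replica computations. Once the additive error is driven uniformly below the leading $\chi^{-\alpha(m)}$ scale, the chain of estimates closes.
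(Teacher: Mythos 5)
Your strategy (flat approximant, Powers--St{\o}rmer, H\"older) has a genuine gap exactly at the point you flag as ``the main obstacle,'' and the rescue you propose is not available. The quantities you must control are polynomially small with exponents set by differences of cut areas: e.g. ${\rm Tr}(\rho^{(1)}_{AA^\star})^n = e^{-(n-1)S_R^{(n)}}\sim \chi^{-\alpha}$ with $\alpha=\mathcal{A}_{\mathbf{t}}(A:B:C)-n\mathcal{A}(AB:C)$, while weak measure concentration, \eqref{eq:wmc2}, only gives $\|\rho_{AB}-\pi_{\rho_{AB}}/\chi^{\mathcal{A}(AB:C)}\|_1=\mathcal{O}(\chi^{-g/2})$ with $g$ the entanglement-wedge uniqueness gap. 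After Markov, Powers--St{\o}rmer and the pure-state partial-trace step, your error in ${\rm Tr}(\rho^{(1)}_{AA^\star})^n$ is \emph{additive}, of order $n\,\|\rho^{(1)}_{AA^\star}\|_\infty^{\,n-1}\,\chi^{-g/8}$; since $\|\rho^{(1)}_{AA^\star}\|_\infty^{\,n-1}=e^{-(n-1)S_\infty(AA^\star)}\ge e^{-(n-1)S_R^{(n)}}$, closing the argument requires $(n-1)\bigl(S_R^{(n)}-S_\infty\bigr)/\ln\chi < g/8$, i.e. a comparison between the R\'enyi spread of the reflected spectrum and the minimal-cut gap. These two numbers are unrelated: a graph can have a nearly degenerate $AB:C$ cut (tiny $g$) together with a large cross-section, in which case your chain of estimates fails even though the Lemma still holds (it only needs $g>0$). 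The same mismatch already appears at $m=2$, where the unnormalized purification has norm squared ${\rm Tr}\rho_{AB}^2\sim\chi^{-\mathcal{A}(AB:C)}$, so an additive $\chi^{-g/4}$ error need not be small relative to the object itself. Your proposed fix --- bootstrapping weak measure concentration to an arbitrarily high polynomial rate by computing higher moments --- cannot work: $\|\rho_{AB}-\pi_{\rho_{AB}}/\chi^{\mathcal{A}(AB:C)}\|_1$ has a definite nonzero polynomial size for typical instances (set by subleading cuts and spectral fluctuations), and computing more moments does not change it; the remark after the weak-concentration lemma emphasizes that stronger concentration is precisely what is not available for networks of several random tensors.

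This is why the paper's proof takes a different route in which the flatness error only ever enters \emph{multiplicatively}. It writes ${\rm Tr}(\rho^{(m)}_{AA^\star})^n=\int d\mu_\Psi(\lambda)\,\lambda^{m/2}$ for the spectral measure of $\varrho\otimes\varrho$ in the state $\Sigma_A|1_{AB}^{\otimes n}\rangle$, renormalizes the measure so the expected answer is $1$, shows concentration on $[0,2]$ from the known integer moments, approximates $\sqrt{\lambda}$ by Bernstein polynomials, controls the fluctuations of all needed polynomial moments by a doubled $S_{2mn}$ replica computation (Lemma~\ref{lem:P4n_trig}), and --- crucially --- handles the unknown spectral weight near $\lambda=0$ (the missing zeroth moment) with the Fourier-analytic sandwich bound of Lemma~\ref{lem:fourierbound}, where the factors $(\varrho^{1/2}\otimes 1)$ and $(1\otimes\varrho^{1/2})$ supply exactly the weighting that turns the trace-distance error $\|\varrho\otimes\varrho-\pi/\chi^{2n\mathcal{A}(AB:C)}\|_1$ into a relative correction of the renormalized moment. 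Your proposal lacks an ingredient playing that role, and without one the additive trace-norm errors cannot be beaten down to the $\chi^{-\alpha}$ scale in general.
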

The proof we give for Lemma~\ref{conv:prob} is rather lengthy and technical so we present it in Appendix~\ref{app:conv}.
We now use the above result to prove Theorem~\ref{thm:1}. 

\begin{proof}[Proof of Theorem~\ref{thm:1}]

The only thing remaining is to move from convergence in probability to convergence in mean after dividing by $\ln \chi$. 
Certainly:
\begin{equation}
\frac{S_{R}^{(n)}(A:B)}{\ln \chi} -  \left(\frac{1}{n-1} \mathcal{A}_{\mathbf{t}}(A:B:C) - \frac{n}{n-1} \mathcal{A}(AB:C) \right) \,\, \mathop{\rightarrow}^{\rm Pr} \,\, 0
\end{equation}
as $\chi \rightarrow \infty$, follows from Lemma~\ref{conv:prob}. We now show that $\frac{S_{R}^{(n)}(A:B)}{\ln \chi} $ is a uniformly bounded random variable.
We use monotonicity, established in Ref.~\cite{Dutta:2019gen} for integer $n \geq 2$:
\begin{equation}
S_{R}^{(n)}(A:B) = S_{n}(AA^\star)_{\rho_{AB}^{1/2}} \leq S_{n}(AA^\star)_{\rho_{ABC}^{1/2}}  =  S_{R}^{(n)}(A:BC) = 2 S_n(A)
\end{equation}
Using the Swingle bound gives:
\begin{equation}
\frac{S_{R}^{(n)}(A:B) }{\ln \chi} \leq 2 \mathcal{A}(A:BC)
\end{equation}
Both of these bounds
apply to all instances of the ensemble. Convergence in probability for a bounded random variable, implies convergence in mean, and so we are done.
\end{proof}

\section{Discussion}
\label{sec:disc}

In conclusion, we have rigorously demonstrated that the $(m,n)$-R'enyi reflected entropies in random tensor networks are computed by saddles involving triway cuts as shown in \figref{fig:network-multi-cut} for arbitrary $m$ and integer $n$. Moreover, there is a natural analytic continuation of the triway cut problem for non-integer $n$ which leads to the holographic proposal relating to the entanglement wedge cross section. We now comment on various aspects of our work.

\subsection*{Bit Threads}

As discussed earlier, bit threads provide a vivid picture of the entanglement structure of holographic states. Based on the structure of bit thread configurations, Ref.~\cite{Cui:2018dyq} conjectured that the three party entanglement structure of holographic systems is dominated by bipartite entanglement. However, a version of this conjecture is in conflict with the $S_R=2EW$ proposal \cite{Akers:2019gcv}, for which we have found further evidence in this paper. We would nevertheless like to discuss a way to see how one can get as close to bit threads as possible.

As reviewed in \secref{sub:minmax}, the RT formula can be recast as a max flow problem by convex duality. Moreover, the natural form of the min-cut problem from the RTN perspective is an integer program where the optimization is over domain walls with integer energy costs, representing the different permutations that contribute to the free energy minimization problem. For the entanglement entropy, the problem can be relaxed to a linear program over the real numbers since there isn't an integrality gap for this problem. Once relaxed, convex duality naturally leads to a max-flow problem, i.e., bit threads.

In the context of reflected entropy however, we showed that the triway cut problem is equivalent to an integer program with a non-trivial integrality gap. The existence of such an integrality gap for the triway cut problem was in fact previously discussed in Ref.~\cite{Harper:2019lff}. While the R\'enyi reflected entropies are computed by triway cuts, which are related to the entanglement wedge cross section, the non-integer program allows relaxation to surfaces that are related to the original RT surfaces as shown in \figref{fig:relax} (e.g. for $n=2$). For RTNs, this is identical to the mutual information, as found from assuming the mostly bipartite conjecture. Thus, we can think of the relaxation of the integer program as the ``incorrect" step that led to bit threads, as well as the mostly bipartite conjecture.

It has been conjectured that one can amend the mostly bipartite story from bit threads by considering a generalized ``hyperthread'' optimization program \cite{Harper:2021uuq} (see also \cite{Harper:2019lff}). Hyperthreads are threads connecting between multiple $(k\ge 3)$ boundary regions and it is conjectured that the optimal value of a $k$-thread program may be a measure of $k$-partite entanglement. In the case of 3-threads, the optimal configuration saturates a minimal triway cut. In this sense, our results serve as a firsthand bridge that connects the 3-thread problem to a concrete quantum information measure. We think it should be possible to derive the hyperthread optimization as a dual problem of the various integer programs appearing in this paper -- indeed the triway cut does not admit a bit thread dual but it may still be able to be ``dual'' to a more exotic program such as the hyperthreads. On the other hand, reflected entropy can be naturally generalized to accommodate $k$-party systems \cite{Bao:2019zqc}. It would be interesting to see if the formalism we developed in this paper extends to such case and if there is possible connection to the $k$-thread programs. We leave these investigations to future works.

\subsection*{More General Tensor Networks}

There are various aspects of RTNs that make them good models of holography such as their relation to fixed-area states \cite{Akers:2018fow,Dong:2018seb,Dong:2019piw}. However, there are other aspects that are missing such as the lack of mutual backreaction between domain walls, as well as the commutativity of area operators \cite{Bao:2018pvs}. Thus, it is interesting to analyze the extent to which variants of random tensor networks can model holography. 

For instance, the RT formula can be reproduced by choosing tensors that are 2-designs  where the average is the same as the Haar average up to the second moment \cite{Hayden:2016cfa}. Random stabilizer tensor networks are an example which form at most a projective $3$-design \cite{klappenecker2005mutually,gross2007evenly,webb2015clifford,zhu2017multiqubit}. However,  they satisfy the bipartite dominance conjecture \cite{Nezami:2016zni} and thus, do not accurately model the reflected entropy for holographic states. Our paper certainly makes use of larger moments, so the discrepancy is no surprise. In fact, our results suggest that the integrality gap would become visible to a projective $4$-design, at least when computing the $(2,2)$ R\'enyi reflected entropy (involving canonical purifications of the density matrix $\rho_{AB}^2/{\rm Tr} \rho_{AB}^2$). The implications for the R\'enyi reflected entropy and the Markov gap are less clear, and we leave it as an important open question to understand at what level of $k$-design the Markov gaps becomes large, of order $\ln \chi$. 

Another possible generalization of the RTN that can be considered is to use non-maximally entangled edge states \cite{Cheng:2022ori}. In this case, the calculation for the $(m,n)$-R\'enyi reflected entropy is similar except the energy costs on the domain wall become functions of the entanglement spectrum of the edge states. In this case, it is harder to prove anything about the analytic continuation, but we expect similar saddles to play an important role.
Even more ambitious would be to use tensor networks with edge degrees of freedom, as in \cite{Akers:2024wab, Akers:2024ixq, Dong:2023kyr, Qi:2022lbd, Donnelly:2016qqt, Colafranceschi:2022dig}, with a possible payoff of more realistic backreaction.

Another generalization we can consider is that of hypergraph RTNs. Hypergraphs are a natural generalization of graphs where edges connecting 2 vertices are generalized to hyperedges potentially connecting more than 2 vertices. States that satisfy the RT formula on hypergraphs were discussed in Refs.~\cite{Bao:2020zgx}. Such states have a natural construction in terms of the RTN where hyperedges are formed by projecting onto GHZ states coupling multiple vertex tensors \cite{Walter:2020zvt}. One could then ask what the reflected entropy for such states is. Although we do not have a proof analogous to the one for usual RTNs, assuming that triway cut-like configurations dominate, we can compute the R\'enyi reflected entropy. The important ingredient is the generalization of the free energy optimization problem. The free energy cost of an edge in the RTN for vertex permutations $g_1$ and $g_2$ is weighted by $\#(g_2 g_1^{-1})$. It is easy to work out the generalization for hyperedges. For instance, for a 3-edge with vertex permutations $g_1$, $g_2$ and $g_3$, the free energy cost is $\#(g_2g_1^{-1}\vee g_3g_1^{-1})$, which measures the number of orbits of the relevant elements. 

\subsection*{Holographic States}

Having obtained rigorous results for RTNs, it is natural to guess that similar saddles, which are different from the naive saddles written down in Ref.~\cite{Dutta:2019gen}, will also play an important role in holography, resolving the issues with analytic continuation found in Ref.~\cite{Kusuki:2019evw}.
In particular, this would imply that geometric saddles involving backreacted versions of triway cuts would compute the $(m,n)$-R\'enyi reflected entropy in holographic states.

It is useful to contrast this with the case of the multi-entropy, a symmetric multipartite entanglement measure defined by Ref.~\cite{Gadde:2022cqi}.
In particular, for the case of three parties, the bulk dual was proposed to be a triway cut.
However, it was argued by Ref.~\cite{Penington:2022dhr} that despite the multi-entropy being computed by a triway cut in RTNs, the corresponding configurations were not gravitational saddles in AdS/CFT.
The basic reason is that in a smooth gravitational saddle, the local neighborhood of any point is a Euclidean ball.
This turns out to be true for the triway cut configuration at $n=2$.
However, for the triway cut configurations at $n>2$, the topology is such that the junction of the triway cut is not smooth.
A similar argument also holds for the entanglement negativity and the odd entropy \cite{Dong:2024gud}.

Given this, one may worry that while triway cuts are relevant for RTNs, they may not compute the $(m,n)$-R\'enyi reflected entropy in holographic states.
However, one can check that for the reflected entropy of $(m,n)$-R'enyi, the junction of the triway cut is consistent with the topology of a Euclidean ball, thus avoiding the above failure mode.
Thus, we still expect the topologies of the RTN saddles leading to triway cuts to continue to be relevant for holographic states.
A particular case where this is known to be true is that of the $(2,n)$-R\'enyi reflected entropy computed in Ref.~\cite{Milekhin:2022zsy}.

\subsection*{Effective Description}

In Ref.~\cite{Akers:2021pvd}, we discussed an effective description of the canonical purification as a superposition over states with different values of area for the squeezed cross sections. E.g., for hyperbolic RTNs we have
\begin{equation}\label{eq:CP_effective_hyperbolic} 
		\includegraphics[width=0.8
		\textwidth]{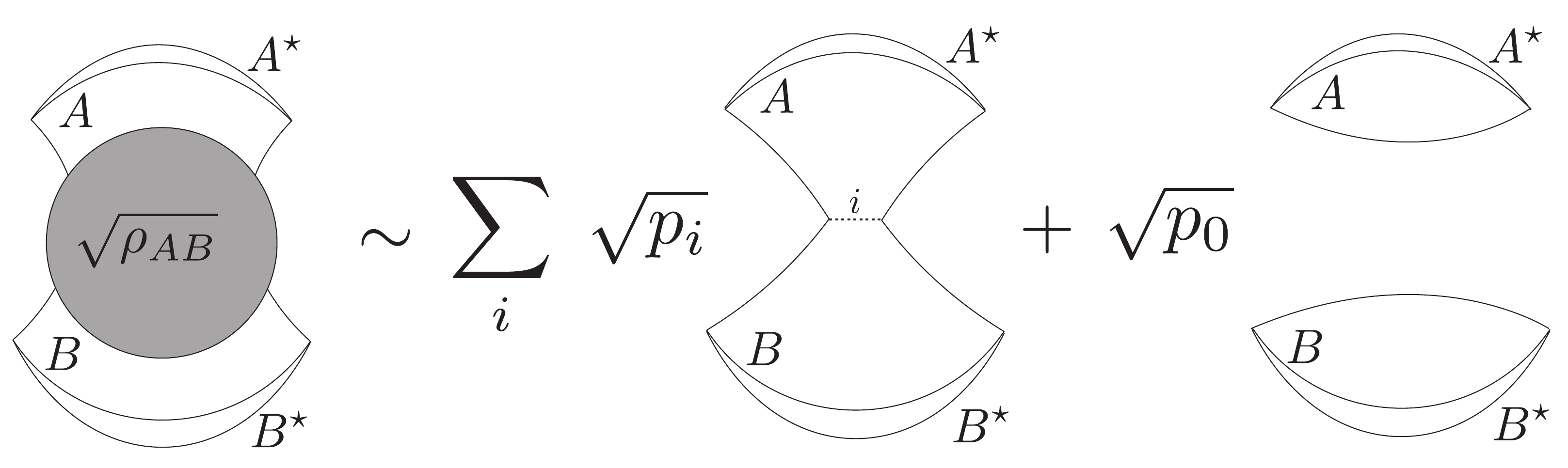}. 
	\end{equation} 
It is natural to expect a generalization of this result to general RTNs where the squeezed cross sections are solutions to the triway cut problem at different values of $n$. 

\begin{figure}
    \centering
    \includegraphics[scale=0.1]{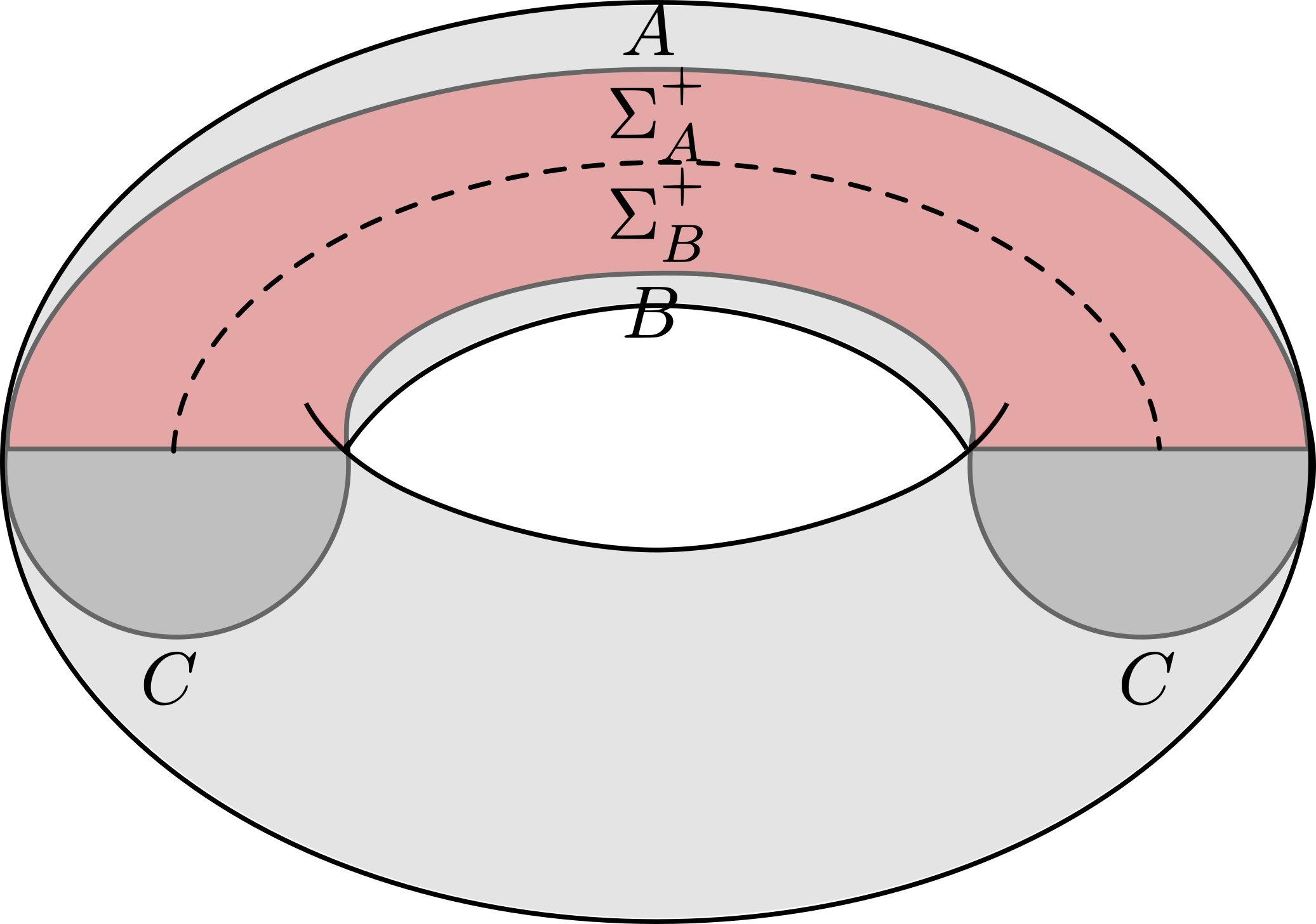}
    \caption{The purification for two intervals in the vacuum state at $m=2$ is a BTZ geometry where fixing the area of the entanglement wedge cross section induces a conical defect (dotted line) at the BTZ horizon.}
    \label{fig:torus_slice}
\end{figure}

While this effective description was proposed in the context of RTNs, where there is no backreaction and the superposition is simply over different sets of vertices in the graph, it is natural to ask if it can be understood in holographic systems as well. For the special case of two intervals in the vacuum state at $m=2$, it was found in Refs.~\cite{Yin:2022toc,Milekhin:2022zsy} that the R\'enyi reflected entropies for arbitrary $n$ can be computed by the torus partition function. This makes analytic continuation easy and is related to the fact that for $m=2$, there is no independent $X$ element. The calculation of R\'enyi reflected entropy can then similarly be decomposed into fixed-area sectors of the entanglement wedge cross section, which correspond to different horizon areas in the BTZ saddle. It is then easy to see that the effective description with different squeezed cross sections is represented by these fixed-area states which induce varying conical defects at the horizon, which translate into the angle subtended at the triway cut at different values of $n$ (see \figref{fig:torus_slice}). 

\acknowledgments

TF would like to thank Alejandro Dominguez-Garcia for discussions. SL would like to thank Shiliang Gao, Jingwei Xu, and Kiran Kumar A.S. for useful discussions. PR would like to thank Abhijit Gadde for discussions. 
PR is supported in part by a grant from the Simons Foundation, by funds from UCSB, the Berkeley Center for Theoretical Physics; by the Department of Energy, Office of Science, Office of High Energy Physics under QuantISED Award DE-SC0019380, under contract DE-AC02-05CH11231 and by the National Science Foundation under Award Number 2112880. 
This material is based upon work supported by the Air Force Office of Scientific Research under award number FA9550-19-1-0360.

\appendix

\section{Specific group and permutation elements}
\label{app:specific}

In this appendix we give a description of the important permutation and partition elements that appear in our proof of the main theorem. 

We first describe the permutation group elements $g_A$ and $g_B\in S_{mn}$.
Denote a specific replica by $(k,\ell)$ where $k=1\ldots n$ and $\ell=1\ldots m$; $\tau_m^{[k]}$ as the $m$-cyclic permutation that shifts  $(k,\ell)\to(k,\ell+1)$ and likewise $\tau_n^{[\ell]}$ the $n$-cyclic permutation that shifts  $(k,\ell)\to(k+1,\ell)$.
We write 
\begin{equation}
g_B = \prod_{k=1}^{n} \tau_m^{[k]}, \qquad g_A = \gamma^{-1} g_B \, \gamma
\end{equation}
where $\gamma=\prod_{\ell>m/2}\tau^{[\ell]}_n$ shifts the lower ($\ell>m/2$) elements cyclically in the $n$-direction.
A graphical representation of these elements are shown in \figref{fig:gAgB}.
\begin{figure}[ht]
    \centering
    \includegraphics[scale=.27]{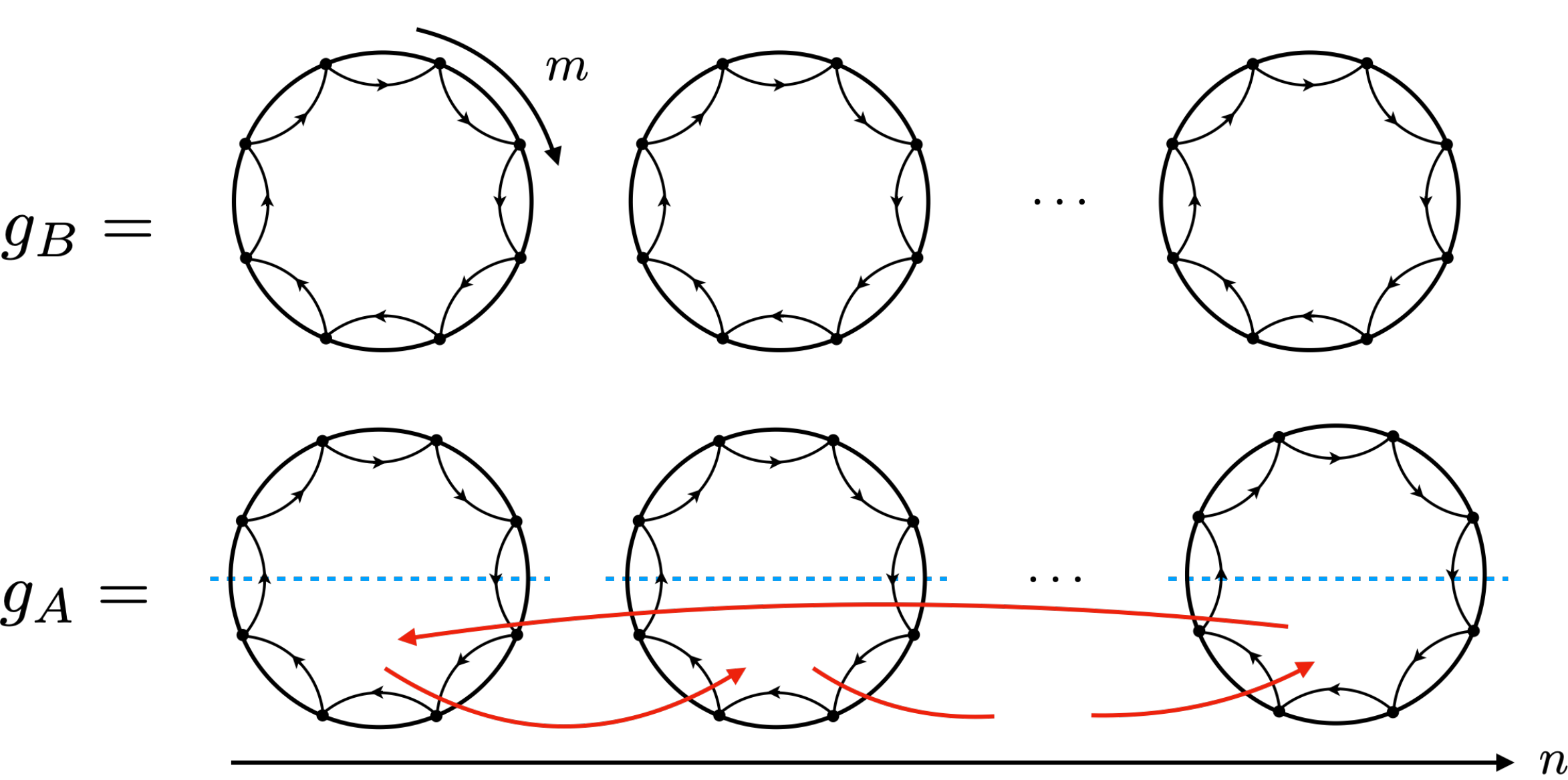}
    \caption{A graphical representation of $g_A$ and $g_B$. Each dot denotes a particular replica. Each circle represents the $m$ replicas and each circle is further replicated $n$ times. Going clockwise in each circle increases the $m$ replica number and going to the next circle to the right increases the $n$ replica number. The permutation is represented as directed arrows connecting different replicas. The element $g_A$ can be thought of as cutting open the $m$-circles in $g_B$ open, permuting the lower semicircle cyclically in the $n$-direction and then gluing them back together.}
    \label{fig:gAgB}
\end{figure}

There is a unique element $X\in S_{mn}$ that lies on the joint geodesic of $\id\to g_A$ and $\id \to g_B$ while being farthest away from $\id$. Its form is given by
\begin{equation}
X =  \prod_{k=1}^{n} (\tau_{m/2}^U \tau_{m/2}^L)_k
\end{equation}
where $\tau^U_{m/2}$ and $\tau^U_{m/2}$ defines a permutation in $S_{m}$ by cyclically permuting the upper $m/2$ the lower $m/2$ elements. See \figref{fig:X}.
\begin{figure}
    \centering
    \includegraphics[scale=.27]{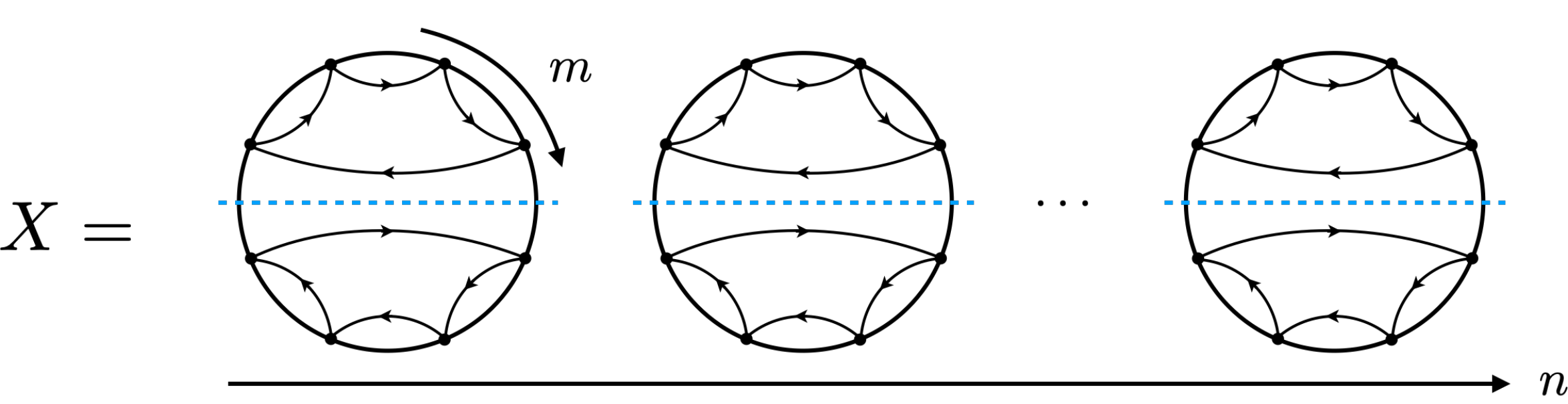}
    \caption{The element $X$ can be expressed as the product of maximum cyclic permutations $\tau^U_{m/2},\tau^L_{m/2}$ that act within each $m$-semicircles.}
    \label{fig:X}
\end{figure}
The Cayley distances between these elements and the identity are
\begin{equation}
    d(g_A,g_B) = 2(n-1), \quad d(\id,g_{A/B})=n(m-1), \quad d(\id,X)=n(m-2)
\end{equation}

For the sake of completeness we also include the coarse grained version of the permutation group elements $q_A\equiv q_X(g_A)$ and $q_B \equiv q_X(g_B)$ here:
\begin{equation}
    \begin{matrix}
        \includegraphics[scale=.35]{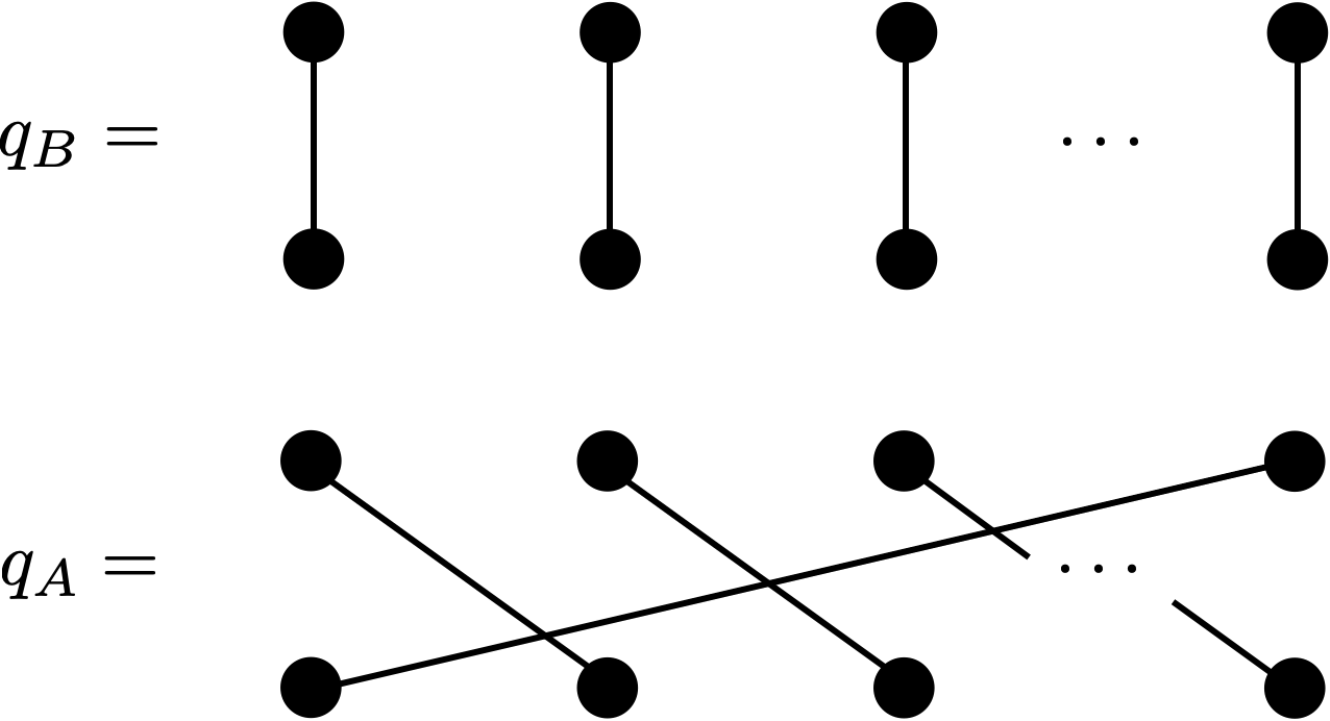}
    \end{matrix}
\end{equation}
where we have represented each dot as an element in $\mathbb{Z}_{2n}$, each corresponding to a cycle in the blocking permutation $X$, with position arranged in a similar fashion as in \figref{fig:X}. We connect two dots by a line if they belongs to the same subset in the set partition. 
Note that $q_X(X)=q_X(\id)=\{\{1\},\ldots,\{2n\}\}$, the finest element in $P_{2n}$ which is also denoted by $\id\in P_{2n}$.
The distances between these set partitions are
\begin{equation}
    d(q_A,q_B) = 2(n-1), \quad d(\id,q_A)=d(\id,q_B)=n
\end{equation}

\section{Proofs}

\subsection{Lemma~\ref{lem:subG}: Structure of optimal $\rho,\sigma$}
\label{app:subG}
\label{sec:struct}
For quick reference we sketch the integer program, Definition~\ref{def:I}, we are trying to solve:
\begin{equation}
\nonumber
\includegraphics[scale=1.2]{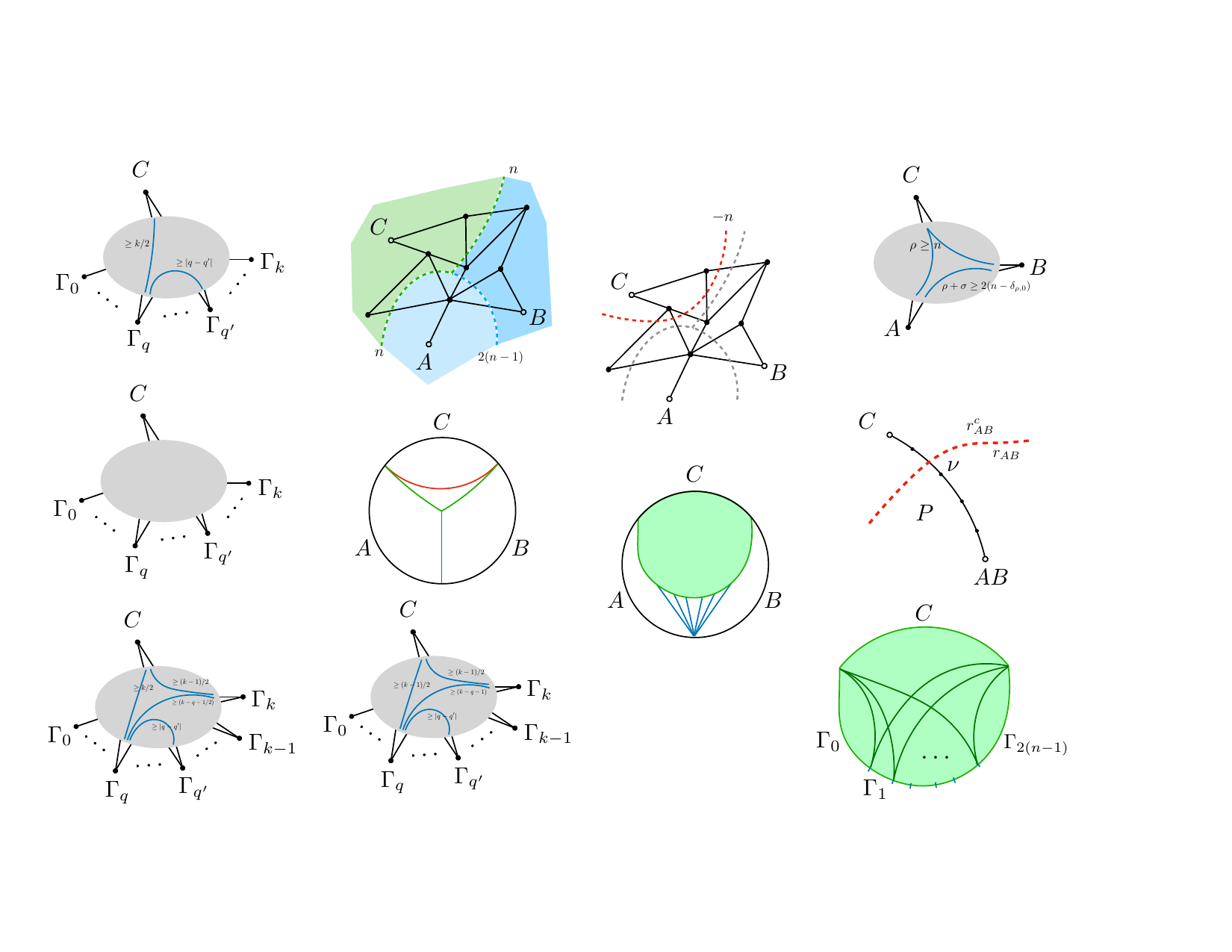}
\vspace{-.4cm}
\end{equation}
plus the additional even condition on paths $A \rightarrow B$ (see the original definition.)

Let us introduce some notation. For a path $L$, not necessarily edge disjoint, define:
\begin{equation}
\label{notation}
f(L) = \sum_{e \in L} f(e)
\end{equation}
for some function of the edges $f: E\to \mathbb{R}$. This is slightly different (but consistent with) the notation introduced previously, because of the possibility that
paths $L$ intersect an edge more than once. We sum each edge in the sequence defining $L$, which then includes such multiplicities. 

\begin{definition}
\label{rhobinding}
We say $e$ is $\rho$-\emph{binding} if either $\rho(e) =0$ or, when $\rho(e) >0$, there exists a path $L \in \mathcal{P}_{AB,C}$ with $e \in L$ such that
$\rho(L) = n$. 
\end{definition} 

For the constraint involving the $\sigma$ variable \Eqref{eveness} we need a refined notion:
\begin{definition}
An edge $e$ is called $\sigma$-\emph{binding within $E'$} for some $E' \subset E$ if either $\sigma(e) =0$ or, when $\sigma(e) > 0$, then there exists a path $L \in \mathcal{P}_{A,B}$ with $e \in L$ and $L \subset E'$ for which $\sigma(L) + \rho(L) = 2 (n - \delta_{\rho(L),0})$.  We say that $e$ is \emph{simply $\sigma$-binding} for the case $E' = E$. 
\end{definition}

Given some function $\rho$ on the edges $E$, we define the graph distance induced by $\rho$ as:
\begin{equation}
\label{eq:d_graph}
d_\rho(x,y) = \min_{L \in \mathcal{P}_{x,y} } \rho(L)
\end{equation}
and similarly for the distances between subsets of $V$. We define the distance to be $\infty$ should there be no such path $L$ in the minimization. 
Given some $\rho$, define the region:
\begin{equation}
\label{ab0}
(AB)_0 = \{ x\in V : d_\rho(x, AB)  = 0\}
\end{equation}
and similarly define
\begin{equation}
    E_0 = \{\{x,y\}\in E : x\in (AB)_0 \text{ and } y\in (AB)_0\}
\end{equation} to be the edges that lie entirely within $(AB)_0$
\footnote{Note that $E_0$ is not the same as $E[(AB)_0]$, which is the set of edges that have \emph{some} vertex in $(AB)_0$.}.

We will construct an optimal pair $(\rho,\sigma)$ with some nice properties and such that $V' = (AB)_0$ will be our choice for $V'$ in Lemma~\ref{lem:subG}.
We note however, at this point there is no obvious reason for $\rho(e) =0 $ for all edges within $(AB)_0$.

\begin{lemma}
\label{minilem}
Given some feasible  $(\rho,\sigma)$, consider an edge $e \in E_0$, 
then we have the estimate:
\begin{equation}
\label{Pe}
\forall \,  L \in \mathcal{P}_{AB,C} : e \in L \, \, \implies \,\,  \rho(L) - \rho(e) \geq n
\end{equation}
\end{lemma}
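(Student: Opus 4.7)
The plan is to use a ``short-circuit'' argument: whenever a path from $AB$ to $C$ crosses $e$, one can short-circuit the portion of the path before the first crossing of $e$ by a zero-$\rho$ detour back to $AB$, using precisely the defining property of $E_0$. The new path is still feasible, and since feasibility forces its $\rho$-length to be at least $n$, we get the desired bound on $\rho(L)-\rho(e)$.

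More concretely, let $e=\{x,y\}\in E_0$ and let $L\in\mathcal{P}_{AB,C}$ contain $e$. The first step is to locate the first occurrence of $e$ as we traverse $L$ starting from its $AB$-endpoint, and decompose
\begin{equation}
L = L_0 \cdot e \cdot L_{\mathrm{rest}},
\end{equation}
where (up to relabeling $x\leftrightarrow y$) $L_0$ is a walk from some $a\in AB$ to $x$, and $L_{\mathrm{rest}}$ is a walk from $y$ to some $c\in C$. This decomposition does not require $L$ to be simple; subsequent occurrences of $e$ are absorbed into $L_{\mathrm{rest}}$.

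The second step uses that $y\in(AB)_0$, which by definition \eqref{ab0} guarantees a walk $P_y\in\mathcal{P}_{y,AB}$ with $\rho(P_y)=0$. Concatenating its reverse with $L_{\mathrm{rest}}$ produces a walk
\begin{equation}
L' \;=\; \mathrm{reverse}(P_y)\cdot L_{\mathrm{rest}} \;\in\; \mathcal{P}_{AB,C},
\end{equation}
whose $\rho$-length is $\rho(L') = \rho(P_y)+\rho(L_{\mathrm{rest}})=\rho(L_{\mathrm{rest}})$.

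The third step invokes feasibility of $(\rho,\sigma)$ applied to $L'$, which yields $\rho(L_{\mathrm{rest}})=\rho(L')\geq n$. Combined with $\rho(L)=\rho(L_0)+\rho(e)+\rho(L_{\mathrm{rest}})$ and $\rho(L_0)\geq 0$, this gives $\rho(L)-\rho(e)\geq \rho(L_{\mathrm{rest}})\geq n$, as claimed. There is no substantial obstacle here: the argument is a bookkeeping exercise using only the nonnegativity of $\rho$, the feasibility constraint along paths to $C$, and the definition of $(AB)_0$. The only point that requires a little care is the possibility that $L$ is a walk rather than a simple path, which is handled cleanly by choosing the \emph{first} occurrence of $e$ so that $L_{\mathrm{rest}}$ absorbs any further uses of $e$.
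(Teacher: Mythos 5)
Your proof is correct and follows essentially the same route as the paper: split $L$ at the edge $e$, use the defining property of $(AB)_0$ to replace the initial segment by a zero-$\rho$ walk from $AB$ to the far endpoint of $e$, and apply the feasibility constraint $\rho \geq n$ on the resulting $AB$-to-$C$ path. The extra care about the first occurrence of $e$ in a non-simple walk is a harmless refinement of the paper's argument.
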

\begin{proof}
For all paths $L \in \mathcal{P}_{AB,C}$ passing through $e= \{ x, y\}$ in the order $AB \rightarrow x \rightarrow y \rightarrow C$ we 
we note that $\rho(L) \geq  \rho(L_{y,C})+ \rho(e) $ where $L_{y,C} \in \mathcal{P}_{y,C}$, after dropping the $AB\to x$ portion of the path. But since $y \in (AB)_0$ there is some path $L^\star_{AB,y}$ from $AB$ to $y$ with
with $\rho(L^\star_{AB,y}) = 0$. We can combine the two paths, denoted as $L^\star_{AB,y} \cup L_{y,C} \in \mathcal{P}_{AB,C}$,
and use this to estimate:
\begin{equation}
\rho(L) \geq  \rho(L_{y,C})+ \rho(e)= \rho(L^\star_{AB,y}\cup L_{y,C} ) + \rho(e)
\geq n  + \rho(e)
\end{equation}
by feasibility of $\rho$. 
\end{proof}
\begin{corollary}
For a feasible $(\rho,\sigma)$ 
an edge $e \in E_0$ is  $\rho$-binding iff $\rho(e) = 0$. 
\end{corollary}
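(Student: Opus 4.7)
The plan is a short contradiction argument that simply combines the definition of $\rho$-binding with the estimate just established in Lemma~\ref{minilem}.

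First I would handle the easy direction. If $\rho(e)=0$, then by the first clause of Definition~\ref{rhobinding} the edge $e$ is automatically $\rho$-binding; no further work is needed, and this direction does not use $e\in E_0$ at all.

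For the nontrivial direction I would argue by contradiction. Suppose $e\in E_0$ is $\rho$-binding but $\rho(e)>0$. Since the first clause of Definition~\ref{rhobinding} fails, the second clause must apply: there exists a path $L\in\mathcal{P}_{AB,C}$ with $e\in L$ and $\rho(L)=n$. But $e\in E_0$, so Lemma~\ref{minilem} applies and gives
\begin{equation}
\rho(L)\geq n+\rho(e)>n,
\end{equation}
contradicting $\rho(L)=n$. Hence $\rho(e)=0$.

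There is no real obstacle here; the corollary is essentially a repackaging of Lemma~\ref{minilem}. The only thing to be careful about is reading the definition of $\rho$-binding correctly (it is an ``either/or'' with the second clause only being activated when $\rho(e)>0$), and noting that the hypothesis $e\in E_0$ is precisely what licenses the use of Lemma~\ref{minilem} in the forward direction.
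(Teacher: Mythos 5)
Your proof is correct and follows essentially the same route as the paper: the easy direction is immediate from the definition, and the forward direction applies Lemma~\ref{minilem} to the saturating path $L$ with $\rho(L)=n$ to force $\rho(e)=0$, a contradiction. Nothing is missing.
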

\begin{proof}
The if statement is obvious from the definition of $\rho$-binding.

Now for the only if statement: Assume the edge $e$ is $\rho$-binding.
Then either $\rho(e)=0$ or $\exists L\in \mathcal{P}_{AB,C}$ containing $e$ such that $\rho(L)=n$.
If $\rho(e)=0$ then we are done.
Assuming that it is the other case, we use \Eqref{Pe} to obtain
\begin{equation}
n - \rho(e) \geq n \,\, \implies \,\, \rho(e) = 0
\end{equation}
which is a contradiction. So we must have $\rho(e)=0$.
\end{proof}

This motivates introducing:
\begin{equation}
\hat{E}_0 = \{ e \in E_0 : \rho(e) = 0 \}
\end{equation}
as the $\rho$-binding edges in $E_0$. Note that $\hat{E}_0 \subset E_0$.

\begin{remark}
One might have expected that for an optimal solution there are simply no edges that are not $\rho$-binding, otherwise one could
get a smaller free energy by making $\rho$ smaller. One cannot directly do this because one has to consider the other constraint for paths $\mathcal{P}_{A,B}$ that also involves $\rho$. We will eventually show this is possible. That is we will prove that $E_0 = \hat{E}_0$, but we cannot do this before we prove some results on the behavior of $\sigma$, our next goal. 
\end{remark}

Define the two $\sigma$ distance measures for paths within $(AB)_0$:
\begin{equation}
\label{hatnohat}
d_\sigma^0(x,y) = \min_{L \in \mathcal{P}_{x,y} : L \subset E_0} \sigma(L), \qquad \hat{d}_\sigma^0(x,y) = \min_{L \in \mathcal{P}_{x,y} : L \subset \hat{E}_0} \sigma(L)
\end{equation}
which can be thought of a distance measures on truncated graphs. We continue to define the distance to be $\infty$ should the set of such paths above be empty.

The next lemma pertains to the region $(AB)_0$ and edges $E_0$ and $\hat{E}_0$:
\begin{lemma}
\label{lem:prop0}
Given any optimal solution $(\rho',\sigma')$, there exists an optimal solution $(\rho,\sigma)$ such that $\rho+\sigma=\rho'+\sigma'$ and it satisfies the following properties that we prove sequentially:
\begin{enumerate}
\item[(a)] For all $e \in \hat{E}_0$ then $e$ is $\sigma$-binding within $\hat{E}_0$.
\item[(b)] 
There exists a function $k : (AB)_0 \rightarrow \mathbb{Z}$ with $0 \leq k(x) \leq 2(n-1)$ such that 
$\sigma(e) = |k(x) - k(y)|$ for all $e = \{x,y\} \in E_0$ and where:
\begin{equation}
\label{tokk}
k(x) = \begin{cases}  \hat{d}_\sigma^0(x,A)\,,  & \hat{d}_\sigma^0(x,A)  < \infty \\
 2(n-1)\,, & \hat{d}_\sigma^0(x,A)  = \infty \end{cases} 
 \end{equation}
and
\begin{equation} 
\label{tokk2}
  k(x) = \begin{cases}  2(n-1) - \hat{d}_\sigma^0(x,B)\,,  & \hat{d}_\sigma^0(x,B)  < \infty  \\
 0\,, & \hat{d}_\sigma^0(x,B) = \infty \end{cases}
\end{equation} 
If $L \in \mathcal{P}_{A,B}$ is non-empty, then for all $e \in E_0\backslash\hat{E}_0$ we have $\rho(e) \in 2+ 2\mathbb{Z}_{\geq 0}$. 
\item[(c)] All edges $e \in E_0$ are $\rho$-binding. In other words  $E_0 = \hat{E}_0$.
This implies that $\hat{d}_\sigma^0$ can be replaced by $d_\sigma^0$ in \Eqref{tokk} and \Eqref{tokk2}.
\end{enumerate}
\end{lemma}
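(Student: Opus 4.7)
The plan is to build the required pair $(\rho,\sigma)$ from the given optimum $(\rho',\sigma')$ by a sequence of local ``swap'' moves that trade value between $\sigma(e)$ and $\rho(e)$ on a single edge, keeping $\sigma(e)+\rho(e)$ (and hence the objective) fixed. Each swap must preserve the two feasibility conditions of Definition~\ref{def:I}: the parity-and-lower-bound constraint on every $L\in \mathcal{P}_{A,B}$, and $\rho(L)\geq n$ on every $L\in \mathcal{P}_{AB,C}$. I establish the three properties in order, each by applying such swaps until the corresponding structural condition holds.

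For (a), suppose $e=\{x,y\}\in \hat{E}_0$ has $\sigma(e)>0$ but fails $\sigma$-binding within $\hat{E}_0$. Since any $L\subset \hat{E}_0$ satisfies $\rho(L)=0$, feasibility forces $\sigma(L)\in 2(n-1)+2\mathbb{Z}_{\geq 0}$, and the non-binding hypothesis sharpens this to $\sigma(L)\geq 2n$ for every such path through $e$. I then set $\sigma(e)\mapsto \sigma(e)-2$ and $\rho(e)\mapsto 2$. The key observation is that any $L\in \mathcal{P}_{A,B}$ with $\rho(L)=0$ must lie entirely in $\hat{E}_0$: by the triangle inequality for $d_\rho$, every vertex of $L$ is at $d_\rho$-distance $0$ from $AB$. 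Hence the slack $\sigma(L)\ge 2n$ is available exactly where needed; paths through $e$ that already carried $\rho(L)>0$ have $\sigma(L)+\rho(L)$ unchanged with the stronger lower bound $2n$ still satisfied; the $AB{:}C$ constraint only improves. A counting potential such as $\sum_{e\in \hat{E}_0}\sigma(e)$ strictly decreases at each swap because $e$ leaves $\hat{E}_0$, guaranteeing termination.

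For (b), once (a) holds I set $k(x)\equiv \min(\hat{d}_\sigma^0(x,A),2(n-1))$. The $\sigma$-binding property produces a tight path $L\subset \hat{E}_0$ with $\sigma(L)=2(n-1)$ through every $\hat{E}_0$-edge of positive $\sigma$, which forces $\hat{d}_\sigma^0(x,A)+\hat{d}_\sigma^0(x,B)=2(n-1)$ whenever both are finite; this delivers the equivalence of the two expressions for $k(x)$ in \Eqref{tokk}--\Eqref{tokk2} and the bound $0\leq k \leq 2(n-1)$. On a single $\hat{E}_0$-edge, the triangle inequality combined with the tight path implies $\sigma(e)=|k(x)-k(y)|$. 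For the remaining edges $e\in E_0\setminus \hat{E}_0$ (positive $\rho(e)$ with both endpoints in $(AB)_0$), both endpoints admit $\rho$-zero detours in $\hat{E}_0$ to $A$ and to $B$, so when $\mathcal{P}_{A,B}$ is nonempty the parity constraint on $A{:}B$ paths closed by these detours forces $\sigma(e)+\rho(e)\equiv |k(x)-k(y)|\pmod 2$. A further round of paired swaps (two units at a time, preserving $\sigma+\rho$) then pins $\sigma(e)=|k(x)-k(y)|$ and pushes $\rho(e)\in 2+2\mathbb{Z}_{\geq 0}$.

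For (c), given an optimum satisfying (a) and (b), suppose $e=\{x,y\}\in E_0\setminus \hat{E}_0$; then $\rho(e)\ge 2$ by (b). I perform the swap $\rho(e)\mapsto \rho(e)-2$, $\sigma(e)\mapsto \sigma(e)+2$. Lemma~\ref{minilem} gives $\rho(L)-\rho(e)\ge n$ for every $AB{:}C$ path through $e$, so the decrement of $2$ leaves the $\rho(L)\ge n$ constraint intact; the $A{:}B$ constraint is preserved because $\sigma+\rho$ is fixed and the gradient/parity structure from (b) is compatible with the modified values. Iterating depletes $E_0\setminus \hat{E}_0$ in finitely many steps, and after termination $\hat{d}_\sigma^0$ and $d_\sigma^0$ coincide on $(AB)_0$. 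The principal obstacle throughout is bookkeeping: $(AB)_0$, $E_0$, and $\hat{E}_0$ all depend on $\rho$ and shift under each swap, so termination must be controlled by a lexicographic potential — e.g.\ $(|E_0\setminus \hat{E}_0|,\ \sum_{e\in \hat{E}_0}\sigma(e))$ — that strictly decreases at every move and cannot resurrect a violation fixed in an earlier stage.
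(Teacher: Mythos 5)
Your high-level framework---build $(\rho,\sigma)$ from $(\rho',\sigma')$ by local moves that preserve $\rho+\sigma$---is the right skeleton for (a) and (b), and it is close in spirit to the paper's construction. But there are genuine gaps, of which (c) is the most serious.

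In (a), the move $\sigma(e)\mapsto\sigma(e)-2$, $\rho(e)\mapsto 2$ is ill-defined when $\sigma(e)=1$, which can occur (only $\sigma(L)$ along a full $A\!\to\!B$ path must be even, not $\sigma(e)$ on a single edge). The paper instead transfers \emph{all} of $\sigma(e)$ to $\rho(e)$ on the entire failing set $E_{\rm fail}$ in one sweep; with that change your feasibility argument goes through, so this is a small fix.

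In (b), your key step asserts that both endpoints of an edge in $E_0\setminus\hat E_0$ ``admit $\rho$-zero detours in $\hat{E}_0$ to $A$ and to $B$.'' That is false in general: a vertex of $(AB)_0$ need only reach \emph{one} of $A$ or $B$ through $\hat E_0$. The paper's proof of (b) splits into three cases (both endpoints reach only $A$, one reaches only $A$ and the other only $B$, all four distances finite) and constructs different auxiliary paths in each case to prove $\rho(e)\in 2+2\mathbb Z_{\geq 0}$. Your ``further round of paired swaps'' is also not justified: you don't verify $\rho(e)>0$ after the modification, don't check feasibility of the new $\rho$ against $\mathcal P_{AB,C}$ paths, and don't re-establish that (a) survives the (b) modifications --- the paper devotes an explicit argument to each of these.

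The major gap is in (c). In the paper, (c) is \emph{not} obtained by a further modification: it is a contradiction argument showing that once $(\rho,\sigma)$ is optimal and satisfies (a) and (b), there can be no edge in $E_0\setminus\hat E_0$ at all. The crucial observation is that for any such $e_\star$ one may set $\rho(e_\star)\to 0$ while keeping $\sigma(e_\star)$ fixed---this \emph{strictly lowers the objective}---and the feasibility of the resulting configuration is then verified using (a), (b), and Lemma~\ref{minilem}. Optimality then forbids $e_\star$. Your swap $\rho(e)\mapsto\rho(e)-2$, $\sigma(e)\mapsto\sigma(e)+2$ does not capture this; it moves $e$ into $\hat E_0$ with a large $\sigma(e)$ that may well fail to be $\sigma$-binding, reviving a violation of (a), whose repair puts $\rho(e)>0$ back, increasing $|E_0\setminus\hat E_0|$ again. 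Your lexicographic potential $(|E_0\setminus\hat E_0|,\ \sum_{e\in\hat E_0}\sigma(e))$ therefore does not monotonically decrease under the combined loop: fixing (c) increases the second component, and the subsequent (a) repair increases the first. As written, the argument does not terminate, and no alternative potential is supplied. The cleanest escape is the paper's: do not try to modify the solution in step (c) at all, but show that the configuration you already have cannot have any $E_0\setminus\hat E_0$ edges without contradicting its optimality.
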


\begin{proof}
(a) Consider some optimal $(\rho',\sigma')$. We use the notation $(AB)'_0$ and $E_0'$ for the region as in \Eqref{ab0}, defined with respect to $d_{\rho'}$, and $\hat{E}_0'\subset E_0'$ as the set of edges inside $(AB)'_0$ with $\rho'(e) =0$. 
Define $E_{\rm fail}$ as the edges $e \in \hat{E}_0'$ which are not $\sigma'$-binding
within $\hat{E}_0'$.
Then set 
\begin{equation}
(\rho(e), \sigma(e) ) = \begin{cases} \big( \rho'(e) + \sigma'(e) , 0 \big) \,, & e \in E_{\rm fail} \\
(\rho'(e),\sigma'(e))\,, & {\rm otherwise} \end{cases}
\end{equation}
Note that $(AB)_0 \subset (AB)'_0$
since $\rho(e) \geq \rho'(e)$. Also $E_0 \subset E_0'$ and $\hat{E}_0 \subset \hat{E}_0'$ for the same reason. And $\hat{E}_0 \cap E_{\rm fail} = \emptyset$ since $\rho(e)=(\rho'+\sigma')(e)>0 ~\forall e\in E_{\rm fail}$. 

We aim to show that $(\rho,\sigma)$
is (i) feasible, (ii) optimal and (iii) satisfies the statement under investigation in (a). We start with (iii): Consider any edge $e \in \hat{E}_0$ with $\sigma(e) \neq 0$. Note that $\sigma(e) = \sigma'(e)$ since only edges in $E_{\rm fail}$ are changed.
Since this edge is $\sigma'$-binding within $\hat{E}_0'$, there is a path $L \in \mathcal{P}_{A,B}$ intersecting $e$ such that $L \subset \hat{E}_0'$ and $\sigma'(L) = 2(n-1)$. 
We just need to show that $L \subset \hat{E}_0$ so that $e$ is $\sigma$-binding inside $\hat{E}_0$.
Note that $L \cap E_{\rm fail} = \emptyset$ since this would otherwise contradict the definition of $E_{\rm fail}$ ($L$ being a saturating path).
Thus $\rho(L) = \rho'(L)$ since these edges are not changed, and $\rho'(L) =0$ since $L \subset \hat{E}_0'$.
This implies that $\rho(e) =0$ for all $e \in L$  implying that $L \subset \hat{E}_0$, and we are done. For later use, we note that we just proved:
\begin{equation}
\label{toconverse}
\forall L \in \mathcal{P}_{A,B} : L \subset \hat{E}_0' \,\, {\rm then} \,\, L \cap E_{\rm fail} = \emptyset \implies \rho(L) = 0
\end{equation}
 (i) Since $\rho \geq \rho'$ any $L \in \mathcal{P}_{AB,C}$ is clearly still feasible
for $\rho$.  Also since $(\sigma' + \rho')(e)
= (\sigma + \rho)(e)$ for all edges, we need only check paths $L \in \mathcal{P}_{A,B}$ such that $\delta_{\rho'(L),0}=1$ and $\delta_{\rho(L),0} = 0$. 
The condition $\rho'(L) = 0$ implies that $L \subset \hat{E}_0'$,  but then the converse of \Eqref{toconverse} implies that $\rho(L) \neq 0  \implies L \cap E_{\rm fail} \neq \emptyset$ and thus $\sigma'(L) > 2(n-1)$ by the failure of binding.
Recall that the failure of saturating for the $\mathcal{P}_{A,B}$ paths costs $+2$ in the definition of the integer program, so actually $\sigma'(L) \geq 2n$. Thus $(\sigma + \rho)(L) = \sigma'(L) \geq 2n = 2(n- \delta_{\rho(L),0})$ as required. 
(ii) Optimality is obvious since $(\rho+\sigma)(e)=(\rho'+\sigma')(e)$. 

(b) We start with an optimal $(\rho',\sigma')$ satisfying (a). Note that any point $x \in (AB)_0'$ has at least one path to either $A$ or $B$ contained in $\hat{E}_0'$ by the definition of these regions. Assume that that $\hat{d}_{\sigma'}^0(x,A) < \infty$, or in other words, assume there is some path from $x$ to $A$ inside $\hat{E}_0'$, and define:
\begin{equation}
k(x) = \hat{d}_{\sigma'}^0(x,A)
\end{equation}
where we recall the definition of this hatted distances uses paths inside $\hat{E}_0'$. 
We now aim to find a consistent set of equations \Eqref{tokk} and \Eqref{tokk2} (with $\sigma$ replaced by $\sigma'$ for now).

We note that if $\hat{d}_{\sigma'}^0(x,B) = \infty$ then we must have $k(x) = 0$ or otherwise we will violate (a) with some non $\sigma'$-binding edge along the path from $x$ to $A$. Furthermore if  $\hat{d}_{\sigma'}^0(x,B) < \infty$ then consider the edge $e$ with $\sigma'(e) \neq 0$ along the piecewise minimal path $L: A \rightarrow x \rightarrow B$ that is closest to $x$ (either towards $A$ or towards $B$) 
\begin{equation}
2(n-1) \leq \sigma'(L) = \hat{d}_{\sigma'}^0(A,x) + \hat{d}_{\sigma'}^0(x,B) \leq \sigma'(L') = 2(n-1)
\end{equation}
where $L'$ is a saturating path for $e$ that exists by (a). The first inequality is feasibility and the second comes from deforming the minimal paths in the distance
functions to the path $L'$ (see \figref{fig:sigmadeform}).
\begin{figure}[ht]
    \centering
    \includegraphics[scale=.35]{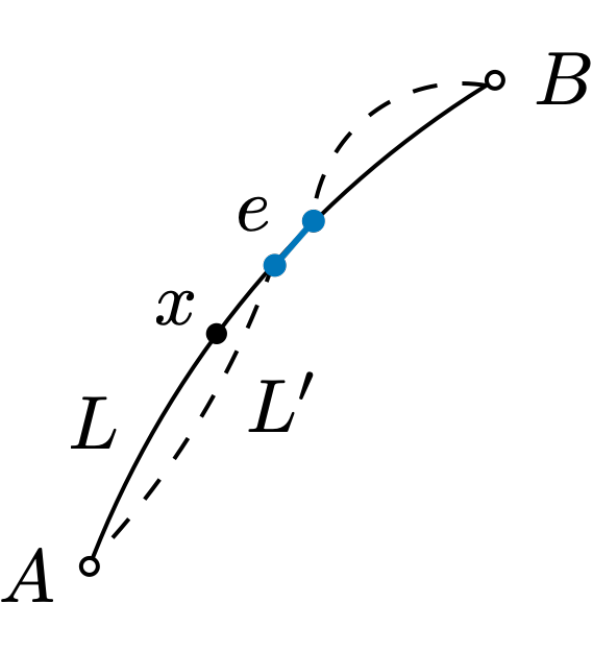}
    \caption{We deform a piecewise minimal path $L$ (solid line) containing the edge $e$ with $\sigma'(e)\ne 0$ and the vertex $x$ to a saturating path $L'$ (dashed line) containing $e$, which is guaranteed to exist by the conditions in (a).}
    \label{fig:sigmadeform}
\end{figure}
Thus:
\begin{equation}
\label{kx2n}
k(x) = 2(n-1) -  \hat{d}_{\sigma'}^0(x,B) 
\end{equation}
The only case we have not covered for \Eqref{tokk} and \Eqref{tokk2} is when $\hat{d}_{\sigma'}^0(x,A)  = \infty$. In this case we set $k(x) = 2(n-1)$ and this is consistent with \Eqref{kx2n}
since $\hat{d}_{\sigma'}^0(x,B)  =0 $ is again necessary in order to not violate (a). 

For edges $e  = \{x, y\} \in \hat{E}_0'$ we now aim to compute:
\begin{equation}
|k(x) - k(y)|
\end{equation}
There are three cases. Either (i) both $\{x, y\}$ have infinite distance to $B$, or (ii) both $\{x, y\}$ have infinite distance to $A$ or (iii) all distances are finite.
In the first two cases we have $k(x) = k(y)$. We also have $\sigma'(e) = 0$ for these cases due to (a). 

In the last case we can estimate using the triangle inequality:
\begin{equation}
\label{triin} 
| k(x) - k(y) | \leq \sigma'(e)
\end{equation}
If $\sigma'(e) =0$ then $k(x) = k(y)$, however if $\sigma'(e) > 0$, then (a) implies the existence of a saturating path $L$ such that:
\begin{equation}
2(n-1) = \sigma'(L) \geq \hat{d}_{\sigma'}^0(A,x)  + \sigma'(e) +  \hat{d}_{\sigma'}^0(y,B) 
\end{equation}
where $e = \{x,y\}$ and the saturating path behaves as $L : A \rightarrow x \rightarrow y \rightarrow B$. Thus:
\begin{equation}
 \sigma'(e)  \leq k(y) - k(x) \leq | k(y) - k(x) |
\end{equation}
which combining with \Eqref{triin} proves equality. We have now established equality $ \sigma'(e) = |k(x) - k(y)|$ for all three cases above (i-iii). 

We now consider edges $e =\{x,y\} \in E_0'$ that are not in $\hat{E}'_0$. We construct a new $(\rho,\sigma)$ that differs from the original ones on these edges:
\begin{equation}
( \rho(e) , \sigma(e) ) = \begin{cases} \big(\rho'(e) + \sigma'(e) - |k(x) - k(y)|,   |k(x) - k(y)| \big), & e \in E_0'\backslash \hat{E}_0' \\ (\rho'(e),\sigma'(e) ),  & {\rm otherwise} \end{cases}
\end{equation}
We need to show that $(\rho,\sigma)$ is (i) feasible (ii) optimal and (iii) satisfies the requirements in both (a) and (b). 

For (i) we first establish that $\rho(e) >0$ for edges $e =\{x,y\} \in E'_0\backslash \hat{E}_0'$. There are several cases to deal with depending
on whether the $\hat{d}_{\sigma'}$ distances are finite or not. See \figref{fig:prop0-b}.
\begin{figure}[ht]
    \centering
    \includegraphics[scale=.35]{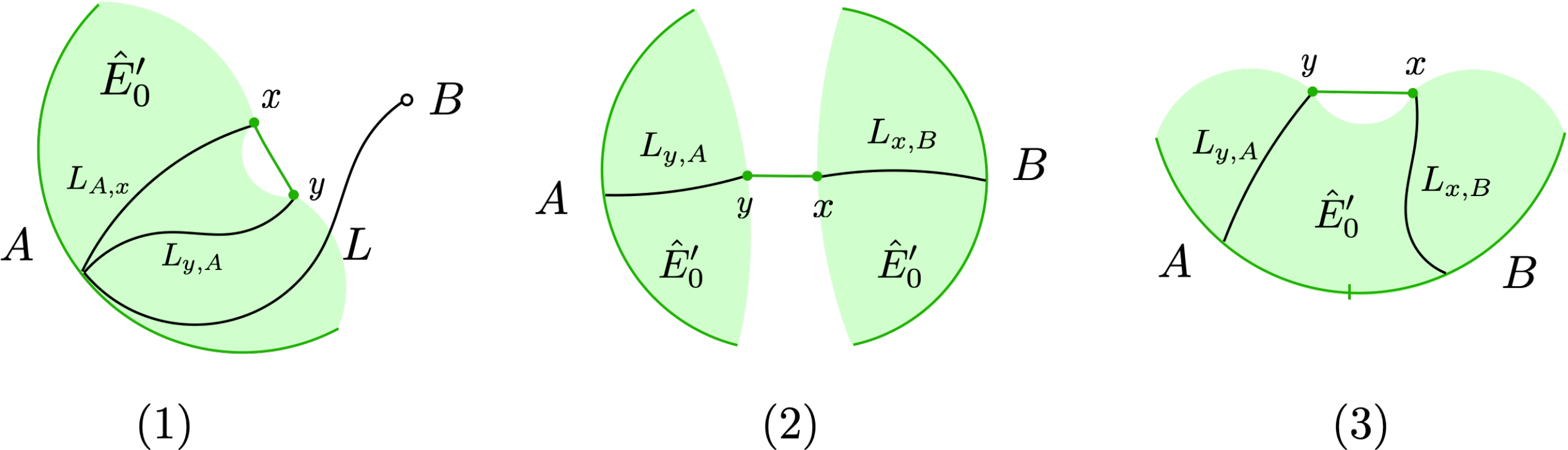}
    \caption{We construct a path from $A\to B$ containing $e=\{x,y\}$ using minimal paths within $\hat{E}_0'$ and use it to prove $\rho(e)\in 2+2\mathbb{Z}_{\ge>0}$. There are three different cases depending on whether the $\hat{d}_{\sigma'}$ distances of $x,y$ to the boundary vertices $A,B$ are finite or not. The three cases are treated differently as shown here.}
    \label{fig:prop0-b}
\end{figure}

\begin{enumerate}
    \item 
    If $x,y$ have infinite $\hat{d}_{\sigma'}^0$-distance both to $A$ or both to $B$ then
$|k(x) - k(y) | =0$ and $\rho(e) > 0$ since $\rho'(e) > 0$ on such an edge.
We will need to improve this bound for later use when establishing the statement in (b): Suppose that the $\hat{d}^0_{\sigma'}$ distances to $B$ is infinite. If there is any path $L \in \mathcal{P}_{A,B}$ then we can combine the minimal paths $L_{A,x} \subset \hat{E}_0'$ and $L_{y,A} \subset \hat{E}_0'$, that are used
to compute the respective distances $ \hat{d}_{\sigma'} (A,x) =  \hat{d}_{\sigma'} (A,y) =0$,
to form a path $L_{A,x} \cup e \cup L_{y,A} \cup L \in \mathcal{P}_{A,B}$ giving the estimate:
\begin{equation}
(\rho' + \sigma')( L_{A,x} \cup e \cup L_{y,B} \cup L) = \rho'(e) + \sigma'(e) + (\rho'+ \sigma')(L) \in 2\mathbb{Z}_{\ge0}
\end{equation}
since $\rho'$ and $\sigma'$ vanish on these minimal paths. Feasibility gives the even condition. But $(\rho' + \sigma')(L) \in 2\mathbb{Z}_{\ge0}$ also by feasibility, thus $\rho(e) \in 2 + 2 \mathbb{Z}_{\geq 0}$.
The other case where $\hat{d}^0_{\sigma'}(x,A)=\hat{d}^0_{\sigma'}(y,A)=\infty$ follows in a similar way.

\item
If $x$ has infinite $\hat{d}_{\sigma'}^0$-distance to $A$ and $y$ has infinite $\hat{d}_{\sigma'}^0$-distance to $B$ then we can construct a new path $L : A \rightarrow y \rightarrow x \rightarrow B$ using the minimal paths from $A \to y$ and the minimal
path from $x \rightarrow B$ both contained in $\hat{E}_0'$. Since these two minimal $\hat{d}_{\sigma'}^0$-distances vanishes for this new path we must have:
\begin{equation}
2n + 2 \mathbb{Z}_{\geq 0} \ni \sigma'(L) + \rho'(L) = \sigma'(e) + \rho'(e)
\end{equation}
where we used feasibility of $(\rho',\sigma')$ with the fact that $\rho'(e) > 0$. In this case we have $| k(x) - k(y) | = 2(n-1)$, implying that $\rho(e) \in 2 +2 \mathbb{Z}_{\geq 0}$. 

\item
If both $x,y$ have finite  $\hat{d}_{\sigma'}^0$-distances to $A$ and $B$ then, picking $k(y) \leq k(x)$, we can construct a path $L : A \rightarrow y \rightarrow x \rightarrow B$, as above, and feasibility now implies that:
\begin{align}
2n + 2 \mathbb{Z}_{\geq 0} \leq  \sigma'(L) + \rho'(L) &= \sigma'(e) + \rho'(e) + k(y) + 2(n-1) - k(x)  \\ & = \sigma'(e) + \rho'(e) + 2(n-1) - |k(x) - k(y)| 
\end{align}
or $\rho(e) \in 2 +2 \mathbb{Z}_{\geq 0}$. 
\end{enumerate}
We have covered all possibilities and established that $\rho(e) >0$ for such edges, and
furthermore we have established the final condition in (b). A corollary is that $(AB)_0 = (AB)_0'$ (since this only depends on the pattern of zeros in $\rho(e)$) and thus $E_0 = E_0'$ and $\hat{E}_0 = \hat{E}'_0$.

We now prove the rest of feasibility (i). Assume by contradiction that there is a path $L \in \mathcal{P}_{AB,C}$ such that $\rho(L) < n$, then this path must pass through
at least one edge $e$ with $e \in E_0'\backslash \hat{E}'_0$ because $\rho'$ was feasible so $L$ must go through one of the edges where $\rho$ is changed.
Consider the first such edge $e$ on the journey from $C \rightarrow AB$.
 It is possible to use this path to construct a new path $L'$ that uses $L$ from $C \rightarrow e$ and then follows $ e \rightarrow AB$ through $\hat{E}_0 = \hat{E}_0'$ (since both vertices in $e$ are in $(AB)_0'$ this is always possible). Thus this path new path $L'$ only intersects one edge with $e \in E'_0\backslash\hat{E}_0'$.  Then:
\begin{equation}
 \rho(L') \leq \rho(L) < n
\end{equation}
is still not-feasible for $\rho$. We use Lemma~\ref{minilem} which states that for such an edge $e$ and feasible $\rho'$:
\begin{equation}
\label{rhopp}
\rho(L') > \rho'(L') - \rho'(e) \geq n
\end{equation}
which is a contradiction. The first inequality in \Eqref{rhopp} follows from $\rho(e) > 0$, and the fact that this is the only edge that is changed relative to $\rho'$ along the path.
For paths $L \in \mathcal{P}_{A,B}$ all paths that pass through a deformed edge 
we maintain $\delta_{\rho(L),0} = \delta_{\rho'(L),0} = 0$ and $\sigma + \rho = \sigma' + \rho'$ so feasibility for these paths is clear. 
Optimality (ii) is also clear. 

Now we show (iii) that the new $(\rho,\sigma)$ satisfies conditions (a) and (b) of the Lemma.  (a) is clear since $\hat{E}_0 = \hat{E}_0'$ and we have not changed any $\sigma(e)$
inside $\hat{E}_0$. For (b) we firstly note that $\sigma(e) = |k(x) - k(y)|$ for all $e = \{x , y\} \in E_0$ by construction. Also $\hat{d}_{\sigma}^0(x,y) = \hat{d}_{\sigma'}^0(x,y)$
since $\hat{E}_0' = \hat{E}_0$ and $\sigma' = \sigma$ on these edges. We are done with (b).

(c) We consider an optimal $(\rho,\sigma)$ satisfying the properties in (a) and (b). We work by contradiction. That is we assume there is at least one edge $e \in E'_0\backslash\hat{E}'_0$ and prove a contradiction. Consider any one of these edges and call it $e_\star$. Construct a new solution:
\begin{equation}
( \rho'(e) , \sigma'(e) ) = \begin{cases} \big(0,   \sigma(e)\big) \,, & e = e_\star \\ (\rho(e),\sigma(e) ) \,, & {\rm otherwise} \end{cases}
\end{equation}
This clearly has a smaller objective, so if we can show that $(\rho',\sigma')$ is feasible we prove a contradiction since $(\rho,\sigma)$ is optimal. 
For paths $L \in \mathcal{P}_{AB,C}$ that intersect $e_\star$ -- without loss of generality we may consider paths that intersect $e_\star$ only once.
We then use Lemma~\ref{minilem} which becomes $\rho'(L) \geq n$ as required. 

Now consider paths $L \in \mathcal{P}_{A,B}$ with $e_\star \in L$. If there are no such paths, then there is nothing more to prove. If there is at least one such path we know from (b) that
$\rho(e_\star) \in 2 \mathbb{Z}_+$ and this implies that the even condition in \Eqref{eveness} is preserved for $(\rho',\sigma')$. Thus we now consider only the inequality implied in \Eqref{eveness}.
We set $e_\star =\{x, y\}$ and pick these vertices so that $L : A \rightarrow x \rightarrow y \rightarrow B$. 
Define the subpaths as $L_{A,x}$ and $L_{y,B}$ so that $L = L_{A,x} \cup e_\star \cup L_{y,B}$ .
We must consider four cases depending on whether the distances $\hat{d}_\sigma^0(x,B)$ or $\hat{d}_\sigma^0(y,A)$ are finite:

\begin{figure}[ht]
    \centering
    \includegraphics[scale=.35]{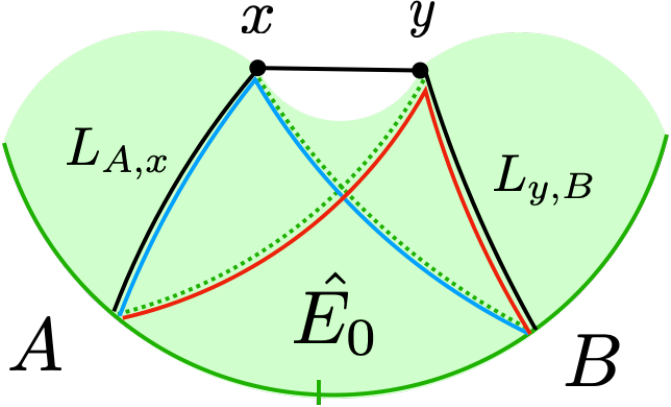}
    \caption{Starting from the path $L=L_{A,x}\cup e_{\star} \cup L_{y,B} \in \mathcal{P}_{A,B}$ (black solid) we construct two new paths in $\mathcal{P}_{A,B}$ (red and blue solid) using minimal paths  (green dashed) that are contained entirely in $\hat{E}_0$.}
    \label{fig:estardeform}
\end{figure}

\begin{enumerate}
\item If $\hat{d}_\sigma^0(x,B) < \infty$ and $\hat{d}_\sigma^0(y,A) < \infty$ then construct two new paths that by-pass $e_\star$ as follows. Consider the minimal paths for these finite distances: from $A \rightarrow y$ and $x \rightarrow B$. Join these, respectively, to $L_{y,B}$ and $L_{A,x}$ (see \figref{fig:estardeform}). These minimal paths are contained in $\hat{E}_0$ (so $\rho$ vanishes on them). We apply feasibility for $(\rho,\sigma)$ to these two new paths:
\begin{align}
\hat{d}_\sigma^0(A,y) + (\rho + \sigma)(L_{y,B})  & \geq 2(n- \delta_{\rho(L_{y,B}),0}) \\
 (\rho + \sigma)(L_{A,x})  + \hat{d}_\sigma^0(x,B) & \geq 2(n- \delta_{\rho(L_{A,x}),0})
\end{align}
adding these two bounds together and using the distance function $k(x)$ we have:
\begin{equation}
(\rho' + \sigma)(L) - \sigma(e_\star) + k(y) - k(x) \geq 2(n + 1- \delta_{\rho(L_{y,B}),0} - \delta_{\rho(L_{A,x}),0})
\end{equation}
then $\sigma(e_\star) = | k(y) - k(x) | \geq k(y) - k(x)$ (by condition (b)) and:
\begin{equation}
1-\delta_{\rho(L_{y,B}),0} - \delta_{\rho(L_{A,x}),0}
\geq - \delta_{\rho(L_{y,B}),0}  \delta_{\rho(L_{A,x}),0} = - \delta_{\rho'(L),0}
\end{equation}
combines to give:
\begin{equation}
\label{newfeas}
(\rho' + \sigma)(L) \geq 2(n - \delta_{\rho'(L),0})
\end{equation}
the required feasibility statement. 

\item If $\hat{d}_\sigma^0(x,B) = \infty$ and $\hat{d}_\sigma^0(y,A) = \infty$ then the form of the function $k(x)$ from (b) requires that
$k(x) = 0$ and $k(y) = 2(n-1)$. This implies $\sigma(e_\star) = 2(n-1)$.
Then we note the bound:
\begin{align}
(\rho + \sigma)(L) &\geq (\rho + \sigma)(e_\star)  + 2 (1- \delta_{\rho(L_{y,B}),0}  \delta_{\rho(L_{A,x}),0}) \\
&= \rho(e_\star) + 2 (n- \delta_{\rho'(L),0})
\end{align}
which follows by dropping all contributions from the path $L_{y,B}$ and $L_{A,x}$ except for a crude estimate counting a minimal contribution if either $\rho(L_{y,B})$ 
or $\rho(L_{A,x})$ is non-zero. The bound that we get from this minimal contribution must be even, since $(\rho + \sigma)(e_\star)$ is even, and $(\rho + \sigma)(L) $ is even by feasibility. So any gap between them must be even. So again we get \Eqref{newfeas}. 

\item If $\hat{d}_\sigma^0(x,B) < \infty$ and $\hat{d}_\sigma^0(y,A) = \infty$ (the reverse case follows a similar argument) then we must have $k(y) = 2(n-1)$.
We also must have  $\hat{d}_\sigma^0(x,A) = \infty$ since otherwise we could construct a path $A \rightarrow x \rightarrow B \rightarrow y$ inside $\hat{E}_0$ and this
would violate the condition $\hat{d}_\sigma^0(y,A) = \infty$. So $k(x) = 2(n-1)$ and hence $\hat{d}_\sigma^0(x,B)  = 0$ and $\sigma(e_\star) = 0$. We estimate:
\begin{equation}
\label{onebound}
(\rho + \sigma)(L) \geq (\rho + \sigma)(L_{A,x}) + \rho(e_\star) + 2 ( 1- \delta_{\rho(L_{y,B}),0}) 
\end{equation}
where we again crudely dropped all contributions from $L_{y,B}$ except if $\rho(L_{y,B})$ is non-zero. Evenness also demands the gap in the bound is $2$. 
We also consider a combined path that bypasses $e_\star$ using $L_{A,x}$ and the minimal $\hat{d}_\sigma^0$-distance path from $y$ to $B$ inside $E_0$
and apply feasibility:
\begin{equation}
\label{twobound}
(\rho + \sigma)( L_{A,x}) + \hat{d}_\sigma^0(x,B) \geq 2(n - \delta_{\rho(L_{A,x}),0})
\end{equation}
Combinding \Eqref{onebound} and \Eqref{twobound} gives:
\begin{equation}
(\rho' + \sigma)(L) \geq 2(n + 1- \delta_{\rho(L_{y,B}),0} - \delta_{\rho(L_{A,x}),0}) \geq  2 (n- \delta_{\rho'(L),0})
\end{equation}
as required.
\end{enumerate} 
We have now established feasibility for all possible cases and so we find the desired contradiction. We conclude that there are no such edges and
$E_0 = \hat{E_0}$. 
\end{proof}

We now move on to prove Lemma~\ref{lem:subG}.
We use the optimal $(\rho,\sigma)$ constructed in Lemma~\ref{lem:prop0}.
We set $V' = (AB)_0$ and $E' = E_0$. The above result establishes that $\rho$ vanishes on $E'$ where only $\sigma$ is non-zero.
We now use this  as an input to the following half integer program that lives on the complementary reduced graph defined as:
\begin{equation}
G^c = (V^c, E^c) \, \qquad V^c = (V \backslash V') \cup (AB)' \qquad E^c = E \backslash E'
\end{equation}
with $(AB)' = V \cap \mu_G(V')$.  
\begin{lemma}
\label{lem:plugin}
Given an optimal $(\rho,\sigma)$ satisfying the properties in Lemma~\ref{lem:prop0}, then
\begin{equation}
\label{plugin}
\varrho(e) = (\rho(e)+ \sigma(e))|_{E^c} - \mathbf{1}_{\mu(V')}(e)
\end{equation}
 is feasible for the following half integer program on the graph $G^c = (V^c, E^c)$:
\begin{align}
\label{tohere}
M \equiv  &\min_{\varrho}   \sum_{e \in  E^c}  w(e) \varrho(e) \\ 
&{\rm subject \,\, to}\, \quad \forall L \in \mathcal{P}_{\Gamma_k ,\Gamma_{k'}}\,\, : \varrho(L) \in  |k-k'|  + \mathbb{Z}_{\geq 0} \\ 
&{\rm and \,\, }\, \qquad\forall L \in \mathcal{P}_{\Gamma_k,C} \, \, : \varrho(L) \in (n-1) + \mathbb{Z}_{\geq 0} \\
&{\rm for\,\,all}\quad k,k'=0,\ldots,2(n-1)
\end{align}
where $\varrho(e) \in \mathbb{Z}_{\geq 0}/2$ and $\Gamma_k = \{ x \in (AB)': k(x) = k \} $ and $C$, are boundary vertices on the reduced graph. Thus:
\begin{equation}
\label{bdbd}
I \geq M + w(\mu(V'))  + \sum_{e =\{x,y\} \in E'} w(e) |k(x) - k(y)|
\end{equation}
\end{lemma}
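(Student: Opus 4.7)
The plan is to verify directly that $\varrho$ defined by \Eqref{plugin} is a non-negative half-integer feasible for the $\ell$-intersecting cut program with $\ell=2(n-1)$, and then to rearrange the objective of $I$ to extract \Eqref{bdbd}. The easy first step is non-negativity and integrality: for any $e=\{x,y\}\in\mu(V')$ with $x\in V'$ and $y\notin V'$, $\rho(e)=0$ would combine with $d_\rho(x,AB)=0$ to force $d_\rho(y,AB)=0$ and hence $y\in(AB)_0=V'$, a contradiction; so $\rho(e)\ge 1$ on $\mu(V')$ and $\varrho(e)\ge(\rho+\sigma)(e)-1\ge 0$. On $E^c\setminus\mu(V')$ the indicator vanishes and $\varrho(e)=(\rho+\sigma)(e)\ge 0$. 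Integrality of $\rho$, $\sigma$, $\mathbf{1}_{\mu(V')}$ puts $\varrho$ in $\mathbb{Z}_{\ge 0}\subset\mathbb{Z}_{\ge 0}/2$.

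The heart of the proof is checking the two path families. Given $L\in\mathcal{P}_{\Gamma_k,\Gamma_{k'}}$ contained in $E^c$, I would split $L$ at its (possibly multiple) visits to $V'$ into successive closed segments $L_0,\ldots,L_s$, each running from some $v_{i_j}\in(AB)'$ to $v_{i_{j+1}}\in(AB)'$ with no interior $V'$ vertex; set $k_j:=k(v_{i_j})$, $k_0=k$, $k_{s+1}=k'$. Because every edge of $E^c$ has at least one endpoint outside $V'$, each $L_j$ traverses \emph{exactly} two edges of $\mu(V')$---the exit and the re-entry---so $\rho(L_j)\ge 2$. I then close $L_j$ into $\mathcal{P}_{A,B}$ by prepending a minimal $\hat{d}_\sigma^0$-path from $A$ (or $B$) to $v_{i_j}$ and appending one from $v_{i_{j+1}}$ to $B$ (or $A$); by Lemma~\ref{lem:prop0}(b,c) these extensions live in $\hat{E}_0$, carry $\rho=0$, and contribute $\sigma$-lengths $k_j$ and $2(n-1)-k_{j+1}$ respectively (or their complements when the other orientation is used). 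Applying $\mathcal{P}_{A,B}$ feasibility of $(\rho,\sigma)$ to both orientations---$\delta_{\rho(\cdot),0}=0$ is automatic because $\rho(L_j)\ge 2$---yields
\[
(\rho+\sigma)(L_j)\ge 2 + |k_j - k_{j+1}|.
\]
Subtracting the two $\mu(V')$ crossings gives $\varrho(L_j)\ge |k_j-k_{j+1}|$, and summing with the triangle inequality delivers $\varrho(L)\ge|k-k'|$.

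For $L\in\mathcal{P}_{\Gamma_k,C}$ the same decomposition yields closed segments $L_0,\ldots,L_s$ together with a final tail $L_{s+1}$ running from the last $V'$-visit $v_{i_s}$ to $C$. The tail crosses $\mu(V')$ \emph{exactly once}. Extending it to an element of $\mathcal{P}_{AB,C}$ by prepending a $\rho=0$ path from $A$ or $B$ to $v_{i_s}$ (whichever is available, again by Lemma~\ref{lem:prop0}(b,c)) and applying $\mathcal{P}_{AB,C}$-feasibility gives $\rho(L_{s+1})\ge n$, so $\varrho(L_{s+1})\ge n-1$; combined with $\varrho(L_j)\ge 0$ for $j\le s$ this yields $\varrho(L)\ge n-1$. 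Integrality of each quantity promotes the two inequalities to the required $|k-k'|+\mathbb{Z}_{\ge 0}$ and $(n-1)+\mathbb{Z}_{\ge 0}$ memberships.

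The bound \Eqref{bdbd} then follows by splitting $I$ over $E=E'\sqcup E^c$: on $E'$ we have $\rho=0$ and $\sigma(e)=|k(x)-k(y)|$ by Lemma~\ref{lem:prop0}, and on $E^c$ we have $(\rho+\sigma)(e)=\varrho(e)+\mathbf{1}_{\mu(V')}(e)$ by \Eqref{plugin}. The indicator terms sum to $w(\mu(V'))$, and feasibility of $\varrho$ gives $\sum_{e\in E^c}w(e)\varrho(e)\ge M$. The main obstacle is the bookkeeping in the second paragraph: one must verify that each closed segment crosses $\mu(V')$ exactly twice (so the $-\mathbf{1}_{\mu(V')}$ subtraction in $\varrho$ matches the ``$+2$'' cost of the $\mathcal{P}_{A,B}$ constraint), and that $\rho\ge 1$ on every $\mu(V')$ edge (so $\delta_{\rho(L_j),0}=0$ always holds, keeping the $\mathcal{P}_{A,B}$ bound at $2n$ rather than $2(n-1)$); handling the boundary case where one of $\hat{d}_\sigma^0(v_{i_j},A)$ or $\hat{d}_\sigma^0(v_{i_j},B)$ is infinite (so only one orientation of the closure is available) is the most delicate part, but it is precisely covered by the alternative characterizations of $k(v)$ in Lemma~\ref{lem:prop0}(b).
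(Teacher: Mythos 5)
Your proof is correct and follows essentially the same route as the paper's: decompose a general path at its visits to $(AB)'$ into segments crossing $\mu(V')$ minimally, close each segment into a $\mathcal{P}_{A,B}$ (respectively $\mathcal{P}_{AB,C}$) path by attaching zero-$\rho$, minimal-$\hat{d}^{0}_\sigma$ paths inside $E'$, apply feasibility of $(\rho,\sigma)$, subtract the $\mu(V')$ crossings, and sum via the triangle inequality, then split the objective of $I$ over $E'\sqcup E^c$. Your explicit checks that $\rho\geq 1$ on every $\mu(V')$ edge (giving $\varrho\geq 0$ and forcing $\delta_{\rho(L),0}=0$) and that the infinite-distance cases are resolved by the two characterizations of $k(x)$ in Lemma~\ref{lem:prop0}(b) are left implicit in the paper (which instead takes $k'<k$ without loss of generality), and they are exactly the right way to close those gaps.
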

\begin{proof}
We firstly check feasibility for paths $\mathcal{P}_{\Gamma_k, \Gamma_{k'}}$. We start by picking a subset of
such paths (possibly empty) $\widetilde{\mathcal{P}}_{\Gamma_k, \Gamma_{k'}}$ for all $k,k'$, with the extra condition that
the path only intersects the boundary edges $\mu( V')$ twice (this need not always be the case for all paths, even ones that are edge disjoint). 

Let $k' < k$ (the case $k' = k$ is trivial) and consider $\widetilde{L} \in \widetilde{\mathcal{P}}_{\Gamma_k, \Gamma_{k'}}$ .  We can construct a path in $\mathcal{P}_{A,B}$ by attaching minimal curves for the distance $d_\sigma^0$ through $E'$.
We consider such curves from $A \rightarrow \Gamma_{k'}$ and $\Gamma_k \rightarrow B$. 
We apply feasability to the combination:
\begin{equation}
(\sigma + \rho)(\widetilde{L}) + (k' -k) + 2(n-1) \geq 2 n
\end{equation}
Using $\mathbf{1}_{\mu(V')}(\widetilde{L}) = 2$ for such curves gives
\begin{equation}
\label{widetilde}
\varrho(\widetilde{L}) \geq |k- k'|
\end{equation}
Now any curve $L \in \mathcal{P}_{\Gamma_k, \Gamma_{k'}}$ can be constructed as a sequence of these restricted $\widetilde{L}$ curves from $\Gamma_k \rightarrow \Gamma_{k_1} 
\rightarrow    \ldots \Gamma_{k_N} \rightarrow \Gamma_{k'}$ for arbitrary $k_i : i =1 \ldots N$. Thus:
\begin{equation}
\varrho(L) \geq |k -k_1| + | k_1 - k_2 | + \ldots | k_N - k'| \geq | k - k'|
\end{equation}
by the triangle inequality and \Eqref{widetilde}. That concludes feasability for $\mathcal{P}_{\Gamma_k, \Gamma_{k'}}$.

For a paths $\mathcal{P}_{ \Gamma_k, C}$ we again have to deal with possible multiple intersections with $\mu(V')$. 
We again restrict to $\widetilde{\mathcal{P}}_{ \Gamma_k, C}$ such that these paths intersect $\mu(V')$ only once. 
Consider $\widetilde{L} \in \widetilde{\mathcal{P}}_{ \Gamma_k, C}$ and attach a minimal curve to $AB$ in the $E'$ graph.
Then:
\begin{equation}
\varrho(\widetilde{L}) = \rho(\widetilde{L}) + \sigma( \widetilde{L}) -1 \geq \rho(\widetilde{L}) -1 \geq n -1
\end{equation}
where in the first inequality we simply dropped the $\sigma$ contribution and in the second we used feasibility for the combined curve $ \in \mathcal{P}_{AB,C}$ (the minimal curve part sits in the region with $\rho = 0$.) Again any curve in the more general set $L \in \mathcal{P}_{ \Gamma_k, C}$ can always be written as a combination of curves  $\Gamma_k \rightarrow \Gamma_{k_1} \rightarrow    \ldots \Gamma_{k_N} \rightarrow C$. Thus we have:
\begin{equation}
\varrho(L) \geq |k-k_1| + \ldots + | k_{N-1} - k_N| + \,n-1 \geq n-1
\end{equation}
as required. We conclude that $\varrho$ is feasible for this intersecting cut problem. The estimate \Eqref{bdbd} now follows by plugging in \Eqref{plugin} to \Eqref{tohere} and finally using the form of $\sigma$ implied by Lemma~\ref{lem:prop0} on the rest of the edges $E'$.
\end{proof}
\begin{remark}
The half integer relaxation for this program is convenient in the next section. The above feasible $\varrho$ is integer valued, but the optimal solution might be half integral. 
However this does not bother us, since at this stage we only strive for an inequality. Also once the chain of Theorem~\ref{thm:RPBI} collapses the optimal solution will be integral. 
\end{remark}

We have now proven Lemma~\ref{lem:subG}, as can be seen by picking and choosing results from Lemma~\ref{lem:prop0} and Lemma~\ref{lem:plugin}. 

\subsection{Lemma~\ref{lem:optk}: An intersecting cut problem}
 \label{sec:intcut}
 \label{app:optk}
 
 We now study the $\ell$-intersecting cut problem defined in Definition~\ref{def:kint}.
 \begin{figure}[h]
 \centering
 \includegraphics[scale=.4]{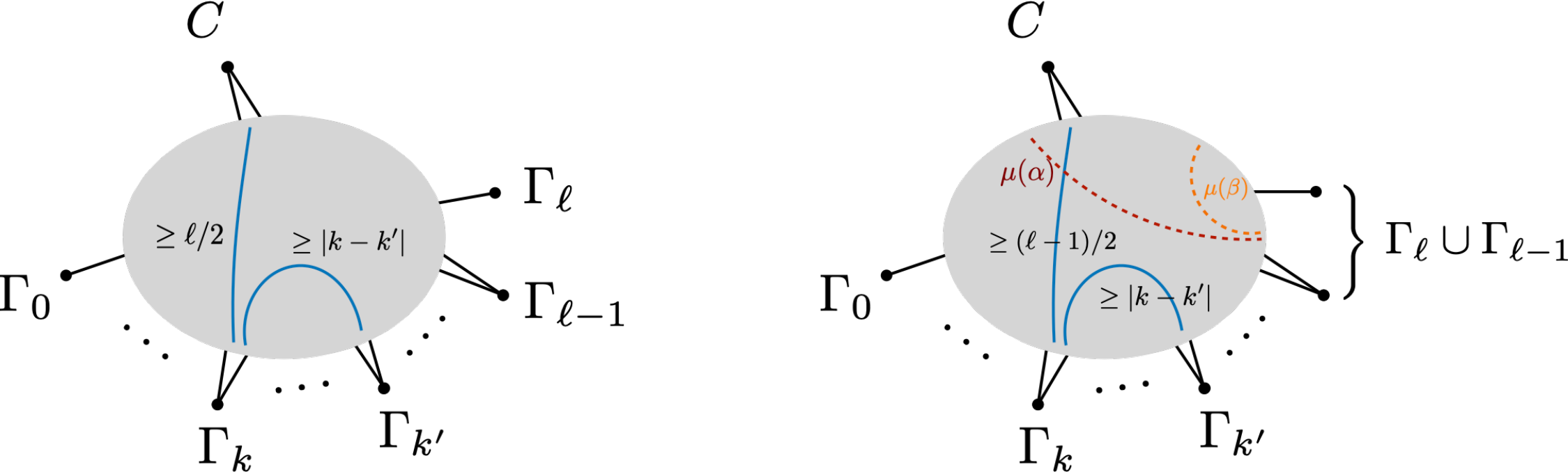} 
 \caption{(left) A sketch of a $\ell$-intersection problem defined in Definition~\ref{def:kint}. 
 (right) The reduction of the $\ell$-intersection cut problem to the $\ell-1$-intersection cut problem.}
 \label{fig:int-programs}
 \end{figure}
 We will solve this problem by a recursive reduction of the $\ell$-intersection cut problem to the $\ell-1$-intersection cut problem, as described below in Lemma~\ref{kkm1}.
 We will use a feasible solution to the $\ell$-intersection cut problem to construct a feasible solution to the $\ell-1$-intersection cut problem. See \figref{fig:int-programs}.
\begin{lemma}
\label{kkm1}
Given an feasible $\varrho$ for the $\ell$-intersecting cut problem with $\{ \Gamma_0, \Gamma_1, \ldots, \Gamma_{\ell} \}$ then
there exists a cut, $\alpha$, for $(\Gamma_0 \cup \Gamma_1 \cup \ldots \Gamma_{\ell-1}) : (\Gamma_\ell \cup C )$
and a cut, $\beta$, for
$\Gamma_\ell : (\Gamma_0 \cup \Gamma_1 \cup \ldots \Gamma_{\ell-1}\cup C)$ where $\beta$ is disjoint to $\alpha$ such that:
\begin{itemize}
\item  for $\ell > 1$ there is a feasible $\varrho'$ for the $\ell-1$-intersecting
cut problem for $\{ \Gamma_0, \Gamma_1, \ldots, \Gamma_{\ell-1} \cup \Gamma_\ell \}$  with:
\begin{equation}\label{dM}
M(\varrho) = M(\varrho') + \frac{1}{2} \left( w(  \mu(\alpha) ) +    w(  \mu(\beta) ) \right) 
\end{equation}
\item for $\ell=1$ we simply have the bound:
\begin{equation}
\label{Mbd}
M(\varrho) \geq \frac{1}{2} \left( w(  \mu(\alpha) ) +    w(  \mu(\beta) ) \right) 
\end{equation}
\end{itemize}

\end{lemma}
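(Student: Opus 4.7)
The plan is to construct the cuts $\alpha$ and $\beta$ explicitly from the feasible $\varrho$ using graph-distance potentials, set $\varrho' = \varrho - \frac{1}{2}(\mathbf{1}_{\mu(\alpha)} + \mathbf{1}_{\mu(\beta)})$ so that the equation $M(\varrho) = M(\varrho') + \frac{1}{2}(w(\mu(\alpha)) + w(\mu(\beta)))$ holds tautologically, and then verify that $\alpha$, $\beta$, and $\varrho'$ have all the claimed properties.

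First I would introduce the potentials $\phi_L(v) = d_\varrho(v, \Gamma_0 \cup \cdots \cup \Gamma_{\ell-1})$, $\phi_R(v) = d_\varrho(v, \Gamma_\ell)$, and $\psi(v) = \phi_L(v) - \phi_R(v)$, and declare $\alpha = \{v : \psi(v) \leq -1\}$, $\beta = \{v : \psi(v) \geq 1\}$ (refining with the distance to $C$, e.g.\ intersecting with $\{\phi_C(v) \geq 1\}$, to keep $C$ out of either set when $\ell$ is small). The cut boundary conditions $\cup_{k\le\ell-1}\Gamma_k \subset \alpha$, $\Gamma_\ell \subset \beta$, and $\Gamma_\ell \cup C \subset \alpha^c$, $\cup_{k\le\ell-1}\Gamma_k \cup C \subset \beta^c$ follow directly from the feasibility bounds $\phi_R(\Gamma_k) \geq \ell-k \geq 1$ and $\phi_L(\Gamma_\ell) \geq 1$; disjointness of $\alpha$ and $\beta$ is automatic.

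Next I would establish the edgewise bound $\varrho'(e) \geq 0$. Since $\psi$ is $2\varrho$-Lipschitz along edges and takes half-integer values, any $e \in \mu(\alpha)\setminus\mu(\beta)$ (or $\mu(\beta)\setminus\mu(\alpha)$) forces $\varrho(e) \geq 1/2$ by a jump of size at least $1/2$ in $\psi$, and any $e \in \mu(\alpha) \cap \mu(\beta)$ forces $\varrho(e) \geq 1$ by a jump of size at least $2$. A short case analysis then yields $\varrho'(e) \geq 0$.

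The main obstacle is proving feasibility of $\varrho'$ for the $(\ell-1)$-intersecting cut problem on the merged boundary $\{\Gamma_0, \ldots, \Gamma_{\ell-2}, \Gamma_{\ell-1} \cup \Gamma_\ell\}$. For any path $L$ in the reduced problem I would decompose $\varrho(L) = \varrho_B(L) + \varrho_I(L)$ into a boundary contribution from $L \cap (\mu(\alpha) \cup \mu(\beta))$ and an interior contribution; the edgewise analysis above already gives $\varrho_B(L) \geq \tfrac{1}{2}(\mathbf{1}_{\mu(\alpha)}(L) + \mathbf{1}_{\mu(\beta)}(L))$. It remains to show that $\varrho_I(L)$ exceeds the new required lower bound, which amounts to accounting for excursions of $L$ across the cut surfaces: each additional in-out-in excursion contributes two boundary crossings of total $\varrho$-cost at least $1$, which is exactly the amount subtracted in $\varrho'$, so the parity and magnitude of the constraints in the reduced problem match. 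This excursion accounting, combined with the original feasibility applied to suitable virtual paths that route through $\Gamma_\ell$ or short-cut within $\alpha$, is the delicate technical step. Finally, the base case $\ell = 1$ needs only the edgewise inequality $\varrho'(e) \geq 0$ summed against $w(e)$, which immediately yields $M(\varrho) \geq \frac{1}{2}(w(\mu(\alpha)) + w(\mu(\beta)))$.
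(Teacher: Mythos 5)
Your overall skeleton matches the paper's (define $\varrho'$ by subtracting $\tfrac12(\mathbf{1}_{\mu(\alpha)}+\mathbf{1}_{\mu(\beta)})$ so that \eqref{dM} is tautological, check $\varrho'\ge 0$ edgewise, and get \eqref{Mbd} for $\ell=1$ from edgewise nonnegativity), but there is a genuine gap: the feasibility of $\varrho'$ for the $(\ell-1)$-intersecting cut problem, which is the entire content of the lemma, is left as a sketch ("excursion accounting \ldots is the delicate technical step") and is not actually carried out. The counting problem you identify is real: a path that makes $p$ extra excursions across $\mu(\alpha)$ or $\mu(\beta)$ loses $p$ units in passing from $\varrho$ to $\varrho'$, while the original feasibility constraints give no extra slack, so you must prove that every such excursion forces at least one additional unit of $\varrho$-length. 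In the paper this is where essentially all the work goes: $\beta$ is taken to be the zero-distance set of $d_\varrho(\cdot,\Gamma_\ell)$, one first proves that $\tilde\varrho=\varrho-\tfrac12\mathbf{1}_{\mu(\beta)}$ is feasible for an intermediate half-integer program, and only then is $\alpha^c$ defined from $d_{\tilde\varrho}$ as the union of the zero-distance sets of $\Gamma_\ell$ and $C$ with the geodesic set $d_{\tilde\varrho}(x,\Gamma_\ell)+d_{\tilde\varrho}(x,C)=(\ell-1)/2$. That specific structure is what makes the subsequent induction on the number of crossings work: every entry point into $\alpha^c$ admits a zero-cost (or exactly controlled) detour to $\Gamma_\ell$ or $C$, or lies on a geodesic that can be spliced in, and strict inequalities are upgraded to full extra units using the half-integer/evenness gaps in the constraints. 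Your $\alpha=\{\phi_L-\phi_R\le-1\}$ has no analogue of these splicing properties, so it is not clear the surgery can be performed at all with your sets, and you have not shown it.

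There are also smaller unresolved points in your construction. Keeping $C$ out of $\alpha$ and $\beta$ is not a small-$\ell$ corner case: $\psi(C)$ can have either sign for any $\ell$ (e.g.\ $C$ at distance exactly $\ell/2$ from the lower boundaries but much farther from $\Gamma_\ell$ puts $C\in\alpha$), and your proposed fix of intersecting with $\{d_\varrho(\cdot,C)\ge1\}$ both changes which edges lie in $\mu(\alpha)$, so the Lipschitz-jump bound $\varrho(e)\ge\tfrac12$ must be re-argued on the new boundary edges, and for $\ell=1$ can strip vertices of $\Gamma_\ell$ out of $\beta$ (since $d_\varrho(\Gamma_\ell,C)\ge\tfrac12$ only), breaking the cut property. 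The paper avoids all of this by the sequential two-stage construction; if you want to keep your one-shot potential definition you would need to supply the excursion argument from scratch, and as it stands the proposal does not establish the lemma.
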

Before we present a proof, we use the above result to prove Lemma~\ref{lem:optk}.
\begin{proof}[Proof of Lemma~\ref{lem:optk}]
Starting from an optimal solution $\varrho$ for the $\ell$-intersecting cut problem $M(\varrho)$, we apply Lemma~\ref{kkm1} repeatedly to arrive at
\begin{equation}
    M(\varrho) \ge \frac{1}{2}\sum^{\ell-1}_{k=0} \left( w(\mu(\alpha_k))+w(\mu(\beta_k)) \right)
\end{equation}
where $\alpha_k$ is a cut for $(\Gamma_0\cup\ldots\Gamma_k):(\Gamma_{k+1}\cup\ldots\Gamma_\ell\cup C)$ and $\beta_k$ is a cut for $(\Gamma_{k+1}\cap\ldots\Gamma_\ell):(\Gamma_0\cup\ldots\Gamma_k\cup C)$.
Minimizing over the all such cuts $\alpha_k$ and $\beta_k$ then gives
\begin{equation}
    M(\varrho) \ge \frac{1}{2}\sum^{\ell-1}_{k=0} \left( w(\mu(\alpha'_k))+w(\mu(\beta'_k)) \right)
\end{equation}
where $\alpha'_k$ and $\beta'_k$ are the minimal cuts.
We now use them to construct a new $\rho'$, defined by
\begin{equation}
    \varrho'(e) = \frac{1}{2}\sum_{k=0}^{\ell-1}(\mathbf{1}_{\mu(\alpha'_k)}+\mathbf{1}_{\mu(\beta'_k)})(e)
\end{equation}
It is clear that $\varrho'$ is feasible from the topology of the cuts (see \figref{fig:rhofeasible}) so $M(\varrho)\ge M(\varrho')$. 
\begin{figure}[h]
    \centering
    \includegraphics[scale=.35]{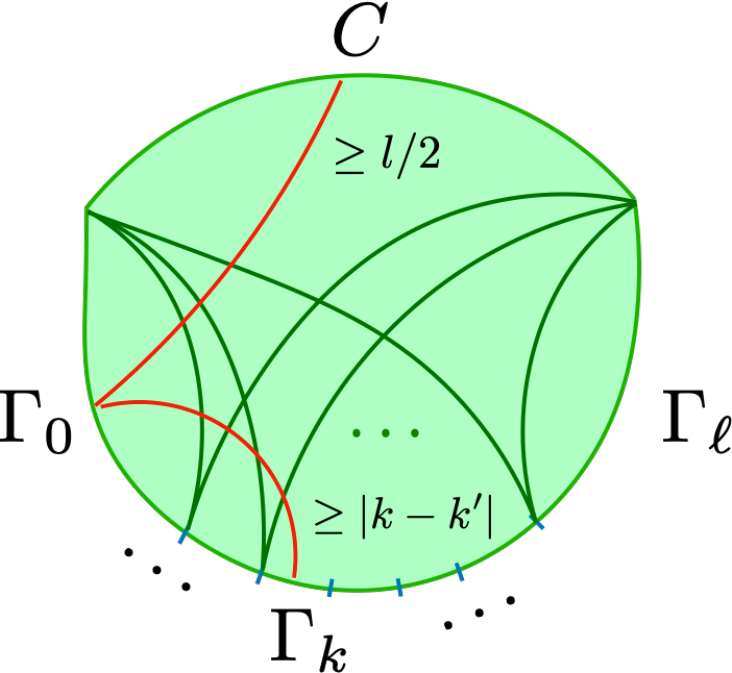}
    \caption{An example configuration to the solution of $M(\varrho')$. The green solid lines represent minimal cuts $\alpha_k$ and $\beta_k$. Consider a path in $L \in \mathcal{P}_{\Gamma_k,C}$ (depicted as red line). It is clear that it must cross at least $\ell$ minimal surfaces in order to reach $C$ so $w(L)\ge \ell/2$. Similarly, for a path in $L'\in\mathcal{P}_{\Gamma_k,\Gamma_{k'}}$ it must cross at least $2|k-k'|$ minimal surfaces so $w(L')\ge |k-k'|$.}
    \label{fig:rhofeasible}
\end{figure}
Also, 
\begin{equation}
 M(\varrho')=\frac{1}{2}\sum_{e\in E}w(e)\sum_{k=0}^{\ell-1}(\mathbf{1}_{\mu(\alpha'_k)}+\mathbf{1}_{\mu(\beta'_k)})(e) = \frac{1}{2}\sum^{\ell-1}_{k=0} \left( w(\mu(\alpha'_k))+w(\mu(\beta'_k)) \right)
\end{equation}
so $M(\varrho')\ge M(\varrho)$ since $\alpha'_k$ and $\beta'_k$ are minimal cuts.
Thus we have inequalities in both ways and it must be that
\begin{equation}
    M=M(\varrho) = M(\varrho') = \frac{1}{2}\sum^{\ell-1}_{k=0} \left( w(\mu(\alpha'_k))+w(\mu(\beta'_k)) \right)
\end{equation}
\end{proof}

\begin{proof}
We first consider the region $\beta$. Define this as:
\begin{align}
 \beta &= \{ x \in V : d_\varrho(x, \Gamma_\ell) = 0 \}
\end{align}
It is clear that this satisfies the cut properties stated in the Lemma. 
It is also clear that $\tilde{\varrho}(e) := \varrho(e) - (1/2) \mathbf{1}_{\mu(\beta)}(e) \geq 0$. We show that $\tilde{\varrho}$ is feasible for the following $1/2$ integer
program:
\begin{alignat}{3}
\label{tildeM}
\tilde{M} &\equiv  \min_{\tilde{\varrho}} \tilde{M}(\tilde{\varrho}), \qquad \tilde{M}(\tilde{\varrho}) =   \sum_{e \in  E}  w(e) \tilde{\varrho}(e) \span\span\span\span  \\
&{\rm subject \,\, to} \qquad &&\forall L \in \mathcal{P}_{\Gamma_k,\Gamma_{k'}}: && \tilde{\varrho}(L) \in  |k-k'| + \mathbb{Z}_{\geq 0} \\ 
&{\rm and \,\, } \qquad && \forall L \in \mathcal{P}_{\Gamma_k,C}: && \tilde{\varrho}(L) \in \ell/2 + \mathbb{Z}_{\geq 0} \\
&{\rm subject \,\, to} \qquad &&\forall L \in \mathcal{P}_{\Gamma_k ,\Gamma_{\ell}}: && \tilde{\varrho}(L) \in  (\ell-k-1/2) + \mathbb{Z}_{\geq 0} \\ 
\label{seegap}
&{\rm and \,\, } \qquad &&\forall L \in \mathcal{P}_{\Gamma_\ell,C}: && \tilde{\varrho}(L) \in (\ell-1)/2 + \mathbb{Z}_{\geq 0}\\
&\text{for all} \quad k,k'=0,\cdots,\ell-1.\span\span\span\span \nonumber
\end{alignat}
This program is defined for $\ell\geq 1$. The last constraint is trivial if $\ell=1$. We sketch this program in \figref{fig:k-half-program}.
 \begin{figure}[h]
 \centering
 \includegraphics[scale=.4]{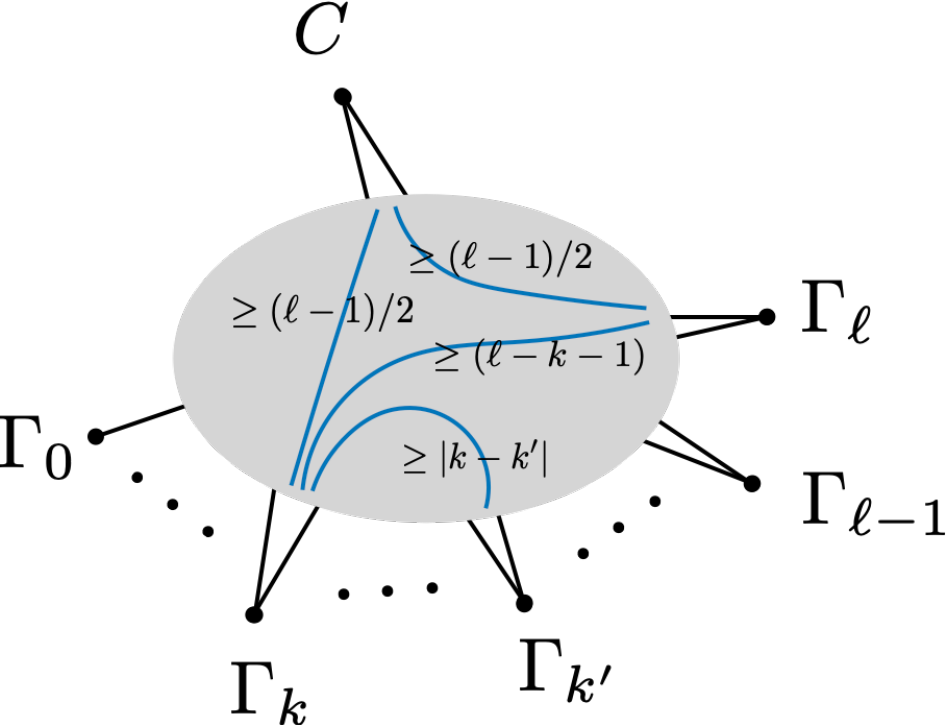}
 \caption{The intermediate half integer program described in \Eqref{tildeM}.}
 \label{fig:k-half-program}
 \end{figure}

Feasability is clear for paths that intersect $\mu(\beta)$ the minimal number of times, that is the subset of paths defined below:
\begin{alignat}{4}
\text{(I)}& \qquad &&\widetilde{\mathcal{P}}_{\Gamma_k, \Gamma_{k'}} &&= \{ L \in \mathcal{P}_{\Gamma_k, \Gamma_{k'}} : \mathbf{1}_{\mu(\beta)}(L)= 0\} &&\geq |k-k'| \\
\text{(II)}& \qquad &&\widetilde{\mathcal{P}}_{C, \Gamma_{k}}  &&= \{ L \in \mathcal{P}_{C, \Gamma_{k}} :\mathbf{1}_{\mu(\beta)}(L) = 0\} &&\geq \ell/2\\
\label{bdIII}
\text{(III)}& \qquad &&\widetilde{\mathcal{P}}_{\Gamma_k, \Gamma_{\ell}}  &&= \{ L\in \mathcal{P}_{\Gamma_k, \Gamma_{\ell}} : \mathbf{1}_{\mu(\beta)}(L) = 1\}  &&\geq (\ell-k-1/2) \\
\label{bdIV}
\text{(IV)}& \qquad &&\widetilde{\mathcal{P}}_{C, \Gamma_{\ell}} &&= \{ L \in \mathcal{P}_{C, \Gamma_{\ell}} : \mathbf{1}_{\mu(\beta)}(L)  = 1 \}  &&\geq (\ell-1)/2,
\end{alignat}
for all $k,k'=0,\cdots,\ell-1$. We have listed the constraints for the $\varrho'$ problem on the right and on the left we have given labels to the various cases of paths.
It is also clear we maintain the integer condition for $\tilde{\varrho}$ in \Eqref{tildeM} due to the topology of the paths, so we need only consider the inequalities below.
Any other path not in this class can be decomposed using these paths. There are four different cases to consider (see \figref{fig:midtypes}):

\begin{figure}
    \centering
    \includegraphics[scale=.35]{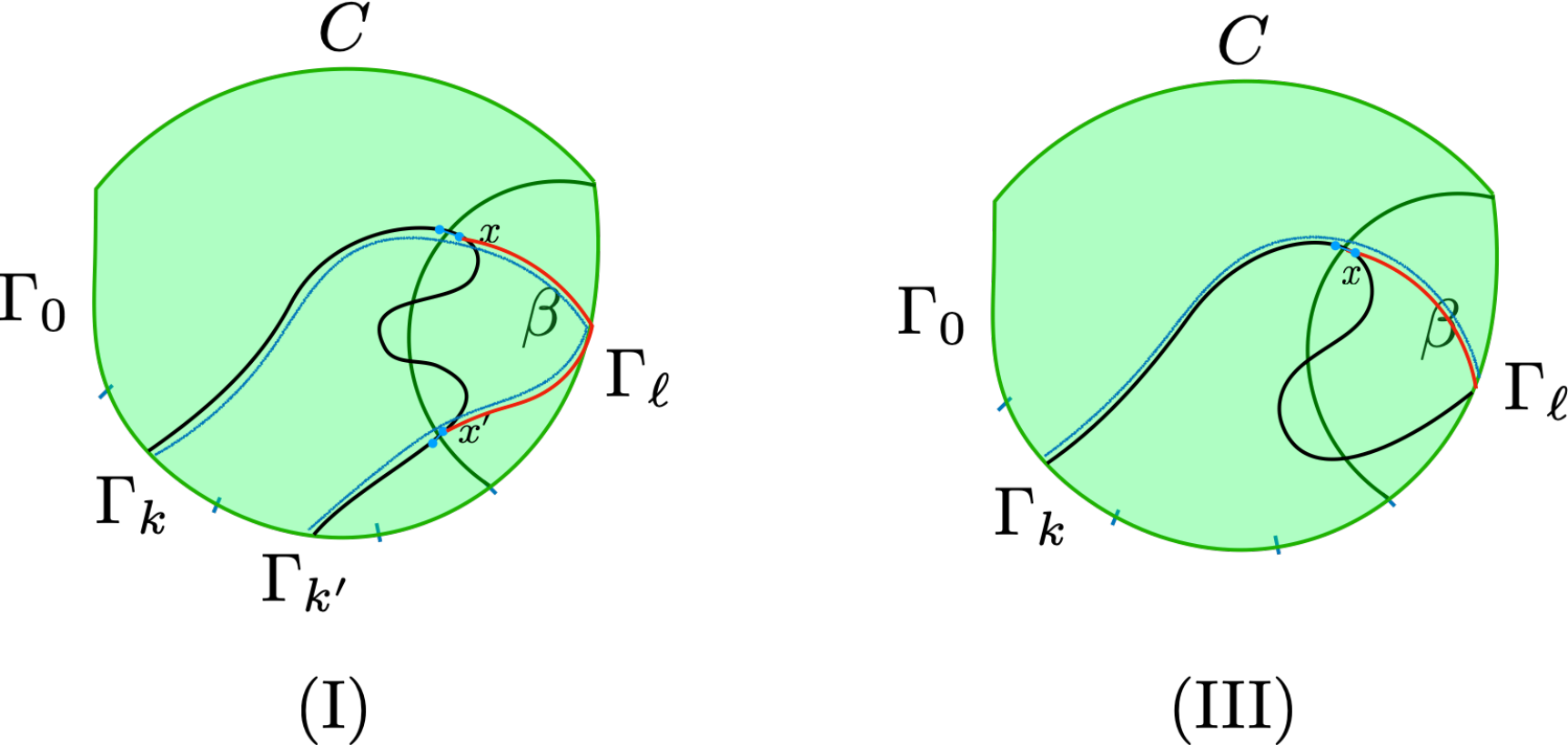}
    \caption{We perform surgery on a path $L$ (black solid) at point $x$ and $x'$ using minimal paths (red solid) and reduce $L$ to the union of paths (blue solid) and use it to bound $\tilde{\varrho}(L)$. The procedure prescribed for type-I and type-III paths are depicted here. The procedure for type-II and type-IV paths are similar.}
    \label{fig:midtypes}
\end{figure}

\begin{itemize}
    \item[(I)]
    If $L \in \mathcal{P}_{\Gamma_k, \Gamma_{k'}}$ with $ 0 \leq k < k' \leq \ell-1$ we can bound these paths via:
    \begin{align}
    \label{ineqPP}
    \tilde{\varrho}(L) + \tilde{\varrho}(L_{x, \Gamma_\ell}) + \tilde{\varrho}(L_{x', \Gamma_\ell}) &\geq \tilde{\varrho}(\tilde{L}_{\Gamma_\ell, \Gamma_k}) 
    +  \tilde{\varrho}(\tilde{L}_{\Gamma_\ell, \Gamma_{k'}}) \\ &\geq  (\ell-k-1/2) +  (\ell-k'-1/2)  \geq (k' -k)+1
    \nonumber
    \end{align}
    where $x$ and $x'$ are the first and last points inside $\beta$ where the path $L$ enters and leaves. We have sewn on paths $L_{x, \Gamma_\ell}$ and $L_{x', \Gamma_\ell}$
    to these points, where we can pick these paths as the ones minimizing the distance $d_{\varrho}(x,\Gamma_\ell) = 0$ and $d_{\varrho}(x',\Gamma_\ell) =0$. 
    In particular for these paths $\tilde{\varrho}(P_{x, \Gamma_\ell}) = \varrho(P_{x, \Gamma_\ell}) = 0$ and similarly for $x'$.  The first inequality in \Eqref{ineqPP} drops the mid portion
    of the curve and applies the bound \Eqref{bdIII} for a curve that intersects $\mu(\beta)$ once, that is $\tilde{L}_{\Gamma_\ell, \Gamma_{k}}$ and $\tilde{L}_{\Gamma_\ell, \Gamma_{k'}}$.

    \item[(II)]
    If $L \in \mathcal{P}_{C, \Gamma_{k}}$ for $0 \leq k \leq \ell-1$ we can, in a similar manner as above, split this into two and show:
    \begin{equation}
    \tilde{\varrho}(L) 
    \geq  (\ell-k-1/2) +  (\ell-1)/2  \geq \ell/2
    \end{equation}
    where we applied \Eqref{bdIII} and \Eqref{bdIV}.
    
    \item[(III)]
    If $L \in \mathcal{P}_{\Gamma_k, \Gamma_{\ell}}$ for $0 \leq k \leq \ell-1$, then we simply drop the portion of the path after the first intersection with $x \in \beta$ along the path $\Gamma_k \rightarrow x \rightarrow \Gamma_\ell$. Adding the curve $\tilde{\varrho}(L_{x, \Gamma_\ell}) =0$ gives the estimate:
    \begin{equation}
    \tilde{\varrho}(L) 
    \geq  (\ell-k-1/2)
    \end{equation}
    where we applied \Eqref{bdIII}.
    
    \item[(IV)]
    If $L \in \widetilde{\mathcal{P}}_{C, \Gamma_{\ell}}$ we do the same and drop the portion of the path after the first intersection to give:
    \begin{equation}
    \tilde{\varrho}(L) 
    \geq  (\ell-1)/2
    \end{equation}
    where we again applied \Eqref{bdIV}. 
\end{itemize}
This completes the proof that $\tilde{\varrho}$ is feasible for \Eqref{tildeM}.  

We now introduce the region:
\begin{equation}
\label{threesets}
\alpha^c = \{ x : d_{\tilde{\varrho}} (x, \Gamma_\ell)  +  d_{\tilde{\varrho}} (x,C)  = (\ell-1)/2\} \cup \{ x :  d_{\tilde{\varrho}}(x,\Gamma_\ell) = 0 \} \cup\{ x: d_{\tilde{\varrho}}(x,C) = 0 \}
\end{equation}
We check that it satisfies the cut properties. It is clear that $C\cup \Gamma_\ell \subset \alpha^c$ by definition. 
Assume that $\Gamma_k \in \alpha^c$ for some $ 0 \leq k \leq \ell-1$.  Thus either:
\begin{equation}
 (\ell-1)/2 =  d_{\tilde{\varrho}} (\Gamma_k, \Gamma_\ell)  +  d_{\tilde{\varrho}} (\Gamma_k,C)   \geq (\ell-k-1/2) + \ell/2 \geq (\ell+1)/2
\end{equation}
which is not possible. Or $ d_{\tilde{\varrho}} (\Gamma_k, \Gamma_\ell) \geq (\ell-k-1/2)$ which is not possible or $ d_{\tilde{\varrho}} (\Gamma_k, C) \geq \ell/2$ which is also not possible. 
Thus we have a contradiction and $\Gamma_k \in \alpha$. This establishes the cut properties stated. 

We now define:
\begin{equation}
\varrho'(e) =  \tilde{\varrho}(e) - \frac{1}{2} \mathbf{1}_{\mu(\alpha)}(e)
\end{equation}
We aim to show that $\varrho'(e) \geq 0$. Consider an edge $ e=\{x,y\}\in\mu(\alpha)$ with $x \in \alpha^c$ and $ y \in \alpha$. The triangle inequality to $C$ states that:
\begin{equation}
\tilde{\varrho}(e) = d_{ \tilde{\varrho}}(x,y) \geq (d_{ \tilde{\varrho}}(y,C) - d_{ \tilde{\varrho}}(x,C))
\end{equation}
So if $d_{ \tilde{\varrho}}(x,C)  = 0$ then $\tilde{\varrho}(e) \geq d_{ \tilde{\varrho}}(y,C) > 0$ since $y \in \alpha$ and so must have this strictly greater than $0$.
Similarly for $\Gamma_\ell$:
\begin{equation}
\tilde{\varrho}(e) \geq d_{ \tilde{\varrho}}(x,y) = (d_{ \tilde{\varrho}}(y,\Gamma_\ell) - d_{ \tilde{\varrho}}(x,\Gamma_\ell))
\end{equation}
So $d_{ \tilde{\varrho}}(x,\Gamma_\ell)  = 0$ then $\tilde{\varrho}(e) > 0$. Finally if $ d_{\tilde{\varrho}} (x, \Gamma_\ell)  +  d_{\tilde{\varrho}} (x,C)  = (\ell-1)/2$ we add the two inequality above to show that:
\begin{equation}
2\tilde{\varrho}(e) \geq d_{ \tilde{\varrho}}(y,C) + d_{ \tilde{\varrho}}(y,\Gamma_\ell) > 0 
\end{equation}
Thus in all case we have $\tilde{\varrho}(e) \geq 1/2$, by the integrality gap.  Indeed $\varrho'(e) \geq 0$. 

We thus have: 
\begin{equation}
M(\varrho) = M(\varrho') + \frac{1}{2} \left( w(  \mu(\alpha) ) +    w(  \mu(\beta) ) \right) 
\end{equation} 
as required. For $\ell =1$ we simply bound $M(\varrho') \geq 0$ and we are done. For $\ell >1$ we need to check feasibility of $\varrho'$ for the $\ell-1$-intersecting cut program.
Paths that cross $\mu(\alpha)$ a minimal number of times are clearly feasible:
\begin{alignat}{4}
\label{adI}
\text{(I)}& \qquad &&\widetilde{\mathcal{P}}_{C, \Gamma_{\ell}} &&= \{ L \in \mathcal{P}_{C, \Gamma_{\ell}} : \mathbf{1}_{\mu(\alpha)}(L)  = 0 \} && \geq (\ell-1)/2 \\
\text{(II)}& \qquad &&\widetilde{\mathcal{P}}_{C, \Gamma_{k}} &&= \{ L \in \mathcal{P}_{C, \Gamma_{k}} :\mathbf{1}_{\mu(\alpha)}(L) = 1\}  && \geq (\ell-1)/2\\
\label{adIII}
\text{(III)}& \qquad &&\widetilde{\mathcal{P}}_{\Gamma_k, \Gamma_{\ell}} &&= \{ L \in \mathcal{P}_{\Gamma_k, \Gamma_{\ell}} : \mathbf{1}_{\mu(\alpha)}(L) = 1\} && \geq (\ell-k-1) \\
\text{(IV)}& \qquad &&\widetilde{\mathcal{P}}_{\Gamma_k, \Gamma_{k'}} &&= \{ L \in \mathcal{P}_{\Gamma_k, \Gamma_{k'}} : \mathbf{1}_{\mu(\alpha)}(L)= 0\} && \geq |k-k'|
\end{alignat}
for all $k,k'=0,\cdots,\ell-1$, where we have listed the constraints for the $\varrho'$ problem on the right and on the left we have given labels to the various cases of paths.

We now prove a basic result that will seed the rest of our discussion. We consider a path $L \in \mathcal{P}_{C,\Gamma_\ell}$ but now with $\mathbf{1}_{\mu(\alpha)}(L) \geq 2$, see \figref{fig:twointersections}.
\begin{figure}[h]
\centering
\includegraphics[scale=.35]{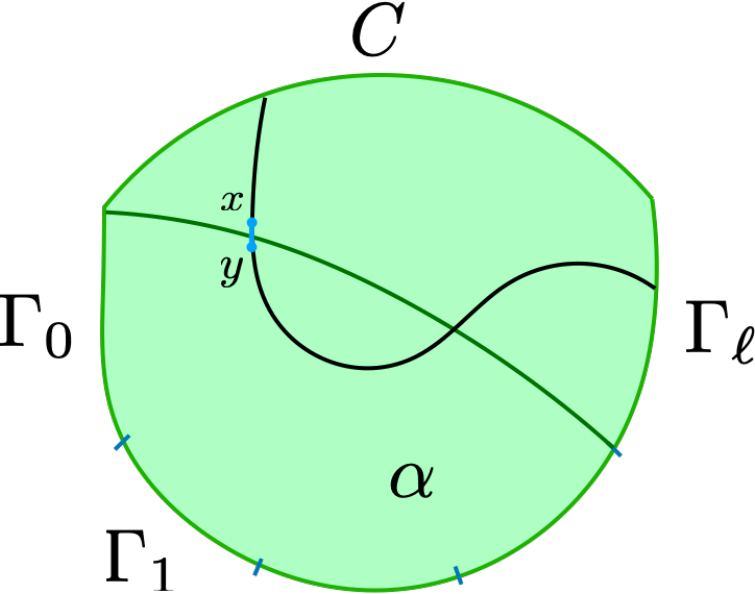}
\caption{We consider a path from $C$ to $\Gamma_\ell$ that intersect the cut surface $\mu(\alpha)$ at least twice and we denote $e=\{x,y\}$ to be the first edge in the path that crosses $\mu(\alpha)$ (shown as the green solid line).}
\label{fig:twointersections}
\end{figure}
Consider the first edge $e$ in $\mu(\alpha)$ the path crosses (starting at $C$.)  Let $e =\{x,y\}$ with $x \in \alpha^c$ and $y \in \alpha$. 
We know that:
\begin{equation}
d_{\tilde{\varrho}}(y,C) + d_{\tilde{\varrho}}(y,\Gamma_\ell) > (\ell-1)/2
\end{equation}
Using the minimality of these paths and comparing these to the two segments of $L$ split at $y$ we find:
\begin{equation}
\label{torepeat}
\tilde{\varrho}(L) > (\ell-1)/2 \quad \implies \quad \tilde{\varrho}(L)  \geq (\ell-1)/2 + 1
\end{equation}
where we used the fact that the gap for such paths is $1$ (see \Eqref{seegap}.) If this curve had only two intersections with $\alpha$ ($\mathbf{1}_{\mu(\alpha)}(L) = 2$) we would be done since then we have shown that $\varrho'(L) \geq (\ell-1)/2$. 

Another obvious bound applies to any path $L$:
\begin{equation}
\label{obvious}
\tilde{\varrho}(L)  \geq \frac{1}{2} \mathbf{1}_{\mu(\alpha)}(L) 
\end{equation}
since we know $\tilde{\varrho}$ on these edges. This bound is too crude to be used on its own, but we will still make use of it below when we start performing surgery on the paths and we find paths that start and end on the same boundary region. In this later case the bound we just derived can be tight. 

We now address the four types of paths:
\begin{itemize}
\item[(I)] Consider a path $L \in \mathcal{P}_{C,\Gamma_\ell}$ intersecting $N$ times with $\mu(\alpha)$, where $N \geq 2$ is even. We aim to show that $\tilde{\varrho}(L) \geq (\ell-1)/2 + N/2$. We do this by induction. We have proved the case $N=2$ in \Eqref{torepeat}. We assume it is true for $N-2$ and prove if for $N$. Consider the second edge $e = \{x,y\}$ that intersects $\mu(\alpha)$ along the path $C \rightarrow y \rightarrow x \rightarrow \Gamma_\ell$, where $x \in \alpha^c$. There are now three cases (a,b,c) to consider depending on which set $x$ belongs to in \Eqref{threesets}.

(a)  If $d_{\tilde{\varrho}}(x,C) + d_{\tilde{\varrho}}(x,\Gamma_\ell) = (\ell-1)/2$ then consider the minimal paths $L_{x,C}, L_{x,\Gamma_\ell}$ defining these two distances. 
We show that both of these curves $L_{x,C}, L_{x,\Gamma_\ell}$ lie entirely inside $\alpha^c$. If not there would be some $y \in \alpha$ (the first vertex where
either $L_{x,C}, L_{x,\Gamma_\ell}$ leaves $\alpha^c$) with 
$d_{\tilde{\varrho}}(y,C) + d_{\tilde{\varrho}}(y,\Gamma_\ell) = (\ell-1)/2$ and this is a contradiction.
We use $L_{x,C}, L_{x,\Gamma_\ell}$  to perform surgery on $L$ as in the first figure shown in \figref{fig:typeI}.
\begin{figure}[h]
\centering
\includegraphics[scale=.35]{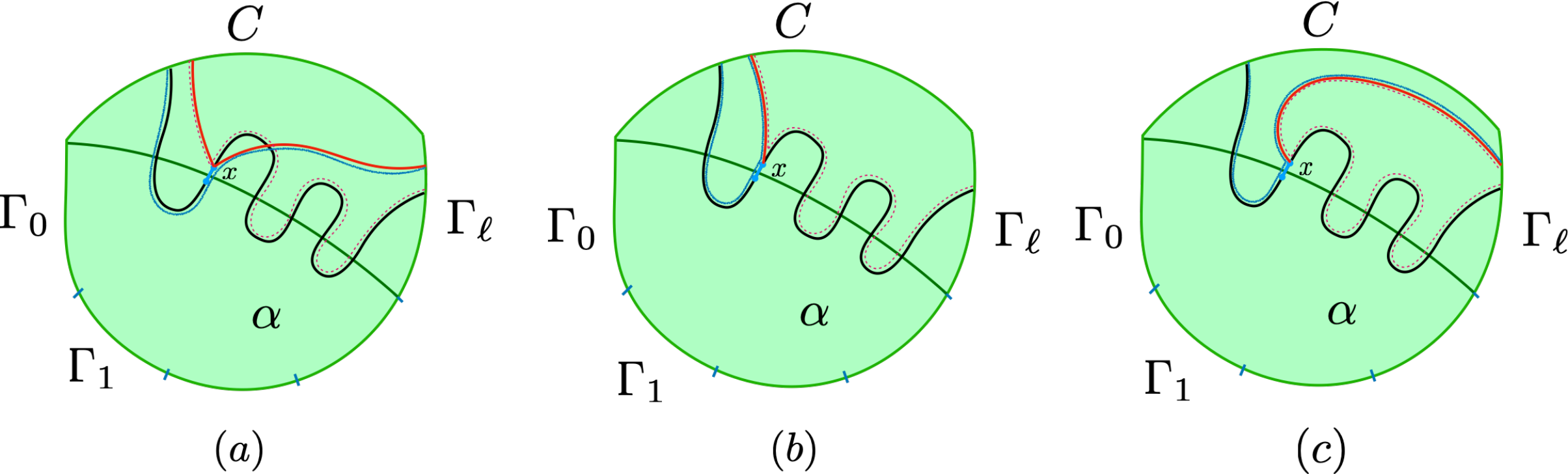}
\caption{We perform surgeries on the $N$-intersecting path $L\in \mathcal{P}_{C,\Gamma_\ell}$ (black solid lines) at the point $x$ using various minimal paths (red solid lines) and reduce $L$ to two paths -- the blue solid lines, which intersect $\mu(\alpha)$ two times, and red dashed lines, which intersect $\mu(\alpha)$ $N-2$ times.}
\label{fig:typeI}
\end{figure}
That is we start with $L$ and $L_{x,C} \cup L_{x,\Gamma_\ell}$ and end up with two paths in $\mathcal{P}_{C,\Gamma_\ell}$. These later curves intersect $\mu(\alpha)$ two times and $N-2$ times respectively. Thus:
\begin{align}
 \tilde{\varrho}(L)  + (\ell-1)/2 &= \tilde{\varrho}(L) + \tilde{\varrho}(L_{x,C} \cup L_{x,\Gamma_\ell})  \\&\geq  \big((\ell-1)/2 + 1\big) + \big((\ell-1)/2 + (N-2)/2 \big) 
\end{align}
Thus $ \tilde{\varrho}(P)  \geq (\ell-1)/2 + N/2 $ as required. 

(b) If $d_{\tilde{\varrho}}(x,C) = 0$, we instead perform surgery with two copies of this minimal path. It is clear this path remains inside $\alpha^c$. These paths do not cost anything. 
See the second figure in \figref{fig:typeI} for the pattern.
In particular we find, after surgery, a curve that starts and ends in $C$
and that intersects $\mu(\alpha)$ twice, and
a curve in $\mathcal{P}_{C,\Gamma_\ell}$ intersecting $N-2$ times. For the former curve we use the estimate \Eqref{obvious} and find:
\begin{equation}
\tilde{\varrho}(P) \geq \big( 1 \big) + \big( (\ell-1)/2 + (N-2)/2 \big) = (\ell-1)/2  + N/2
\end{equation} 
as required.

(c)  If $d_{\tilde{\varrho}}(x,\Gamma_\ell) =0$ then we again use two copies of this minimal path to perform surgery:
see the third figure in \figref{fig:typeI} for the pattern.

The right hand side is now a path $\mathcal{P}_{C,\Gamma_\ell}$ with two intersections and a path from  $\mathcal{P}_{\Gamma_\ell,\Gamma_\ell}$ with $N-2$ intersections.
Thus:
\begin{equation}
\tilde{\varrho}(L) \geq \big( (\ell-1)/2+1 \big) + \big( (N-2)/2 \big) = (\ell-1)/2  + N/2
\end{equation} 
where we used \Eqref{torepeat} and \Eqref{obvious}. We have completed the induction step. 

\item[(II)] Consider a path $L \in \mathcal{P}_{C,\Gamma_k}$ for some $ 0 \leq k \leq \ell-1$ and intersecting $N$ times with $\mu(\alpha)$, where $N \geq 3$ is odd. We aim to show that $\tilde{\varrho}(L) \geq (\ell-1)/2 + N/2$.  Consider the \emph{last} edge $e = \{x,y\}$ that intersects $\mu(\alpha)$ along the path $C \rightarrow y \rightarrow x \rightarrow \Gamma_k$, where $x \in \alpha^c$. There are again three cases (a,b,c) to consider:

\begin{figure}[h]
\centering
\includegraphics[scale=.35]{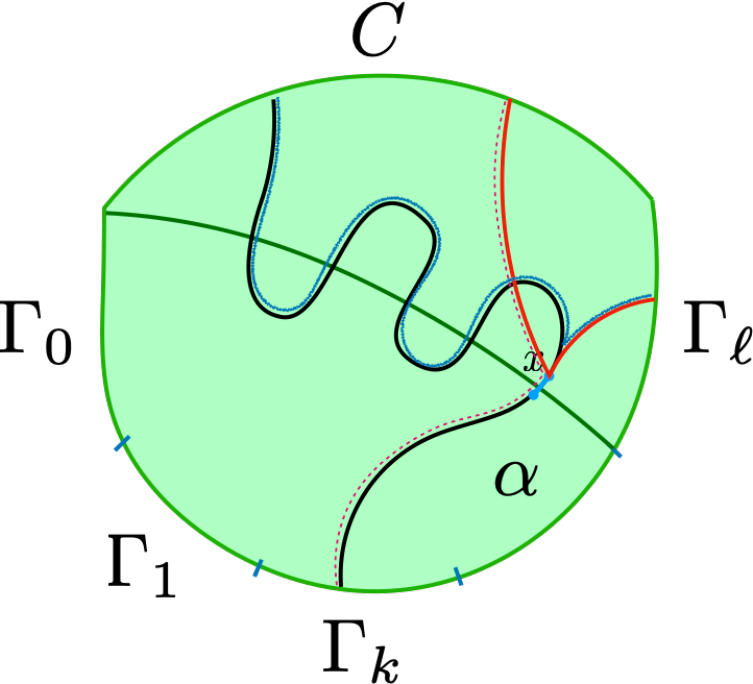}
\caption{We perform a surgery that turns a path $L\in\mathcal{P}_{C,\Gamma_k}$ of type-II (a) (black solid curve) to a type-I path in $\mathcal{P}_{C,\Gamma_\ell}$ (blue solid) and a path in $\widetilde{\mathcal{P}}_{C,\Gamma_k}$ (red dashed) that only intersects $\mu(\alpha)$ one time.}
\label{fig:typeII}
\end{figure}
(a)  If $d_{\tilde{\varrho}}(x,C) + d_{\tilde{\varrho}}(x,\Gamma_\ell) = (\ell-1)/2$, we again perform surgery as shown in \figref{fig:typeII}.
from which we arrive at a path in $\mathcal{P}_{C,\Gamma_\ell}$ intersecting $N-1$ times and
a path  in $\mathcal{P}_{C,\Gamma_k}$ with one intersection. We dealt with the later path at the start and the former we have already bounded in (I) above. Thus:
\begin{align}
 \tilde{\varrho}(L)  + (\ell-1)/2 \geq  \big( (\ell-1)/2 + (N-1)/2 \big) + \big( \ell/2 \big)
\end{align}
implying  $\tilde{\varrho}(L) \geq (\ell-1)/2 + N/2$. 

(b)  If $d_{\tilde{\varrho}}(x,C) = 0$, we take these minimal paths and join in the obvious way to find a path in $\mathcal{P}_{C,C}$ with $N-1$ intersections and one in 
$\mathcal{P}_{C,\Gamma_k}$ with one intersection. Thus:
\begin{equation}
\tilde{\varrho}(L) \geq \big( (N-1)/2 \big) + \big( \ell/2 \big) = (\ell-1)/2  + N/2
\end{equation} 

(c)  If $d_{\tilde{\varrho}}(x,\Gamma_\ell) =0$  we use these minimal paths to construct a path in $\mathcal{P}_{\Gamma_k, \Gamma_\ell}$ with one intersection
and one in  $\mathcal{P}_{C, \Gamma_\ell}$ with $N-1$ intersections. Hence:
\begin{equation}
\tilde{\varrho}(L) \geq \big( \ell-k-1/2 \big) + \big( (\ell-1)/2 + (N-1)/2 \big) \geq  (\ell-1)/2  + N/2
\end{equation} 
where we used $ \ell -k -1 \geq 0$. And we claim victory for these paths.

\item[(III)] Consider a path $L \in \mathcal{P}_{\Gamma_k,\Gamma_\ell}$ for some $ 0 \leq k \leq \ell-1$ and intersecting $N$ times with $\mu(\alpha)$, where $N \geq 3$ is odd. 
We now consider the second edge $e = \{x,y\}$ along the path from $L : \Gamma_k \rightarrow x \rightarrow y \rightarrow \Gamma_\ell$
with $ x \in \alpha^c$. As usual, there are three cases:

\begin{figure}[h]
\centering
\includegraphics[scale=.35]{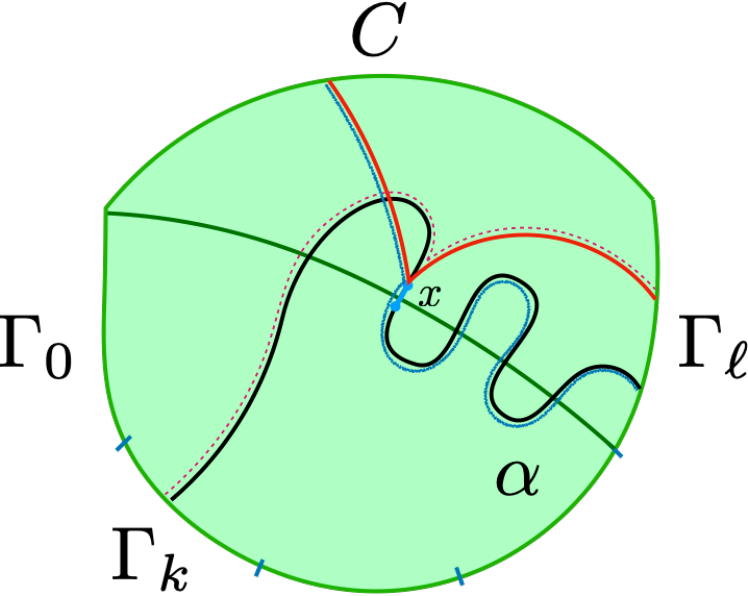}
\caption{We perform a surgery that turns a path $L\in\mathcal{P}_{k,\Gamma_\ell}$ of type-III (a) (black solid curve) to a type-II path in $\mathcal{P}_{C,\Gamma_\ell}$ (blue solid) and a path in $\widetilde{\mathcal{P}}_{\Gamma_k,\Gamma_\ell}$ (red dashed) that only intersects $\mu(\alpha)$ one time.}
\label{fig:typeIII}
\end{figure}
(a)  If $d_{\tilde{\varrho}}(x,C) + d_{\tilde{\varrho}}(x,\Gamma_\ell) = (\ell-1)/2$, we again perform surgery as shown in \figref{fig:typeIII}.
The resulting paths are in $\mathcal{P}_{\Gamma_k,\Gamma_\ell}$ with one intersections
and and in $\mathcal{P}_{C, \Gamma_\ell}$ with $(N-1)$ intersections.  Thus:
\begin{align}
 \tilde{\varrho}(L)  + (\ell-1)/2 \geq  \big( \ell-k-1/2 \big) + \big( (\ell-1)/2 + (N-1)/2  \big)
\end{align}
Or $ \tilde{\varrho}(L) \geq (\ell-k-1) + N/2$ as required. 

(b) Now if $d_{\tilde{\varrho}}(x,C) = 0$, our surgery results in a path in $\mathcal{P}_{C,\Gamma_k}$ with one intersection
and a path in $\mathcal{P}_{C,\Gamma_\ell}$ with $N-1$ intersections. Hence:
\begin{equation}
\tilde{\varrho}(L) \geq \big(  \ell/2 \big) + \big( (\ell-1)/2 + (N-1)/2 \big) = \ell-1  + N/2 \geq  (\ell-k-1) + N/2
\end{equation} 

(c) Now if $d_{\tilde{\varrho}}(x,\Gamma_\ell) = 0$, our surgery results in a path in $\mathcal{P}_{\Gamma_k,\Gamma_\ell}$ with one intersection
and a path  in  $\mathcal{P}_{\Gamma_\ell,\Gamma_\ell}$ with $N-1$ intersections. Hence:
\begin{equation}
\tilde{\varrho}(L) \geq \big( \ell-k-1/2 \big) + \big(  (N-1)/2 \big) =  (\ell-k-1) + N/2
\end{equation} 

\item[(IV)] Consider a path $L \in \mathcal{P}_{\Gamma_k,\Gamma_{k'}}$ for some $ 0 \leq k < k' \leq \ell-1$ and intersecting $N$ times with $\mu(\alpha)$, where $N \geq 2$ is even. 
We now consider the second edge $e = \{x,y\}$ along the path from $L : \Gamma_k \rightarrow x \rightarrow y \rightarrow \Gamma_{k'}$
with $ x \in \alpha^c$. As usual there are three cases:

\begin{figure}[h]
\centering
\includegraphics[scale=.35]{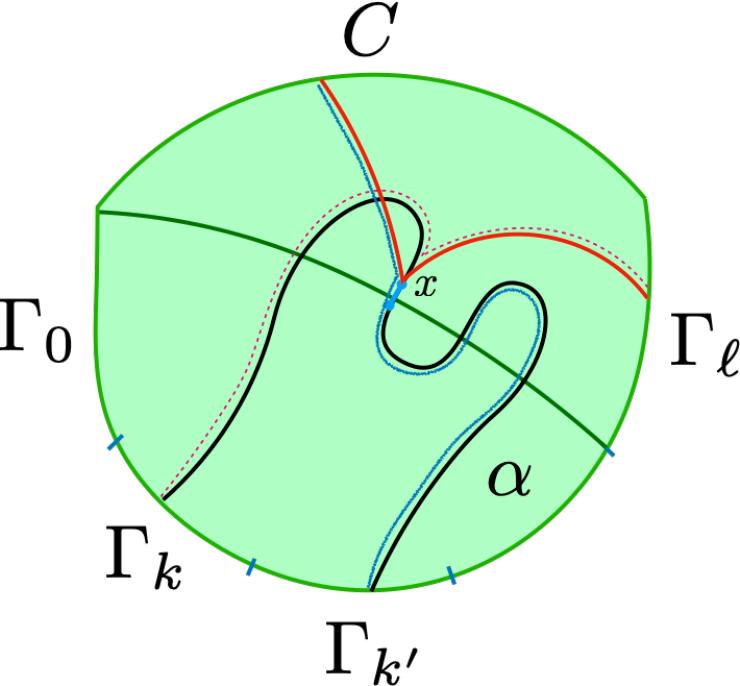}
\caption{We perform a surgery that turns a path $L\in\mathcal{P}_{k,\Gamma_k'}$ of type-IV (a) (black solid curve) to a type-II path in $\mathcal{P}_{C,\Gamma_k'}$ (blue solid) and a path in $\widetilde{\mathcal{P}}_{\Gamma_k,\Gamma_k'}$ (red dashed) that only intersects $\mu(\alpha)$ one time.}
\label{fig:typeIV}
\end{figure}
(a)  If $d_{\tilde{\varrho}}(x,C) + d_{\tilde{\varrho}}(x,\Gamma_\ell) = (\ell-1)/2$,
surgery results in a path in $\mathcal{P}_{\Gamma_k,\Gamma_\ell}$ with one intersection and another path in $\mathcal{P}_{C,\Gamma_{k'}}$ with $N-1$ intersections, see \figref{fig:typeIV}. Thus, using (II) we have:
\begin{align}
 \tilde{\varrho}(L)  + (\ell-1)/2 \geq  \big( \ell-k-1/2 \big) + \big( (\ell-1)/2 + (N-1)/2  \big)
\end{align}
implying $ \tilde{\varrho}(L) \geq (\ell-k-1)+ N/2 \geq (k' -k) +N/2$ where we used $\ell-1 \geq k'$. This is the required bound. 

(b) Now if $d_{\tilde{\varrho}}(x,C) = 0$, our surgery results in a path in $\mathcal{P}_{\Gamma_k,C}$ with one intersection and a path in $\mathcal{P}_{\Gamma_{k'},C}$
with $(N-1)$ intersections. Thus:
\begin{equation}
\tilde{\varrho}(L) \geq \big(  \ell/2 \big) + \big( (\ell-1)/2 + (N-1)/2 \big) = \ell-1  + N/2 \geq (k'-k) + N/2
\end{equation} 
where we again used (II). 

(c) Now if $d_{\tilde{\varrho}}(x,\Gamma_\ell) = 0$, our surgery results in a path in $\mathcal{P}_{\Gamma_k,\Gamma_\ell}$ with one intersection
and a path in $\mathcal{P}_{\Gamma_\ell,\Gamma_{k'}}$ with $N-1$ intersections. Hence:
\begin{equation}
\tilde{\varrho}(L) \geq \big( \ell-k-1/2 \big) + \big( \ell-k'-1 +(N-1)/2 \big) \geq (k'-k) +N/2
\end{equation} 
where we used (III) and $\ell-k'-1 \geq -(\ell-k'-1)$. And we are done.
\end{itemize}

The above bound establish feasibility of $\varrho'$ for the $\ell-1$ intersecting cut problem.
(The integer gap conditions are again all automatic because the even/oddness of the number of
intersections is fixed by the topology of the path.)
\end{proof}

\subsection{Lemma~\ref{conv:prob}: 
\label{app:conv}
Probabilistic convergence for $S_R$}
Consider the renormalized operator:
\begin{equation}
 \widehat{\mathcal{O}} = \left( \varrho \otimes \varrho \right) \chi^{  2 n  \mathcal{A}(AB:C)  } 
\end{equation}
and define the renormalized measure:
\begin{equation}
d \hat{\mu}_\Psi(\hat{\lambda}) = \chi^{ 2 (n-1) \mathcal{A}(A:B:C) -2 n \mathcal{A}(AB:C) }  \sum_{i} |\left< \Psi \right| \left. v_i \right>|^2 \delta( \hat{\lambda} - \hat{\lambda}_i) d \hat{\lambda}
\end{equation}
where $\hat{\lambda} = \lambda  \chi^{  2 n  \mathcal{A}(AB:C)  } $.
We know that new measure satisfies
\begin{equation}
\label{eq:mu_hat_norm}
\lim_{\chi \rightarrow \infty} \overline{ \int_0^\infty d \hat{\mu}_\Psi(\hat{\lambda}) \hat{\lambda}^{m/2} } = 1
\end{equation}
for $m/2 \in \mathbb{Z}_{\geq 1}$. Note that we do not know the zeroth moment of $\hat{\lambda}$.

Pick some cut-off $\Lambda > 1$ and define
\begin{equation}
\label{eq:Delta_k}
\Delta_{m/2} = \overline{ \int d\hat{\mu}_\Psi(\hat{\lambda})  \hat{\lambda}^{m/2} \theta(\hat{\lambda} -\Lambda)}  \leq \Lambda^{-m/2}  \overline{  \int d\hat{\mu}_\Psi(\hat{\lambda})  \hat{\lambda}^{m} \theta(\hat{\lambda} -\Lambda)} \leq \Lambda^{-m/2}   \overline{ \int d\hat{\mu}_\Psi(\hat{\lambda}) \hat{\lambda}^{m} } 
\end{equation}
From now on we will set $\Lambda =2$. Then we have:
\begin{equation}
\label{eq:Delta_m}
\lim_{\chi \rightarrow \infty} \Delta_{m/2} \leq  \lim_{\chi \rightarrow \infty}  \Delta_{m'/2} \leq 2^{-m'/2}  
\end{equation}
for all $m'\in \mathbb{Z}$ and $m' \geq m \ge 1$. Taking $m' \rightarrow \infty$ proves that $\lim_{\chi \rightarrow \infty} \Delta_{m/2}=0$ for $m\in \mathbb{Z}_{\ge 1}$.
In particular, for any polynomial function $f(\hat{\lambda})$ we have that
\begin{equation}
\label{eq:vanishing}
    \overline{\int_2^\infty d\hat{\mu}_{\Psi}(\hat{\lambda}) f(\hat{\lambda})} 
    = \sum_{k\in \mathbb{N}} \frac{f^{(k)}(0)}{k!} \Delta_k 
    \underset{\chi\to\infty}{\to} 0
\end{equation}
In other words, the measure $\hat{\mu}_\Psi(\hat{\lambda})$ is highly concentrated in the interval $[0,2]$ and it suffices to only consider test functions with compact support on the interval.

Let us approximate the square root function on $[0,2]$ as:
\begin{equation}
p_M(x) \equiv \sum_{\mu=1}^M \sqrt{\frac{2\mu}{M}} b_{\mu,M}(x/2)
\end{equation}
where $b_{\mu,M}(x)$ are the Bernstein polynomials of degree $M$.
Since only $b_{0,M}(x)$ has the constant monomial in it, we only need the higher moments. For all $\delta > 0$ there exists some integer $M$ such that
\begin{equation}
    \|\sqrt{x} - p_M(x)\| _{L^\infty[0,2]} < \delta
\end{equation}
Now we write
\begin{align}
\label{eq:C123}
\begin{split}
\left| \left( \int_0^\infty d\hat{\mu}_\Psi(\hat{\lambda}) \hat{\lambda}^{1/2} \right)  - 1 \right| & 
\leq  \underbrace{\int_2^\infty d\hat{\mu}_\Psi(\hat{\lambda})\hat{\lambda}^{1/2}}_{C_1}  
+ \underbrace{\left| \left( \int_0^2 d\hat{\mu}_\Psi(\hat{\lambda}) p_M(\hat{\lambda}) \right)  - 1 \right|}_{C_2}  \\
 &\quad + \underbrace{\left| \int_0^2 d\hat{\mu}_\Psi(\hat{\lambda}) \left(\hat{\lambda}^{1/2}  - p_M(\hat{\lambda}) \right) \right|}_{C_3}
\end{split}
\end{align}
Using Markov inequality along with \Eqref{eq:Delta_m} we can show that for any integer $m$,
\begin{equation}\label{eq:Markov}
{\rm Pr}\left(C_1 \geq \epsilon \right) 
\leq \frac{\Delta_{m} }{\epsilon}
\underset{\chi\to\infty}{\to} 0
\end{equation}
For $C_2$ we consider:
\begin{align}
C_2 \leq \underbrace{\left| \left( \int_0^2 d\hat{\mu}_\Psi(\hat{\lambda}) p_M(\hat{\lambda}) \right)  - \overline{  \left( \int_0^2 d\hat{\mu}_\Psi(\hat{\lambda}) p_M(\hat{\lambda}) \right)} \right|}_{C^{\rm first}_2}
 +  \underbrace{\left|  \overline{  \left( \int_0^2 d\hat{\mu}_\Psi(\hat{\lambda}) p_M(\hat{\lambda}) \right)} -1 \right|}_{C_2^{\rm second}}
\end{align}
We can bound the first term using Chebyshev's inequality:
\begin{align}
    \text{Pr}\left(C_2^{\rm first}\ge \epsilon\right) \le \frac{\sigma^2}{\epsilon^2}
\end{align}
where 
\begin{align}
\label{eq:c2_var}
    \sigma^2 = \text{Var}(C_2^{\rm first})
    \approxeq \text{Var}\left( \int^\infty_0 d\hat{\mu}_\Psi(\hat{\lambda}) p_M(\hat{\lambda}) \right) 
    = \sum_{m=1}^{M} p_m \text{Var}\left( \int^\infty_0 d\hat{\mu}_\Psi(\hat{\lambda}) \hat{\lambda}^m \right)      
\end{align}
where $p_m$ are Taylor coefficients of $p_M(\lambda)$.
Note that we have extended the integration limit in \Eqref{eq:c2_var}. The error from doing so can be shown to vanish in the limit $\chi\to\infty$ by application of \Eqref{eq:vanishing}. The expectation values of the double moments are related to 
 \begin{equation}
      \overline{ \left( \int^\infty_0 d\hat{\mu}_\Psi(\hat{\lambda}) \hat{\lambda}^m \right)^2 }
      = \chi^{(4n-1)\mathcal{A}(A:B:C)-4n\mathcal{A}(AB:C)}
      \overline{\bra{\Psi}^{\otimes 2}\widehat{O}^{m}\otimes \widehat{O}^{m} \ket{\Psi}^{\otimes 2}}
 \end{equation}
 which can be computed by a different symmetry group optimization problem defined on  $G$. We now minimize over $g\in S_{2mn}$ in this graph such that the boundary conditions are $\tilde{g}_A\equiv g^{(1)}_A g^{(2)}_A, ~ \tilde{g}_B\equiv g^{(1)}_B g^{(2)}_B$ and $\tilde{g}_C =\id$, where $g^{(1)}_{A,B}$ permutes the first $mn$ replicas and leaving the second $mn$ copies invariant; whereas $g^{(2)}_{A,B}$ permutes the second $mn$ replicas and leaving the first invariant. Our analysis in \secref{sec:main} largely carries over. 
 The main difference is that we must now coarse-grain using the new element $\tilde{X} \equiv \tilde{g}_A\wedge \tilde{g}_B = X^{(1)}X^{(2)}$ with $X^{(i)}$ defined similarly as above.
 We need the following generalization of Lemma~\ref{lem:tAtB}:
 \begin{lemma}
 \label{lem:P4n_trig}
     For any $q\in P_{4n}$ we have
     \begin{equation}
         d(\tilde{q}_A,q)+d(\tilde{q}_B,q) \ge d(\tilde{q}_A,\tilde{q_B}) + 2(1-\delta_{\#^{(1)}_1(q),0}) + 2(1-\delta_{\#^{(2)}_1(q),0}) +
         2\delta_{q\vee \tau, \mathbb{Z}_{4n}}
     \end{equation}
     where $\tilde{q}_{A,B}=q_{\tilde{X}}(\tilde{g}_{A,B})$, $\#^{(1,2)}_1(\cdot)$ counts the number of singlets in the first (second) sets of $2n$ elements, and $\tau=\{\mathbb{Z}_{2n},\mathbb{Z}_{2n}\}$ is the maximal element in $P_{4n}$ that is disconnected between the two sets of $2n$ elements.
 \end{lemma}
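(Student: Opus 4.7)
The plan is to reduce this two-cycle inequality to two applications of Lemma~\ref{lem:tAtB}, one on each copy of $2n$ elements, plus a combinatorial correction for the cross-copy structure of $q$. Set $\bar q := q \wedge \tau$, which decomposes as $\bar q = q^{(1)} \sqcup q^{(2)}$ with $q^{(i)} \in P_{2n}$ the partition induced on copy $i$, and let $m := \#(\bar q) - \#(q) \ge 0$ count the blocks of $q$ that cross the two copies. Note that $m \ge 1$ is equivalent to $q \vee \tau = \mathbb{Z}_{4n}$, and that the cycle hypothesis of Lemma~\ref{lem:tAtB} is satisfied by $(q_A^{(i)}, q_B^{(i)})$ in each copy.

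If $m = 0$, then $q = q^{(1)} \sqcup q^{(2)}$ and all joins factor across copies, giving
\begin{equation*}
d(\tilde q_A, q) = \sum_{i=1,2} d(q_A^{(i)}, q^{(i)}), \qquad d(\tilde q_A, \tilde q_B) = \sum_{i=1,2} d(q_A^{(i)}, q_B^{(i)}),
\end{equation*}
and analogously for $d(\tilde q_B, q)$. Applying Lemma~\ref{lem:tAtB} to each copy and summing yields the two singleton penalties, while the cross-copy term $2\delta_{q\vee\tau,\mathbb{Z}_{4n}}$ vanishes as required.

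If instead $m \ge 1$, the plan is to write
\begin{equation*}
d(\tilde q_A, q) + d(\tilde q_B, q) - d(\tilde q_A, \tilde q_B)
= \bigl[d(\tilde q_A, \bar q) + d(\tilde q_B, \bar q) - d(\tilde q_A, \tilde q_B)\bigr] + 2\bigl[(k_A + k_B) - m\bigr],
\end{equation*}
where $k_X := \#(\tilde q_X \vee \bar q) - \#(\tilde q_X \vee q)$ for $X \in \{A,B\}$. The first bracket is bounded below by the $m=0$ analysis applied to $\bar q$, and this bound implies the singleton penalties for $q$ itself since $\#_1(q^{(i)}) \ge \#^{(i)}_1(q)$. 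To finish, I will establish the combinatorial inequality $k_A + k_B \ge m+1$, which supplies the cross-copy $+2$.

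The hard part is this last combinatorial step, which encodes the difference between having a single cycle and two disjoint cycles in the bipartite graph of $(\tilde q_A, \tilde q_B)$. The key input is $\tilde q_A \wedge \tilde q_B = \id$, so that each of the $4n$ elements is uniquely labeled by its enclosing $(\tilde q_A, \tilde q_B)$-block pair. Each cross-copy block of $q$ contributes one edge to each of two merge multigraphs (whose vertices are the copy-$1$ and copy-$2$ blocks of $\tilde q_X \vee \bar q$, $X \in \{A,B\}$), and $k_X$ equals the rank of the respective multigraph. The rigidity forced by $\tilde q_A \wedge \tilde q_B = \id$ will show that two distinct cross-copy blocks of $q$ cannot determine the same edge in both multigraphs simultaneously, for otherwise these blocks would be forced to coincide; this pigeonhole-type argument yields $k_A + k_B \ge m+1$ and completes the proof.
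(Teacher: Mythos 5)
Your decomposition identity
\begin{equation*}
d(\tilde q_A,q)+d(\tilde q_B,q)-d(\tilde q_A,\tilde q_B)
=\bigl[d(\tilde q_A,\bar q)+d(\tilde q_B,\bar q)-d(\tilde q_A,\tilde q_B)\bigr]+2\bigl[(k_A+k_B)-m\bigr]
\end{equation*}
is correct, and the $m=0$ case is fine. The gap is the claimed combinatorial inequality $k_A+k_B\geq m+1$: it is false. Take $n=2$, copy~1 elements $\{1,2,3,4\}$ and copy~2 elements $\{5,6,7,8\}$, so that (up to relabelling) $q_A=\{\{1,2\},\{3,4\}\}$, $q_B=\{\{2,3\},\{4,1\}\}$, $\tilde q_A=\{\{1,2\},\{3,4\},\{5,6\},\{7,8\}\}$, $\tilde q_B=\{\{2,3\},\{4,1\},\{6,7\},\{8,5\}\}$. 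Now let $q=\{\{1,5\},\{2,6\},\{3,7\},\{4,8\}\}$. Then $\bar q=\id$ so $m=4$, while $\tilde q_A\vee q=\{\{1,2,5,6\},\{3,4,7,8\}\}$ and $\tilde q_B\vee q=\{\{1,4,5,8\},\{2,3,6,7\}\}$ give $k_A=k_B=2$, so $k_A+k_B=4<m+1=5$. Notice also that your pigeonhole criterion actually holds in this example (the four blocks carry the distinct edge pairs $(1,1),(1,2),(2,2),(2,1)$), which shows the pigeonhole property is in any case insufficient to force $k_A+k_B\geq m+1$: it controls the multiplicity of edge pairs but not the total cycle count across the two multigraphs, and distinct edge pairs can still conspire to create independent cycles in both.

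The lemma itself survives this example because the first bracket overshoots: with $\bar q=\id$ it equals $4$, while the required singleton penalty for $q$ is $0$, and the surplus of $4$ covers the needed cross-copy $+2$ even though the second term contributes $0$. So your allocation of "singleton penalties from the first bracket, cross-copy $+2$ from $k_A+k_B-m$" does not survive contact with the lattice; any repair would have to let the excess of the first bracket over the true singleton penalty of $q$ trade off against a deficit in $k_A+k_B-m$, which is a genuinely different argument. For comparison, the paper avoids the global rank bookkeeping altogether: it picks a single crossing doublet $u_{ij}$ refined by $q$, observes $d(\tilde q_X,q)=d(\tilde q_X\vee u_{ij},q)+1$ and $d(\tilde q_A\vee u_{ij},\tilde q_B\vee u_{ij})=d(\tilde q_A,\tilde q_B)$, and then strengthens the resulting triangle inequality exactly as in Lemma~\ref{lem:tAtB} to harvest the per-copy singleton penalties. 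That localizes the cross-copy $+2$ to one doublet and sidesteps any claim about the ranks of the two merge multigraphs.
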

 \begin{proof}
     If $q\in P_{2n}\times P_{2n}$ then $\delta_{q\vee\tau,\mathbb{Z}_{4n}}=0$ and we can simply break down the problem into two smaller problems on disconnected copies of $2n$ elements. Applying Lemma \ref{lem:tAtB} on each copy proves the result.

     Now suppose that $q\in P_{4n}\backslash P_{2n}\times P_{2n}$. Then there must exists some $u_{ij}\in P_{4n}\backslash P_{2n}\times P_{2n}$ such that $p\vee u_{ij} = p$, where $u_{ij}$ is the unique partition with a doublet connecting element $i$ from the first copy to element $j$ in the second copy and singlets at every other position. We write
     \begin{align}
     \begin{split}
     d(\tilde{q}_A,p)&=d(\tilde{q}_A,p\vee u_{ij})
     = \#(\tilde{q}_A)+\#(p\vee u_{ij})-2\#(\tilde{q}_A\vee p \vee u_{ij}) \\
     &= 1+\#(\tilde{q}_A\vee u_{ij})+\#(p\vee u_{ij})-2\#(\tilde{q}_A\vee p \vee u_{ij}) \\
     &=d(\tilde{q}_A\vee u_{ij},p) + 1
     \end{split}
     \end{align}
     and similarly for $d(\tilde{q}_B,p)$. Thus
     \begin{align}
     \begin{split}
         d(\tilde{q}_A,p)+d(p,\tilde{q}_B) &= d(\tilde{q}_A\vee u_{ij},p) + d(p,\tilde{q}_B\vee u_{ij}) + 2 \\
         &\ge d(\tilde{q}_A\vee u_{ij},\tilde{q}_B\vee u_{ij}) + 2
         \end{split}
     \end{align}
     by triangle inequality.
     This bound can be strengthen in a similar fashion as the proof in Lemma~\ref{lem:tAtB}. 
     The biparpite graph of $\tilde{q}_A\vee u_{ij}$ and $\tilde{q}_B \vee u_{ij}$ is now connected with two cycles, each corresponding to a $2n$-element copy.
     A singlet in the first copy will break the first cycle and leads to a enhancement of the bound by $2$, and likewise for a singlet in the second copy. Since $d(\tilde{q}_A\vee u_{ij},\tilde{q}_B\vee u_{ij})=d(\tilde{q}_{A},\tilde{q}_B)$ we obtain
     \begin{align}
         d(\tilde{q}_A,p)+d(p,\tilde{q}_B) \ge d(\tilde{q}_A,\tilde{q}_B) + 2(1-\delta_{\#^{(1)}_1(q),0}) + 2(1-\delta_{\#^{(2)}_1(q),0}) + 2
     \end{align}
     And this completes the proof.
 \end{proof}
 Using Lemma~\ref{lem:P4n_trig} we see that we may restrict to the disconnected elements $q\in P_{2n}\times P_{2n}$ (and hence $g\in S_{mn}\times S_{mn}$ as the coarse-graining retains this information) in the optimization problems, since for any path $L\in\mathcal{P}_{A:B}$ and vertex $v$ in the path, any $q(v)\in P_{4n}\backslash P_{2n}\times P_{2n}$ will lead to a stricter bound in the integer program.
 Thus, we find that the optimal value of our problem is simply twice of that in the original problem:
\begin{equation}
    \overline{\left( \int d\mu_\Psi(\lambda) \lambda^{m/2} \right)^2} 
    = \chi^{-4(n-1)\mathcal{A}(A:B:C)+2n(m-2)\mathcal{A}(AB:C)}(1+O(1/\chi))
\end{equation}
And we have that $\sigma^2 = O(1/\chi)$ and $C^{\rm first}_2\overset{\rm Pr}{\rightarrow} 0$ as $\chi\to\infty$.
For the second term we have
\begin{align}
\label{eq:C2_second}
\begin{split}
    C_2^{\rm second} &\le 
    \left| \overline{\left( \int_0^\infty d\hat{\mu}_\Psi(\hat{\lambda}) p_M(\hat{\lambda}) \right)} -1 \right| + \left| \overline{\left( \int_2^\infty d\hat{\mu}_\Psi(\hat{\lambda}) p_M(\hat{\lambda}) \right)} \right| \\
    &\underset{\chi\to\infty}{\longrightarrow}\left| p_M(1)-1 \right|  \le \delta
\end{split}
\end{align}
where we have used \Eqref{eq:mu_hat_norm} to rewrite the moment integrals in the first line, and the bound $| p_M(1) - 1| \leq \| p_M(x) - \sqrt{x} \|_{L^{\infty}[0,2]}=\delta$. The second term in the first line vanishes by \Eqref{eq:vanishing} in the $\chi\to\infty$ limit.

For $C_3$ we write
\begin{align}
\label{eq:C_3}
\begin{split}
C_3 &  \leq
 \int_0^2 d\hat{\mu}_\Psi(\hat{\lambda}) \left| \hat{\lambda}^{1/2}  - p_M(\hat{\lambda}) \right|  \\
 & \leq  \int_0^{2} d\hat{\mu}_\Psi(\hat{\lambda}) \hat{\lambda}^{1/2}  f(\ln\hat{\lambda})
 + \int_{0}^2 d\hat{\mu}_\Psi(\hat{\lambda})(1- f(\ln\hat{\lambda}) )\left| \hat{\lambda}^{1/2}  - p_M(\hat{\lambda}) \right| \\ 
 & \leq \underbrace{\int_0^{\infty} d\hat{\mu}_\Psi(\hat{\lambda}) \hat{\lambda}^{1/2}  f(\ln \hat{\lambda})}_{C_3^{\rm first}} 
 + ~ \underbrace{\delta 
 \int_{0}^2 d\hat{\mu}_\Psi(\hat{\lambda}) (1- f(\ln \hat{\lambda}) )}_{C^{\rm second}_3}
 \end{split}
\end{align}
where passing from the first line to the second we used the fact that $p_M(x)$ approaches $\sqrt{x}$ from below, which follows from the positivity of $p_M(x)$.
We have also introduced a semi-positive function $f$ with $f(x) = 0$ for $x\geq 0$ and whose other properties we will enumerate below. For the second term, we suppose that:
 \begin{equation} 
 \| (1-f(\ln \hat{\lambda})) \hat{\lambda}^{-1} \|_{L^\infty_{[0,2]}}  
  \equiv f_1 < \infty
 \end{equation}
So that
\begin{align}
    C_3^{\rm second} \le f_1 \delta \int^2_0 d\hat{\mu}_{\Psi}(\hat{\lambda}) \hat{\lambda}
    \le f_1 \delta \int^\infty_0 d\hat{\mu}_{\Psi}(\hat{\lambda}) \hat{\lambda}
\end{align}

To proceed we need to following Lemma:
\begin{lemma}
\label{lem:fourierbound}
     Consider a density matrix $\rho$ on a finite dimensional Hilbert space $\mathcal{H}$ and supported on $\pi_\rho$ with ${\rm Tr}\, \pi_{\rho} \equiv \Lambda$ then
for a real positive semidefinite function $ f \in C^\infty(\mathbb{R})$ such that $f(x) = 0$ for all $x \geq 0$
and such that $f'$ is a rapidly decaying Schwartz function $f' \in \mathcal{S}(\mathbb{R})$, then:
\begin{equation}
\label{eq:wmc1}
 \left|\left< \eta_1 \right| f( \ln \rho  + \ln \Lambda) \left| \eta_2 \right>   \right|\leq \| \mathfrak{F}(f')\|_{L^1}  \| \rho - \pi_\rho /\Lambda \|_1 
\end{equation}
with normalized $\eta_{1,2} \in \mathcal{H}$ and where $\mathfrak{F}$ denotes the Fourier transform.
\end{lemma}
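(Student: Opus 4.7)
The plan is to combine a Fourier representation of $f$ with an operator factorization that exploits the commutativity of $\rho$ and its support projector. First I would use that $f(0)=0$, $f'\in\mathcal{S}(\mathbb{R})$, and $\hat F(0)=\int f'(z)\,dz = f(+\infty)-f(-\infty)=0$ to write
\[
f(x) = \int_{-\infty}^{\infty} \hat F(t)\,\frac{e^{ixt}-1}{it}\,dt,\qquad \hat F := \mathfrak{F}(f'),
\]
where the singularity at $t=0$ is removable. Promoting this to functional calculus on $X = \ln\rho+\ln\Lambda$ (extending $f(X)=0$ off $\pi_\rho$, consistent with $f(-\infty)=0$) and using that $\sigma := \pi_\rho/\Lambda$ commutes with $\rho$ with $(\Lambda\sigma)^{it}=\pi_\rho$ on support yields
\[
f(X) = \int \hat F(t)\,\Lambda^{it}\,\frac{\rho^{it}-\sigma^{it}}{it}\,dt.
\]

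Next I would factor out $(\rho-\sigma)$ using the commuting-operator identity $\rho^{it}-\sigma^{it} = it(\rho-\sigma)\int_0^1 \rho_s^{it-1}\,ds$ with $\rho_s = \sigma + s(\rho-\sigma)$. Swapping the $s$ and $t$ integrals (justified by the Schwartz decay of $\hat F$) and recognising the inner integral as a Fourier inversion giving $f'(\ln(\Lambda\rho_s))$, one arrives at the compact identity
\[
f(\ln\rho+\ln\Lambda) = (\rho - \pi_\rho/\Lambda)\int_0^1 \rho_s^{-1}\,f'(\ln(\Lambda\rho_s))\,ds.
\]
By H\"older's inequality (using $|\operatorname{tr}(AB|\eta_2\rangle\langle\eta_1|)|\le \|A\|_1\|B\|_\infty$), this implies $|\langle\eta_1|f(X)|\eta_2\rangle|\le \|\rho-\pi_\rho/\Lambda\|_1\cdot\|M\|_\infty$ with $M := \int_0^1\rho_s^{-1}\,f'(\ln(\Lambda\rho_s))\,ds$.

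The hard part, and the main obstacle, is to prove the spectral bound $\|M\|_\infty\le \|\mathfrak{F}(f')\|_{L^1}$. Since $M$ commutes with $\rho$ it is diagonal in the common eigenbasis, so the task reduces to a uniform scalar estimate over the positive eigenvalues $\lambda$ of $\rho$. I would re-express each eigenvalue of $M$ as a Fourier-weighted average $\int \hat F(t)\,K_\lambda(t)\,dt$ by substituting the Fourier expression for $f'$ back into the $s$-integral, and then show $|K_\lambda(t)|\le 1$ uniformly in $\lambda>0$ and $t\in\mathbb{R}$. The telescoping change of variables $u = \ln(\Lambda\rho_s)$ reduces the inner integral to an elementary expression of the form $\bigl(e^{iu_1 t}-e^{iu_0 t}\bigr)/\bigl(it(\lambda-1/\Lambda)\bigr)$, controllable via $|(e^{i\theta}-1)/(i\theta)|\le 1$ together with the cancellation provided by $\hat F(0)=0$. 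The delicate regime is $\lambda\to 0$, where $\rho_s^{-1}$ grows but is compensated by the fact that $f$ (and hence the relevant kernel) vanishes at $-\infty$; this compensation will have to be established carefully, and is what allows the factor of $\Lambda$ generated by the inverse to be absorbed into $\|\rho-\pi_\rho/\Lambda\|_1$ rather than inflating the operator norm.
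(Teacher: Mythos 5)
Your exact identity and the H\"older step are fine as far as they go, but the bound you defer to the end, $\|M\|_\infty\le \|\mathfrak{F}(f')\|_{L^1}$, is not merely ``the hard part'' --- it is false under the lemma's hypotheses, so this factorization cannot close. Carrying out your own substitution $u=\ln(\Lambda\rho_s)$ exactly, the eigenvalue of $M$ at a $\rho$-eigenvalue $\lambda>0$ is $m(\lambda)=f(\ln(\Lambda\lambda))/(\lambda-1/\Lambda)$. Taking $\lambda=e^{x_0}/\Lambda$ with $x_0<0$ near a maximum of $f$ gives $|m(\lambda)|=\Lambda f(x_0)/(1-e^{x_0})\ge \Lambda f(x_0)$, so for worst-case $\rho$ one has $\|M\|_\infty\sim \Lambda\,\sup f$, which cannot be controlled by the $\Lambda$-independent constant $\|\mathfrak{F}(f')\|_{L^1}$; H\"older applied to $(\rho-\pi_\rho/\Lambda)M$ then yields only $\|\rho-\pi_\rho/\Lambda\|_1\cdot O(\Lambda)$. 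Equivalently, your hoped-for kernel bound $|K_\lambda(t)|\le 1$ reads $|e^{it\ln(\Lambda\lambda)}-1|\le |t|\,|\lambda-1/\Lambda|$: the elementary estimate $|e^{i\theta}-1|\le|\theta|$ gives Lipschitz control in $\ln\lambda$, not in $\lambda$, and for $\lambda\lesssim 1/\Lambda$ these differ by a factor of order $\Lambda$. The ``compensation as $\lambda\to 0$'' you invoke does not exist, because nothing in the hypotheses forces $f$ to decay at $-\infty$: only $f'$ is Schwartz and $f$ vanishes on $[0,\infty)$. For the same reason your claim $\hat F(0)=\int f'=f(+\infty)-f(-\infty)=0$ is unwarranted --- $\hat F(0)\propto -f(-\infty)$, and in the paper's own application $f\equiv 1$ for sufficiently negative argument.

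This is also where your route departs from the paper's, which is structured precisely to avoid the divided-difference operator $M$: it never inverts $\rho_s$. Instead it cuts off $f_a=f\,w_a$ below the minimum nonzero eigenvalue (choosing $a>-\ln\lambda_{\min}-\ln\Lambda$), writes $f_a(\ln\rho+\ln\Lambda)=\int ds\,\mathfrak{F}(f_a)(s)\,\Lambda^{is}\rho^{is}$, estimates the matrix element of $\rho^{is}-(\pi_\rho/\Lambda)^{is}$ using only the uniformly bounded phases so that the weight sits in $\|s\,\mathfrak{F}(f_a)(s)\|_{L^1}$, and then devotes the bulk of the argument to the cutoff-removal limit $\lim_{a\to\infty}\|s\,\mathfrak{F}(f_a)(s)\|_{L^1}=\|\mathfrak{F}(f')\|_{L^1}$; the small-eigenvalue difficulty is absorbed there rather than into an operator norm. (That regime is genuinely the crux: the paper's intermediate step itself compares $|1-e^{-is(E_i'-E_i)}|$, which is governed by differences of logarithms of eigenvalues, against the trace norm, i.e.\ exactly the log-versus-linear mismatch your $M$ makes explicit.) If you want to salvage your strategy you would need extra structure --- $f(-\infty)=0$ together with a weight suppressing small eigenvalues (such as the $\rho^{1/2}$ factors that accompany the states in the application) --- or else keep the $t$-dependence un-integrated and follow the cutoff route, as the paper does.
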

\begin{proof}
Since $\rho$ will have a minimum non-zero eigenvalue  $\lambda_{\rm min}$ we can cut off $f_{a}(x) = f(x)  w_a(x)$
where $w_a \in C^{\infty}$ is a smooth
cutoff function:
\begin{equation}
w_a(x) = w( x / a)
\end{equation}
with $w(x) = 0$ for $x < -2$ and $w(x) = 1$ for $x > -1$ and generally $0 \leq w \leq 1$ and also $w' \in \mathcal{S}$. Choosing $a > - \ln \lambda_{\rm min} - \ln \Lambda \geq 0$ we can replace
$f$ with $f_a$. The result $f_a$ is a smooth function of compact support so this has a Fourier transform. Thus:
\begin{equation}
f_a( \ln \rho  + \ln \Lambda)  = 
\int d s  \mathfrak{F}(f_a)(s) \Lambda^{is} \rho^{i s} 
\end{equation}
If $\rho$ and $\rho'$ commute then we can simultaneously diagonalize these and
\begin{align}
\left|\left< \eta_1 \right| (\rho^{is} - (\rho')^{is}) \left| \eta_2 \right> \right| 
& = \left| \sum_{i} (\exp( - is E_i) - \exp( - is E_i')) \left< \eta_1 \right. \left| i \right> \left< i \right. \left| \eta_2 \right> \right|  \\
&\leq \sum_{i } \left| 1 - \exp( - is (E_i' - E_i)) \right| \leq |s| \| \rho - \rho' \|_{1} 
\end{align}
where $E_i\equiv -\ln \lambda_i$, and the inequality follows from the bound $|\sin(x)|  < |x|$.
Thus:
\begin{equation}
\left| \left< \eta_1 \right| f( \ln \rho  + \ln \Lambda) \left| \eta_2 \right>   - \left< \eta_1 \right| f( \ln \rho'  + \ln \Lambda) \left| \eta_2 \right> \right|
\leq \|  s  \mathfrak{F}(f_a)(s) \|_{L^1}
 \| \rho - \rho' \|_{1}
\end{equation}
We will show that we can remove the cutoff function $w_a$, i.e.:
\begin{equation}
\label{claimedlimit}
\lim_{a \rightarrow \infty} \|  s  \mathfrak{F}(f_a)(s) \|_{L^1} = \|   \mathfrak{F}(f')(s) \|_{L^1}
\end{equation}
for functions $f$ and $w$ that were specified in the statement and above. 
Following \Eqref{claimedlimit}, if we set $\rho' = \pi_{\rho}/\Lambda$ and use the properties of the function in the statement we find $f( \ln \rho'  + \ln \Lambda) =0$ away from the subspace of support of $\rho$, and thus we have proved our claim.

To prove \Eqref{claimedlimit} we write
\begin{equation}
i s  \mathfrak{F}(f_a)(s) = \mathfrak{F}(f_a')(s) = \mathfrak{F}( f' w_a) (s)  +  \mathfrak{F}(f w_a') (s)
\end{equation}
We have two remainder terms to analyze: 
\begin{equation}
\left| \| s  \mathfrak{F}(f_a)(s) \|_{L^1} - \| \mathfrak{F}( f')  \|_{L^1} \right|  \leq \| \mathfrak{F}( f' w_a) \|_{L^1} +  \|  \mathfrak{F}( f' (1-w_a)) \|_{L^1}
\end{equation}

For the second term:
\begin{align}
\label{eq:fourierbound}
\begin{split}
 \|  \mathfrak{F}( f' (1-w_a)) \|_{L^1} & \leq \| (1+s^2)^{-1} \|_{L^1}  \| (1+s^2)  \mathfrak{F}( f' (1-w_a))(s) \|_{L^{\infty}} \\
& \leq \pi( \|  f' (1-w_a) \|_{L^1} + \|  (f' (1-w_a))'' \|_{L^1} ) \\
& \leq \pi \|\tfrac{1-w_a}{1+x^2} \|_{L^1} \left( \|f'(1+x^2)\|_{L^\infty} +  \|f'''(1+x^2)\|_{L^\infty} \right) \\  
& \qquad \qquad + \pi \left(2 \| f'' \|_{L^1}  \|w_a' \|_{L^\infty}   +  \| f' \|_{L^1}  \|w_a'' \|_{L^\infty} \right) \\
& \leq \pi^2 \big( \sup_{x\le -a}|f'(x)(1+x^2)| +  \sup_{x\le -a}|f'''(x)(1+x^2)| \big) \\  
& \qquad \qquad + \pi \left(2 \| f'' \|_{L^1}  \|w_a' \|_{L^\infty}   +  \| f' \|_{L^1}  \|w_a'' \|_{L^\infty} \right)
\end{split}
\end{align}
where passing to the second line we used the Hausdorff-Young inequality and we also used H\"older's inequality throughout.
But $ \|w_a' \|_{L^\infty}  = (1/a)  \|w' \|_{L^\infty} \rightarrow 0$ and similarly $\|w_a'' \|_{L^\infty}  = (1/a^2)  \|w'' \|_{L^\infty} \rightarrow 0$
as $a \rightarrow \infty$ and also:
\begin{equation}
 \sup_{x\leq - a} | (1+x^2) f'(x) | < a^{-2} \sup_{x\leq - a} | x^2 (1+x^2) f'(x) | < a^{-2} \|  x^2 (1+x^2) f'(x) \|_{L^\infty} \rightarrow 0
\end{equation}
similarly for the $f'''(x)$ term.

For the first term use:
\begin{align}
\begin{split}
\mathfrak{F}(f w_a')(s) &= \mathfrak{F}( (f-c_a) w_a')(s)  + c_a \int_{-\infty}^{\infty} d x e^{i x s} w_a'(x)  \\
& = \mathfrak{F}( (f-c_a) w_a')(s)  +  c_a  \int_{-\infty}^{\infty} d x e^{i x s a} w'(x)  \\
& = \mathfrak{F}( (f-c_a) w_a')(s)  +  c_a \mathfrak{F}( w')(s a)
\end{split}
\end{align}
for some constant $c_a$. 
We now pick $c_a =   \inf_{x < -a} f(x)$. Using the same strategy as above we write 
\begin{align}
\begin{split}
\| \mathfrak{F}( (f -c_a) w_a')\|_{L^1} & \leq \pi ( \|  (f-c_a) w_a' \|_{L^1} + \|  ((f-c_a) w_a')'' \|_{L^1} ) \\ &\leq \pi ( \|f- c_a\|_{L^\infty_{(-\infty,-a]}}  \| w_a' \|_{L^1} +  \|(f-c_a) \|_{L^\infty_{(-\infty,-a]}}  \| w_a''' \|_{L^1})
\\ & \qquad \qquad +  \pi\| f'' \|_{L^1}  \|w_a' \|_{L^\infty}   +  2\pi \| f' \|_{L^1}  \|w_a'' \|_{L^\infty} 
\end{split}
\end{align}
The last two terms are dealt with as above. Note that:
\begin{align}
\begin{split}
\|f - C \|_{L^\infty} & \leq \|f \|_{L^\infty} + C   =C + \sup_x \left| f(b) + \int_b^x dy f'(y) \right|
 \\ & \leq C+ |f(b)| + \sup_{x}  \int_b^x dy |f'(y)| \leq C+ |f(b)| + \| f' \|_{L^1}
 \end{split}
\end{align}
which is clearly finite. This analysis implies that $c_a$ is finite and $\|f - c_a \|_{L^\infty}$ is finite. Thus we need to compute:
\begin{equation}
 \| w_a' \|_{L^1}  = \| w' \|_{L^1} \, \qquad \| w_a''' \|_{L^1}  =\frac{1}{a^2} \| w''' \|_{L^1}
\end{equation}
the latter of which vanishes. Since the first term does not vanish we instead note that:
\begin{align}
\begin{split}
\sup_{x < -a} |f(x)- c_a| &= \sup_{x< -a} f(x) - \inf_{x < -a} f(x)
\approx f(x_s) - f(x_i) \\ &= \int_{x_i}^{x_s} d x f'(x)   \leq  \int_{x_i}^{x_s} d x | f'(x)| \leq \int_{-\infty}^{a} d x | f'(x)| \\
& \leq a^{-2} \int_{-\infty}^{a} d x x^2 | f'(x)| \leq a^{-2} \| x^2 f'(x) \|_{L^1} \rightarrow 0
\end{split}
\end{align}
where $x_i$ and $x_s$ approximate the location of the inf and sup respectively. This approximation is what we mean by $\approx$ and this can be removed after taking limits. 

All that is left to do is to compute:
\begin{equation}
\| \mathfrak{F}( w')( a \cdot ) \|_{L^1} = \int_{-\infty}^{\infty} ds | \mathfrak{F}( w')( a s) | = a^{-1} \| \mathfrak{F}( w') \|_{L^1}  \rightarrow 0
\end{equation}
Thus we have establishing the limit \Eqref{claimedlimit}.
\end{proof}

We now write the first term of \Eqref{eq:C_3} out:
\begin{align}
\begin{split}
&C_3^{\rm first} \chi^{ -2 (n-1) \mathcal{A}(A:B:C) +n \mathcal{A}(AB:C) }   \nonumber\\ &=  \left< 1_{AB}^{\otimes n} \right| \Sigma_A^\dagger (\varrho^{1/2} \otimes  1 ) f( \ln (\varrho \otimes \varrho) + 2 n  \mathcal{A}(AB:C) \ln \chi ) (1 \otimes \varrho^{1/2}) \Sigma_A \left|  1_{AB}^{\otimes n} \right> \\
& \leq \| \mathfrak{F}(f')\|_{L^1}  \| \varrho \otimes \varrho - \pi /\chi^{ 2 n  \mathcal{A}(AB:C)} \|_1  \| (1 \otimes \varrho^{1/2}) \Sigma_A \left|  1_{AB}^{\otimes n} \right>  \|   \| ( \varrho^{1/2} \otimes 1) \Sigma_A \left|  1_{AB}^{\otimes n} \right>  \| 
\end{split}
\end{align} 
where we have applied \Eqref{eq:wmc1} in the second line.
Note that $ \| (1 \otimes \varrho^{1/2}) \Sigma_A \left|  1_{AB}^{\otimes n} \right>  \|=\|( \varrho^{1/2} \otimes 1) \Sigma_A \left|  1_{AB}^{\otimes n} \right>\|=1$.
Setting $f_2 \equiv  \| \mathfrak{F}(f')\|_{L^1} $ we have:
\begin{align} 
\begin{split}
C_3 &\leq f_1 \delta\int^\infty_0 d\hat{\mu}_\Psi (\hat{\lambda}) +  f_2 \| \varrho \otimes \varrho - \pi /\chi^{ 2 n  \mathcal{A}(AB:C)} \|_1  \int_0^\infty d\hat{\mu}_\Psi(\hat{\lambda}) \hat{\lambda}^{1/2}  \\
&\leq  f_1  \delta\int^\infty_0 d\hat{\mu}_\Psi (\hat{\lambda}) +  f_2 \| \varrho \otimes \varrho - \pi /\chi^{ 2 n  \mathcal{A}(AB:C)} \|_1 \left( 1 + \left| \left( \int_0^\infty d\hat{\mu}_\Psi(\hat{\lambda}) \hat{\lambda}^{1/2} \right)  - 1 \right|  \right)
\end{split}
\end{align}
Since the same quantity appears on the left hand side of \Eqref{eq:C123} we should subtract and write:
\begin{equation}
\left| \left( \int_0^\infty d\hat{\mu}_\Psi(\hat{\lambda}) \hat{\lambda}^{1/2} \right)  - 1 \right|  \left ( 1 -   f_2 \| \varrho \otimes \varrho - \pi /\chi^{ 2 n  \mathcal{A}(AB:C)}   \|_1 \right)
\leq C_1 + C_2 + C_3'
\end{equation} 
with
\begin{equation}
\label{eq:C3'}
C_3' \leq f_1 \delta\int^\infty_0 d\hat{\mu}_\Psi (\hat{\lambda}) +  f_2 \| \varrho \otimes \varrho - \pi /\chi^{ 2 n  \mathcal{A}(AB:C)} \|_1 \underset{\chi\to\infty}{\to} f_1\delta
\end{equation}
where we have used \Eqref{eq:wmc2} in the limit.
Now suppose that  $0 < f_2 < 1/2$ then using the fact that the trace distance is bounded by $2$:
\begin{equation}
\left| \left( \int_0^\infty d\hat{\mu}_\Psi(\hat{\lambda}) \hat{\lambda}^{1/2} \right)  - 1 \right|
\leq \frac{C_1 + C_2 + C_3'}{1 - 2 f_2}
\end{equation}
Since $\delta$ can be made arbitrarily small, we have $C_1\overset{Pr}{\to}0$ by \Eqref{eq:Markov}, $C_2\overset{Pr}{\to}0$ by Chebyshev's inequality and the vanishing of the variance of moments and \Eqref{eq:C2_second}, and $C'_3\overset{Pr}{\to}0$ by \Eqref{eq:C3'}.
The rest of the proof is fairly standard and we find:
\begin{equation}
\chi^{ 2 (n-1) \mathcal{A}(A:B:C) -n \mathcal{A}(AB:C)}  {\rm Tr}( \rho_{AA^\star}^{(1/2)})^n = \left( \int_0^\infty d\hat{\mu}_\Psi(\hat{\lambda}) \hat{\lambda}^{1/2} \right)  \mathop{\rightarrow}^{Pr} 1
\end{equation}
The map $x \rightarrow -\frac{1}{n-1} \ln x$ is continuous for $x > 0$, which is where the random variable is defined so, so by the continuous
mapping theorem we prove \Eqref{cont:prob}.

We have imposed various properties on $f$ in the above proof. To finish we must show there exists a function with these properties.
We desire:
\begin{enumerate}
\item $f\in C^\infty(\mathbb{R})$ with $f\ge 0$ and $f(x) = 0$ for $x \geq 0$.
Also $f' \in \mathcal{S}(\mathbb{R})$.
\item 
$f_1 \equiv \|(1-f(\ln x))x^{-1}\|_{L^{\infty}_{[0,2]}}=\sup_{x\leq  0} | (1- f(x) ) e^{-x} |< \infty.$
\item $f_2 = \| \mathfrak{F}(f') \|_{L^1} < 1/2.$
\end{enumerate}

We will show existence by explicit construction. Consider a smooth bump function $b\ge 0$ compactly supported on $(-1,0)$ and set
\begin{equation}
    f(x) = \frac{1}{L\|b\|_{L^1}} \int^0_x dx\, b(x/L)
\end{equation}
for some constant $L>0$ to be determined. 
We now check: 1. is trivially satisfied. 2. We have $f(x)=1$ for $x<-L$ and
\begin{align}
    f_1 = \frac{1}{L\|b\|_{L^1}}\sup_{-L < x \le 0} \left| e^{-x}\int^x_{-L} dx \,\sigma(x) \right|  < \infty
\end{align}
3. We calculate
\begin{align}
    f_2 = \frac{\|\mathfrak{F}(b(x/L))(s)\|_{L^1}}{L\|b\|_{L^1}}
    = \frac{\int ds |\int dx e^{-isx}b(x/L)|}{L\|b\|_{L^1}} 
    = \frac{\|\mathfrak{F}(b)\|_{L^1}}{L\|b\|_{L^1}}
\end{align}
Thus to satisfy 3. we simply pick $L>2\|\mathfrak{F}(b)\|_{L^1}/\|b\|_{L^1}$. Thus, we complete the proof.

\bibliographystyle{jhep}
\bibliography{mybibliography}
 
\end{document}